\newcolumntype{P}[1]{>{\centering\arraybackslash}p{#1}}
\newcolumntype{L}[1]{>{\raggedright\arraybackslash}p{#1}}
\newcolumntype{C}[1]{>{\centering\arraybackslash}p{#1}}
\newtheorem{theorem}{Theorem}[section] % Numbered within sections
\newtheorem{definition}{Definition}[section]
\newtheorem{lemma}{Lemma}[section]
\newtheorem{property}{Property}[section]
\newtheorem{remark}{Remark}[section]
\newtheorem{example}{Example}[section]
\newtheorem{proposition}{Proposition}[section]
\newtheorem{corollary}{Corollary}[section]
\newcommand{\C}{\mathbb C}
\title{
Symmetry Packaging I:
Irreducible Representation Blocks, Superselection, and Packaged Entanglement in Quantum Field Theory
}
\author{
	Rongchao Ma \\
	\textit{Department of Physics, University of Alberta, Edmonton, Canada}
}
\date{\today}
\begin{document}

	\maketitle

	\begin{abstract}		
		We introduce the concept of symmetry packaging for quantum field excitations:
		in a quantum field theory with a gauge group $G$, every local creation operator carries its full set of internal quantum numbers (IQNs) as a single irreducible $G$-block and forbids any partial factorization.
		We elevate this observation to a symmetry packaging principle, which asserts that packets of IQNs remain intact throughout all physical processes.
		We analyze a quantum-field excitation in six successive stages:
		(1) particle creation/annihilation,
		(2) hybridization with gauge-blind external degrees of freedom (DOFs),
		(3) tensor-product assembly,
		(4) isotypic decomposition,
		(5) packaged superposition/entanglement, and
		(6) local Gauge-Invariance constraint.		
		We show that packaging survives every stage and culminates in a gauge-invariant physical Hilbert space.		
		These stages unfold within a three packaging layer hierarchy (raw-Fock $\to$ isotypic $\to$ physical) with distinct packaging characters.		
		Packaging alone reproduces familiar charge-superselection rules and, within any fixed-charge sector, admits a new class of packaged entangled states where internal and external DOFs are inseparably locked.
		We derive necessary and sufficient conditions for such superpositions and show that packaged irreps behave as noise-protected logical qudits.		
		This framework unifies representation theory, superselection, and entanglement under a single mathematical roof and provides a roadmap for constructing and manipulating packaged states in any gauge theory.					
	\end{abstract}

	\tableofcontents

	\section{Introduction}

	Symmetry \cite{Weyl1952}, representation theory \cite{Wigner1959}, and entanglement \cite{Agullo2023} are important concepts in modern \textbf{quantum field theory (QFT)}.
	Whenever a field carries a nontrivial internal symmetry (finite \cite{Serre1977} or compact \cite{Nakahara2003,Georgi2018}) group $G$ (for example, the color group $\mathrm{SU}(3)$ of QCD \cite{Gross1973,Politzer1973} or the electroweak $\mathrm{SU}(2)\!\times\!\mathrm{U}(1)$) \cite{Glashow1959,SalamWard1959,Weinberg1967}, each local excitation simultaneously possesses many \textbf{internal quantum numbers (IQNs)}: charge, isospin, color, flavor, and so on.
	Although we routinely speak of these IQNs individually, a closer inspection shows that no local process can create, measure, or annihilate them individually:	
	the local gauge invariance \cite{Feynman1949,Yang1954,Utiyama1956,Weinberg1967} forces all internal \textbf{degrees of freedom (DOFs)} to appear and to propagate as a single, inseparable \textbf{irreducible representation (irrep)} \cite{Weyl1925,Wigner1939} block of $G$.	
	We call this observation the Symmetry Packaging Principle:
	\textit{Every local operator that can create (or destroy) an excitation carries exactly one irrep $V_{\lambda}$ of $G$ locked by local gauge invariance and no physical process can split $V_{\lambda}$ into smaller pieces}.

	This work is motivated by our early study on packaged entangled states \cite{Ma2017} in which all relevant IQNs are inseparably entangled under the charge conjugation operator $\hat{C}$.
	This means that symmetry packaging applies not only to single particles but also to multi-particle states and their entanglement.	
	Quantum information science seeks to exploit every available DOF as a robust information carrier \cite{NielsenChuang}.	
	However, conventional quantum communication protocols use external DOFs and ignore IQNs because superselection rules have so far limited the use of IQNs in communication protocols.	
	The ultimate goal is to teleport a particle's entire quantum identity, which requires the full set of quantum DOFs that can completely identify the particle.
	However, the external DOFs are dynamical variables that can be easily altered by the environment and cannot by themselves distinguish one particle from another.
	Instead, one needs intrinsic labels that remain invariant.
	IQNs are exactly such static identifiers.
	Since the Hilbert space of a gauge theory is not a simple tensor product, standard notions of entanglement break down.
	This calls for a unified framework both to describe the IQN entanglement and resolve the non-factorizable Hilbert space structure.
	Bridging that gap is one of the goals of this work.

	In Ref. \cite{Ma2017}, we introduced the packaged entangled state in a heuristic way and not based on field theory.
	Here we start from the first principles of quantum field theory \cite{PeskinSchroeder,WeinbergQFTI} to explore the effect of packaging on quantum field excitations.
	We describe the quantum field excitation in six successive stages and three packaging layers.
	These irreducible blocks survive six successive operations and finally lead to a gauge-invariant \cite{Cornwall1974,LavelleMcMullan1997} packaged subspace.
	Packaging sharpens earlier statements about complete sets of commuting observables \cite{StreaterWightman} and dressing transformations \cite{Dirac1950} by proving that no operator, local or non-local, can create a partial irrep.
	We prove:	
	(1) Superselection and confinement \cite{WWW1952,DHR1971,DHR1974,StreaterWightman}.
	Packaging alone reproduces familiar charge-superselection rules: superselection emerges because coherent superpositions between distinct packaged irreps are forbidden, while confinement is the statement that only packaged singlets of $G$ survive the gauge projector.	
	(2) Entanglement structure \cite{NielsenChuang}.
	A new class of packaged entangled states forms within a fixed net-charge sector in which the Schmidt decomposition is taken irrep-by-irrep.
	In a fully hybrid-packaged entangled state, measuring even an external observable collapses both the external and internal correlations simultaneously.
	(3) Packaged superposition conditions.
	We give necessary and sufficient conditions for packaged superpositions and identify them as symmetry-protected logical qudits \cite{Kitaev2003,HastingsWen2005}.

	The packaging framework clarifies how excitations arise in QFT and reproduces familiar high energy physics phenomena.
	For example, 
	correlations in electric charge \cite{WWW1952,DHR1971,DHR1974},
	flavor entanglement in neutral mesons \cite{Abe2001,Aubert2002,Go2007,Blasone2009,GellMann2019}, 
	color correlations in quark-antiquark systems \cite{Brandelik1979,Dumitru2022}, Bell inequalities in Higgs boson decays \cite{Barr2022},
	entanglement signatures in top-quark pairs and in electroweak bosons \cite{TheATLASCollaboration2024,TheCMSCollaboration2024},
	and nontrivial quantum correlations in bosons, fermions \cite{Li2001,Zanardi2002,Eckert2002}
	and generic bipartite systems \cite{Li2014}.
	All these manifest that symmetry packaging governs how relativistic particles/antiparticles are created and annihilated.

	The present paper is the first part of the symmetry packaging programme.
	Our goal is to reveal how IQN packaging arises from first principles and how it governs quantum field excitations.	
	We organize the present paper as follows:	
	Section \ref{SEC:Overview} overviews the six stages and three layers of quantum field excitations.	
	Section \ref{SEC:Stage1SingleParticlePackaging} proves single-particle packaging: each field creation operator carries its full set of IQNs as an irrep block and no partial factorization is possible \cite{Weyl1925,Wigner1939}.	
	Section \ref{SEC:Stage2Hybridization} analyses the hybridization of internal and external DOFs.				
	Section \ref{SEC:Stage3TensorProduct} investigates how single-particle packaged states assemble into multi-particles product states.
	Section \ref{SEC:Stage4IsotypicDecomposition} introduces isotypic decomposition to split the raw-Fock space into multi-particle charge sectors (irrep blocks) and derives the familiar superselection rules from IQN packaging.
	Section \ref{SEC:Stage5PackagedSuperposition} studies the superposition of packaged states within a fixed sector.
	Packaging leads to covariant hybrid-packaged entangled states that span every IQN.
	Moreover, packaged entanglement not only occurs in particle-antiparticle pairs (under charge-conjugation), but also among identical particles, different species of particles, and multi-particle systems of any size.	
	Section \ref{SEC:Stage6LocalGaussLawConstraint} introduces a gauge projector to obtain the physical packaged states (singlets) \cite{WWW1952,DHR1971,DHR1974,StreaterWightman,Bartlett2007,BulgarelliPanero2024}.

	\section{Overview}
	\label{SEC:Overview}

	In this manuscript, symmetry packaging refers to the fact that, under a gauge group $G$, internal quantum numbers (IQNs) (such as electric charge, flavor, color, baryon number, etc.) are locked into an irreducible block.
	No partial factorization to the IQNs is allowed.

	Before we dive into technical proofs, we give a bird’s-eye view of what we will do (the stages) and where those operations live inside the full Hilbert space (or packaging layers).

	\subsection{Road‑map: Six Successive Stages of Quantum Field Excitation}
	\label{subsec:SixStages}

	According to logical flow and physical intuition on symmetry packaging, we divide the quantum field excitation process into six successive stages (minimal axiomatic form).
	At each stage, we impose an additional physical constraint on the packaging.
	Let $G$ be the gauge group with unitary action
	$
	U: G \longrightarrow \mathcal U(\mathcal H_{\rm Fock})
	$
	on the raw Fock space.

	\begin{enumerate}[leftmargin=2.2em,label=\textbf{S\arabic*}.]
		\item \textbf{Creation/annihilation (single-particle packaging)}:
		Every field creation operator transforms in an irreducible representation (irrep) block of the local gauge group, carrying its full set of IQNs as one inseparable packet.
		We implement this by applying the creation operators to vacuum.

		\item \textbf{Hybridization}:		
		The irreducible internal packet adjoins gauge-blind external DOFs (spin, momentum, orbital, $\dots$) while preserving its internal irreducibility.
		We implement this by combining the internal block with external spectator.
		
		\item \textbf{Tensor product}:
		Individual packaged blocks combine into multi-particle product states via (anti)symmetrized tensor products, without splitting its IQNs.
		We implement this by taking the tensor product of the hybridized blocks into $n$-body product states.
		
		\item \textbf{Isotypic decomposition (multi-particle packaging)}:	
		The reducible multi-particle space splits into coupled irreducible $G$-blocks labeled by irrep $\lambda$ and net charge $Q$.
		It is the multi-particle extension of packaging.
		We implement this by applying a Peter-Weyl projector to the raw-Fock space.
		
		\item \textbf{Packaged superposition/entanglement}:
		Within each fixed $G$-block (charge sector), coherent superpositions yield packaged entangled states in which all IQNs remain inseparably locked together (and can even hybrid-entangle with external DOFs) without violating gauge invariance.
		We implement this by taking coherent linear combinations within a single net-charge sector.
		
		\item \textbf{Local gauge-invariance constraint (gauge-invariant packaging)}:
		Local gauge-invariance constraint
		$
		U(g)\,\ket{\Psi} = \ket{\Psi}
		$
		annihilates all non-singlet sectors (blocks).
		We implement this by applying the gauge (group-averaging) projector
		$$
		\Pi_{\rm phys} = \int_G d \mu(g)\,U(g)
		$$
		to filter out the non-singlet blocks and retain only the trivial irrep sector as gauge-invariant physical states.		
	\end{enumerate}

	To help understand better, we draw the following flowchart to display the six stages vs. three layers:
	
	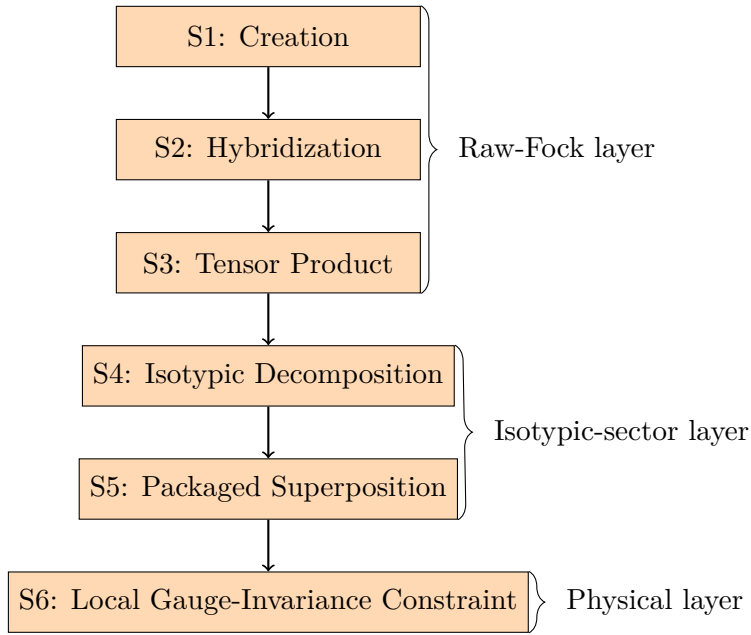
\begin{figure}[h]
		\centering
		\begin{tikzpicture}[font=\small]
			% Stage boxes
			\foreach [count=\i from 1] \name in {
				S1: Creation,
				S2: Hybridization,
				S3: Tensor Product,
				S4: Isotypic Decomposition,
				S5: Packaged Superposition,
				S6: Local Gauge-Invariance Constraint
			} {
				\pgfmathsetmacro\y{7 - (\i-1)*1.5}
				\node[draw, fill=orange!30, minimum width=4cm, minimum height=8mm, anchor=center]
				(stage\i) at (2,\y) {\name};
				\ifnum\i>1
				\pgfmathtruncatemacro\prev{\i-1}
				\draw[->, thick] (stage\prev.south) -- (stage\i.north);
				\fi
			}
			
			% Braces for layers on the right, using mirror and extra xshift
			\draw[decorate, decoration={brace, mirror, amplitude=6pt}, xshift=5cm]
			(stage3.south east) -- (stage1.north east)
			node[midway, right=10pt] {Raw‑Fock layer};
			
			\draw[decorate, decoration={brace, mirror, amplitude=6pt}, xshift=5cm]
			(stage5.south east) -- (stage4.north east)
			node[midway, right=10pt] {Isotypic‑sector layer};
			
			\draw[decorate, decoration={brace, mirror, amplitude=6pt}, xshift=5cm]
			(stage6.south east) -- (stage6.north east)
			node[midway, right=10pt] {Physical layer};
		\end{tikzpicture}
		
		\caption{Flowchart of the six stages and three layers of symmetry packaging.}
		\label{FIG:SixStagesAndThreeLayers}
	\end{figure}

	\subsection{Conceptual Scaffold: Three Layers of Packaging Hierarchy}

	In parallel, we split the symmetry packaging into three conceptual layers according to their physical and geometric pictures:
	a raw‑Fock layer, an isotypic‑sector layer, and a final gauge‑invariant (physical) layer.

	\subsubsection{Three Packaging Layers}
	
	\begin{enumerate}
		\item Raw-Fock layer (single-particle packaging layer) $\mathcal H_{\rm Fock}$:
		
		This layer includes Stages S1 - S3, where single-particle irrep blocks form and persist.
		Single-particle packaging is born in this layer.
		Multi-particle (anti)symmetrized products are taken but no coupling or projection has yet occurred.
		We implement this by applying arbitrary products of the creation/annihilation operators on the vacuum.
		A raw-Fock space $\mathcal H_{\rm Fock}$ is obtained.
			
		\item Isotypic‑sector layer (multi-particle packaging layer) $\mathcal H_{\rm iso}$: 
		
		This layer includes Stage S4 - S5, where multi-particle irrep blocks form and persist.
		Multi-particle packaging is born in this layer.
		Superselection rules form here.
		Packaged superpositions/entanglements form within each sector.
		A isotypic decomposition splits the product Fock space $\mathcal H_{\rm Fock}$ into irreducible $G$‐blocks labeled by irrep $\lambda$ and net charge $Q$.
		We implement this by applying a Peter-Weyl projector $P_\lambda$ (or a Clebsch-Gordan rotation $\mathcal U_{\rm CG}$) to the raw-Fock space $\mathcal H_{\rm Fock}$:
		$$
		\mathcal H_{\rm iso}
		\;\cong\;
		\bigoplus_{\lambda\in \widehat G}
		P_\lambda \mathcal H_{\rm Fock}
		\;=\;
		\bigoplus_{\lambda\in\widehat G}
		\mathcal H_\lambda.
		$$

		\item Physical layer (gauge-invariant packaging layer) $\mathcal H_{\rm phys}$:
		
		This layer includes Stage S6, where only physical packaged states persist.
		Gauge-invariant packaging is born in this layer.
		Local gauge-invariance constraint eliminates all nontrivial irreps and leaves only the trivial ($Q$-singlet) sectors that constitute the gauge-invariant physical Hilbert space.
		We implement this by applying a gauge projector $\Pi_{\rm phys}$ to the isotypic space $\mathcal H_{\rm iso}$ and pick out the gauge singlets (e.g., color singlets in QCD, net-charge-zero states in QED, etc):		
		$$
		\mathcal H_{\rm phys}
		\;=\; \Pi_{\rm phys}\,\mathcal H_{\rm iso}
		\;=\;\bigoplus_{Q} \mathcal H_{Q}^{\rm singlet},
		\quad
		\Pi_{\rm phys}\,\bigl|\Psi\bigr\rangle =\bigl|\Psi\bigr\rangle
		\quad
		\forall ~ \bigl|\Psi\bigr\rangle \in \mathcal H_{\rm phys}
		$$
		A reduced Hilbert space $\mathcal H_{\rm phys}$ is obtained after gauge projection.
	\end{enumerate}

	Finally, we draw the following flowchart to display the three layers:	
	$$
	\boxed{\mathcal H_{\rm Fock}}
	\xrightarrow{P_{\rm \lambda}}
	\boxed{\mathcal H_{\rm iso}}
	\xrightarrow{\Pi_{\rm phys}}
	\boxed{\mathcal H_{\rm phys}}
	$$

	\subsubsection{Necessity of the Intermediate Isotypic‑sector Layer}
	\label{SEC:LayerMotivation}

	While the need for a raw-Fock and a physical layer is obvious, why introduce an intermediate `isotypic layer'?
	We do so because:

	\begin{itemize}
		\item It isolates purely multi-particle representation-theoretic structure (multi-particle irreducible IQN blocks) before any dynamics or projection is imposed.
		In raw-Fock layer, single-particle irrep blocks forms.
		In isotypic-sector layer, multi-particle irrep blocks forms.
		In physical layer, only gauge-invariant singlets are kept.
		
		\item It is the natural home of superselection and packaged superposition/entanglement. 
		It is the space within which superselection starts and coherent packaged superpositions (packaged entanglements) occur in fixed irreps but not yet projected to a singlet.
		
		\item It emphasizes that ``superselection'' and ``packaged superposition'' occurs before the formation of physical states (gauge projection), which is contrary to our intuition that ``superselection and packaged superposition occurs in the physical packaged subspace''.
		
		\item It emphasizes that ``packaged $\ne$ physical''.
		Packaging applies even in theories where local gauge-invariance constraint is not imposed.
		Therefore, the so-called physical states are not something superior formed at the very creation, but are just the same type of quantum field excitations who are fortunate enough to form special combinations (singlets) that remained after stringent ``selections'' by the local gauge-invariance constraint.
	\end{itemize}

	In the remainder of the paper, we will repeatedly refer to the Six Stages.
	Each stage adds one additional physical requirement, progressively turning a raw creation operator into a fully gauge‑invariant physical excitation.

	\section{Stage 1: Creation/Annihilation (Single-particle Packaging)}
	\label{SEC:Stage1SingleParticlePackaging}

	In this section, we show that as soon as a particle or antiparticle is created, its entire set of IQNs is locked into a single irrep block of the gauge group.
	No subsequent operation can split that block.	
	Throughout we assume point-like matter fields.
	Gauge-invariant dressings are postponed to Stage 6 where the physical projector is applied

	\subsection{Raw Fock Construction}

	\paragraph{(1) Mode expansion of quantum field.}
	Consider a Dirac field $\psi(x)$ in a local gauge theory with gauge group $G$ (for example, $U(1)$ or $SU(N)$) \cite{PeskinSchroeder,WeinbergQFTI}.  
	Under canonical quantization, $\psi(x)$ expands in creation/annihilation operators \cite{Dixmier1981,Takesaki1979}:
	\begin{equation}\label{EQ:DiracFieldExpansion}	
		\psi(x)
		=
		\int\!\frac{d^{3}\!p}{(2\pi)^{3}}\frac{1}{\sqrt{2E_{\mathbf p}}}\sum_{s}
		\Bigl[
		\hat a_{s}(\mathbf p)\,u_{s}(\mathbf p)\,e^{-ip \cdot x} + \hat b^{\dagger}_{s}(\mathbf p)\,v_{s}(\mathbf p)\,e^{+ip \cdot x}
		\Bigr]
	\end{equation}	
	where
	$\hat a_{s}^{\dagger}(\mathbf p)$ creates a particle with momentum $\mathbf p$ and helicity or spin‐projection label $s$, and 
	$\hat b_{s}^{\dagger}(\mathbf p)$ creates an antiparticle with momentum $\mathbf p$ and the same spin label $s$.
	Here $s$ is purely an external DOF (spin or helicity) that is not part of the gauge representation, while each creation/annihilation operator also carries the appropriate (hidden) internal gauge‐charge indices (e.g., distinguishing an electron from a positron, or a quark from an antiquark).
	The field $\psi(x)$ therefore transforms in an irrep of $G \times \mathrm{Lorentz}$.
	For a Dirac spinor, the Lorentz factor is the $\left(\tfrac12,0\right) \oplus \left(0,\tfrac12\right)$ spinor representation.

	\paragraph{(2) Raw-Fock space.}
	The full raw-Fock space $\mathcal H_{\mathrm{Fock}}$ is generated by acting with arbitrary products of creation operators on the vacuum:
	\begin{equation}\label{EQ:FockSpace}
		\mathcal{H}_{\rm Fock}
		\;=\;
		\mathrm{span}\Bigl\{
		\,\hat a_{s_{1}}^{\dagger}(\mathbf p_{1}) 
		\cdots 
		\hat a_{s_{n}}^{\dagger}(\mathbf p_{n})\,\lvert0\rangle,\;
		\hat b_{s'_{1}}^{\dagger}(\mathbf{q}_{1}) 
		\cdots 
		\hat b_{s'_{m}}^{\dagger}(\mathbf{q}_{m})\,\lvert0\rangle,\;\ldots
		\Bigr\},
	\end{equation}
	where each $\hat a_{s_{i}}^{\dagger}(\mathbf p_{i})$ creates a particle of momentum $\mathbf p_{i}$ and spin/helicity $s_{i}$, and each $\hat b_{s'_{j}}^{\dagger}(\mathbf{q}_{j})$ creates an antiparticle of momentum $\mathbf{q}_{j}$ and spin/helicity $s'_{j}$.	
	Since both $\hat{a}^\dagger(\mathbf p_i)$ and $\hat{b}^\dagger(\mathbf{q}_j)$ serve as creation operators, for brevity we use $\hat a^{\dagger}$ for either particle or antiparticle creation.
	Electric or color charge is inferred from context or made explicit with a superscript.

	\subsection{Single-particle Packaging is Born at Particle Creation}
	\label{SEC:SingleParticlePackagingBorn}

	When a creation operator $\hat a_{s}^{\dagger}(\mathbf p)$ or $\hat b_{s}^{\dagger}(\mathbf p)$ acts on the vacuum, all of its IQNs are instantly locked into a single irreducible $G$‐block.
	This moment (when the field quanta first appear) marks the birth of packaging.
	That initial irreducible packet is exactly where packaging first emerges.
	Every subsequent stage simply manipulates that irreducible packet but never splits it.
	We give a formal definition for single-particle packaging:	
	
	\begin{definition}[Single-particle packaging]\label{AX:SinglePkg}		
		Under the action of a local gauge group $G$, every local operator $\hat a_{s}^{\dagger}(\mathbf p)$ or $\hat b_{s}^{\dagger}(\mathbf p)$ (creates or destroys an excitation) carries exactly one irrep $V_{\lambda}$ of $G$ locked by local gauge invariance and no physical process can split $V_{\lambda}$ into smaller pieces.
		We refer this irreducibility as \textbf{single-particle packaging}.
	\end{definition}

	We will extend this packaging to multi-particle packaging in Sec. \ref{SEC:MultiParticlePackaging} and project onto gauge-invariant packaging in Sec. \ref{SEC:GaugeInvariantPackaging}.

	\subsection{Single-particle Internal Packaged States}

	We now give a formal definition of a single-particle packaged state and then prove that no further factorization of its IQNs is possible.

	\subsubsection{Definition of a Single-particle Internal Packaged State}

	\begin{definition}[Single‐particle internal packaged state]\label{DEF:SinglePackaging}
		Let $G$ be a finite or compact gauge group and let 
		$$
		D:\;G \longrightarrow \mathrm{U}(V)
		$$
		be a finite‐dimensional unitary irrep on the internal (gauge) space $V$.
		Consider the single‐particle state
		\begin{equation}\label{SingleParticlePackagedState}
			\ket{P} \;=\; \hat{a}^\dagger(\mathbf p)\ket{0},
			\quad
			\ket{\bar P} \;=\; \hat{b}^\dagger(\mathbf p)\ket{0}\,.
		\end{equation}
		If, under every local gauge transformation $U(g)$, the creation operator $\hat{a}^\dagger(\mathbf p)$ (or $\hat{b}^\dagger(\mathbf p)$) transforms as
		$$
		U(g)\,\hat a^\dagger(\mathbf p)\,U(g)^{-1} \;=\; D(g)\,\hat a^\dagger(\mathbf p),
		\quad 
		U(g)\,\hat b^\dagger(\mathbf p)\,U(g)^{-1} =D(g)\,\hat b^\dagger(\mathbf p),
		$$
		then $\ket{P}$ and $\ket{\bar P}$ each carry all its IQNs (electric charge, flavor, color, baryon/lepton number, etc.) as a single irreducible $G$‐block.
		We say that state $\lvert P\rangle$ or $\lvert \bar{P}\rangle$ is a \textbf{single-particle internal packaged state}.
	\end{definition}

	\begin{example}[QED]
		Let $G = U(1)$.
		\begin{enumerate}
			\item Electron in QED:			
			The electron creation operator 
			$\hat a_{e^{-}}^\dagger(\mathbf p,s)$
			carries charge $Q=-e$, lepton number $+1$, spin projection $s$, and momentum $\mathbf p$.
			Under a local $U(1)$ transformation $U_\alpha=e^{-ie\alpha(x)}$,
			$$
			U_\alpha\,\hat a_{e^{-}}^\dagger\,U_\alpha^{-1}
			=e^{-ie\alpha}\,\hat a_{e^{-}}^\dagger\,.
			$$
			So $\ket{e^{-};\mathbf p,s}=\hat a_{e^{-}}^\dagger(\mathbf p,s)\ket{0}$ is a single-particle packaged state that carries all its IQNs as a 1D irrep block of $U(1)$.
			
			\item Photons in QED:			
			A photon creation operator transforms as a trivial 1-dimensional block because the fundamental representation of U(1) is one-dimensional.
			The field has no internal structure as that of color and automatically behaves like a irreducible block (gauge-invariant under gauge transformations).
		\end{enumerate}			
	\end{example}

	\begin{example}[Single-particle packaging under SU(3)]
		For $G=\mathrm{SU}(3)$, a single-particle field operator belongs to a single-particle irrep block:
		
		\begin{enumerate}
			\item \textbf{Young-diagram notation.}		
			A quark is in the $\mathbf{3}$ irrep block, an antiquark in the $\overline{\mathbf{3}}$ irrep block, and a gluon in the $\mathbf{8}$ irrep block.
			
			\item \textbf{Coordinate (wavefunction) representation.}		
			Let $U \in \mathrm{SU}(3)$.
			Let		
			$\psi_i(x)$, $i=1,2,3$ be a $\mathbf3$-valued quark wavefunction,
			$\bar\psi^i(x)$, $i=1,2,3$ be an $\overline{\mathbf3}$-valued antiquark wavefunction,
			and $A^a(x)$, $a=1,\dots,8$ be an $\mathbf8$-valued gluon field.		
			Their transformation laws under a gauge rotation $U$ are		
			$$
			\begin{aligned}
				\text{quark:}\quad &\psi_i(x)\;\longrightarrow\;(U)_i{}^j\;\psi_j(x),
				\\
				\text{antiquark:}\quad &\bar\psi^i(x)\;\longrightarrow\;(U^{-1})^i{}_j\;\bar\psi^j(x)
				\;=\;(U^*)^i{}_j\,\bar\psi^j(x),
				\\
				\text{gluon:}\quad &A^a(x)\;\longrightarrow\;D^{(8)}(U)^a{}_b\;A^b(x),
			\end{aligned}
			$$		
			where $D^{(8)}(U)^a{}_b\equiv \bigl(e^{i\,\alpha^c\,T^{(8)}_c}\bigr)^a{}_b$ is the adjoint action with the structure‐constant matrices $(T^{(8)}_c)^a{}_b=-\,f_{cab}$.
			
			\item \textbf{Second-quantized creation operators.}		
			Introduce a Fock vacuum $\lvert0\rangle$. Then
			
			Quark creation operators:
			$$
			a_i^\dagger\,,\quad i=1,2,3
			\,,\qquad
			\{\,a_i\,,\,a_j^\dagger\}\;=\;\delta_{ij}, 
			\quad
			\{a_i,a_j\}=0,
			\quad
			\{a_i^\dagger,a_j^\dagger\}=0.
			$$
			
			Antiquark creation operators:
			$$
			b^{\dagger\,i}\,,\quad i=1,2,3,
			\qquad
			\bigl\{\,b^i\,,\,b^{\dagger\,j}\bigr\}\;=\;\delta^{ij},
			\quad
			\{b^i,b^j\}=0,
			\quad
			\{b^{\dagger\,i},b^{\dagger\,j}\}=0.
			$$
			
			Gluon creation operators:
			$$
			g^{\dagger\,a}\,,\quad a=1,\dots,8,
			\qquad
			\bigl[\,g^a\,,\,g^{\dagger\,b}\bigr]\;=\;\delta^{ab},
			\quad
			[g^a,g^b]=[g^{\dagger\,a},g^{\dagger\,b}]=0.
			$$
			
			All of these act on $\lvert0\rangle$ to create single-particle packaged states:
			$\lvert q_i\rangle=a_i^\dagger\lvert0\rangle,\;\lvert\bar q^i\rangle=b^{\dagger\,i}\lvert0\rangle,\;\lvert g^a\rangle=g^{\dagger\,a}\lvert0\rangle$.

			\item \textbf{How to distinguish that $a^\dagger$ is a $\mathbf 3$ while $b^\dagger$ is a $\overline{\mathbf 3}$ ?}		
			One can see that $a^\dagger$ is a $\mathbf 3$ while $b^\dagger$ is a $\overline{\mathbf 3}$ by the ``index position'' and hence the transformation law.
			Specifically:
			\begin{itemize}
				\item A lower index $i$ on $a_i^\dagger$ means fundamental ($\mathbf 3$) and transforms with $U$:
				$$
				a_i^\dagger\;\xrightarrow{U}\;U_i{}^j\,a_j^\dagger
				\quad\Longrightarrow\quad\mathbf3.
				$$
				
				\item An upper index $i$ on $b^{\dagger\,i}$ means anti-fundamental ($\overline{\mathbf 3}$) and transforms with $U^\dagger=U^{-1}=U^{\!*}$:
				$$
				b^{\dagger\,i}\;\xrightarrow{U}\;(U^{-1})^i{}_j\,b^{\dagger\,j}
				\;=\;(U^*)^i{}_j\,b^{\dagger\,j}
				\quad\Longrightarrow\quad\overline{\mathbf3}.
				$$
			\end{itemize}
			Equivalently, under an infinitesimal generator $T^c$,	
			$$
			[T^c,\;a_i^\dagger] \;=\;(T^c)_i{}^j\;a_j^\dagger,
			\qquad
			[T^c,\;b^{\dagger\,i}]\;=\;-(T^c)_j{}^i\;b^{\dagger\,j},
			$$	
			so $a^\dagger$ carries the defining rep and $b^\dagger$ its complex conjugate.

			Since the fundamental representation of $SU(3)$ is complex, $\mathbf3$ and $\overline{\mathbf3}$ are inequivalent.
			They cannot be mapped into each other by a similarity transformation that is also an $SU(3)$ element.
		\end{enumerate}
	\end{example}

	\begin{example}[Majorana neutrino]
		If $\nu(x)$ is a Majorana field, then $\hat a_\nu^\dagger=\hat b_\nu^\dagger$ carries no gauge charge (only global lepton number).  The state $\ket{\nu;\mathbf p,s}$ is invariant under all local gauge transformations and still forms a single‐particle packaged state.
	\end{example}

	\begin{remark}[Package charge only or all IQNs?]
		\label{REMARK:PackageAllIQNs}
		In Definition \ref{DEF:SinglePackaging}, we used gauge group $G$ to verify package, which is related to the gauge charge, such as electric charge or color charge.		
		But \textbf{do not be misled to think that the package states only package the charge}.		
		In fact, the creation operation creates the entire particle/antiparticle as an irreducible block.
		This means that \textbf{all IQNs are fully packaged}.
		The same logically apply to the product state and packaged entangled state, which will be discussed in later sections.
	\end{remark}

	\subsubsection{Properties of Single-particle Internal Packaged States}

	We now prove a very important property of single-particle packaged states:
	the internal block in a single-particle packaged state is indivisible.
	This property is also effective in multiple field excitations.
	Throughout this work, we assume that $G$ is compact so that every unitary irrep is finite dimensional and Schur’s lemma applies.

	Why no partial factorization of IQNs?
	The reason is that local gauge-invariance forbids splitting these quantum numbers.
	We restate this standard fact in an group-theoretic language as follows:

	\begin{theorem}[No partial factorization of single‐particle IQNs]
		\label{THM:NoPartialFactorization}
		Let $\psi(x)$ be a local quantum field with a finite or compact gauge group $G$ (Abelian or non-Abelian).
		Assume that under 
		$G\times\mathrm{Lorentz}$, $\psi(x)$ transforms as  
		$$
		\psi(x)\;\longmapsto\; D(g)\,
		U(\Lambda)\,\psi\!\bigl(\Lambda^{-1}x\bigr),
		\qquad g\in G,\;\Lambda\in\mathrm{Lorentz},
		$$
		where
		$
		D: G \to \mathrm{GL}(V)
		$
		is a unitary and finite dimensional irrep on the internal (gauge) space $V$.
		In the mode expansion Eq.~(\ref{EQ:DiracFieldExpansion}),
		denote the column of particle creation operators by
		$
		\hat{\mathbf a}^{\dagger}(\mathbf p)
		:= \bigl(\hat a_{1}^{\dagger}(\mathbf p),\,
		\hat a_{2}^{\dagger}(\mathbf p),\,
		\dots\bigr)^{T}.
		$
		Then for each fixed momentum $\mathbf p$, we have
		
		\begin{enumerate}%[label=(\roman*)]
			\item \emph{Gauge covariance of $\hat{\mathbf a}^{\dagger}(\mathbf p)$}.
			Under a pure gauge transformation $(g,\Lambda=\mathbf 1)$,
			the entire column $\hat{\mathbf a}^\dagger(\mathbf p)$ transforms in the same irrep $D$,
			$$
			\hat{\mathbf a}^\dagger(\mathbf p)
			\;\longmapsto\;
			D(g)\,\hat{\mathbf a}^\dagger(\mathbf p),
			\quad g\in G.
			$$
			
			\item \emph{No partial projectors:}
			By Schur’s lemma, any internal projector $P$ commuting with all $D(g)$ must be $0$ or $\mathbf 1$.
			Thus, the IQNs cannot be split into independent subspaces.
			
			\item \emph{Lorentz invariance:}
			Since each Lorentz transformation $U(\Lambda)$ acts on a different tensor factor (of the form $V_{\text{Lorentz}}\otimes V_{\text{int}}$), it commutes with $D(g)$, i.e., $[D(g), U(\Lambda)] = 0$.
			Thus, it cannot induce further splitting of the internal block (they act on a separate spinor space).
		\end{enumerate}
		Therefore, a single‐particle creation operator always carries its full set of IQNs as one inseparable irreducible block under $G$.
	\end{theorem}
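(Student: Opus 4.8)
The plan is to prove the three listed assertions in sequence, with essentially all of the genuine content living in part~(1) (covariance of the creation-operator column); parts~(2) and~(3) then follow quickly from Schur's lemma and from the tensor-product structure of the $G\times\mathrm{Lorentz}$ action.

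To establish part~(1), I would start from the hypothesized field transformation law and specialize to a pure gauge rotation $(g,\Lambda=\mathbf 1)$, which collapses it to $\psi(x)\mapsto D(g)\,\psi(x)$. The task is then to push this down through the mode expansion in Eq.~(\ref{EQ:DiracFieldExpansion}) to the level of the operators. I would invert that expansion by projecting $\psi$ against the mode functions using the standard spinor orthonormality and completeness relations for $u_s(\mathbf p)$ and $v_s(\mathbf p)$, so that each $\hat a_s^\dagger(\mathbf p)$ is recovered as a fixed-time Fourier--spinor integral of $\psi$. Because the gauge matrix $D(g)$ acts only on the internal index and is inert on the spin label $s$ and momentum $\mathbf p$ carried by $u_s,v_s$, it commutes past the mode functions, and the field law $\psi\mapsto D(g)\,\psi$ descends to $\hat a_s^\dagger(\mathbf p)\mapsto D(g)\,\hat a_s^\dagger(\mathbf p)$. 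Stacking the components over the internal index then yields the stated relation $\hat{\mathbf a}^\dagger(\mathbf p)\mapsto D(g)\,\hat{\mathbf a}^\dagger(\mathbf p)$.

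Part~(2) is then a direct application of Schur's lemma. Let $P$ be any projector on the internal space $V$ commuting with every $D(g)$. Since $G$ is compact and $D$ is irreducible, Schur forces $P=c\,\mathbf 1$ for some scalar $c$; idempotency $P^2=P$ gives $c\in\{0,1\}$, hence $P\in\{0,\mathbf 1\}$. Thus $V$ admits no nontrivial $G$-invariant subspace, so the internal block cannot be partitioned and no proper subset of IQNs can be separately created or measured. Part~(3) is even more immediate: the $G\times\mathrm{Lorentz}$ action lives on $V_{\mathrm{Lorentz}}\otimes V_{\mathrm{int}}$ with $G$ realized as $\mathbf 1\otimes D(g)$ and Lorentz as $U(\Lambda)\otimes\mathbf 1$, so $[D(g),U(\Lambda)]=0$ by the disjoint tensor-factor structure, and no Lorentz transformation can refine the internal decomposition. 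Combining the three parts gives the concluding statement that a single-particle creation operator carries its full set of IQNs as one inseparable irreducible $G$-block.

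The main obstacle is part~(1): rigorously extracting the operator transformation law from the field law. The delicate points are verifying that the gauge rotation touches only the internal index---not $s$ or $\mathbf p$---and that the inversion of the mode expansion is clean, both of which rest on the orthogonality and completeness of the spinor mode functions together with the assumption that internal and Lorentz data occupy independent tensor factors. I would also flag a subtlety: for a genuinely \emph{local} (position-dependent) gauge transformation $g(x)$, the rotation no longer commutes trivially past a fixed-momentum mode function, so the cleanest covariance statement is for rigid gauge rotations (equivalently, in a fixed gauge frame), which is precisely the regime relevant to the single-particle packaging claim.
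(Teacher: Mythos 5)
Your proposal is correct and follows essentially the same three-step route as the paper: push the field transformation law through the mode expansion to get covariance of $\hat{\mathbf a}^\dagger(\mathbf p)$, apply Schur's lemma plus idempotency to exclude nontrivial internal projectors, and use the tensor-factor structure $V_{\text{Lorentz}}\otimes V_{\text{int}}$ to dispose of Lorentz transformations. If anything, your part (1) is more careful than the paper's—the paper merely asserts that the creation operators ``inherit'' the transformation law because $u_s(\mathbf p),v_s(\mathbf p)$ are c-number spinors untouched by $D(g)$, whereas you propose the explicit inversion of the mode expansion via spinor orthonormality, and your closing caveat that the clean covariance statement holds for rigid gauge rotations (not position-dependent $g(x)$, which fails to commute past fixed-momentum mode functions) flags a genuine subtlety the paper silently glosses over.
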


	\begin{proof}
		We break the argument into three steps.
		
		\begin{enumerate}
			\item Gauge covariance of the creation operators.	
			
			The quantum field $\psi(x)$ transforms as
			$
			\psi(x)\;\longmapsto\; D(g)\,
			U(\Lambda)\,\psi\!\bigl(\Lambda^{-1}x\bigr)
			$
			When $\psi(x)$ is expanded into its mode (Fourier) decomposition Eq.~(\ref{EQ:DiracFieldExpansion}),
			each creation operator $ \hat{a}^{\dagger}_{\alpha}(\mathbf p) $ inherit the same transformation law:
			$$
			\hat{\mathbf{a}}^\dagger(\mathbf p) \equiv (\hat{a}^{\dagger}_{\alpha}(\mathbf p))_{\alpha}
			\quad\text{transforms as}\quad
			\hat{\mathbf{a}}^\dagger(\mathbf p) \longmapsto D(g)\,\hat{\mathbf{a}}^\dagger(\mathbf p).
			$$
			This is because the spinor wavefunctions $u_s(\mathbf p), v_s(\mathbf p)$ are just ordinary c-number vectors which are not affected by $ D(g) $.
			
			\item Schur’s lemma and absence of partial projectors.
			
			Schur’s lemma \cite{Jones1998} states that the commutant of an irreducible $D$ is $c \mathbf 1$.
			We now prove the ``absence of partial projectors'' by contradiction:
			
			Assume that there exists a nontrivial projector $P:V\to V$ ($P^2=P,\;P\neq0,I$) that commutes with every group element: $P\,D(g)=D(g)\,P, \;\forall g\in G$.
			According to Schur’s lemma, therefore, we have $P=c\,\mathbf 1$.
			Since $P$ is idempotent $c^2=c$, we must have $c=0$ or $1$.
			In other words, any projector $P$ that commutes with all $D(g)$ must be either $P=0$ or $P=\mathbf 1$.
			This means that no nontrivial projector exists and the internal
			indices cannot be decomposed into independent subspaces.
			In other words, no partial factorization of the single-particle IQNs is possible.
			
			\item Lorentz transformations do not alter irreducibility.
			
			Although $\psi$ also carries a Lorentz index (acted on by $U(\Lambda)$), that action lives on a different tensor factor from $D(g)$ and commutes with $D(g)$, that is, $[U(\Lambda), D(g)]=0$. Therefore, Lorentz reducibility cannot induce any further
			splitting of the internal gauge block.			
		\end{enumerate}	
		All these facts together show that the full set of IQNs carried by a single-particle creation operator is locked together by irreducibility and cannot be partially factorized.
		It should be emphasized that, although the proof used gauge indices, the irreducible block includes all conserved internal labels of the field (similar to Definition \ref{DEF:SinglePackaging}, see Remark \ref{REMARK:PackageAllIQNs}).
	\end{proof}

	\begin{remark}
		Schur's Lemma applies only if the gauge representation is truly irreducible.
		In the standard model, several fields sit in reducible internal spaces (e.g. SU(2) $\times$ U(1) hypermultiplets). 
		Thus, irreducibility is taken after choosing a definite multiplet (say, left-handed doublet) and cannot be applied to a whole generation without first projecting.
	\end{remark}

	\begin{example}[A counter example: ``half quark'' cannot exist]		
		Suppose there exists an operator $X_i^\dagger$ with a single color index $i=1,2,3$, but which transforms under half of the fundamental representation, schematically		
		$$
		U(g)\,X_i^\dagger\,U(g)^{-1}
		=\bigl[D_{\mathbf 3}(g)\bigr]_{ij}^{1/2}\,X_j^\dagger,
		$$		
		where the square-root is intended entry-wise.
		Since $D_{\mathbf 3}(g)$ is unitary, any matrix square root violates
		group multiplication:		
		$$
		D(g_1g_2)\neq\sqrt{D(g_1)}\,\sqrt{D(g_2)} \qquad
		\text{for generic }g_1,g_2\in SU(3).
		$$
		Thus, the map $g\mapsto\sqrt{D_{\mathbf 3}(g)}$ is not a representation of SU(3).
		The operator $X_i^\dagger$ cannot respect locality and canonical (anti)commutation relations.

		Equivalently, if one tries to assign fractional color charge
		$\tfrac12 \times \mathbf 3$, then the commutant of the gauge group is larger than $\mathbb C\,\mathbf 1$.
		This is contradict to Theorem \ref{THM:NoPartialFactorization}.		
		Thus, no local operator can create a fractional color charge without violating group representation laws.

		This explicit failure highlights the power of IQN packaging:
		only irreps of the local gauge group appear as creation operators in a consistent quantum field theory.
	\end{example}

	\subsection{Single-particle Internal Space}
	\label{SEC:SingleParticleInternalSpace}

	Having established single‐particle packaging, we now collect all allowed single‐particle packaged states into a Hilbert space.
	This space provides the basic building block for the multi-particle theory developed in the next section.

	\subsubsection{Single-particle Internal Space $\mathcal H_{\rm int}$}

	Due to the local gauge-invariance, each single-particle state given by Definition \ref{DEF:SinglePackaging} is born to be packaged.
	If we collect all packaged states, then we obtain a full single-particle internal space $\mathcal H_{\rm int}$, which is generally reducible.

	The single-particle internal space $\mathcal H_{\rm int}$ are the foundation for multi-particle packaged subspaces.

	\subsubsection{Isotypic Decomposition of $\mathcal H_{\rm int}$}
	\label{SEC:IsotypicDecompositionSingleParticleInternalSubspaces}

	According to representation theory, due to the gauge group $G$, the full single-particle internal space $\mathcal H_{\rm int}$ indeed experiences an isotypic decomposition.
	Specifically, for a finite or compact gauge group $G$ acting unitarily on the full one-particle internal Hilbert space $\mathcal H_{\rm int}$, we have the following proposition:

	\begin{proposition}[Isotypic decomposition of full single-particle internal space]
		\label{PROP:SingleParticleIntDecomp}
		Let $G$ be a finite or compact gauge group that acts unitarily on the full single-particle internal space $\mathcal H_{\rm int}$.
		Then we have:
		\begin{enumerate}
			\item The isotypic decomposition:
			\begin{equation}\label{EQ:SingleParticleIntDecomp}
				\mathcal H_{\rm int} \cong
				\bigoplus_{R \in \widehat G}
				V_{R} \otimes \mathcal M_{R},
			\end{equation}
			where $V_R$ is the carrier space of the irrep $R$ (the packaged block that contains the full set of IQNs for that irrep),
			and $\mathcal M_R\simeq\mathbb C^{\,m_R}$ is the multiplicity space recording how many times $R$ occurs.
			
			\item Each factor $V_R$ is a single-particle internal-packaged subspace.
			No symmetry-respecting operation can act non‑trivially on a proper subspace
			of $V_R$.
		\end{enumerate}
	\end{proposition}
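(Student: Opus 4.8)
The plan is to recognize the statement as the canonical (isotypic) decomposition of a unitary representation of a compact group, so that Part~1 follows from complete reducibility (Peter--Weyl) and Part~2 is an immediate application of Schur's lemma already exploited in Theorem~\ref{THM:NoPartialFactorization}. First I would fix notation: $U:G\to\mathcal U(\mathcal H_{\rm int})$ is the given unitary action, $\widehat G$ the set of equivalence classes of irreps, $d_R=\dim V_R$, $\chi_R$ the character of $R$, and $\mu$ the normalized Haar measure on $G$.

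For Part~1 the workhorse is the family of character projectors
$$
P_R \;=\; d_R \int_G \overline{\chi_R(g)}\,U(g)\,d\mu(g),
\qquad R\in\widehat G .
$$
I would verify, using unitarity of $U$ and the Schur orthogonality relations for characters, that each $P_R$ is a self-adjoint idempotent ($P_R^2=P_R=P_R^{\dagger}$), that distinct projectors are mutually orthogonal ($P_R P_{R'}=\delta_{RR'}P_R$), and that $\sum_{R}P_R=\mathbf 1$ (strongly, if $\mathcal H_{\rm int}$ is infinite dimensional). These three facts yield the orthogonal splitting $\mathcal H_{\rm int}=\bigoplus_{R\in\widehat G}\mathcal H_{\rm int}^{(R)}$ into isotypic components $\mathcal H_{\rm int}^{(R)}=P_R\mathcal H_{\rm int}$.

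Next I would identify each component with the tensor product of Eq.~(\ref{EQ:SingleParticleIntDecomp}). The cleanest route is to set the multiplicity space $\mathcal M_R:=\mathrm{Hom}_G(V_R,\mathcal H_{\rm int})$ (the $G$-intertwiners) and introduce the evaluation map $V_R\otimes\mathcal M_R\to\mathcal H_{\rm int}$, $v\otimes T\mapsto Tv$. One checks it is $G$-equivariant with $G$ acting only on the $V_R$ factor, that it is isometric up to the normalization $d_R$, and that its image is exactly $\mathcal H_{\rm int}^{(R)}$; this bookkeeping is where most of the routine work sits, and it delivers both the isomorphism and the interpretation of $\mathcal M_R\simeq\mathbb C^{\,m_R}$ as the multiplicity count. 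Part~2 is then immediate: each $V_R$ carries the irreducible $R$, hence is a single-particle internal-packaged subspace in the sense of Definition~\ref{DEF:SinglePackaging}, and any symmetry-respecting (i.e.\ $G$-equivariant) operator restricted to $V_R$ commutes with all $D_R(g)$, so by Schur's lemma it is a scalar. The only $G$-invariant projectors on $V_R$ are therefore $0$ and $\mathbf 1$, so no such operation can act nontrivially on a proper subspace of $V_R$ --- precisely the no-partial-projector conclusion of Theorem~\ref{THM:NoPartialFactorization}.

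The main obstacle is the functional-analytic bookkeeping when $\mathcal H_{\rm int}$ is infinite dimensional. Because continuous external labels (momentum, for instance) can be absorbed into the multiplicity spaces, the $\mathcal M_R$ need not be finite dimensional in general, so the direct sum must be read as a completed Hilbert-space direct sum and $\sum_R P_R=\mathbf 1$ must be established in the strong topology rather than as a finite algebraic identity. Establishing the isometry of the evaluation map and the density of $\bigoplus_R\mathcal H_{\rm int}^{(R)}$ in $\mathcal H_{\rm int}$ are the genuinely technical steps; if one restricts attention to the finite-multiplicity setting implicit in the notation $\mathcal M_R\simeq\mathbb C^{\,m_R}$, these convergence issues disappear and everything reduces to finite-dimensional complete reducibility.
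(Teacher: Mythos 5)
Your proposal is correct and follows essentially the same route the paper takes: the paper defers the detailed argument to Stage~4 (Sec.~\ref{SEC:Stage4IsotypicDecomposition}), where it builds exactly the character projectors $P_\lambda = d_\lambda\int_G d\mu(g)\,\chi_\lambda(g)^*\,U(g)$, establishes their self-adjointness, idempotence, mutual orthogonality, completeness and $G$-invariance, identifies the multiplicity spaces as $M_\lambda=\mathrm{Hom}_G(V_\lambda,\cdot)$, and derives indivisibility of each $V_R$ from Schur's lemma. Your added care about the completed direct sum and strong convergence of $\sum_R P_R=\mathbf 1$ in the infinite-dimensional case is a minor refinement of rigor, not a different method.
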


	The natural home for discussing isotypic decomposition is the multi-particle packaged subspace.
	Thus, we leave the detailed proof to Sec. \ref{SEC:Stage4IsotypicDecomposition}

	In Eq. \eqref{EQ:SingleParticleIntDecomp}, each $V_R$ already carries all IQNs associated with that irrep.
	Nothing can split it without breaking $G$.	
	A local creation operator $\hat a_R^{\dagger}$ always lands in one
	carrier $V_R$.
	That block is ``born packaged''.
	Stage 1 selects one block $V_R$ and populates it via a creation operator
	$\hat a_R^{\dagger}$.
	All later stages manipulate that block (possibly tensored with external labels) but never split it.

	\begin{example}[Single-particle Hilbert spaces]
		\leavevmode
		\begin{enumerate}[label=(\alph*)]
			\item QED:  
			$\mathcal H_{\mathrm{1}}=\!\bigoplus_{Q\in\mathbb Z} V_Q
			\otimes \mathcal M_Q$ with $V_Q\cong\mathbb C$.
			
			\item SU(3) quark:  
			$V_{\mathbf 3}\cong\mathbb C^3$ for a quark,
			$V_{\mathbf 8}\cong\mathbb C^8$ for a gluon, etc.
		\end{enumerate}
	\end{example}

	\subsubsection{Single-particle Internal-packaged Subspace $\mathcal H_{\rm pkg}$}

	If we select a subspace of $\mathcal H_{\rm pkg} \subset \mathcal H_{\rm int}$ that satisfies certain conditions, then it can be a packaged subspace.
	We now give the formal definition as follows:

	\begin{definition}[Single-particle internal-packaged subspace]
		\label{DEF:SingleParticleInternalPackagedSubspace}
		Let $G$ be a gauge group that acts on the full internal space $\mathcal H_{\rm int}$ by local unitaries $U(g)(x)$ and $\hat Q$ be the global charge operator.
		If a subspace 
		$\mathcal H_{\rm pkg} \subset \mathcal{H}_\mathrm{int}$
		satisfies the following conditions:	
		\begin{enumerate}[label=(\alph*)]
			\item \textbf{Fixed charge.}
			There exists $Q_0$ such that $\hat Q\ket{\psi}=Q_0\ket{\psi}$ for all $\ket{\psi}\in\mathcal H_{\rm pkg}.$	
			
			\item \textbf{Gauge invariance (singlet property).}
			For every local gauge transformation $U(g)(x)$ (with $g \in G$), there is a one-dimensional character $\chi_\lambda(g)$ (fixed for the whole subspace, independent of the particular $\ket\psi$), such that
			$$
			U(g)(x)\,\ket{\psi}
			=\chi(g)\,\ket{\psi},
			\quad\forall\ket{\psi} \in \mathcal H_{\rm pkg}.
			$$
			Equivalently, $U(g)(x)\big\vert_{\mathcal H_{\rm pkg}}=\chi(g)\,\mathbf 1$ on~$\mathcal H_{\rm pkg}$).			
			\footnote{%
				If $G$ is non‑Abelian, then every one‑dimensional unitary
				representation $\chi$ factors through the Abelianisation $G/[G,G]$.
				The special case $\chi(g)\equiv1$ gives the usual notion of a gauge singlet.
			}
		\end{enumerate}
		then we say that $\mathcal H_{\rm pkg}$ is a \textbf{single-particle internal-packaged subspace} of $\mathcal{H}_\mathrm{int}$.
	\end{definition}

	\begin{example}[One‐ and two‐dimensional packaged subspaces]
		\leavevmode
		\begin{enumerate}
			\item \textbf{Single electron.}  
			In QED, the one‐electron state $\ket{e^-;\mathbf p}=\hat a_{e^-}^\dagger(\mathbf p)\ket0$
			has charge $Q=-e$ and transforms as $U_\alpha=e^{-ie\alpha}\mathbf 1$.
			Thus, the span $\mathcal H_{\rm int}^{(e^-)}=\mathrm{span}\{\ket{e^-;\mathbf p}\}$
			is a one‐dimensional internal‐packaged subspace.
			
			\item \textbf{Neutral kaon doublet.}  
			The flavor states
			$\ket{K^0}$ and $\ket{\overline K^{0}}$ both satisfy $Q=0$ and share the same one-dimensional trivial character  $\chi(g)=1$.
			Their span
			$\mathcal H_{\rm int}^{(K)}=\mathrm{span}\{\ket{K^0},\ket{\overline
				K^{0}}\}$ is a two-dimensional internal-packaged subspace.
		\end{enumerate}
	\end{example}

	\begin{remark}[Gauge‑blind labels]
		Every creation operator carries, in addition to its internal index $\alpha$, a collection of gauge‑blind external labels such as spin/helicity $s$, momentum $\mathbf p$, orbital quantum numbers, etc.
		They play no role in the irreducibility argument of Stage 1 and will be organized systematically only in Stage 2, where we adjoin an explicit spectator Hilbert space $\mathcal H_{\mathrm{ext}}$.
	\end{remark}

	A internal packaged subspace has the following basic properties:
	
	\begin{proposition}\label{PROP:SinglePackagedProperties}
		Let $\mathcal H_{\rm pkg}$ be an internal packaged subspace.
		Then:
		\begin{enumerate}[label=(\roman*)]
			\item \textbf{Stability of $\mathcal H_{\rm pkg}$ under charge‐preserving maps.}
			Any bounded operator $V$ satisfying $[V,\hat Q]=0$ maps $\mathcal H_{\rm pkg}$ into itself: $V\,\mathcal H_{\rm pkg} \subseteq \mathcal H_{\rm pkg}$.
			
			\item \textbf{Existence of $\mathcal H_{\rm pkg}$ by construction.}
			Fix a charge sector $\mathcal H_{Q_0}$ with $\dim\mathcal H_{Q_0}\ge1$.
			Choose any orthonormal set $\{\ket{\psi_j}\}_{j=1}^k\subset\mathcal H_{Q_0}$ that all transform under the same one‐dimensional character $\chi$.
			Then the span
			$\mathcal H_{\rm pkg}=\mathrm{span}\{\ket{\psi_j}\}$ is an internal‐packaged subspace.
		\end{enumerate}
	\end{proposition}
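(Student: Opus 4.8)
The plan is to treat the two parts separately: part (ii) follows directly from the linearity of the two defining conditions in Definition \ref{DEF:SingleParticleInternalPackagedSubspace}, while part (i) is the substantive claim and rests on the commutation relation $[V,\hat Q]=0$ together with a short argument that the character $\chi$ is preserved. I would dispose of (ii) first, since it is essentially bookkeeping, and then devote the real effort to the stability statement (i).

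For part (i), I would start from an arbitrary $\ket\psi\in\mathcal H_{\rm pkg}$, so that $\hat Q\ket\psi=Q_0\ket\psi$ by condition (a). Applying $V$ and using $[V,\hat Q]=0$ gives $\hat Q\,(V\ket\psi)=V\hat Q\ket\psi=Q_0\,(V\ket\psi)$, so that $V\ket\psi$ again lies in the charge-$Q_0$ eigenspace $\mathcal H_{Q_0}$. This establishes stability of the charge sector immediately. The remaining point is to check that $V\ket\psi$ still carries the fixed character $\chi$ of condition (b). When $G$ is Abelian the gauge action is generated by the charge, $U(g)=e^{-i\theta(g)\hat Q}$, so on the $Q_0$-eigenspace $U(g)$ already acts as the scalar $e^{-i\theta(g)Q_0}=\chi(g)$; hence preserving the charge automatically preserves the character and $V\ket\psi\in\mathcal H_{\rm pkg}$. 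For non-Abelian $G$ I would invoke that a symmetry-respecting $V$ commutes with every $U(g)$, so that $U(g)\,V\ket\psi=V\,U(g)\ket\psi=\chi(g)\,V\ket\psi$, again keeping the character fixed. Either way $V\,\mathcal H_{\rm pkg}\subseteq\mathcal H_{\rm pkg}$.

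For part (ii), I would verify the two conditions of Definition \ref{DEF:SingleParticleInternalPackagedSubspace} for $\mathcal H_{\rm pkg}=\mathrm{span}\{\ket{\psi_j}\}_{j=1}^k$, writing a general element as $\ket\Phi=\sum_j c_j\ket{\psi_j}$. Condition (a): since each $\ket{\psi_j}\in\mathcal H_{Q_0}$, linearity of $\hat Q$ gives $\hat Q\ket\Phi=\sum_j c_j\,Q_0\ket{\psi_j}=Q_0\ket\Phi$, so the whole span is a $Q_0$-eigenspace. Condition (b): since every $\ket{\psi_j}$ transforms under the same character, $U(g)\ket{\psi_j}=\chi(g)\ket{\psi_j}$, linearity gives $U(g)\ket\Phi=\sum_j c_j\chi(g)\ket{\psi_j}=\chi(g)\ket\Phi$, i.e. $U(g)\big|_{\mathcal H_{\rm pkg}}=\chi(g)\,\mathbf 1$. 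Both defining conditions hold, so $\mathcal H_{\rm pkg}$ is a single-particle internal-packaged subspace, and the construction is non-vacuous because $\dim\mathcal H_{Q_0}\ge 1$ guarantees at least one admissible $\ket{\psi_j}$.

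The main obstacle is the character-preservation step in part (i): charge conservation alone only confines $V\ket\psi$ to the sector $\mathcal H_{Q_0}$, and when that sector decomposes into several inequivalent characters a merely charge-preserving $V$ could in principle rotate a $\chi$-state into a different-character state, leaving $\mathcal H_{\rm pkg}$. I expect the cleanest resolution is to state the hypothesis precisely — either restrict to the Abelian case (where $\chi$ is a function of $Q_0$ alone) or strengthen ``charge-preserving'' to genuinely gauge-covariant, $[V,U(g)]=0$ — and to note that under Definition \ref{DEF:SingleParticleInternalPackagedSubspace} the two readings coincide whenever the gauge action is generated by $\hat Q$.
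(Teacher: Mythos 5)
Your proposal is correct and follows essentially the same route as the paper: part (ii) by linearity of $\hat Q$ and $U(g)$ on the span, and part (i) by the eigenvalue computation $\hat Q(V\ket\psi)=V\hat Q\ket\psi=Q_0(V\ket\psi)$ followed by a character-preservation step. The one point worth highlighting is that the ``obstacle'' you flag at the end is real, and the paper does not resolve it: the paper's own proof of (i) simply writes $U(g)(x)\,V\ket\psi=V\,U(g)(x)\ket\psi=\chi(g)\,V\ket\psi$, i.e.\ it silently commutes $V$ past $U(g)$, even though the stated hypothesis is only $[V,\hat Q]=0$. That step is automatic when the gauge action is generated by $\hat Q$ (your Abelian case, $U(g)=e^{-i\theta(g)\hat Q}$), but for a non-Abelian $G$, or whenever the charge sector $\mathcal H_{Q_0}$ hosts several inequivalent characters, charge preservation alone does not pin $V\ket\psi$ to the $\chi$-isotypic part, exactly as you say. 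So your case split (Abelian: character is a function of $Q_0$; otherwise: strengthen the hypothesis to $[V,U(g)]=0$) is not merely a stylistic precaution --- it supplies the justification that the paper's proof tacitly assumes. In short: same approach, but your version makes explicit a hypothesis the paper uses without stating.
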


	\begin{proof}
		\leavevmode
		\begin{enumerate}[label=(\roman*)]
			\item According to Definition \ref{DEF:SingleParticleInternalPackagedSubspace} (a), $\forall ~ \ket\psi\in\mathcal H_{\rm pkg}$, we have $\hat Q\ket\psi=Q_0\ket\psi$.
			Now we further have $[V,\hat Q]=0$.
			These together gives $\hat Q(V\ket\psi)=V\hat Q\ket\psi=Q_0(V\ket\psi)$.			
			Moreover, $\;U(g)(x)V\ket\psi=V\,U(g)(x)\ket\psi=\chi(g)\,V\ket\psi$.
			Thus, $V\ket\psi\in\mathcal H_{\rm pkg}$.
			
			\item The chosen vectors share the eigenvalue $Q_0$ and the character
			$\chi$.
			Therefore, their span obeys Definition
			\ref{DEF:SingleParticleInternalPackagedSubspace} (a) and (b).
		\end{enumerate}
	\end{proof}

	\begin{remark}
		\leavevmode
		\begin{enumerate}[label=(\roman*)]
			\item Definition~\ref{DEF:SingleParticleInternalPackagedSubspace} applies to any compact gauge group, continuous or discrete, Abelian or non‐Abelian.
			In the non‐Abelian case, condition (b) simply singles out a one‐dimensional character.
			
			\item No particular dimension is imposed.
			Later we use two‐dimensional packaged subspaces to represent packaged qubits and use higher dimensions to represent packaged qudits.
			
			\item If an operator $P_{\rm pkg}$ projects onto $\mathcal H_{\rm int}$, then we have $[\hat Q,P_{\rm pkg}]=0$ and $[U(g)(x),P_{\rm pkg}]=0$.
		\end{enumerate}
	\end{remark}

	\subsection{Gauge Anomalies}

	A gauge anomaly is a quantum obstruction to the conservation of the gauge current.
	Equivalently, the Jacobian of the path integral measure under a local gauge transformation fails to be unity \cite{Fujikawa1979}.
	Everything we’ve said up to now relies on the assumption that our gauge symmetry really survives quantization, that is, there are no gauge anomalies.
	If the gauge group is anomalous, then the irreducibility arguments of Stage 1 can fail because the quantum theory is not invariant under the naïve local transformations.

	In a theory with a gauge anomaly, we may meet the following problems:
	
	\begin{enumerate}
		\item \emph{Non‐invariance of the vacuum.}  
		Under an infinitesimal gauge rotation generated by
		$$
		\hat G^a(x) \;=\; D_i E^{a,i}(x) - \rho^a(x),
		$$
		the measure picks up an anomalous term.
		Therefore, even the vacuum is not truly invariant \cite{Bertlmann1996}.
		
		\item \emph{Failure of Schur’s Lemma.}  
		In Stage 1, we used Schur’s lemma on the commutant of the local representation $D(g)$ to argue that only trivial projectors commute with all $D(g)$.
		But if local gauge invariance is broken \cite{Higgs1964,EnglertBrout1964}, then there is no consistent unitary representation $U(g)$ of the gauge group on the full Hilbert space.
		The entire irreducibility argument collapses \cite{WeinbergQFTII}.
		
		\item \emph{No consistent packaging.}  
		Since the gauge covariance of creation operators $\hat a^\dagger(\mathbf p)$ is spoiled, we can no longer assert that each $\hat a^\dagger$ carries an irreducible packet of IQNs.
		Even worse, physical states cease to lie in well defined gauge singlet sectors.
		Therefore, the local gauge-invariance constraint cannot be imposed consistently \cite{HenneauxTeitelboim}.
	\end{enumerate}

	Genuine gauge anomalies completely wreck the packaging framework:
	there is no consistent local gauge action, no local gauge-invariance, and no irreducible charge packets \cite{Bertlmann1996}.
	If the gauge symmetry is anomalous, we cannot even talk about charge superselection and define a proper local gauge-invariance constraint, let alone derive color singlets.
	In practice, any gauge theory we work with must be anomaly free (the Standard Model cancels all gauge anomalies \cite{Bouchiat1972}).
	Only then we can safely invoke the irreducibility and packaging arguments of Stage 1.

	Global anomalies (like the QCD $\mathrm{U}(1)_A$) only affect global quantum numbers and do not spoil the packaging of real gauge charges \cite{Witten1982}.
	We do not package those global symmetries in Stage 1.

	\begin{example}[Example of $\mathrm{U}(1)_A$ in QCD]
		$\mathrm{U}(1)_A$ in QCD is a global chiral rotation,
		$$
		\psi(x)\;\mapsto\;e^{i\alpha\gamma_5}\,\psi(x),
		$$
		not a local gauge transformation.
		It has a famous anomaly (the Adler-Bell-Jackiw anomaly) which breaks it down to a discrete $\mathbb Z_{2N_f}$ subgroup at the quantum level \cite{Adler1969,BellJackiw1969}.
		Since $\mathrm{U}(1)_A$ is not gauged, we never attempted to package states in irreps of it in Stage 1.
		We only package genuine gauge groups (e.g. $SU(3)_\text{color}$ or local $U(1)_\text{em}$).
		Thus the $\mathrm{U}(1)_A$ anomaly has no impact on our Stage 1 argument.
		It simply tells us that chiral charge is not conserved and superselection does not apply.
		Thus, we can form superpositions of different chirality, but we never needed chirality packaging in the first place.
	\end{example}

	\section{Stage 2: Hybridization (Single-particle Internal $\boldsymbol{\otimes}$ External Blocks)}
	\label{SEC:Stage2Hybridization}

	In Stage 1, a local creation operator already fixes all	IQNs inside one irreducible $G$-block.		
	Here we show that, in Stage 2, the irreducible internal block adjoin all gauge‐blind external DOFs (such as spin or helicity, momentum label, orbital index, optical mode, etc.).
	We further show that the IQN packaging remains intact under this hybridization of internal and external DOFs.

	\subsection{Single-particle External Space $\mathcal H_{\mathrm{ext}}$}

	First, we collect all the external labels of a single-particle into a new Hilbert space: $\mathcal H_{\rm ext}$.

%	\subsubsection{Definition}
	
	Usually, an external DOF $\xi$ refer to	
	\begin{itemize}\itemsep4pt
		\item Spin / helicity:  $s\in\{\uparrow,\downarrow\}$,
		\item Momentum mode:    $\mathbf p$,
		\item Atomic orbital:   $n\ell m$,
		\item Optical path:     $\text{A or B}$, $\cdots$.
	\end{itemize}

	If we collect all the states of external DOFs $\xi$, then we obtain a new Hilbert space
	$$
	\mathcal H_{\rm ext}
	\;=\;
	\mathrm{span}\bigl\{
	\ket{\xi}\bigr\},
	\quad
	\xi\in\mathcal I_{\rm ext},
	$$
	where $\mathcal I_{\rm ext}$ is any index set that the gauge group $G$ does not act on.

	We now formally give the definition of a single-particle external subspace as follows:
		
	\begin{definition}[Single‐particle external subspace]
		Let $G$ be the gauge group, and let $\mathcal I_{\rm ext}$ be any countable index set labeling purely external DOFs on which $G$ acts trivially (for example, spin/helicity $s$, momentum $\mathbf p$, orbital $n\ell m$, optical path A/B, etc.).
		Define the Hilbert space 
		$$
		\mathcal H_{\rm ext}
		\;=\;
		\mathrm{span}\bigl\{\,\ket{\xi}\bigr\}_{\xi\in\mathcal I_{\rm ext}},
		\qquad
		\langle\xi'\!\mid\xi\rangle=\delta_{\xi'\!,\xi},
		$$
		and demand $\forall\,g\in G:\;U(g)\ket{\xi}=\ket{\xi}$.
		Then we say that $\mathcal H_{\rm ext}$ is a \textbf{single‐particle external subspace}.
	\end{definition}

	\begin{remark}
		Since every field operator carries a compact gauge symmetry, its IQNs are always locked into a single irreducible $G$-block (Stage 1).
		External DOFs are different:
		an independent symmetry $G_{\text{ext}}$ may or may not exist, and even when it does, it need not act on all external labels.
		Thus, we focus on ION packaging, but leave the detailed study on external DOF packaging to next companion paper.
	\end{remark}

	\subsection{Single-particle Hybrid Space $\mathcal H_{\rm hyb} = \mathcal H_{\rm int} \otimes \mathcal H_{\mathrm{ext}}$}
	\label{SEC:SingleParticleHybridSpace}

	A single-particle hybrid space is the tensor product of internal space $\mathcal H_{\rm int}$ and external space $\mathcal H_{\mathrm{ext}}$, that is,
	\[
	\mathcal H_{\rm hyb} = \mathcal H_{\rm int} \otimes \mathcal H_{\mathrm{ext}}
	\]
	Using Eq. \eqref{EQ:SingleParticleIntDecomp}, we have
	\[
	\mathcal H_{\rm hyb} 
	= 
	\bigoplus_{R \in \widehat G} V_{R} 
	\otimes 
	\left(\mathcal M_{R} \otimes \mathcal H_{\mathrm{ext}}\right)
	\]
	Generally, $\mathcal H_{\rm hyb}$ is not packaged, but each
	$
	V_{R} \otimes 
	\left(\mathcal M_{R} \otimes \mathcal H_{\mathrm{ext}}\right)
	$
	is a packaged subspace.

	\subsubsection{Definition of Single‐particle Hybrid Space}

	\begin{definition}[Single‐particle hybrid space]
		\label{DEF:SingleParticleHybridPackagedSubspace}
		Let $G$ be a finite or compact gauge group,
		$\mathcal H_{\rm int}$ be an internal space	and $\mathcal H_{\rm ext}$ be an external subspace with trivial $G$-action.
		Then we say that the tensor product
		\begin{equation}\label{EQ:SingleParticleHybridPackagedSubspace}
			\mathcal H_{\rm hyb} 
			= \mathcal H_{\rm int} \otimes \mathcal H_{\rm ext}
			= \bigoplus_{R \in \widehat G} V_{R} 
			\otimes 
			\left(\mathcal M_{R} \otimes \mathcal H_{\mathrm{ext}}\right)
		\end{equation}
		is a \textbf{single-particle hybrid space}, and
		each summand
		$
		V_{R} \otimes 
		\left(\mathcal M_{R} \otimes \mathcal H_{\mathrm{ext}}\right)
		$
		is a \textbf{single-particle hybrid-packaged subspace}.
	\end{definition}

	Internal packaging is unbreakable.
	All other external labels remain spectators.	
	Hybridization ``tacks on'' all external labels without spoiling the internal packaging.
	Only when you thereafter declare an external symmetry group $H$ do those particular DOFs themselves become packaged.
	Thus, we obtain the total on‑site action
	\begin{equation}\label{EQ:SingleParticleTotalAction}
		U(g_{\text{int}})
		=(D^{(R_{\text{int}})}(g_{\text{int}}))
		\otimes
		\mathbf 1_{\text{ext}}.
	\end{equation}
	The entire hybrid space still furnishes the same irrep $R$.
	In particular, no splitting of the internal IQNs is possible.

	\begin{example}[Examples of single‐particle hybrid space]
		\leavevmode
		\begin{enumerate}
			\item Electron + spin.
			
			In QED, the one‐electron internal block $\ket{e^-}$ (charge $-e$) is one-dimensional
			($D^{(R)}\cong\mathbf 1$).
			Tensoring with spin-$\tfrac12$ space gives a single-particle hybrid-packaged subspace
			$$
			\mathcal H_{\rm hyb}
			=\mathrm{span}\{\ket{e^-,\uparrow}, \ket{e^-,\downarrow}\},
			$$
			of dimension $2$.
			The gauge‐action remains $U_\alpha=e^{-ie\alpha} \mathbf 1$.
			
			\item Counter-example: necessity of ``trivial gauge action'' on $\mathcal H_{\mathrm{ext}}$.
			
			Suppose one attempted to encode another gauge SU(2), letting the local gauge group act as
			$U(g) = D(g) \otimes e^{i\theta\hat\sigma_z}$.
			Then the internal and external factors no longer commute,
			$[D(g) \otimes \mathbf 1, \mathbf 1 \otimes e^{i \theta \hat \sigma_z}] \neq 0$,
			and the combined representation generally becomes irreducible again.
			All external DOFs would be re-absorbed into the packaged block-hybridization fails.
			The trivial-action requirement in Def \ref{DEF:SingleParticleHybridPackagedSubspace} is therefore essential.
		\end{enumerate}
	\end{example}

	\begin{remark}[Robustness and utility of hybridization]
		When we adjoin gauge‐blind external DOFs (spin, momentum, etc.) to an irreducible internal block, nothing can leak out of that block:
		
		\begin{enumerate}
			\item Block integrity.  
			Since the gauge group $G$ acts only on the internal factor $\mathcal H_{\rm int}$ and trivially on the external factor $\mathcal H_{\rm ext}$, any operator that commutes with $G$ can reshuffle external labels at will, but it cannot split or disturb the internal irrep $R$.
			The packaged IQNs remain locked together.
			
			\item Increased encoding capacity.  
			The hybrid space 
			$
			\mathcal H_{\rm hyb} = \mathcal H_{\rm int} \otimes \mathcal H_{\rm ext}
			$
			carries the same irrep $R$ on its internal factor, but has dimension
			$\dim \mathcal H_{\rm hyb} = \dim R \times \dim \mathcal H_{\rm ext}$.  
			By choosing a large external index set one can embed high‐dimensional qudits without ever violating internal‐charge superselection.
			
			\item Ready for many‐body assembly.  
			All symmetrization or antisymmetrization of identical excitations can be postponed to Stage 3, where these enlarged hybrid blocks are tensor‐multiplied and then reorganized into new irreducible sectors.
		\end{enumerate}
	\end{remark}

	\subsubsection{Properties of Single-particle Hybrid-packaged Subspaces}

	\begin{proposition}\label{PROP:CommutantOrthonormalityCompleteness}
		Let $G$ be a compact group and $D^{(R)}$ be an irrep of $G$.
		On $\mathcal H_{\rm hyb} = \mathcal H_{\rm int} \otimes \mathcal H_{\rm ext}$, the gauge action is
		$U(g) = D(g) \otimes \mathbf 1_{\rm ext}$.
		\begin{enumerate}
			\item Commutant in the hybrid space.
			
			By Schur’s lemma, the commutant $\{X\;|\;[X,U(g)]=0,~ \forall g\in G\}$ of local gauge action $U(g)$ is
			\begin{equation}\label{EQ:CommutantOfTheHybridGaugeAction}
				\mathrm{Comm}\bigl\{U(g)\bigr\}
				=
				\{X\;|\;[X,U(g)]=0, ~\forall g \in G\}
				=\mathbb C\,\mathbf 1_{\rm pkg} \otimes \mathcal B(\mathcal H_{\mathrm{ext}}).
			\end{equation}
			
			\item Orthonormality and completeness.
			
			The set
			$\bigl\{\ket{R,\alpha;\xi}\bigr\}_
			{\alpha=1,\dots,\dim R,\;\xi\in\mathcal I_{\mathrm{ext}}}$
			is an orthonormal basis of the hybrid space
			$$
			\mathcal H_{\rm hyb}
			:=\mathcal H_{\rm int}^{(R)}
			\otimes
			\mathcal H_{\mathrm{ext}}.
			$$
		\end{enumerate}
	\end{proposition}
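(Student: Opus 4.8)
The statement bundles two claims: the commutant identity \eqref{EQ:CommutantOfTheHybridGaugeAction} and the orthonormal-basis assertion. The basis claim is routine — it is the standard fact that the tensor product of orthonormal bases is an orthonormal basis — so the plan is to dispatch it quickly and concentrate the effort on the commutant, which is where Schur's lemma (already proved as step~2 of Theorem~\ref{THM:NoPartialFactorization}) does the real work. I would prove the commutant identity by establishing the two inclusions separately.

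The inclusion $\mathbb C\,\mathbf 1_{\rm pkg}\otimes\mathcal B(\mathcal H_{\mathrm{ext}})\subseteq\mathrm{Comm}\{U(g)\}$ is immediate, since for any $B\in\mathcal B(\mathcal H_{\mathrm{ext}})$ one has $(\mathbf 1\otimes B)(D(g)\otimes\mathbf 1)=D(g)\otimes B=(D(g)\otimes\mathbf 1)(\mathbf 1\otimes B)$, with scalars absorbed into $B$. The reverse inclusion is the heart of the argument. Let $X\in\mathcal B(\mathcal H_{\rm int}^{(R)}\otimes\mathcal H_{\mathrm{ext}})$ satisfy $[X,\,D(g)\otimes\mathbf 1]=0$ for all $g$. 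The plan is to \emph{slice} $X$ in the external variable: for fixed $\phi,\psi\in\mathcal H_{\mathrm{ext}}$ set $T_{\phi\psi}:=(\mathbf 1\otimes\langle\phi|)\,X\,(\mathbf 1\otimes|\psi\rangle)\in\mathcal B(\mathcal H_{\rm int}^{(R)})$. Because $D(g)\otimes\mathbf 1$ acts only on the internal factor, the identities $D(g)(\mathbf 1\otimes\langle\phi|)=(\mathbf 1\otimes\langle\phi|)(D(g)\otimes\mathbf 1)$ and $(D(g)\otimes\mathbf 1)(\mathbf 1\otimes|\psi\rangle)=(\mathbf 1\otimes|\psi\rangle)D(g)$ let me pull $D(g)$ through the slice and use the commutation hypothesis to obtain $D(g)\,T_{\phi\psi}=T_{\phi\psi}\,D(g)$ for every $g$. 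Since $D^{(R)}$ is an irrep, Schur's lemma forces $T_{\phi\psi}=\lambda(\phi,\psi)\,\mathbf 1_{\rm pkg}$ for a scalar $\lambda(\phi,\psi)$.

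It then remains to reassemble the scalars into a single external operator. The map $(\phi,\psi)\mapsto\lambda(\phi,\psi)$ is sesquilinear, and $|\lambda(\phi,\psi)|\le\|X\|\,\|\phi\|\,\|\psi\|$ because each slice is a compression of $X$; hence it is a bounded sesquilinear form. By the Riesz representation theorem there is a unique $B\in\mathcal B(\mathcal H_{\mathrm{ext}})$ with $\lambda(\phi,\psi)=\langle\phi|B|\psi\rangle$, and comparing matrix elements $\langle u\otimes\phi|X|v\otimes\psi\rangle=\lambda(\phi,\psi)\langle u|v\rangle=\langle u\otimes\phi|\,\mathbf 1_{\rm pkg}\otimes B\,|v\otimes\psi\rangle$ over all $u,v,\phi,\psi$ gives $X=\mathbf 1_{\rm pkg}\otimes B$, closing the first part. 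For the basis claim, I would fix an orthonormal basis $\{|R,\alpha\rangle\}_{\alpha=1}^{\dim R}$ of the carrier $V_R\cong\mathcal H_{\rm int}^{(R)}$, recall that $\{|\xi\rangle\}_{\xi\in\mathcal I_{\mathrm{ext}}}$ is orthonormal by the defining relation $\langle\xi'|\xi\rangle=\delta_{\xi',\xi}$, and set $|R,\alpha;\xi\rangle:=|R,\alpha\rangle\otimes|\xi\rangle$. Orthonormality follows from $\langle R,\alpha;\xi|R,\beta;\xi'\rangle=\delta_{\alpha\beta}\delta_{\xi\xi'}$, and completeness from the factorization $\sum_{\alpha,\xi}|R,\alpha;\xi\rangle\langle R,\alpha;\xi|=(\sum_\alpha|R,\alpha\rangle\langle R,\alpha|)\otimes(\sum_\xi|\xi\rangle\langle\xi|)=\mathbf 1_{\rm int}\otimes\mathbf 1_{\mathrm{ext}}$.

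I expect the main obstacle to be precisely the reverse inclusion of the commutant, and within it the only genuinely delicate point is the well-definedness and boundedness of $B$ when $\mathcal H_{\mathrm{ext}}$ is infinite-dimensional. The slicing construction is designed to handle this automatically, since each slice is a norm-contraction of $X$, so the associated form is bounded and Riesz applies without extra hypotheses. Everything else reduces to the application of Schur's lemma from Theorem~\ref{THM:NoPartialFactorization} together with elementary tensor-product bookkeeping.
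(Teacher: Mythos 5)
Your proposal is correct, and its core is the same as the paper's: Schur's lemma applied to the irreducible internal factor $D^{(R)}$. The difference is one of completeness. The paper's own proof of part~1 is essentially a two-line assertion — ``the commutant of $D$ on $\mathcal H_{\rm int}$ is $\mathbb C\,\mathbf 1_{\rm pkg}$, and since $G$ acts trivially on $\mathcal H_{\mathrm{ext}}$ the full $\mathcal B(\mathcal H_{\mathrm{ext}})$ is untouched'' — which silently uses the nontrivial fact that the commutant of $\mathcal A\otimes\mathbb C\mathbf 1$ equals $\mathcal A'\otimes\mathcal B(\mathcal H_{\mathrm{ext}})$; the easy inclusion is obvious but the reverse inclusion is exactly what must be argued. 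Your slicing construction $T_{\phi\psi}:=(\mathbf 1\otimes\langle\phi|)\,X\,(\mathbf 1\otimes|\psi\rangle)$, followed by Schur's lemma fiberwise and the Riesz-representation reassembly of the scalars $\lambda(\phi,\psi)$ into a bounded operator $B$ on $\mathcal H_{\mathrm{ext}}$, supplies precisely that missing step, and does so in a way that is valid for infinite-dimensional $\mathcal H_{\mathrm{ext}}$ (where one cannot simply pick a finite basis and decompose $X$ into matrix blocks). So what your route buys is a genuine proof of the reverse inclusion with the boundedness issue handled; what the paper's terser route buys is brevity, at the cost of leaving the key tensor-commutant step as an unproved appeal to Schur. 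Part~2 of your argument coincides with the paper's (standard tensor-product basis fact).
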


	\begin{proof}
		\leavevmode
		\begin{enumerate}
			\item Since $D^{(R)}$ is irreducible, Schur’s lemma implies its commutant on $\mathcal H_{\rm int}$ is $\mathbb C\,\mathbf 1_{\rm pkg}$, that is,
			$\mathrm{Comm}\{D(g)\}=\mathbb C\mathbf 1_{\rm pkg}$.
			Since $G$ acts trivially on $\mathcal H_{\mathrm{ext}}$, the identity action on $\mathcal H_{\mathrm{ext}}$ leaves the full bounded-operator algebra $\mathcal B(\mathcal H_{\mathrm{ext}})$ untouched.
			This gives Eq. \eqref{EQ:CommutantOfTheHybridGaugeAction}.
			
			\item Using $\langle R,\beta|R,\alpha\rangle=\delta_{\beta\alpha}$ and
			$\langle\xi'|\xi\rangle=\delta_{\xi'\!,\xi}$, one obtains
			$
			\langle R,\beta;\xi'|R,\alpha;\xi\rangle
			=\delta_{\beta\alpha}\delta_{\xi'\!,\xi}.
			$
			Completeness follows from the defining tensor-product basis theorem.
		\end{enumerate}
	\end{proof}

	$\mathcal B(\mathcal H_{\mathrm{ext}})$ describes all physical
	manipulations (polarization rotations, beam-splitters, time-delays, $\cdots$) allowed by gauge symmetry.  
	Such operations can never reveal or alter the locked IQNs.
	The internal and external labels are completely factorized in Hilbert space structure, yet remain inseparable under gauge transformations.

	\begin{lemma}[Gauge covariance of hybrid operators]
		\label{LEM:HybridCovariance}
		For every $g \in G$, the single-particle hybrid-packaged creation operator $\hat A^{\dagger}_{\alpha;\xi}$ transforms as
		$$
		U(g) \,\hat A^{\dagger}_{\alpha;\xi}\,U(g)^{-1}
		= \sum_{\beta}
		D^{(R)}_{\alpha\beta}(g)\;
		\hat A^{\dagger}_{\beta;\xi}.
		$$
	\end{lemma}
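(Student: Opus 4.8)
The plan is to reduce the hybrid covariance to the single-particle internal covariance already established in Stage~1, exploiting the fact that the external label $\xi$ is a gauge spectator. First I would recall the on-site gauge action on the hybrid space, $U(g) = D^{(R)}(g)\otimes\mathbf 1_{\rm ext}$ from Eq.~\eqref{EQ:SingleParticleTotalAction}, together with the invariance of the Fock vacuum, $U(g)\ket 0 = \ket 0$ (hence also $U(g)^{-1}\ket 0 = \ket 0$). The commutant structure recorded in Proposition~\ref{PROP:CommutantOrthonormalityCompleteness} confirms that the external sector is strictly inert under $G$, so the only non-trivial action is the rotation of the internal index $\alpha$.

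Second, I would test the claimed identity on the vacuum. By construction $\hat A^{\dagger}_{\alpha;\xi}\ket 0 = \ket{R,\alpha;\xi}$, and acting with $U(g)$ on this one-particle state gives $U(g)\ket{R,\alpha;\xi} = \sum_{\beta} D^{(R)}_{\alpha\beta}(g)\,\ket{R,\beta;\xi}$, because $D^{(R)}$ rotates $\alpha$ while $\mathbf 1_{\rm ext}$ leaves $\xi$ untouched. Combined with $U(g)^{-1}\ket 0 = \ket 0$, this already yields $U(g)\,\hat A^{\dagger}_{\alpha;\xi}\,U(g)^{-1}\ket 0 = \sum_{\beta} D^{(R)}_{\alpha\beta}(g)\,\hat A^{\dagger}_{\beta;\xi}\ket 0$, which is the desired relation evaluated on the single generating state.

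Third, to promote this from an identity on the vacuum to a genuine operator identity on all of $\mathcal H_{\rm hyb}$, I would factor the hybrid creation operator into its internal and external parts: the internal factor carries the index $\alpha$ and, by Theorem~\ref{THM:NoPartialFactorization}, transforms covariantly as $U(g)\,\hat a^{\dagger}_{\alpha}\,U(g)^{-1} = \sum_{\beta} D^{(R)}_{\alpha\beta}(g)\,\hat a^{\dagger}_{\beta}$, while the external mode operator labeled by $\xi$ commutes with $U(g)$ since $U(g)$ acts as $\mathbf 1_{\rm ext}$ on that factor. Distributing the conjugation across the two factors then reproduces the stated transformation law verbatim, now as an operator equation rather than merely an equation between states.

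The main obstacle I anticipate is precisely this last step: justifying the factorization $\hat A^{\dagger}_{\alpha;\xi} = \hat a^{\dagger}_{\alpha}\,\otimes\,(\text{external mode }\xi)$ at the level of second-quantized operators and checking its compatibility with the (anti)commutation relations. An alternative that sidesteps factorization is to verify the identity directly on the dense set of finite-particle states obtained by applying arbitrary products of creation operators to $\ket 0$, using only that $U(g)$ is a unitary representation—so that conjugation $X\mapsto U(g)XU(g)^{-1}$ is an algebra automorphism—together with the single-particle covariance of Stage~1. Either route leans entirely on the Stage~1 facts that the internal block transforms irreducibly and that the external sector is inert under $G$; once these are in hand, the remaining computation is routine bookkeeping.
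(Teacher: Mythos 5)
Your proposal is correct and follows essentially the same route as the paper's proof, which simply conjugates the definitional factorization $\hat A^{\dagger}_{\alpha;\xi}=\hat a^{\dagger}_{R,\alpha}\otimes\ket{\xi}$ using $U(g)=D^{(R)}(g)\otimes\mathbf 1_{\rm ext}$ and the Stage-1 covariance of the internal factor. The factorization you flag as the main obstacle is not one: it is the paper's \emph{definition} of the hybrid creation operator, Eq.~\eqref{EQ:SingleParticleHybridPackagedCreationOperator}, so your third step is precisely the paper's argument and the vacuum check and dense-set alternative are unnecessary extras.
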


	\begin{proof}
		Since $G$ acts as the irreducible matrix $D(g)$ on $\mathcal H_{\rm int}$ and as the identity on $\mathcal H_{\mathrm{ext}}$, gauge and Lorentz indices remain separated and irreducibility of the internal block is unchanged.
		Mathematically,
		\begin{align}
			\begin{aligned}\label{EQ:GaugeAction}
				U(g)\,\hat A_{\alpha;\xi}^{\dagger}\,U(g)^{-1}
				&=U(g)\bigl(\hat a^\dagger_{R,\alpha}\otimes\ket{\xi}\bigr)U(g)^{-1} \\
				&=\bigl(D(g)\hat a_{R,\alpha}^{\dagger}\bigr)\otimes\ket{\xi} \\
				&=\sum_{\beta}D^{(R)}_{\beta\alpha}(g) \hat A_{\beta;\xi}^{\dagger},
			\end{aligned}
		\end{align}
		This is identical to the transformation law for the internal operator in Stage 1.		
	\end{proof}

	This shows that, because $D^{(R)}$ was already irreducible, tensoring with a spectator space cannot reduce it.
	By unitarity, $D^{(R)}$ therefore stays irreducible on $\mathcal H_{\rm hyb}$.		
	In other words, hybridization multiplies the dimension of the one-particle internal space by
	$\dim\mathcal H_{\mathrm{ext}}$, but leaves the internal charge packet
	intact, so no partial factorization of IQNs appears.

	\subsubsection{Definition of Single-particle Hybrid-packaged States}
	\label{subsec:HybridPackagedStates}

	Stage 2 endows every single-particle internal block $\mathcal H_{\rm int}$ with a set of gauge-blind external labels.
	The object created is a hybrid-packaged state. 
	We now define these states precisely, verify their gauge covariance, and spell out the operator algebra that acts nontrivially on them.

	\begin{definition}[Single-particle hybrid-packaged state]
		\label{DEF:HybridState}
		Let $\{\ket{R,\alpha}\}_{\alpha=1}^{\dim R}$ be an orthonormal basis of the irreducible internal space $\mathcal H_{\rm int}$ and
		let	$\{\ket{\xi}\}_{\xi\in\mathcal I_{\mathrm{ext}}}$ be an orthonormal basis of the external Hilbert space $\mathcal H_{\mathrm{ext}}$ on which $G$ acts trivially.
		Define the \textbf{single-particle hybrid-packaged creation operator}
		\begin{equation}\label{EQ:SingleParticleHybridPackagedCreationOperator}
			\hat A^{\dagger}_{\alpha;\xi}
			\;:=\;
			\hat a^{\dagger}_{R,\alpha}\,\otimes\,\ket{\xi}
		\end{equation}
		that act on the global vacuum
		$\ket{\Omega}
		:= \ket{0}_{\rm pkg}\!\otimes\!\ket{0}_{\mathrm{ext}}$
		to obtain the one-particle state
		\begin{equation}
			\ket{R,\alpha;\xi}
			\;:=\;
			\hat A^{\dagger}_{\alpha;\xi}\ket{\Omega}
		\end{equation}
		By Eq. \eqref{EQ:SingleParticleHybridPackagedSubspace}, $\ket{R,\alpha;\xi}$ must fall into one of the hybrid-packaged subspace.
		Then we say that the vector $\ket{R,\alpha;\xi}$ is a \textbf{single-particle hybrid-packaged state}.
	\end{definition}

	\begin{example}[Examples of single-particle hybrid-packaged states]
		\leavevmode
		\begin{enumerate}
			\item Electron with spin.
			
			In QED, the internal irrep is one‐dimensional (with charge $-e$).
			The external space $\mathcal H_{\rm ext}$ is spanned by the two‐dimensional spin‐helicity basis, that is, $\mathcal H_{\rm ext} = \operatorname{span}\{\ket{\uparrow},\ket{\downarrow}\}$.
			Thus
			$$
			\ket{e^-;\mathbf p,\uparrow}
			=\hat a^\dagger_{e^-}(\mathbf p)\,\otimes\ket{\uparrow},
			\quad
			\ket{e^-;\mathbf p,\downarrow}
			=\hat a^\dagger_{e^-}(\mathbf p)\,\otimes\ket{\downarrow},
			$$
			span a two‐dimensional $\mathcal H_{\rm hyb}$.
			Gauge transformations act by the same phase $e^{-ie\alpha}$ on both basis vectors, which preserves the one‐dimensional internal block.
			
			\item Boson helicity.
			
			For the photon, the internal U(1) block is trivial.
			The creation operators		
			$$
			\hat a^{\dagger}_{\lambda}(\mathbf p),\qquad
			\lambda=\pm1,
			$$		
			transform as $U_{\alpha}\hat a^{\dagger}_{\lambda}U_{\alpha}^{-1}=
			\hat a^{\dagger}_{\lambda}$ while the helicity index $\lambda$ spans
			$\mathcal H_{\mathrm{ext}}=\mathbb C^{2}$.
			Thus the two polarization states form a hybrid-packaged doublet.
			
			\item Momentum label.
			
			Take a color singlet hydrogen atom field $H$ from Stage 1.
			Write a wave-packet creation operator		
			$$
			\hat h^{\dagger}(f)=\!\int\!\frac{d^{3}p}{(2\pi)^{3}2E_{\mathbf p}}\,
			f(\mathbf p)\,\hat h^{\dagger}(\mathbf p),
			$$		
			with $f\in L^{2}(\mathbb R^{3})$.
			The L$^2$ space is an instance of $\mathcal H_{\mathrm{ext}}$, and $G$ (here U(1)) acts trivially on the momentum label.
			Hence every normalize wave-packet is still a single packaged state, but the external multiplicity is now infinite-dimensional.
		\end{enumerate}		
	\end{example}

	\subsection{Packaging Survives Hybridization}

	Stage 1 locked all IQNs into a single
	irreducible $G$-block.	 
	Stage 2 then adjoined an external spectator space $\mathcal H_{\mathrm{ext}}$ (hybridization).  
	We now prove that neither the packaging nor irreducibility is disturbed by hybridization.

	\begin{proposition}[Stability under gauge‐invariant operations]
		Let $V$ be any bounded operator on the full Fock space that
		\begin{enumerate}%[label=(\roman*)]
			\item Commutes with every local gauge transformation $U(g)$:
			$[V, U(g)] = 0 \;\forall g\in G$.
			
			\item Preserves the total charge $Q_0$:
			$[V,\hat Q]=0$. 
		\end{enumerate}
		Then $V$ maps $\mathcal H_{\rm hyb}$ into itself:
		$V \mathcal H_{\rm hyb} \subseteq \mathcal H_{\rm hyb}$.
	\end{proposition}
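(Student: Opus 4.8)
The plan is to prove this exactly in parallel with the stability statement of Proposition~\ref{PROP:SinglePackagedProperties}(i), upgraded to the hybrid setting by using the commutant structure already computed in Proposition~\ref{PROP:CommutantOrthonormalityCompleteness}. The two hypotheses on $V$ are precisely the two gradings that define $\mathcal H_{\rm hyb}$: the charge grading supplied by $\hat Q$ and the isotypic ($G$-irrep) grading supplied by $U(g)$. The entire argument is to show that $V$ respects both gradings simultaneously, so that it cannot carry a vector of $\mathcal H_{\rm hyb}$ out of the direct sum $\bigoplus_{R\in\widehat G} V_R\otimes(\mathcal M_R\otimes\mathcal H_{\rm ext})$ of Eq.~\eqref{EQ:SingleParticleHybridPackagedSubspace}.

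First I would use $[V,\hat Q]=0$: for any $\ket\psi$ with $\hat Q\ket\psi=Q_0\ket\psi$ one has $\hat Q\,(V\ket\psi)=V\hat Q\ket\psi=Q_0\,(V\ket\psi)$, so $V$ preserves every charge eigenspace and hence the charge grading of $\mathcal H_{\rm hyb}$. Second I would use $[V,U(g)]=0$ together with the block form $U(g)=\bigoplus_R D^{(R)}(g)\otimes\mathbf 1$. Because the sum runs over pairwise inequivalent irreps $R\in\widehat G$, Schur's lemma forbids $V$ from intertwining distinct carrier spaces $V_R$, and on each block it confines $V$ to the commutant of Eq.~\eqref{EQ:CommutantOfTheHybridGaugeAction}, namely $\mathbf 1_{V_R}\otimes\mathcal B(\mathcal M_R\otimes\mathcal H_{\rm ext})$. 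Thus $V$ acts as the identity on each packaged carrier $V_R$ and may only reshuffle the multiplicity and external labels, never splitting or mixing an irreducible block. Combining the two facts, $V$ sends each summand $V_R\otimes(\mathcal M_R\otimes\mathcal H_{\rm ext})$ into summands of the same charge, all of which lie inside $\mathcal H_{\rm hyb}$; therefore $V\,\mathcal H_{\rm hyb}\subseteq\mathcal H_{\rm hyb}$.

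The step I expect to be the real obstacle is not the Schur/charge bookkeeping but pinning down \emph{what invariant data actually characterizes} $\mathcal H_{\rm hyb}$ inside the full Fock space. The hypotheses $[V,\hat Q]=0$ and $[V,U(g)]=0$ are equally satisfied by multi-particle states of the same charge and irrep type, so by themselves they do not prevent $V$ from mapping a one-particle vector into a many-particle sector (for instance a gauge-invariant, charge-neutral pair-creation term). To make the statement hold as written I would either adopt the standing assumption that $V$ is particle-number preserving (the natural reading for the Stage~2 spectator operators $\mathbf 1_{\rm pkg}\otimes\mathcal B(\mathcal H_{\rm ext})$), or else read $\mathcal H_{\rm hyb}$ as the symmetry-defined charge-and-character block rather than the literal single-particle sector, exactly as $\mathcal H_{\rm pkg}$ was treated in Proposition~\ref{PROP:SinglePackagedProperties}. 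I would flag this restriction explicitly before invoking Schur's lemma, since everything downstream relies on $U(g)$ having the clean block form $\bigoplus_R D^{(R)}(g)\otimes\mathbf 1$ on the space under consideration.
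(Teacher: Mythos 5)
Your core argument is the same as the paper's own proof: use $[V,\hat Q]=0$ to confine $V$ to a fixed charge sector, then use $[V,U(g)]=0$ together with the commutant structure of Proposition~\ref{PROP:CommutantOrthonormalityCompleteness} (Schur's lemma) to force $V$ to act as $\mathbf 1_{\rm pkg}\otimes B$ with $B$ acting only on the spectator factors.

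The obstacle you flag in your last paragraph is genuine, and it is worth stressing that the paper's own proof does not escape it. The two hypotheses only guarantee that $V$ preserves each charge sector and each isotypic component of the \emph{full Fock space}, and a charge sector contains multi-particle states as well: the same irrep label $R$ and the same eigenvalue $Q_0$ recur in higher particle-number sectors, and Schur's lemma never forbids an intertwiner between two \emph{equivalent} copies of $V_R$ sitting in different particle-number sectors. A concrete counterexample to the literal statement is the unitary $e^{i\epsilon O}$ generated by a gauge-invariant, charge-neutral pair term $O$ (schematically a smeared $\bar\psi\psi$): it is bounded, commutes with every $U(g)$ and with $\hat Q$, yet maps a one-particle vector into a superposition involving particle-plus-pair states, hence out of $\mathcal H_{\rm hyb}$. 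The paper's proof is in fact circular precisely here: it applies Proposition~\ref{PROP:CommutantOrthonormalityCompleteness}, which characterizes operators \emph{on} $\mathcal H_{\rm hyb}$, to the Fock-space operator $V$ — but doing so presupposes that $V$ restricts to an operator on $\mathcal H_{\rm hyb}$, which is exactly the claim being proven. Your two proposed repairs — either add the hypothesis that $V$ preserves particle number (the natural reading for Stage~2 spectator operations in $\mathbf 1_{\rm pkg}\otimes\mathcal B(\mathcal H_{\rm ext})$), or reinterpret $\mathcal H_{\rm hyb}$ as the full charge-and-character block of the Fock space, in parallel with Proposition~\ref{PROP:SinglePackagedProperties} — are the correct ways to make the statement true, and stating the needed restriction before invoking Schur's lemma is exactly the right discipline.
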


	\begin{proof}
		Since $V$ commutes with $U(g)$ and the total charge operator $\hat Q$, it is restricted inside a fixed charge sector and cannot mix different characters nor different charge sectors.
		
		By Proposition \ref{PROP:CommutantOrthonormalityCompleteness} and Lemma \ref{LEM:HybridCovariance}, the only way $V$ can act on $\mathcal H_{\rm hyb}$ is via an operator of the form $\mathbf 1_{\rm pkg}\otimes B$ for some $B$ on $\mathcal H_{\rm ext}$.
		In other words, $V$ takes any generic hybrid state		
		$$
		\ket\Psi=\sum_{\alpha,\xi}c_{\alpha\xi}\ket{R,\alpha;\xi}
		$$
		to again a linear combination of the same basis
		$$
		V\ket\Psi
		=\sum_{\alpha,\xi}c_{\alpha\xi}
		\ket{R,\alpha}\otimes B\ket{\xi}\in\mathcal H_{\rm hyb}.
		$$
		This proves invariance.
	\end{proof}

	Hence the entire internal plus external label transforms inside the same irrep $R$.

	\begin{theorem}[Stability of packaging under hybridization]
		Let $G$ be a compact (or finite) gauge group,
		\footnote{
			Compactness guarantees complete reducibility and the validity of
			Schur’s lemma. All physical gauge groups in particle physics satisfy it ($U(1)$, $SU(N)$, $SO(N)$, $\cdots$).
		}
		let $D^{(R)}: G \to {\rm U}(V_R)$ be an irreducible unitary representation carried by the single-particle internal space $V_R \cong \mathcal H_{\rm int}$,
		and let $\mathcal H_{\mathrm{ext}}$ be a Hilbert space on which $G$ acts trivially: $U(g)|_{\mathcal H_{\mathrm{ext}}}=\mathbf 1$.		
		Define the single-particle hybrid space		
		$$
		\mathcal H_{\rm hyb}
		:=V_R\otimes\mathcal H_{\mathrm{ext}}
		\quad\text{with gauge action}\quad
		U(g)=D(g)\otimes\mathbf 1 .
		$$		
		Then
		\begin{enumerate}
			\item Direct-sum structure:
			As a $G$-module,
			\begin{equation}\label{EQ:DirectSum}
				\mathcal H_{\rm hyb}
				\;\cong\;
				\bigoplus_{\xi\in\mathcal I_{\mathrm{ext}}}
				V_R^{(\xi)},
			\end{equation}	
			that is, $\mathcal H_{\rm hyb}$ is a multiplicity space of mutually orthogonal copies of the same irreducible block $R$, one for each external label $\xi$.
			
			\item No new internal factorization:
			Any bounded operator
			$P: \mathcal H_{\rm hyb} \to \mathcal H_{\rm hyb}$ that commutes with the gauge action, $[P, U(g)] = 0 ~ \forall g \in G$, has the form
			\begin{equation}\label{EQ:NoNewInternalFactorization}
				P=\mathbf 1\otimes P_{\mathrm{ext}},
				\quad
				P_{\mathrm{ext}}\in\mathcal B(\mathcal H_{\mathrm{ext}}).
			\end{equation}				
			In other words, no projector can split the internal index $\alpha$ while leaving the external label untouched.
			The IQNs therefore remain an indivisible packaged block after hybridization.
		\end{enumerate}		
	\end{theorem}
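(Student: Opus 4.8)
The plan is to treat the two claims separately, both of which reduce to Schur's lemma applied to the irreducible block $D^{(R)}$, exactly in the spirit of Proposition \ref{PROP:CommutantOrthonormalityCompleteness}.

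For the direct-sum structure, I would fix an orthonormal basis $\{\ket{\xi}\}_{\xi\in\mathcal I_{\mathrm{ext}}}$ of $\mathcal H_{\mathrm{ext}}$ and set $V_R^{(\xi)} := V_R\otimes\mathbb C\ket{\xi}$. Because the gauge action is $U(g)=D(g)\otimes\mathbf 1$ and $U(g)\ket{\xi}=\ket{\xi}$, each $V_R^{(\xi)}$ is a $G$-invariant subspace on which $G$ acts through the single irrep $D^{(R)}$; orthonormality of $\{\ket{\xi}\}$ makes these copies mutually orthogonal, and their Hilbert direct sum exhausts $\mathcal H_{\rm hyb}$. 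This yields Eq.~\eqref{EQ:DirectSum} and exhibits $\mathcal H_{\mathrm{ext}}$ as the multiplicity space of the single irrep $R$.

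For the commutant statement I would run a ``slicing'' argument. Given a bounded $P$ with $[P,U(g)]=0$ for all $g$, define for each ordered pair $\xi',\xi$ the compressed operator on $V_R$
$$
P^{(\xi'\xi)} \;:=\; (\mathbf 1\otimes\bra{\xi'})\,P\,(\mathbf 1\otimes\ket{\xi}).
$$
Substituting $U(g)=D(g)\otimes\mathbf 1$ into the commutation relation and using $U(g)\ket{\xi}=\ket{\xi}$ shows $P^{(\xi'\xi)}D(g)=D(g)P^{(\xi'\xi)}$ for every $g\in G$. Since $G$ is compact, $D^{(R)}$ is finite-dimensional and irreducible, so Schur's lemma forces $P^{(\xi'\xi)}=c_{\xi'\xi}\,\mathbf 1$ for scalars $c_{\xi'\xi}\in\mathbb C$. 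Reading this off in the product basis gives $\bra{R,\alpha;\xi'}P\ket{R,\beta;\xi}=\delta_{\alpha\beta}\,c_{\xi'\xi}$, which is exactly the matrix of $\mathbf 1\otimes P_{\mathrm{ext}}$ once we define $P_{\mathrm{ext}}\in\mathcal B(\mathcal H_{\mathrm{ext}})$ by $\bra{\xi'}P_{\mathrm{ext}}\ket{\xi}=c_{\xi'\xi}$; this is Eq.~\eqref{EQ:NoNewInternalFactorization}.

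The only genuinely delicate point is the infinite-dimensionality of $\mathcal H_{\mathrm{ext}}$: one must verify that $P_{\mathrm{ext}}$ is a bona fide bounded operator and not merely a formal matrix of scalars. I would dispatch this by realizing $P_{\mathrm{ext}}$ as the compression $P_{\mathrm{ext}}=(\langle e_1|\otimes\mathbf 1_{\mathrm{ext}})\,P\,(|e_1\rangle\otimes\mathbf 1_{\mathrm{ext}})$ for any fixed unit vector $e_1\in V_R$ (the slicing identity guarantees independence of the choice of $e_1$), so that $\|P_{\mathrm{ext}}\|\le\|P\|<\infty$. Everything else---the block identity, the per-block appeal to Schur, and the reassembly---is routine, and indeed statement~2 coincides with the commutant already recorded in Proposition~\ref{PROP:CommutantOrthonormalityCompleteness}, so that proposition may also be cited directly to shorten the argument.
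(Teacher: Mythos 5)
Your proposal is correct and follows essentially the same route as the paper's own proof: the identical decomposition of $\mathcal H_{\rm hyb}$ into orthogonal copies $V_R\otimes\mathbb C\ket{\xi}$ indexed by an external basis, and the identical slicing operators $(\mathbf 1\otimes\bra{\xi'})\,P\,(\mathbf 1\otimes\ket{\xi})$ reduced to scalars by Schur's lemma and reassembled into $\mathbf 1\otimes P_{\mathrm{ext}}$. Your additional verification that $P_{\mathrm{ext}}$ is bounded (via the compression $\|P_{\mathrm{ext}}\|\le\|P\|$) is a refinement the paper silently skips when it writes $P_{\mathrm{ext}}=\sum_{\xi,\xi'}c_{\xi\xi'}\ket{\xi}\bra{\xi'}$ without addressing convergence, but it does not alter the structure of the argument.
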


	\begin{proof}
		\leavevmode
		\begin{enumerate}
			\item Direct-sum structure.
			
			Since $G$ acts trivially on $\mathcal H_{\mathrm{ext}}$, the
			gauge module is	$D(g)\otimes\mathbf 1$.
			Choose an orthonormal basis	$\{\lvert\xi\rangle\}_{\xi\in\mathcal I_{\mathrm{ext}}}$ of $\mathcal H_{\mathrm{ext}}$.
			Then we have			
			$$
			U(g)\bigl(\lvert\alpha\rangle\otimes\lvert\xi\rangle\bigr)
			=(D^{(R)}_{\alpha\beta}(g)\lvert\beta\rangle)\otimes\lvert\xi\rangle .
			$$
			
			For each fixed $\xi$, 
			$\operatorname{span}\{\lvert\alpha\rangle\otimes\lvert\xi\rangle\}$ is
			isomorphic to $V_R$.
			Summing over $\xi$ gives the stated	direct sum of identical copies, that is, Eq. \eqref{EQ:DirectSum}.
			
			\item Commutant algebra.
			
			For fixed $\xi,\xi'$, we define operators
			$P_{\xi\xi'}:=\bigl(\mathbf 1\otimes\langle\xi|\bigr)P
			\bigl(\mathbf 1\otimes|\xi'\rangle\bigr):
			V_R\!\to\!V_R$.
			Since $\forall ~ g \in G$, $[P, U(g)] = 0$, we have
			$[P_{\xi\xi'}, D(g)] = 0$.
			Since $D^{(R)}$ is irreducible, Schur’s lemma gives
			$P_{\xi\xi'}=c_{\xi\xi'}\,\mathbf 1$ with complex scalars
			$c_{\xi\xi'}$.
			Therefore			
			$$
			P
			=\mathbf 1\otimes
			\sum_{\xi,\xi'}c_{\xi\xi'}\,|\xi\rangle\langle\xi'|
			=\mathbf 1\otimes P_{\mathrm{ext}},
			$$			
			This proves Eq.~\eqref{EQ:NoNewInternalFactorization}.
			In other words, any attempt to project onto a proper subspace of $V_R$ would contradict irreducibility, so the internal packaged block is intact.
		\end{enumerate}			
		This completes the proof that packaging survives hybridization.
	\end{proof}

	\begin{remark}		
		Hybridization only attaches spectator DOFs (spin, momentum, orbital label $\cdots$) to an already packaged internal block.
		Since $G$ ignores those labels, it cannot induce or detect any factorization inside the internal IQNs.
		They remain locked together.			
		Consequently:
		\begin{itemize}
			\item Creation operator form				
			$$
			\hat A_{\alpha;\xi}^{\dagger}
			:=\hat a_{R,\alpha}^{\dagger}\otimes\lvert\xi\rangle
			\quad\Longrightarrow\quad
			U(g)\,\hat A_{\alpha;\xi}^{\dagger}U(g)^{-1}
			=D^{(R)}_{\alpha\beta}(g)\,
			\hat A_{\beta;\xi}^{\dagger},
			$$			
			exactly the same irrep as before hybridization.
			
			\item Logical qudits.
			This stability makes the hybrid block the natural qudit for
			gauge-invariant quantum information and the elementary building brick for the many-body construction of Stage 3.
		\end{itemize}
	\end{remark}

	\section{Stage 3: Tensor Product (Multi-Particle Assembly)}
	\label{SEC:Stage3TensorProduct}

	Now we have the single-particle building blocks Eq. \eqref{EQ:SingleParticleHybridPackagedSubspace}:
	$$
		\mathcal H_{\text{hyb}}
		= \mathcal H_{\rm int} \otimes \mathcal H_{\rm ext}
		= \mathcal V_{\mathrm{pkg}} \otimes \mathcal S_{\rm spect},
	$$
	which carry the irreducible action
	$D^{(R)}: G \to \mathrm U(d_R)$ (Stage 1) while the external factor
	$\mathcal H_{\rm ext}$ is $G$-blind (Stage 2).

	In this stage, we show how $n$ such blocks combine \cite{Halvorson2000} and how packaging survives the formation of many-body product space.
	For simplicity, we use $\mathcal H_1$ to denote a single-particle building block. It can be either $\mathcal H_{\rm hyb}$, $\mathcal H_{\rm int}$, or $\mathcal H_{\rm ext}$, depending on the exact object.

	\subsection{Multi-particle Product Subspaces}
	\label{SEC:MultiParticleProductSubspaces}

	\subsubsection{Definition of Multi-particle Product Subspaces}

	Let us now consider multi-particle product subspace. 
	For $n$ particles, we use the canonical flip isomorphism
	$(\mathcal V_{\mathrm{pkg}} \otimes \mathcal S_{\rm spect})^{\otimes n}
	\cong
	\mathcal V_{\mathrm{pkg}}^{\otimes n}\!\otimes\!\mathcal S_{\rm spect}^{\otimes n}$
	and (anti)symmetrise to obtain
	$$
	\mathcal H_{\rm prod}^{(n)}
	:= \mathbb S_{\pm}\Bigl( \mathcal H_{\text{hyb}}^{\otimes n}\Bigr)
	= \mathbb S_{\pm}\!\Bigl(
	\mathcal V_{\mathrm{pkg}}^{\otimes n}\!\otimes\!\mathcal S_{\rm spect}^{\otimes n}\Bigr)
	\subset \mathcal H_{\rm single}^{\otimes n}.
	$$
	Based on this idea, we now give a formal definition of single-particle packaged subspace:

	\begin{definition}[$n$-particle product subspace]
		\label{DEF:MultiParticleHybridPackagedSubspace}
		Let $G_\mathrm{tot}=G_{\rm int}\times G_{\rm ext}$ be the total symmetry group and let $\mathcal H_{\text{hyb}}
		= \mathcal V_{\mathrm{pkg}} \otimes \mathcal S_{\rm spect} \subset \mathcal H_{\mathrm{single}}$ be a single-particle packaged subspace.
		For $n \ge 1$ identical excitations with the understanding that states are (anti)symmetrised (bosons: $+$, fermions: $-$) \cite{Pauli1940,Sakurai2020} according to the particle statistics, we define tensor product
		\begin{equation}\label{EQ:Hprod_general}
			\mathcal H_{\mathrm{prod}}^{(n)}
			:= 
			\mathbb S_{\pm}\Bigl(\mathcal V_{\mathrm{pkg}}^{\;\otimes n}\Bigr)
			\otimes
			\mathcal S_{\rm spect}^{\;\otimes n}.
		\end{equation}
		Then we say that $\mathcal H_{\mathrm{prod}}^{(n)}$ is an \textbf{$n$-particle product subspace} (or an \textbf{$n$-particle Fock sector}) and the direct sum
		$$
		\mathcal H_{\rm Fock}
		:=\bigoplus_{n\ge0}\mathcal H_{\mathrm{prod}}^{(n)}
		$$
		is the \textbf{total Fock space}.	
	\end{definition}

	On every single-particle space, the gauge-symmetry action is
	$
	U(g)=D(g)\otimes\mathbf 1_{\rm spect}
	$.
	On the $n$-particle product subspace $\mathcal H^{(n)}_{\rm prod}$, the gauge-symmetry action is then
	$$
	U^{(n)}_g =
	D(g)^{\otimes n}\!\otimes
	\mathbf 1_{\rm spect}^{\otimes n}.
	$$

	\begin{example}[Example of product subspace: electron + momentum]		
		We have internal packaged IQNs (U(1) charge $-e$)
		$$
		V_{R_{\rm int}}=\ket{e^{-}}
		$$  
		and free momentum lives in $\mathcal S_{\rm spect}$.
	\end{example}

	\begin{remark}[Identical‐particle (anti)symmetrization]
		When two or more packaged single-particle spaces are tensored together, we must project onto the appropriate symmetry sector if the particles are identical:
		\begin{itemize}
			\item Bosons ($\ell\in\mathbb{Z}$) live in the totally symmetric subspace
			
			$$
			\mathcal{H}^{(n)}_{\rm sym}
			= \mathcal{S}\,\bigl(\mathcal{H}^{(1)\,\otimes n}\bigr)
			= \frac{1}{n!}\sum_{\sigma\in S_n}
			P(\sigma)\,\mathcal{H}^{(1)\,\otimes n}\,,
			$$
			
			where $P(\sigma)$ permutes the tensor factors.
			
			\item Fermions ($\ell\in\mathbb{Z}+\tfrac12$) live in the totally antisymmetric subspace
			
			$$
			\mathcal{H}^{(n)}_{\rm asym}
			= \mathcal{A}\,\bigl(\mathcal{H}^{(1)\,\otimes n}\bigr)
			= \frac{1}{n!}\sum_{\sigma\in S_n} \operatorname{sgn}(\sigma)\,P(\sigma)\,\mathcal{H}^{(1)\,\otimes n}\,.
			$$
		\end{itemize}

		Since each packaged label $\alpha_i$ already bundles together species, spin/helicity, and any IQNs, we can obtain the projectors $\mathcal{S}$ or $\mathcal{A}$ by simply permuting those $\alpha$-indices.
		Specifically, we replace an $n$-particle state
		$
		\Psi_{\alpha_1}(p_1)\otimes\cdots\otimes\Psi_{\alpha_n}(p_n)
		$
		by
		$$
		\frac{1}{n!}\sum_{\sigma\in S_n}
		(\pm1)^{\!\sigma}\,
		\Psi_{\alpha_{\sigma(1)}}(p_{\sigma(1)})\otimes\cdots\otimes\Psi_{\alpha_{\sigma(n)}}(p_{\sigma(n)})\,,
		$$
		with the ``$+$'' for bosons and ``$-$'' for fermions.
		
		If the packaging map $\{i,p\}\!\mapsto\!\alpha$ already includes a species index $i$, then different species automatically live in different tensor factors and do not get permuted.
		Only truly identical $\alpha$-labels (same $i$, same spin/helicity) enter the symmetrizer/antisymmetrizer.
		In any combinatorial factors or normalization, make sure that you consistently include the $1/n!$ from the projector, or equivalently absorb it into your Fock-space CCR/CAR algebra.
	\end{remark}

	Each element in the $n$-particle product subspace (Definition \ref{DEF:MultiParticleHybridPackagedSubspace}) is an $n$-particle product state.
	Formally:

	\begin{definition}[$n$-particle product state]
		Let 
		$
		\mathcal H_{\rm prod}^{(n)}
		:=\mathbb S_{\pm}\!\Bigl(
		\mathcal H_{\rm int}^{\otimes n}\!\otimes\!
		\mathcal H_{\rm ext}^{\otimes n}\Bigr)
		$
		be the $n$-particle product subspace (Def.~\ref{DEF:MultiParticleHybridPackagedSubspace}).
		Choose external kets $\ket{\xi_k}\in\mathcal H_{\rm ext}$ and define		
		$$
		\hat A_{k}^{\dagger}:=
		\hat a^{\dagger}_{k}(\mathbf p_{k},\alpha_{k}) \otimes \ket{\xi_k},
		$$
		where each $\hat a^{\dagger}_{k}(\mathbf p_{k},\alpha_{k})$ is a single-particle internal-packaged creation operator.
		Then we say that the vector
		\begin{equation}\label{EQ:MultiparticleHybridPackagedProductState}
			\ket{\Theta}
			=\hat A_{1}^{\dagger}\hat A_{2}^{\dagger}\!\cdots\!
			\hat A_{n}^{\dagger}\ket{0}
			\in\mathcal H_{\rm prod}^{(n)}
		\end{equation}	
		is an \textbf{$n$-particle product state}.
	\end{definition}

	Since the $\ket{\xi_k}$ are gauge blind,
	$U(g)\ket{\Theta}=\bigl[D^{(R)}(g)\bigr]^{\otimes n}\ket{\Theta}$,
	that is, $\ket{\Theta}$ lies in the reducible representation $R^{\otimes n}$.
	Only after performing the isotypic decomposition
	$\mathcal H_{\rm prod}^{(n)}\to\bigoplus_\lambda\bigl(V_\lambda\otimes\C^{m_\lambda}\bigr)$
	can we project onto one irreducible subspace $V_\lambda$, at which point the multi‐particle state becomes fully packaged in that single irrep block.

	\begin{example}[Examples of product states]
		\leavevmode
		\begin{enumerate}
			\item Two electrons + spin.
			
			$$
			\ket{\Theta}
			=\hat a_{e^-}^{\dagger}(\mathbf p_1)\!\otimes\!\ket{\uparrow}\;
			\hat a_{e^-}^{\dagger}(\mathbf p_2)\!\otimes\!\ket{\downarrow}\;\ket0
			\;\in\;\mathcal H_{\rm prod}^{(2)} .
			$$
			
			The charge part is packaged ($-e,-e$) and the spin singlet/triplet structure lives entirely in $\mathcal H_{\rm ext}^{\otimes2}$.
			
			\item Quark-antiquark + orbital modes.
			A color triplet $q^{\alpha}$ and antitriplet
			$\bar q_{\beta}$ created in two distinct optical cavities
			$A,B$ give
			
			$$
			\ket{\Theta}
			=\bigl(\hat q^{\alpha\dagger}\otimes\ket{A}\bigr)
			\bigl(\hat{\bar q}_{\beta}^{\dagger}\otimes\ket{B}\bigr)\ket0 .
			$$
			
			Color indices $\alpha,\beta$ form the
			$\mathbf 3\otimes\bar{\mathbf 3}=\mathbf 1\oplus\mathbf8$ irrep,
			while the cavity labels $\ket{A},\ket{B}$ may be superposed
			independently.
			
			\item Photon polarization pair.
			With helicities $\lambda=\pm1$ as external labels,
			$$
			\ket{\Theta} = \hat a_{\lambda_1}^{\dagger}(\mathbf p_1)\ket{\lambda_1}\;
			\hat a_{\lambda_2}^{\dagger}(\mathbf p_2)\ket{\lambda_2}\ket0
			$$
			belongs to $\mathcal H_{\rm prod}^{(2)}$.
			The U(1) internal block is trivial, but entanglement in
			$\lambda_{1,2}$ is entirely in the external factor.
		\end{enumerate}
	\end{example}

	\subsubsection{Properties of Multi-particle Product Spaces}

	Here we study some properties of product subspaces
	$\mathcal H^{(n)}_{\rm prod} = \mathbb S_{\pm} \bigl(\mathcal H_{\rm int}^{\otimes n} \otimes \mathcal H_{\rm ext}^{\otimes n}\bigr)$,
	where $\mathbb S_{\pm}$ is the (anti)symmetriser imposed by particle statistics.
	For a product state, everything that matters is inherited from the
	single-particle building blocks studied in Stages 1 \& 2.

	\begin{proposition}[Tensor product of packaged irreps]
		\label{PROP:TensorProductPackagedSubSpaces}
		Let $G$ be a compact group and $V_\lambda,\;V_\mu$
		be two irrep spaces (internal-packaged subspaces).
		Then
		$$
		V_\lambda \otimes V_\mu
		\;\cong\;
		\bigoplus_{\nu\in\widehat G}\;
		N_{\lambda\mu}^{\;\nu}\,V_\nu,
		$$
		where $N_{\lambda\mu}^{\;\nu}\in\mathbb N$ are the multiplicities.
		\begin{enumerate}
			\item If $G$ is Abelian, then every $V_\lambda$ is one-dimensional and the sum has at most one term.
			$V_\lambda \otimes V_\mu$ is again a internal-packaged subspace that carries charge $Q_\lambda+Q_\mu$.
			
			\item For general $G$, $V_\lambda\otimes V_\mu$ is not internal-packaged. Packaging is recovered only after projecting onto one of the $V_\nu$ summands.
		\end{enumerate}
	\end{proposition}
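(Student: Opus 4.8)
The plan is to reduce the statement to the Clebsch--Gordan decomposition for compact groups and then read off the packaging consequences from Definition~\ref{DEF:SingleParticleInternalPackagedSubspace}. First I would note that, since $G$ is compact and $V_\lambda,V_\mu$ are finite-dimensional unitary irreps, the tensor-product representation $g\mapsto D^{(\lambda)}(g)\otimes D^{(\mu)}(g)$ acts unitarily on the finite-dimensional space $V_\lambda\otimes V_\mu$. Complete reducibility for compact groups (a consequence of the normalized Haar measure $d\mu$ together with the averaging/unitarian trick) then yields a direct-sum decomposition into irreducibles,
$$
V_\lambda\otimes V_\mu\;\cong\;\bigoplus_{\nu\in\widehat G} N_{\lambda\mu}^{\;\nu}\,V_\nu,
$$
the sum being finite because $\dim(V_\lambda\otimes V_\mu)=\dim V_\lambda\cdot\dim V_\mu<\infty$. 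To pin down the multiplicities I would use characters: writing $\chi_\lambda(g)=\mathrm{tr}\,D^{(\lambda)}(g)$ and $\chi_{\lambda\otimes\mu}=\chi_\lambda\chi_\mu$, the orthogonality relations give
$$
N_{\lambda\mu}^{\;\nu}=\int_G \chi_\lambda(g)\,\chi_\mu(g)\,\overline{\chi_\nu(g)}\,d\mu(g)\;\in\;\mathbb N,
$$
a nonnegative integer by construction, which establishes the stated decomposition.

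For part~(1) I would invoke Schur's lemma in the Abelian case. When $G$ is Abelian every $D^{(\lambda)}(g)$ lies in the commutant of the whole image, forcing each irrep to be one-dimensional, $V_\lambda\cong\mathbb C$ with a character $\chi_\lambda\colon G\to\mathrm U(1)$. Then $V_\lambda\otimes V_\mu\cong\mathbb C$ carries the product character $\chi_\lambda\chi_\mu$, again one-dimensional, so exactly one summand survives and $N_{\lambda\mu}^{\;\nu}=\delta_{\nu,\,\lambda\mu}$. In additive notation the product character corresponds to $Q_\lambda+Q_\mu$, and since a one-dimensional representation trivially satisfies the single-character (singlet) condition of Definition~\ref{DEF:SingleParticleInternalPackagedSubspace}(b), the tensor product is again an internal-packaged subspace.

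For part~(2) I would argue by contraposition against the packaging definition. Whenever $G$ is non-Abelian and at least one of $V_\lambda,V_\mu$ is nontrivial, $\dim(V_\lambda\otimes V_\mu)>1$ and the decomposition generically contains more than one irreducible summand (the prototype being $\mathbf 3\otimes\overline{\mathbf 3}=\mathbf 1\oplus\mathbf 8$ in $\mathrm{SU}(3)$, already used in the Stage~3 examples). A reducible representation does not act by a single scalar character on the whole space, so it fails the gauge-invariance/single-block requirement of Definition~\ref{DEF:SingleParticleInternalPackagedSubspace} and is therefore not internal-packaged. Packaging is recovered precisely by applying the projector onto a single summand $V_\nu$, which---by Schur's lemma---is, up to the multiplicity label, the unique $G$-equivariant projection onto an irreducible block.

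The computations here are entirely standard; the only step requiring genuine care is the non-Abelian half of the statement, where the content is conceptual rather than computational. The main obstacle is to connect the representation-theoretic fact of reducibility to the \emph{failure} of the packaging property as codified in Definition~\ref{DEF:SingleParticleInternalPackagedSubspace}, and then to verify that the Schur projector onto a chosen $V_\nu$ is exactly the operation that restores a single irreducible block—thereby justifying the phrase ``packaging is recovered only after projecting onto one of the $V_\nu$ summands.''
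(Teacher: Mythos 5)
Your proof is correct, and it is considerably more self-contained than the paper's own. The paper's proof is minimal: for the Abelian case it simply observes that $U(g)\otimes U(g)=\chi_\lambda(g)\chi_\mu(g)\,\mathbf 1\otimes\mathbf 1$ is a global phase (adding a caveat that this presumes ordinary, non-projective one-dimensional irreps, e.g.\ it can fail for projective representations of $\mathbb Z_N$), and for the non-Abelian case it gives no argument at all, deferring to the isotypic decomposition of Stage~4. Your treatment of the Abelian half is essentially the same (Schur's lemma forces one-dimensionality; the product character carries $Q_\lambda+Q_\mu$ and trivially satisfies the single-character condition of Definition~\ref{DEF:SingleParticleInternalPackagedSubspace}), but you additionally (i) derive the headline decomposition from complete reducibility and pin down the multiplicities with the character integral $N_{\lambda\mu}^{\;\nu}=\int_G\chi_\lambda\chi_\mu\overline{\chi_\nu}\,d\mu\in\mathbb N$, and (ii) supply an actual argument for part~(2): a reducible action cannot satisfy the single-block (or single-character) requirement, and Schur's lemma identifies the equivariant projector onto a chosen copy of $V_\nu$ as exactly the operation that restores packaging. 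That is the content the paper postpones to Stage~4 (Peter--Weyl/Clebsch--Gordan), so your route buys a local, complete proof at the cost of duplicating machinery the paper centralizes later.

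Two small caveats. First, your assertion that for non-Abelian $G$ with a nontrivial factor one has $\dim(V_\lambda\otimes V_\mu)>1$ with more than one summand is not literally universal: tensoring with the trivial irrep, or tensoring two one-dimensional characters of a non-Abelian group, yields an irreducible (hence still packaged) product. Your word ``generically'' hedges this correctly, and the looseness is inherited from the proposition statement itself, but the exception deserves explicit mention. Second, you silently pass over the paper's remark about projective representations; since the proposition's hypotheses concern ordinary unitary irreps, nothing essential is lost.
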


	\begin{proof}
		\leavevmode
		\begin{enumerate}
			\item For Abelian case, the tensor product $V_\lambda \otimes V_\mu$ carries total charge $Q_\lambda + Q_\mu$. 
			$\forall~ g \in G$, the action is
			$U(g) \otimes U(g) = \chi(g) \chi(g) {\bf1} \otimes {\bf1}$,
			which is a global phase.
			Thus, the tensor product is internal‑packaged.
			This is strictly true when all single-particle charges are one-dimensional irreps, say U(1). But it is not good for $\mathbb Z_N$ with projective reps.
			
			\item For non-Abelian case, we will discuss in Stage 4.
		\end{enumerate}
	\end{proof}

	\begin{lemma}[Gauge covariance of product vectors]
		\label{LEM:GaugecovarianceOfPackagedProductVectors}
		Let $G$ be any compact group (Abelian or non-Abelian).
		Consider an $n$-particle product state	
		$$
		\ket{\Theta}
		\in
		\begin{cases}
			\mathcal H^{(n)}_{\rm pkg}, &\text{internal version},\\[4pt]
			\mathcal H^{(n)}_{\rm hyb}, &\text{hybrid version},
		\end{cases}
		\qquad
		\ket{\Theta}
		=\hat A^{\dagger}_{1}\hat A^{\dagger}_{2}\!\cdots\!\hat A^{\dagger}_{n}\ket0,
		$$	
		with $\hat A^{\dagger}_{k} =
		\hat a^{\dagger}_{k}(\mathbf p_k,\alpha_k)$ (internal) or
		$\hat a^{\dagger}_{k}(\mathbf p_k,\alpha_k)\!\otimes\!\ket{\xi_k}$
		(hybrid).
		Then	
		$$
		U(g)^{(n)}\,\ket{\Theta}
		=\Bigl[D(g)\Bigr]^{\!\otimes n}\ket{\Theta},
		\qquad\forall\,g\in G,
		$$	
		that is, $\ket{\Theta}$ transforms in a well-defined (generally reducible) representation of $G$.
	\end{lemma}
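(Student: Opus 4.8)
The plan is to reduce the $n$-particle statement to the single-particle covariance laws already established in Stages 1 and 2, using the elementary conjugation-sandwich trick together with gauge invariance of the Fock vacuum. First I would write $U(g)^{(n)}\ket{\Theta}=U(g)\,\hat A^{\dagger}_{1}\hat A^{\dagger}_{2}\cdots\hat A^{\dagger}_{n}\ket{0}$ and insert resolutions of the identity $U(g)^{-1}U(g)=\mathbf 1$ between every adjacent pair of creation operators and in front of the vacuum. This converts the left-hand side into a product of conjugated operators $\bigl(U(g)\hat A^{\dagger}_{k}U(g)^{-1}\bigr)$ acting on $U(g)\ket{0}$.

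Next I would invoke two inputs. For the vacuum I use $U(g)\ket{0}=\ket{0}$, the gauge invariance of the Fock vacuum implicit already in Stage 1. For each operator I use the single-particle covariance law: in the internal case this is the transformation rule of Definition \ref{DEF:SinglePackaging} and Theorem \ref{THM:NoPartialFactorization}, while in the hybrid case it is Lemma \ref{LEM:HybridCovariance}, $U(g)\hat A^{\dagger}_{\alpha;\xi}U(g)^{-1}=\sum_{\beta}D^{(R)}_{\alpha\beta}(g)\,\hat A^{\dagger}_{\beta;\xi}$, the external label $\xi_{k}$ being untouched because $G$ acts trivially on $\mathcal H_{\rm ext}$. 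Substituting these, the string of conjugated operators acting on $\ket{0}$ reassembles, under the flip isomorphism $(\mathcal V_{\rm pkg}\otimes\mathcal S_{\rm spect})^{\otimes n}\cong\mathcal V_{\rm pkg}^{\otimes n}\otimes\mathcal S_{\rm spect}^{\otimes n}$, into exactly $\bigl[D(g)\bigr]^{\otimes n}\otimes\mathbf 1_{\rm spect}^{\otimes n}$ applied to $\ket{\Theta}$, which is the claimed law $U(g)^{(n)}\ket{\Theta}=[D(g)]^{\otimes n}\ket{\Theta}$.

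The step I expect to require the most care is checking compatibility with the (anti)symmetriser $\mathbb S_{\pm}$ that defines $\mathcal H^{(n)}_{\rm prod}$. Since all $n$ excitations carry the same irrep $D$, the operator $[D(g)]^{\otimes n}$ acts identically on every tensor factor and is therefore invariant under the permutation action $P(\sigma)$; hence it commutes with $\mathbb S_{\pm}$ and the gauge action descends to a well-defined operator on the (anti)symmetrised subspace, preserving both the Bose ($+$) and Fermi ($-$) sectors and commuting past the sign factors $\operatorname{sgn}(\sigma)$ without obstruction. A minor accompanying point is the bookkeeping of fermionic signs generated when the individual $\hat A^{\dagger}_{k}$ are permuted during conjugation; because each conjugation is linear in the creation operators and preserves their ordering, these signs are carried along uniformly and do not affect the final covariance relation. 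Finally I would remark that $[D(g)]^{\otimes n}$ is in general reducible, which is precisely what motivates the isotypic decomposition of Stage 4; the present lemma asserts only covariance under this reducible representation, not irreducibility.
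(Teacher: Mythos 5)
Your proposal is correct and follows essentially the same route as the paper's proof: both reduce the claim to the single-particle covariance laws (Theorem~\ref{THM:NoPartialFactorization} and Lemma~\ref{LEM:HybridCovariance}) and the gauge invariance of the vacuum, with your insertion of $U(g)^{-1}U(g)$ between creation operators being exactly the mechanism the paper compresses into ``acting on the ordered product and on the gauge-invariant vacuum.'' Your additional checks (compatibility with $\mathbb S_{\pm}$ and fermionic sign bookkeeping) are sound elaborations of details the paper leaves implicit, not a different argument.
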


	\begin{proof}
		We split the proof in two steps:
		
		\begin{enumerate}
			\item Single-particle packaged state.			
			According to Theorem \ref{THM:NoPartialFactorization} and Lemma \ref{LEM:HybridCovariance}, each creation operator $\hat A^{\dagger}_{k}$ is gauge covariant, that is, transforms under an irrep of $G$.  
			Specifically, $\forall ~ g \in G$,
			$$
			U(g)\,\hat A^{\dagger}_{k}\,U(g)^{-1}
			\;=\;
			\sum_{q'} \bigl[D^{(k)}_g\bigr]_{q'\,q_k}\,
			\hat A^{\dagger}_{k},
			$$
			where $[D^{(k)}_g]$ is the matrix representation of $g$ in the $k$th single-particle irrep.

			\item $n$-particle product state.
			In the internal case
			$U(g)^{(n)}=\bigl[D(g)\bigr]^{\!\otimes n}$.
			In the hybrid case
			$U(g)^{(n)}=(D(g))^{\otimes n}\otimes\mathbf 1_{\rm ext}^{\otimes n}$
			because $G$ acts trivially on $\mathcal H_{\rm ext}$.
			Acting on the ordered product and on the gauge-invariant vacuum
			$\ket0$ yields the stated result.
		\end{enumerate}
	\end{proof}

	\begin{corollary}[Total charge / irrep content]
		\leavevmode
		\begin{itemize}
			\item Abelian $G=U(1)$.
			The eigenvalue of the global charge operator in	$\ket{\Theta}$ is the algebraic sum of the individual charges:
			$Q_{\mathrm{tot}}=\sum_{k}q_{k}$.
			
			\item Non-Abelian $G$.
			The multi-particle state lives in the (generally reducible) tensor power $D^{(R)\otimes n}$.
			A subsequent isotypic decomposition (Stage 4) splits it into a direct sum of irreps $\bigoplus_{\lambda} N_{\lambda} D_{\lambda}$.
		\end{itemize}
	\end{corollary}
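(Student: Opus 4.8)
The plan is to read off both claims directly from Lemma~\ref{LEM:GaugecovarianceOfPackagedProductVectors}, which already supplies the transformation law $U(g)^{(n)}\ket{\Theta}=[D(g)]^{\otimes n}\ket{\Theta}$, and then to specialize it to the two cases using only elementary representation theory.

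For the Abelian branch I would parametrize $g=e^{i\alpha}\in U(1)$ and use that each single-particle irrep is one-dimensional, $D_k(g)=e^{iq_k\alpha}$, where $q_k$ is the charge carried by the $k$-th excitation. The tensor-power action then collapses to a product of phases, $[D(g)]^{\otimes n}=e^{i\alpha\sum_k q_k}$. Writing the gauge action as $U(g)=e^{i\alpha\hat Q}$ so that $\hat Q$ is its infinitesimal generator (the global charge operator), the lemma gives $e^{i\alpha\hat Q}\ket{\Theta}=e^{i\alpha\sum_k q_k}\ket{\Theta}$ for every $\alpha$; comparing both sides (equivalently, differentiating at $\alpha=0$) forces $\hat Q\ket{\Theta}=\bigl(\sum_k q_k\bigr)\ket{\Theta}$, i.e.\ $Q_{\mathrm{tot}}=\sum_k q_k$.

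For the non-Abelian branch the lemma already places $\ket{\Theta}$ inside the representation $D^{(R)\otimes n}$. Since $G$ is compact, every finite-dimensional unitary representation is completely reducible (Peter--Weyl), so $D^{(R)\otimes n}\cong\bigoplus_{\lambda\in\widehat G} N_\lambda\,D_\lambda$, with the multiplicities fixed by character orthogonality, $N_\lambda=\int_G \overline{\chi_\lambda(g)}\,\chi_R(g)^{\,n}\,d\mu(g)$. At this point I would only assert the existence of this decomposition; the explicit projectors $P_\lambda$ that realize the isotypic split are deferred to Stage~4, exactly as the corollary itself signals.

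I expect no genuine obstacle, since the statement is essentially a repackaging of the lemma. The only points demanding care are (i) checking that $\hat Q$ really is the generator of the $U(1)$ action, so that the eigenvalue claim $Q_{\mathrm{tot}}=\sum_k q_k$ is well posed rather than a mere statement about an accumulated phase, and (ii) invoking compactness so that complete reducibility and character orthogonality are available for the non-Abelian decomposition. Both are standing hypotheses in the excerpt, so the argument goes through directly.
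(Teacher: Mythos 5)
Your proposal is correct and follows the same route the paper intends: the corollary is stated without a separate proof precisely because it is an immediate specialization of Lemma~\ref{LEM:GaugecovarianceOfPackagedProductVectors}, which is exactly how you argue (one-dimensional phases summing in the exponent for $U(1)$, and complete reducibility of the compact-group tensor power for the non-Abelian case, with the explicit isotypic split deferred to Stage~4). Your two added touches — identifying $\hat Q$ as the generator of the $U(1)$ action before differentiating, and recording $N_\lambda=\int_G \overline{\chi_\lambda(g)}\,\chi_R(g)^{\,n}\,d\mu(g)$ — are consistent with the paper and make the statement slightly more precise than the text itself.
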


	\begin{example}[Hybrid electron singlet]
		Take two electrons ($q_{1}=q_{2}=-e$) with opposite spin:
		$$
		\ket{\Theta}
		=\Bigl(\hat a_{e^-}^{\dagger}(\mathbf p_1)\!\otimes\!\ket{\uparrow}\Bigr)
		\Bigl(\hat a_{e^-}^{\dagger}(\mathbf p_2)\!\otimes\!\ket{\downarrow}\Bigr)
		\ket0
		\;\in\;\mathcal H^{(2)}_{\rm hyb}.
		$$
		\begin{itemize}
			\item Gauge part.
			Both creation operators transform by the 1-dimensional phase
			$e^{-ie\alpha}$.
			We have
			$U_\alpha^{(2)}\ket{\Theta}=e^{-2ie\alpha}\ket{\Theta}$
			This shows that the internal charge is packaged.
			
			\item External part.
			The spin DOF lives in
			$\mathcal H_{\rm ext}^{\otimes2}\cong\mathbb C^{2}\otimes\mathbb C^{2}$.
			Forming the antisymmetric combination
			$\ket{\uparrow\downarrow}-\ket{\downarrow\uparrow}$ produces the usual
			spin singlet, without ever disturbing the packaged charge.
		\end{itemize}
		Thus, $\ket{\Theta}$ separates what the gauge group
		cares about (total charge $-2e$) from the freely manipulable spin
		qubit, which exemplifies the utility of product states.
	\end{example}

	These results show that every many-body product state, internal or
	hybrid, is automatically gauge covariant and keeps each particle’s IQNs
	locked inside its original irreducible block.

	\subsection{Packaging Survives Tensor Product}

	Using the tensor product, we have assembled single-particle packaged blocks into $n$-particle product subspaces.	
	We now show that tensoring these blocks cannot split the internal irreducible packet.
	Equivalently, any operator commuting with $G$ acts trivially on each packet factor and preserves its irreducibility.

	\begin{theorem}[Stability of packaging under tensor product]
		Let $G$ be a compact gauge group and let $D^{(R)}:G\to\mathrm U(d_R)$ be an irrep carried by each single-particle internal space $V_R$.
		For $n$ particles, consider the combined Hilbert space		
		$$
		\mathcal H^{(n)}=\bigotimes_{k=1}^n\bigl(V_R\otimes\mathcal H_{\rm ext}\bigr),
		$$		
		where $\mathcal H_{\rm ext}$ is any external factor on which $G$ acts trivially, and $G$ acts by $U(g)^{(n)}=D^{(R)}(g)^{\otimes n}\otimes\mathbf 1_{\rm ext}^{\otimes n}$.
		Then:
		\begin{enumerate}%[label=(\roman\*)]
			\item Any single-particle irreducibility remains:
			for each factor $k$, the restriction of $U(g)^{(n)}$ to that $V_R$ block is still $D^{(R)}(g)$.
			By Schur's lemma, no nontrivial projector can split $V_R$ into smaller $G$-invariant subspaces.
			
			\item The full $n$-body tensor product carries the reducible representation $D^{(R)\otimes n}(g)$, but each individual excitation's internal label is still locked in an irreducible packet.
		\end{enumerate}
		Equivalently, tensoring packaged blocks cannot introduce any new partial factorization of IQNs.
	\end{theorem}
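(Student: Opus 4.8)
The plan is to reduce both claims to the single-factor Schur argument already established in Theorem~\ref{THM:NoPartialFactorization} and Proposition~\ref{PROP:CommutantOrthonormalityCompleteness}, and then to separate carefully the two distinct notions of reducibility in play. First I would fix orthonormal bases $\{\ket{R,\alpha}\}_{\alpha=1}^{d_R}$ of $V_R$ and $\{\ket{\xi}\}_{\xi\in\mathcal I_{\rm ext}}$ of $\mathcal H_{\rm ext}$, so that a generic product basis vector of $\mathcal H^{(n)}$ reads $\ket{R,\alpha_1;\xi_1}\otimes\cdots\otimes\ket{R,\alpha_n;\xi_n}$, and write out the gauge action explicitly,
$$
U(g)^{(n)}\,\bigotimes_{k=1}^n\ket{R,\alpha_k;\xi_k}
=\sum_{\beta_1,\dots,\beta_n}\Bigl(\prod_{k=1}^n D^{(R)}_{\beta_k\alpha_k}(g)\Bigr)\,\bigotimes_{k=1}^n\ket{R,\beta_k;\xi_k},
$$
which makes manifest that every internal slot is acted on by the same matrix $D^{(R)}(g)$ while the external labels $\xi_k$ remain spectators.

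For claim (1) I would isolate the $k$-th internal factor by considering a slot-local operator $X_k=\mathbf 1\otimes\cdots\otimes X\otimes\cdots\otimes\mathbf 1$ with $X\in\mathcal B(V_R)$ in position $k$. Because $U(g)^{(n)}$ acts diagonally across factors, the condition $[X_k,U(g)^{(n)}]=0$ for all $g$ collapses to $[X,D^{(R)}(g)]=0$ for all $g$; irreducibility of $D^{(R)}$ and Schur's lemma then force $X=c\,\mathbf 1$, and if $X$ is in addition a projector, $c^2=c$ gives $X\in\{0,\mathbf 1\}$. Hence the $k$-th copy of $V_R$ admits no proper $G$-invariant subspace and no nontrivial equivariant projector. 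This is precisely the Stage~1 argument applied factor-by-factor, so no genuinely new computation is required.

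For claim (2) I would emphasize the contrast between two kinds of decomposition. The total internal representation is $D^{(R)}(g)^{\otimes n}$, which by Clebsch--Gordan (Proposition~\ref{PROP:TensorProductPackagedSubSpaces}) is generally reducible, $D^{(R)\otimes n}\cong\bigoplus_\lambda N_\lambda\,D_\lambda$. The key point is that this recoupling organizes the $n$ packets into total-charge sectors but never acts inside a single packet: the projectors realizing the isotypic split are $G$-equivariant operators on the whole product $V_R^{\otimes n}$, not operators of the slot-local form treated in claim~(1), and so they cannot be the kind of projector that would dissect one particle's $V_R$.

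The main obstacle will be conceptual rather than technical: I must argue convincingly that the genuine reducibility of $D^{(R)\otimes n}$ does not contradict per-particle packaging. The resolution is that ``partial factorization of an individual particle's IQNs'' would require a $G$-equivariant projector supported on a single tensor slot, whereas every Clebsch--Gordan projector is intrinsically multi-slot, coupling all $n$ particles at once. Making this support distinction precise---by showing that any slot-local equivariant operator is forced to be scalar (claim~(1)) while the recoupling operators are manifestly not---is the one place where care is needed, and it is where I would concentrate the writeup.
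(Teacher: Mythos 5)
Your proposal is correct, and for claim (1) it coincides with the paper's own argument: restrict the gauge action to a single tensor slot, note that all other factors are spectators, and apply Schur's lemma to conclude that any slot-local equivariant projector is $0$ or $\mathbf 1$. Where you genuinely diverge from the paper---and in fact improve on it---is claim (2). The paper's proof asserts that ``any operator on $\mathcal H^{(n)}$ that commutes with all $U(g)^{(n)}$ must act as the identity on every $V_R$ factor (by repeated use of Schur's lemma).'' Taken literally this is false for $n\ge 2$: the commutant of $D^{(R)\otimes n}$ on $V_R^{\otimes n}$ is not scalar, since it contains the Clebsch--Gordan (isotypic) projectors and the slot permutations; for example, with $G=\mathrm{SU}(2)$, $R=\tfrac12$, $n=2$, the singlet projector $\tfrac14\bigl(\mathbf 1\otimes\mathbf 1-\boldsymbol\sigma_1\!\cdot\!\boldsymbol\sigma_2\bigr)$ commutes with $U(g)^{(2)}$ yet is certainly not the identity on each $V_R$. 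Your writeup avoids this overstatement by making the support distinction explicit: per-particle packaging is the statement that no \emph{slot-local} equivariant projector can cut inside a single copy of $V_R$ (exactly your claim (1)), while the genuinely nontrivial elements of the commutant are intrinsically multi-slot and only reorganize whole packets into total-charge sectors, which is the business of Stage 4, not a violation of packaging. This is the correct way to reconcile the reducibility of $D^{(R)\otimes n}$ with the theorem's conclusion; keep that distinction front and center in the final version rather than reproducing the paper's stronger but inaccurate commutant claim.
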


	\begin{proof}
		\leavevmode		
		\begin{enumerate}
			\item Irreducibility of each factor under the combined action.
			Fix one particle $k$.
			The gauge action on that factor is			
			$$
			U(g)^{(n)}\Bigl|_{V_R^{(k)}}=D^{(R)}(g)\otimes\mathbf 1
			\otimes \cdots \otimes \mathbf 1,
			$$			
			tensoring identities on all other factors.
			Since $D^{(R)}$ is irreducible, by Schur's lemma any operator $P$ acting on $V_R$ that commutes with $D^{(R)}(g)$ must be a scalar multiple of the identity.
			Hence there is no nontrivial $G$-invariant subspace in $V_R$ that could split the packet of IQNs.
			
			\item Integrity of the combined representation.
			Since each single-particle block remains irreducible and the external factors are spectators, the full $n$-body representation is simply the tensor product of these irreducibles.
			No partial factorization of any individual $V_R$ factor is introduced: each excitation's internal packet remains intact.
			Concretely, any operator on $\mathcal H^{(n)}$ that commutes with all $U(g)^{(n)}$ must act as the identity on every $V_R$ factor (by repeated use of Schur's lemma) and may only nontrivially affect the external Hilbert spaces, which do not carry gauge charge.
		\end{enumerate}
		These two facts together establish that tensor-product assembly preserves packaging: no IQN packet can be split by forming multi-particle product states.
	\end{proof}

	\section{Stage 4: Isotypic Decomposition (Multi-particle Packaging)}
	\label{SEC:Stage4IsotypicDecomposition}

	In Stage 3, we assembled the single-particle packaged states (irreducible blocks) into multi-particle product states (reducible blocks) (see Eq. \eqref{EQ:MultiparticleHybridPackagedProductState}).
	Proposition \ref{PROP:TensorProductPackagedSubSpaces} shows that, if the gauge group $G$ is Abelian, then the product space is again packaged.
	But if $G$ is non-Abelian, then the product space is not packaged.
	The states in this multi-particle space remain uncoupled with respect to gauge charge, that is, they are reducible and are not packaged at multi-particle level.
	However, the packaging won't stop here and will further form multi-particle packaging due to the complete reducibility \cite{Maschke1898,Maschke1899,PeterWeyl1927} of representation theory.

	In this stage, we introduce isotypic decomposition \cite{Knapp1986,FultonHarris2004} that splits the reducible raw-Fock space into irreducible packaged subspaces (charge sectors), or reorganizes the reducible blocks into coupled irreducible blocks.	
	We will prove an important result:
	packaging directly leads to charge superselection rules (implemented by the Peter-Weyl projection \cite{PeterWeyl1927,Knapp1986}).
	We also demonstrate that packaging remains robust under isotypic decomposition.
	This stage enters the second (isotypic-sector) layer where multi-particle form irrep blocks.

	\subsection{Isotypic Decomposition and Multi-particle Packaging}

	\subsubsection{Isotypic Decomposition}

	We already mentioned isotypic decomposition of single-particle internal spaces in subsection \ref{SEC:IsotypicDecompositionSingleParticleInternalSubspaces}.
	We expand it here with technical details so that it seamless fit to our discussions on Peter-Weyl projection and multi-particle packaging.

	Let $G$ be a compact group
	\footnote{Everything extends to finite groups by replacing Haar integrals with finite sums.}
	with normalized Haar measure $\mu$, and let
	$$
	U \colon G \longrightarrow \mathcal U\bigl(\mathcal H_{\rm Fock}\bigr),
	\quad
	(g \longmapsto U_g)
	$$
	be the continuous unitary representation of $G$ on the raw-Fock space $\mathcal H_{\rm Fock}$.
	Let $\widehat G$ be the countable set of unitary irreps of $G$ and 
	$d_\lambda = \dim V_\lambda$ be the dimension of $\lambda \in \widehat G$.
	Then by general representation-theoretic principles, one obtains an isotypic decomposition
	$$
	\mathcal H_{\rm iso}
	\;\cong\;
	\bigoplus_{\lambda\in\widehat G}
	\Bigl(V_\lambda\otimes M_\lambda\Bigr)
	\;=\;
	\bigoplus_{\lambda\in\widehat G}
	\mathcal H_\lambda,
	$$
	where each $V_\lambda$ is a carrier space of the irrep $\lambda$ of $G$,
	$M_\lambda=\mathrm{Hom}_G(V_\lambda,\mathcal H_{\rm Fock})$ is the multiplicity space of dimension $m_\lambda$,
	and	each summand $\mathcal H_\lambda = V_\lambda \otimes M_\lambda$ is an isotypic component of type $\lambda$.

	\begin{definition}[Isotypic decomposition]\label{DEF:IsotypicDecomposition}
		Let $G, U$ be defined as above.
		Then there exists a unique splitting
		\begin{equation}\label{EQ:IsotypicDecomposition}
			\mathcal H_{\rm Fock} \longrightarrow \mathcal H_{\rm iso}
			= \bigoplus_\lambda V_\lambda \otimes M_\lambda
			= \bigoplus_\lambda \mathcal H_\lambda
		\end{equation}
		into maximal $G$-invariant subspaces $\mathcal H_\lambda$.
		Each $\mathcal H_\lambda = V_\lambda \otimes M_\lambda$ only carries copies of a single irrep $\lambda$.
		We say that Eq. \eqref{EQ:IsotypicDecomposition} is the \textbf{isotypic decomposition} of the $G$-module $\mathcal H_{\rm Fock}$.
	\end{definition}

	Since $G$ is purely internal, all external labels (momentum, spin, Fock occupancy, spatial wavefunctions, etc.) reside in $M_\lambda$.

	\begin{example}[Decompositions of color structure]		
		In QCD, the color space decompositions can be written both in Young-diagram notation (packaged irrep blocks) and in the language of second quantization.
		Let	$a_i^\dagger ~ (i=1,2,3)$ be the quark creation operators carrying color in the fundamental irrep block $\mathbf3$ and $b^i{}^\dagger ~ (i=1,2,3)$ be the antiquark operators in the irrep block $\overline{\mathbf3}$ (we raise the index on $b^\dagger$ to remind ourselves it transforms with the conjugate representation).
		They satisfy the usual anticommutation relations for fermions.
		Then:		
		\begin{enumerate}
			\item \textbf{Two quarks:}
			
			\textbf{Young-diagram notation:	}
			$\mathbf{3}\otimes\mathbf{3} = \mathbf{6}\oplus\overline{\mathbf{3}}$, where the symmetric $\mathbf{6}$ and the antisymmetric $\overline{\mathbf{3}}$ represent the only possible ways to combine two $\mathbf{3}$’s.
			
			\textbf{Second quantization:}
			The two‐quark creation operator		
			$
			Q_{ij}^\dagger = a_i^\dagger\,a_j^\dagger 
			$		
			and then decomposes into its symmetric and antisymmetric parts:
			\begin{itemize}
				\item Symmetric sextet $\mathbf6$:
				$$
				S_{(ij)}^\dagger \,\lvert0\rangle
				=\;\frac{1}{\sqrt2}\,\bigl(a_i^\dagger\,a_j^\dagger + a_j^\dagger\,a_i^\dagger\bigr)\,\lvert0\rangle\
				\quad (i \le j),
				$$
				which is the dimension 6 irrep.
				
				\item Antisymmetric antitriplet $\overline{\mathbf3}$:
				$$
				A_{[ij]}^\dagger \,\lvert0\rangle
				=\frac{1}{\sqrt2}\,\bigl(a_i^\dagger\,a_j^\dagger - a_j^\dagger\,a_i^\dagger\bigr) \lvert0\rangle
				\quad \text{or}\quad
				D_k^\dagger\,\lvert0\rangle
				=\;\frac{1}{\sqrt2}\,\epsilon_{kij}\,a_i^\dagger\,a_j^\dagger\,\lvert0\rangle
				\quad (i<j),
				$$
				which yields three independent components that transform as $\overline{\mathbf3}$.
			\end{itemize}
			
			\item \textbf{Quark-antiquark:}
			
			\textbf{Young-diagram notation:}
			$\mathbf{3}\otimes\overline{\mathbf{3}} = \mathbf{1}\oplus\mathbf{8}$, where the singlet $\mathbf{1}$ is color neutral and corresponds to the observed mesons, while the octet $\mathbf{8}$ remains confined.
			
			\textbf{Second quantization:}
			The quark and antiquark creation operators form the bilinear		
			$
			M_i{}^j \;=\; a_i^\dagger\,b^{\,j\dagger}
			$		
			and then decompose into trace (singlet) and traceless (octet) parts:
			\begin{itemize}
				\item Singlet $\mathbf{1}$:
				$$
				S^\dagger\,\lvert0\rangle
				=\;\frac{1}{\sqrt3}\,\delta^i_j\,a_i^\dagger\,b^j{}^\dagger\,\lvert0\rangle,
				$$
				which is invariant under $SU(3)$.
				
				\item Octet $\mathbf 8$:
				$$
				O_a^\dagger\,\lvert0\rangle
				=\;a_i^\dagger\,(T^a)^i{}_j\,b^j{}^\dagger\,\lvert0\rangle,
				\quad a=1,\dots,8
				$$
				where $T^a$ are the Gell‐Mann generators (${\bf8}$ generators of SU(3)), and the tracelessness $\operatorname{Tr}(T^a)=0$ ensures these eight combinations form the adjoint.
			\end{itemize}

			\item \textbf{Larger Products:}
			
			For a larger product, say a baryon ($qqq$), we have
			
			\textbf{Young-diagram notation:}
			$
			\mathbf3\otimes\mathbf3\otimes\mathbf3 \;\supset\;\mathbf1.
			$
			We usually combine a diquark (e.g. a $\overline{\mathbf{3}}$) with another quark $\mathbf{3}$ to yield a singlet.
			
			\textbf{Second quantization:}
			One first picks out the antisymmetric $\overline{\mathbf3}$ from two quarks,	
			$$
			A_{[ij]}^\dagger
			= \frac1{\sqrt2}(a_i^\dagger a_j^\dagger - a_j^\dagger a_i^\dagger))\,,
			$$		
			then tensor with a third quark to form a total singlet via the fully antisymmetric epsilon tensor:		
			$$
			B^\dagger_{\mathbf1}
			\;=\;
			\frac{1}{\sqrt6}\,
			\varepsilon^{\,ijk}\,
			a_i^\dagger\,a_j^\dagger\,a_k^\dagger.
			$$		
			This is exactly the color singlet baryon operator.

			By iterating the same rules (Peter-Weyl projection), one can build any multi‐quark or multi-quark-antiquark singlet or higher‐dim irreps.
			In every case, one writes down the appropriate symmetrized or antisymmetrized polynomials in the $a^\dagger$ and $b^\dagger$, projects out traces with $\varepsilon_{ijk}$ or $\delta^i_j$, and so recovers the familiar decomposition of tensor products in second‐quantized language.
		\end{enumerate}
	\end{example}

	\subsubsection{Multi-particle Packaging}
	\label{SEC:MultiParticlePackaging}

	Under the isotypic decomposition, the raw-Fock space splits into a direct sum of isotypic sectors $\mathcal H_{\rm \lambda} = V_\lambda\otimes M_\lambda$ (for irreducible decomposition, please see Sec. \ref{SEC:ClebschGordanRotation}).
	Each total irrep $V_\lambda$ appears inside one and only one isotypic block	$\mathcal H_{\rm \lambda}$.
	All states in this subspace carry the same overall $G$-charge (or highest weight).
	Since each isotypic subspace is irreducible, all multi-particle states in this subspace are packaged states.
	In this sense, multi-particle packaging is born in this stage.
	This is an extension of the packaging from single-particle level (Sec. \ref{SEC:SingleParticlePackagingBorn}) to multi-particle level.
	We define:

	\begin{definition}[Multi-particle packaging]\label{AX:MultiPkg}
		Under the action of a local gauge group $G$, a multi-particle subspace $\mathcal H_{\rm pkg}$ may carries exactly one irrep $V_\lambda$ of $G$ locked by local gauge invariance and no physical process can split $V_\lambda$ into smaller pieces.
		Then we refer this irreducibility as \textbf{multi-particle packaging}.
	\end{definition}

	Definition \ref{AX:SinglePkg} forces each elementary excitation to carry a sharp irreducible charge.
	Definition \ref{AX:MultiPkg} insists that any multi-particle state lies entirely in one fixed-charge (isotypic) block.
	These two requirements together imply the superselection statements of Section \ref{SEC:PackagingImpliesSuperselection}.

	\subsection{Multi-particle Packaged Subspaces}
	\label{SEC:MultiParticlePackagedSubspaces}

	From above subsection, we known that each isotypic sector is an irreducible subspace.
	This means that every state in an isotypic sector is a packaged state, but include multiple particles (comparing with the single-particle packaged states given in Definition \ref{DEF:SinglePackaging}).
	Therefore, it is necessary to give a formal definition for such subspaces and discuss their algebraic and physical properties.

	\subsubsection{Definition}

	We first give the definition of a multi-particle packaged subspace:

	\begin{definition}[Multi-particle packaged subspace]\label{DEF:MultiParticlePackagedSubspace}
		If a multi-particle subspace $\mathcal H_{\rm pkg}$ forms an irreducible block of the symmetry group $G$ and partial factorization of the packaged DOFs is prohibited, then we say that $\mathcal H_{\rm pkg}$ is a \textbf{multi-particle packaged subspace} under $G$.
	\end{definition}

	\begin{example}
		Examples of multi-particle packaged subspaces:
		
		\begin{enumerate}
			\item U(1) charge sectors.
			For an abelian symmetry $G=U(1)$, one has
			$$
			\mathcal H \;=\; \bigoplus_{Q\in\mathbb Z} \mathcal H_Q,
			$$
			where $\mathcal H_Q$ is the subspace of all $n$-particle states with net charge $Q$.  Each $\mathcal H_Q$ is a multi-particle packaged subspace.
			
			\item $SU(2)$ rotations.
			For a two-electron p-shell, $\mathcal H_{J=0}$ (singlet) and $\mathcal H_{J=1}$
			(triplet) are distinct packaged subspaces of external labels.
			
			\item $SU(3)$ color in QCD.
			Here
			$\mathcal H = \bigoplus_{\lambda\in\{\mathbf1,\mathbf3,\mathbf{\bar3},\mathbf8,\dots\}} \mathcal H_\lambda,$
			where $\mathcal H_{\mathbf1}$ is the color singlet sector, $\mathcal H_{\mathbf8}$ the octet sector, etc.
		\end{enumerate}		
	\end{example}

	\begin{remark}
		\leavevmode
		\begin{enumerate}
			\item All physical multi-particle states that respect $G$ must lie
			inside one of the $\mathcal H_\lambda$.
			
			\item No symmetry-preserving operation can coherently superpose vectors belonging to different $\lambda$’s.
			This observation is the seed of the superselection rule
			proved in Sec \ref{SEC:PackagingImpliesSuperselection}.
		\end{enumerate}
	\end{remark}

	\subsubsection{Properties of Multi-particle Packaged Subspaces}

	According to the isotypic decomposition, we immediately obtain the following proposition:	
	
	\begin{proposition}[Properties of isotypic sectors]
		\label{PROP:PropertiesIsotypicSectors}
		Let
		$$
		\mathcal H_{\rm iso}
		\;=\;
		\bigoplus_{\lambda\in\widehat G}
		\mathcal H_\lambda
		\;\cong\;
		\bigoplus_{\lambda\in\widehat G}
		\Bigl(V_\lambda\otimes M_\lambda\Bigr)
		$$
		be the isotypic decomposition of the full (multi-particle) Hilbert space under a symmetry $G$ (see Definition \ref{DEF:IsotypicDecomposition}).
		Then:
		
		\begin{enumerate}
			\item Multi-particle packaged subspace.
			Each summand
			$$
			\mathcal H_\lambda \;\cong\; V_\lambda\otimes M_\lambda
			$$
			is a multi-particle packaged subspace of total charge (or highest weight) $\lambda$.
			
			\item Particle-number grading.
			Each $\mathcal H_\lambda$ further decomposes as
			$$
			\mathcal H_\lambda
			=\bigoplus_{n\ge 0}\mathcal H^{(n)}_\lambda,
			$$
			where $n$ is particle number (and any other commuting conserved label).
			The group action remains $U(g)=\rho_\lambda(g)\otimes\mathbf 1$ on each block.
			
			\item Indivisibility.
			Within a single $\mathcal H_\lambda$, there exists only trivial $G$-equivariant projection onto a proper subspace of $V_\lambda$.  Equivalently, the $V_\lambda$-factor is an indivisible package of DOFs.
		\end{enumerate}
	\end{proposition}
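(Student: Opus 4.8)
The plan is to reduce all three claims to a single structural fact: the commutant of the $G$-action on $\mathcal H_\lambda \cong V_\lambda \otimes M_\lambda$ is exactly $\mathbf 1_{V_\lambda}\otimes \mathcal B(M_\lambda)$. This is the direct multiplicity-space analogue of Proposition \ref{PROP:CommutantOrthonormalityCompleteness}, with the external factor $\mathcal H_{\rm ext}$ replaced by the abstract multiplicity space $M_\lambda$. First I would establish this commutant theorem: for an operator $X$ commuting with every $\rho_\lambda(g)\otimes\mathbf 1$, and for fixed $\phi,\phi'\in M_\lambda$, the ``matrix element'' $T_{\phi',\phi}\colon V_\lambda\to V_\lambda$ defined by $\langle v'|T_{\phi',\phi}|v\rangle = \langle v'\otimes\phi'|X|v\otimes\phi\rangle$ intertwines $\rho_\lambda$ with itself, so by Schur's lemma $T_{\phi',\phi}=c(\phi',\phi)\,\mathbf 1_{V_\lambda}$; the bounded sesquilinear form $c$ then assembles into an operator $P_M\in\mathcal B(M_\lambda)$ with $X=\mathbf 1_{V_\lambda}\otimes P_M$. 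This single computation is the engine for everything that follows.

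Given the commutant theorem, parts (1) and (3) are immediate and are essentially the same statement. For part (1), I would verify the two conditions of Definition \ref{DEF:MultiParticlePackagedSubspace}: irreducibility of the $V_\lambda$-factor holds by construction of the isotypic decomposition (Definition \ref{DEF:IsotypicDecomposition}), and the prohibition of partial factorization is exactly the content of part (3). For part (3), any $G$-equivariant projector $P$ on $\mathcal H_\lambda$ has the form $\mathbf 1_{V_\lambda}\otimes P_M$ by the commutant theorem, so $P$ can act only on the multiplicity label and can never project onto a proper $G$-invariant subspace of $V_\lambda$. Hence $V_\lambda$ is indivisible: no symmetry-respecting operation resolves the packaged IQN block into smaller pieces.

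For part (2), the key input is that the gauge group is purely internal, so it preserves particle number. I would work with the bounded spectral projections $P^{(n)}$ onto the fixed-$n$ Fock sector rather than the unbounded number operator $\hat N$, to avoid domain subtleties. Each $U(g)$ acts as $D(g)^{\otimes n}$ on the $n$-particle sector and therefore does not mix distinct $n$, giving $[P^{(n)},U(g)]=0$. Restricting $P^{(n)}$ to $\mathcal H_\lambda$ and invoking the commutant theorem yields $P^{(n)}|_{\mathcal H_\lambda}=\mathbf 1_{V_\lambda}\otimes Q^{(n)}$ for mutually orthogonal projections $Q^{(n)}$ on $M_\lambda$ summing to the identity. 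The decomposition $M_\lambda=\bigoplus_{n}Q^{(n)}M_\lambda$ then lifts to $\mathcal H_\lambda=\bigoplus_{n\ge 0}\bigl(V_\lambda\otimes Q^{(n)}M_\lambda\bigr)=\bigoplus_{n\ge 0}\mathcal H^{(n)}_\lambda$, and on each block the action is still $U(g)=\rho_\lambda(g)\otimes\mathbf 1$ by restriction. Any further conserved label commuting with $G$ is handled identically by simultaneous diagonalization on $M_\lambda$.

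The main obstacle I anticipate is technical rather than conceptual: making the commutant argument watertight when $M_\lambda$ is infinite-dimensional, as it generically is in Fock space where multiplicities grow with particle number. The Schur-lemma step is clean because $V_\lambda$ is finite-dimensional (by compactness of $G$), but I must check that the sesquilinear form $c(\phi',\phi)$ is bounded so that $P_M$ is genuinely a bounded operator, and that the identity $X=\mathbf 1\otimes P_M$ holds on all of $V_\lambda\otimes M_\lambda$ rather than on a dense algebraic subspace only. Boundedness of $c$ follows from the Cauchy--Schwarz estimate $|c(\phi',\phi)|\le \|X\|\,\|\phi'\|\,\|\phi\|$ obtained by evaluating on a fixed unit vector in $V_\lambda$, and the extension to the full tensor product is by continuity; I would spell out this estimate carefully, since it is the one place where naive finite-dimensional intuition could silently fail.
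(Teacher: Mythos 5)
Your proposal is correct, and it is organized differently from the paper's own proof. The paper treats the three parts separately and lightly: part (1) is a purely definitional reduction (the summand is irreducible by Definition \ref{DEF:IsotypicDecomposition}, hence packaged by Definition \ref{DEF:MultiParticlePackagedSubspace}); part (2) decomposes the Fock space by particle number \emph{first} and then isotypically within each fixed-$n$ sector, summing over $n$ to recover $\mathcal H_\lambda$; part (3) is a short Schur-lemma contradiction against a hypothetical projector with $P(V_\lambda)\subsetneq V_\lambda$. You instead extract a single engine --- the commutant theorem $\{U(g)\}' \cap \mathcal B(\mathcal H_\lambda)=\mathbf 1_{V_\lambda}\otimes\mathcal B(M_\lambda)$, proved by the matrix-element/Schur argument --- and derive all three parts from it; in particular your part (2) runs in the opposite order to the paper's, restricting the number projectors $P^{(n)}$ to $\mathcal H_\lambda$ and decomposing the multiplicity space $M_\lambda=\bigoplus_n Q^{(n)}M_\lambda$ rather than grading before projecting. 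What your route buys is rigor and uniformity: the Cauchy--Schwarz boundedness of the form $c(\phi',\phi)$ and the continuity extension are exactly the points the paper glosses over when $M_\lambda$ is infinite-dimensional (as it is in Fock space), and your operator-form statement $X=\mathbf 1\otimes P_M$ is strictly stronger than the paper's part (3), which only excludes projectors that cut $V_\lambda$. What the paper's route buys is brevity and a physical reading (grading refines but does not break the isotypic split). One small step worth making explicit in your write-up: before restricting $P^{(n)}$ to $\mathcal H_\lambda$ you should note that $[P^{(n)},U(g)]=0$ implies $P^{(n)}$ commutes with the Peter--Weyl projectors $P_\lambda$ (which are Haar averages of the $U(g)$), so $P^{(n)}$ indeed maps $\mathcal H_\lambda$ into itself; this is implicit in your argument but is the hinge on which the restriction is well defined.
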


	\begin{proof}
		\leavevmode
		\begin{enumerate}
			\item According to Definition \ref{DEF:IsotypicDecomposition}, each summand $\mathcal H_\lambda$ is an irreducible block. Then by Definition \ref{DEF:MultiParticlePackagedSubspace}, $\mathcal H_\lambda$ is a multi-particle packaged subspace.
			
			\item Particle number commutes with $G$ by assumption, so the $n$-grading
			refines but does not break the isotypic split.
			
			The Fock space decomposes as $\bigoplus_{n\ge0}\mathcal H^{(n)}$.
			Each $\mathcal H^{(n)}$ itself splits under $G$ into isotypic components $\mathcal H^{(n)}_\lambda$.
			Summing over $n$ restores $\mathcal H_\lambda$.
			Within each fixed $n$, the same irreducibility argument applies.
			
			\item Suppose there exists a nontrivial projector $P$ on $\mathcal H_\lambda$ with 
			$[P,U(g)]=0$ for all $g$, and such that $P(V_\lambda)\subsetneq V_\lambda$.
			But this would break the irreducibility of $\rho_\lambda$ and therefore contradicts its minimality.
			Thus no such $P$ exists.
		\end{enumerate}
	\end{proof}

	\begin{proposition}[Action on isotypic sectors]
		\label{PROP:ActionOnIsotypicSectors}
		\leavevmode
		\begin{enumerate}
			\item Irreducible $G$-action.
			In a fixed $\mathcal H_\lambda = V_\lambda\otimes M_\lambda$, the group $G$ acts only on the factor $V_\lambda$ by the irrep $\rho_\lambda$.
			Consequently, every $G$-respecting operator
			$O:\mathcal H_\lambda\to\mathcal H_\lambda$
			is forced to be the form
			$O=\mathbf 1_{V_\lambda}\otimes O_\lambda$.
						
			\item Invariance under $G$.
			Each $\mathcal H_\lambda$ is invariant under all gauge‐covariant operators, that is, $\forall ~ g \in G$, we have:
			$$
			U(g)\,\mathcal H_\lambda \;=\;\mathcal H_\lambda.
			$$
		\end{enumerate}
	\end{proposition}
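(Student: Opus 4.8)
The plan is to read off both claims directly from the isotypic decomposition (Definition \ref{DEF:IsotypicDecomposition}) together with Schur's lemma, recycling almost verbatim the commutant computation already performed for the hybridization theorem of Stage 2. The structural input is that, restricted to a single block, the gauge action is $U(g)\big|_{\mathcal H_\lambda} = \rho_\lambda(g)\otimes\mathbf 1_{M_\lambda}$, so $G$ acts nontrivially only on the carrier factor $V_\lambda$ and as the identity on the multiplicity space $M_\lambda$. Everything then reduces to the same $V_\lambda\otimes M_\lambda$ pattern analyzed before, with $M_\lambda$ playing exactly the role that $\mathcal H_{\rm ext}$ played in Eq.~\eqref{EQ:NoNewInternalFactorization}.

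For Part 1, I would fix an orthonormal basis $\{\ket{m}\}$ of $M_\lambda$ and, given any operator $O$ with $[O,U(g)]=0$ for all $g$, define the block components $O_{mn} := (\mathbf 1_{V_\lambda}\otimes\bra{m})\,O\,(\mathbf 1_{V_\lambda}\otimes\ket{n}):V_\lambda\to V_\lambda$. Because $U(g)$ is diagonal in the $M_\lambda$ index, the commutation relation forces $[O_{mn},\rho_\lambda(g)]=0$ for every $g$. Since $G$ is compact and $\rho_\lambda$ irreducible, Schur's lemma gives $O_{mn}=c_{mn}\mathbf 1_{V_\lambda}$, and reassembling the blocks yields $O = \mathbf 1_{V_\lambda}\otimes O_\lambda$ with $O_\lambda=\sum_{mn}c_{mn}\ket{m}\bra{n}\in\mathcal B(M_\lambda)$. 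This is the precise analogue of Eq.~\eqref{EQ:NoNewInternalFactorization}. For the converse inclusion I would simply note that any $\mathbf 1_{V_\lambda}\otimes O_\lambda$ commutes with $\rho_\lambda(g)\otimes\mathbf 1$ trivially, so the commutant is exactly $\mathbb C\,\mathbf 1_{V_\lambda}\otimes\mathcal B(M_\lambda)$.

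For Part 2, invariance is essentially built into the construction: Definition \ref{DEF:IsotypicDecomposition} declares each $\mathcal H_\lambda$ to be a $G$-invariant subspace, so $U(g)\,\mathcal H_\lambda\subseteq\mathcal H_\lambda$ for all $g$. To upgrade inclusion to equality I would invoke unitarity of the representation: $U(g)^{-1}=U(g^{-1})$ also preserves $\mathcal H_\lambda$, and combining the two inclusions gives $U(g)\,\mathcal H_\lambda=\mathcal H_\lambda$. Concretely, this is immediate from the block form, since the unitary $\rho_\lambda(g)$ maps $V_\lambda$ bijectively onto itself.

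I expect no substantive obstacle here, as the statement is a structural corollary of Schur's lemma and the already-established isotypic splitting. The only points demanding genuine care are bookkeeping ones: verifying in the commutant step that the scalars $c_{mn}$ are mutually unconstrained, so that $O_\lambda$ genuinely ranges over all of $\mathcal B(M_\lambda)$, and flagging explicitly that compactness of $G$ is what supplies finite-dimensionality of $V_\lambda$ and hence the applicability of Schur's lemma — the same standing hypothesis invoked in the earlier stages.
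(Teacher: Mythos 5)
Your proposal is correct and follows essentially the same route as the paper: both parts rest on the definitional block form $U(g)\big|_{\mathcal H_\lambda}=\rho_\lambda(g)\otimes\mathbf 1_{M_\lambda}$ plus Schur's lemma, with your explicit matrix-element decomposition $O_{mn}$ being exactly the commutant computation the paper performs (more tersely here, and in detail in its Stage~2 hybridization theorem, Eq.~\eqref{EQ:NoNewInternalFactorization}). Your added touches — checking that the $c_{mn}$ are unconstrained and upgrading $U(g)\,\mathcal H_\lambda\subseteq\mathcal H_\lambda$ to equality via unitarity — are sound refinements of details the paper leaves implicit.
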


	\begin{proof}
		\leavevmode
		\begin{enumerate}
			\item By definition of the isotypic decomposition,
			$$
			U(g)\mid_{\mathcal H_\lambda}
			= \rho_\lambda(g)\otimes\mathbf 1_{\mathcal M_\lambda}.
			$$
			If $O$ commutes with all $U(g)$, then on $\mathcal H_\lambda$ we have
			$\bigl[O, \rho_\lambda(g)\otimes \operatorname{Id}\bigr] = 0$.
			Schur’s lemma implies $O$ acts as the identity on $V_\lambda$ tensored with an arbitrary endomorphism on $M_\lambda$, that is, the block diagonal form $O=\mathbf 1_{V_\lambda}\otimes O_\lambda$.
			
			\item Since $U_g$ acts as $\rho_\lambda(g)\otimes\mathbf1$ on $V_\lambda\otimes M_\lambda$, each summand is explicitly invariant.
		\end{enumerate}
	\end{proof}

	\begin{proposition}[Seed of superselection]
		\label{PROP:Superselection}
		No $G$-invariant operator can map between distinct sectors $\mathcal H_\lambda\to\mathcal H_{\lambda'}$ for $\lambda\neq\lambda'$.
		Hence each $\mathcal H_\lambda$ is a superselection sector.
	\end{proposition}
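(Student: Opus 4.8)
The plan is to reduce the statement to Schur's lemma in its ``inequivalent irreps'' form, using the isotypic decomposition of Definition \ref{DEF:IsotypicDecomposition} and the sector-wise gauge action recorded in Proposition \ref{PROP:ActionOnIsotypicSectors}. First I would introduce the orthogonal projector $P_\lambda$ onto the isotypic sector $\mathcal H_\lambda$, and note that because the splitting $\mathcal H_{\rm iso}=\bigoplus_\lambda\mathcal H_\lambda$ is $G$-invariant (Proposition \ref{PROP:ActionOnIsotypicSectors} gives $U(g)\,\mathcal H_\lambda=\mathcal H_\lambda$), each projector commutes with the gauge action: $[P_\lambda,U(g)]=0$ for all $g\in G$.

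Next I would isolate the off-diagonal block of a putative $G$-invariant operator $O$ (one satisfying $[O,U(g)]=0$ for every $g$) by setting $O_{\lambda\lambda'}:=P_\lambda\,O\,P_{\lambda'}:\mathcal H_{\lambda'}\to\mathcal H_\lambda$. Combining $[O,U(g)]=0$ with $[P_\lambda,U(g)]=0$ shows that $O_{\lambda\lambda'}$ is an intertwiner,
$$
O_{\lambda\lambda'}\,U(g)\big|_{\mathcal H_{\lambda'}}
= U(g)\big|_{\mathcal H_\lambda}\,O_{\lambda\lambda'},
\qquad \forall\,g\in G.
$$
Writing $\mathcal H_\lambda=V_\lambda\otimes M_\lambda$ with gauge action $\rho_\lambda(g)\otimes\mathbf 1_{M_\lambda}$, the intertwiner space factorizes as
$$
\mathrm{Hom}_G(\mathcal H_{\lambda'},\mathcal H_\lambda)
\cong
\mathrm{Hom}_G(V_{\lambda'},V_\lambda)\otimes\mathrm{Hom}(M_{\lambda'},M_\lambda),
$$
because $G$ acts trivially on the multiplicity factors, so the entire equivariance constraint lands on the $V$-factor. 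Schur's lemma for inequivalent irreps then forces $\mathrm{Hom}_G(V_{\lambda'},V_\lambda)=0$ whenever $\lambda\neq\lambda'$, whence $O_{\lambda\lambda'}=0$. Thus no $G$-invariant operator carries a matrix element between distinct sectors, and each $\mathcal H_\lambda$ is a superselection sector.

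I expect the only delicate point to be the factorization of the intertwiner space in the presence of nontrivial multiplicity: one must verify that the trivial $G$-action on $M_\lambda$ genuinely decouples the multiplicity dependence so that Schur's lemma constrains the $V$-factor alone, rather than only ruling out scalar intertwiners. This is precisely where compactness of $G$---and the complete reducibility that underwrites the clean isotypic splitting---does the work, exactly as in the single-particle commutant arguments of Stages 1 and 2.
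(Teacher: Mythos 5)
Your proof is correct, and it takes a genuinely more direct route than the paper's. The paper's own proof is a two-line citation: it invokes Proposition \ref{PROP:ActionOnIsotypicSectors} item 1 (a $G$-commuting operator on a fixed sector has the form $\mathbf 1_{V_\lambda}\otimes O_\lambda$) and reads off the vanishing of cross-sector matrix elements from that block form. Strictly speaking, that cited item concerns operators already known to map $\mathcal H_\lambda$ into itself, so the paper's deduction quietly assumes part of what is being proved; the full block-diagonal form is really supplied only later by the double-commutant argument of Proposition \ref{PROP:PackagingImpliesSS}. Your argument fills exactly this gap: by compressing $O$ with the sector projectors $P_\lambda$ (which commute with every $U(g)$ because the isotypic splitting is $G$-invariant), you exhibit each off-diagonal block $O_{\lambda\lambda'}=P_\lambda O P_{\lambda'}$ as an intertwiner and kill it with the Hom-form of Schur's lemma, $\mathrm{Hom}_G(V_{\lambda'},V_\lambda)=0$ for $\lambda\neq\lambda'$, rather than the commutant form $\mathrm{End}_G(V_\lambda)=\mathbb C\,\mathbf 1$ that underlies the paper's chain of reasoning. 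Regarding the delicate point you flag: you can sidestep the factorization of the intertwiner space entirely by compressing with fixed multiplicity vectors. For $m\in M_\lambda$ and $m'\in M_{\lambda'}$, the map $v\mapsto\bigl(\mathbf 1_{V_\lambda}\otimes\langle m|\bigr)\,O_{\lambda\lambda'}\,\bigl(v\otimes|m'\rangle\bigr)$ is a $G$-intertwiner $V_{\lambda'}\to V_\lambda$, hence zero by Schur; since $m,m'$ are arbitrary, $O_{\lambda\lambda'}=0$. This version works verbatim for infinite-dimensional multiplicity spaces and bounded $O$, so no appeal to complete reducibility of the Hom space is needed.
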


	\begin{proof}
		Let $O$ be a $G$-invariant operator, that is, $O$ commutes with $G$.
		Then according to Proposition \ref{PROP:ActionOnIsotypicSectors} item 1, $O$ must be of the form
		$$
		O=\mathbf 1_{V_\lambda}\otimes O_\lambda,
		$$
		which indicates that the matrix elements of $O$ between $\mathcal H_\lambda$ and $\mathcal H_{\lambda'}$ all vanish when $\lambda\neq\lambda'$.
		Thus, any $G$-invariant operator cannot map between distinct sectors $\mathcal H_\lambda\to\mathcal H_{\lambda'}$ for $\lambda\neq\lambda'$.
	\end{proof}

	\begin{proposition}[Closure under local creation]
		\label{PROP:ClosureCreation}
		Let $a^\dagger_i$ be any local creation operator transforming in irrep $V_\mu$.
		Then $a^\dagger_i$ acting on a state $\ket{\psi} \in \mathcal H_\lambda = V_\lambda \otimes M_\lambda$ create a new state $a_i^\dagger\ket\psi$ lies entirely in the sector
		$$
		\mathcal H_{\lambda,\mu}
		= V_\lambda\otimes V_\mu\otimes\mathcal M_\lambda
		\subset \bigoplus_{\nu\in\lambda\otimes\mu} \mathcal H_\nu,
		$$
		where $\lambda\otimes\mu$ is the countable set for the isotypic decomposition of the tensor product of irreps.
	\end{proposition}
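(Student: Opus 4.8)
The plan is to track how the vector $a_i^\dagger\ket\psi$ transforms under the gauge group and then read off its irrep content from the Clebsch--Gordan series. First I would invoke the covariance of the creation operator: by Theorem \ref{THM:NoPartialFactorization} and Lemma \ref{LEM:HybridCovariance} (in its purely internal form), a local operator transforming in $V_\mu$ satisfies
$$
U(g)\,a_i^\dagger\,U(g)^{-1}=\sum_{j}[\rho_\mu(g)]_{ji}\,a_j^\dagger,\qquad\forall\,g\in G,
$$
while by Definition \ref{DEF:IsotypicDecomposition} and Proposition \ref{PROP:ActionOnIsotypicSectors} every $\ket\psi\in\mathcal H_\lambda=V_\lambda\otimes\mathcal M_\lambda$ obeys $U(g)\ket\psi=(\rho_\lambda(g)\otimes\mathbf 1_{\mathcal M_\lambda})\ket\psi$.

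The key computation is then immediate: inserting $U(g)^{-1}U(g)=\mathbf 1$ and combining the two laws,
$$
U(g)\bigl(a_i^\dagger\ket\psi\bigr)=\bigl(U(g)\,a_i^\dagger\,U(g)^{-1}\bigr)\bigl(U(g)\ket\psi\bigr)=\sum_{j}[\rho_\mu(g)]_{ji}\,a_j^\dagger\,(\rho_\lambda(g)\otimes\mathbf 1)\ket\psi.
$$
This exhibits the family $\{a_i^\dagger\ket\psi\}$ as transforming under $\rho_\mu\otimes\rho_\lambda$ on the two internal factors, with $\mathcal M_\lambda$ riding along untouched. Equivalently, the linear map $\Phi:V_\mu\otimes V_\lambda\otimes\mathcal M_\lambda\to\mathcal H_{\rm Fock}$ defined on a basis by $e_i\otimes\ket\psi\mapsto a_i^\dagger\ket\psi$ is a $G$-intertwiner, so its image is a $G$-submodule contained in the subspace canonically identified with $V_\lambda\otimes V_\mu\otimes\mathcal M_\lambda=\mathcal H_{\lambda,\mu}$. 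This yields the first containment $a_i^\dagger\ket\psi\in\mathcal H_{\lambda,\mu}$.

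To finish, I would decompose the internal tensor product via Proposition \ref{PROP:TensorProductPackagedSubSpaces}, $V_\lambda\otimes V_\mu\cong\bigoplus_{\nu\in\widehat G}N_{\lambda\mu}^{\,\nu}V_\nu$, and tensor with the spectator $\mathcal M_\lambda$:
$$
\mathcal H_{\lambda,\mu}\cong\bigoplus_{\nu\in\lambda\otimes\mu}V_\nu\otimes\bigl(N_{\lambda\mu}^{\,\nu}\,\mathcal M_\lambda\bigr).
$$
Each summand $V_\nu\otimes(\cdots)$ is by definition an isotypic component of type $\nu$ and hence sits inside $\mathcal H_\nu$, giving $\mathcal H_{\lambda,\mu}\subset\bigoplus_{\nu\in\lambda\otimes\mu}\mathcal H_\nu$ and completing the chain of inclusions.

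The step I expect to be the real obstacle is the bookkeeping of the multiplicity space $\mathcal M_\lambda$: because $a_i^\dagger$ raises the particle number and, for identical quanta, the state must be (anti)symmetrized, one must justify that adjoining the fresh $V_\mu$ factor does not entangle the new internal index with the number/external data already stored in $\mathcal M_\lambda$. I would dispatch this by noting that the (anti)symmetrizer $\mathbb S_{\pm}$ commutes with $U(g)$ (since $G$ acts identically on every tensor slot) and that $G$ acts trivially on all number and external labels collected into $\mathcal M_\lambda$; consequently $\mathcal M_\lambda$ is a genuine spectator for the $G$-action and $\Phi$ is well defined on the (anti)symmetrized Fock space. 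With that caveat settled, the covariance computation and the Clebsch--Gordan series carry the entire argument.
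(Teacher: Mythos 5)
Your proposal is correct and takes essentially the same route as the paper: the paper's own (much terser) proof consists precisely of the three ingredients you spell out — $G$-covariance of $a_i^\dagger$ as a map into $V_\mu$, the resulting tensor-product action on $V_\lambda\otimes V_\mu$ when applied to $\ket\psi\in V_\lambda\otimes M_\lambda$, and the Clebsch--Gordan decomposition into $\nu\in\lambda\otimes\mu$. Your explicit conjugation computation, the intertwiner formulation, and the caveat about $(anti)$symmetrization and the spectator role of $\mathcal M_\lambda$ are details the paper leaves implicit, not a different argument.
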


	\begin{proof}
		Creation operators $a^\dagger_i$ are $G$-covariant maps $\mathbb C\to V_\mu$.
		When acting on $\ket\psi \in V_\lambda \otimes M_\lambda$, the total $G$-action is on $V_\lambda\otimes V_\mu$, which decomposes into irreps labeled by $\nu\in\lambda\otimes\mu$.
		Thus 
		$$
		a_i^\dagger\ket\psi \in 
		\bigoplus_{\nu \in \lambda \otimes \mu} V_\nu \otimes M'
		= \bigoplus_{\nu \in \lambda \otimes \mu} \mathcal H_\nu.
		$$
	\end{proof}

	\begin{remark}[Physical interpretation]
		\leavevmode
		\begin{itemize}
			\item  Irreducibility \& Stability.
			Once a net-charge sector $\lambda$ is chosen, IQNs cannot be further split or re-labeled by any gauge-invariant process.
			
			\item  Superselection.
			Different $\lambda$ sectors never interfere.
			No physical Hamiltonian or observable can connect them.
			
			\item  Creation \& Mixing.
			Local operators carrying charge $\mu$ simply shift you from sector $\lambda$ to sectors in $\lambda\otimes\mu$.
						
			\item Grading.
			The factor $\mathcal M_\lambda$ collects all extra quantum numbers (momentum, radial node, family index, $\cdots$).
			These remain manipulable.
			The	confinement applies only to the symmetry-active block $V_\lambda$.
		\end{itemize}
	\end{remark}

	\subsection{Packaging Implies Superselection}
	\label{SEC:PackagingImpliesSuperselection}

	The principle of charge superselection \cite{WWW1952,DHR1971,DHR1974} is a fundamental rule that governs the nature of quantum states.
	It dictates that a quantum system cannot exist in a coherent superposition of states with different electric charges.
	In our case, the isotypic decomposition split the raw-Fock space into irreducible isotypic sectors $\mathcal H_\lambda$.
	Then we can restate superselection as:
	no local gauge-covariant operation can coherently connect different $\mathcal H_\lambda$.
	Let us now rigorously prove that symmetry packaging can directly lead to superselection.

	\begin{proposition}[Packaging $\boldsymbol{\Rightarrow}$ superselection]
		\label{PROP:PackagingImpliesSS}
		Let
		$$
		\mathcal H_{\rm iso}
		=\bigoplus_{\lambda\in\widehat G} \mathcal H_\lambda
		$$
		be the isotypic decomposition and let
		$$
		\mathcal A_{\mathrm{loc}}
		\;:=\;
		\bigl\{\,\mathcal O\in\operatorname{End}(\mathcal H_{\mathrm{Fock}})
		\ \big|\ 
		U_g\mathcal O\,U_g^{-1}=\mathcal O,\;\forall g\in G
		\bigr\}
		$$
		be the local gauge-covariant operator algebra of observable $\mathcal O$ (commutant of the group action).
		Then:
		\begin{enumerate}[label=(\roman*)]
			\item Block-diagonality.			  
			Every $\mathcal O \in \mathcal A_{\mathrm{loc}}$ preserves each $\mathcal H_\lambda$, that is,
			$
			\mathcal O: \mathcal H_\lambda \longrightarrow \mathcal H_\lambda,
			~
			\forall ~ \lambda.
			$	
			In other words, every $\mathcal O$ acts block-diagonally:
			$$
			\mathcal O
			\;=\;
			\bigoplus_{\lambda\in\widehat G}
			\Bigl(
			\mathbf 1_{V_\lambda}\otimes O_\lambda
			\Bigr),
			\quad
			O_\lambda\in\operatorname{End}(M_\lambda).
			$$
			Equivalently, the off-diagonal terms vanishes: $\langle\psi_\lambda|\mathcal O|\phi_{\lambda'}\rangle=0$
			for $\lambda\neq\lambda'$.
			
			\item Irreducibility.		  
			The representation of $\mathcal A_{\mathrm{loc}}$ on each sector
			$\mathcal H_\lambda$ is irreducible:
			trivial closed subspace of $\mathcal H_\lambda$ is left invariant
			by all gauge-covariant local observables.
			
			the only operators that act nontrivially on the irreducible $V_\lambda$ factor are scalars (by Schur’s lemma).
		\end{enumerate}
		Consequently, each label $\lambda$ (or $Q$) defines a superselection sector:
		coherent superpositions across different total charges are forbidden.
	\end{proposition}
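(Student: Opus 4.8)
The plan is to deduce all three assertions from a single structural input: the generalized form of Schur's lemma applied to the isotypic decomposition $\mathcal H_{\mathrm{iso}}=\bigoplus_{\lambda}(V_\lambda\otimes M_\lambda)$, together with the two facts already established, namely that $U_g$ restricts to $\rho_\lambda(g)\otimes\mathbf 1_{M_\lambda}$ on each block (Proposition~\ref{PROP:ActionOnIsotypicSectors}) and that no intertwiner links inequivalent sectors (Proposition~\ref{PROP:Superselection}). Everything else is bookkeeping with the orthogonal projectors $P_\lambda:\mathcal H_{\mathrm{iso}}\to\mathcal H_\lambda$, which commute with $U_g$ because each $\mathcal H_\lambda$ is $G$-invariant.

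For item (i), I would take $\mathcal O\in\mathcal A_{\mathrm{loc}}$ and examine the block $P_\lambda\,\mathcal O\,P_{\lambda'}$, which is a $G$-intertwiner from $\mathcal H_{\lambda'}\cong V_{\lambda'}\otimes M_{\lambda'}$ to $\mathcal H_\lambda\cong V_\lambda\otimes M_\lambda$. Since $G$ acts trivially on the multiplicity factors, $\operatorname{Hom}_G(\mathcal H_{\lambda'},\mathcal H_\lambda)\cong\operatorname{Hom}_G(V_{\lambda'},V_\lambda)\otimes\operatorname{Hom}(M_{\lambda'},M_\lambda)$. Schur's lemma now does double duty: its first branch (inequivalent irreps admit only the zero intertwiner) annihilates every off-diagonal block $\lambda\neq\lambda'$, giving $\langle\psi_\lambda|\mathcal O|\phi_{\lambda'}\rangle=0$, while its second branch (self-intertwiners of an irrep are scalars, $\operatorname{Hom}_G(V_\lambda,V_\lambda)=\mathbb C\,\mathbf 1$) forces each diagonal block into the form $\mathbf 1_{V_\lambda}\otimes O_\lambda$ with $O_\lambda\in\operatorname{End}(M_\lambda)$. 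Summing over $\lambda$ yields exactly the asserted block-diagonal presentation.

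For item (ii) I would be deliberate about the meaning of ``irreducible,'' since the literal reading is false once $\dim V_\lambda>1$: there $\mathcal A_{\mathrm{loc}}$ restricts to $\mathbf 1_{V_\lambda}\otimes\operatorname{End}(M_\lambda)$, whose invariant subspaces are all $U\otimes M_\lambda$ with $U\subseteq V_\lambda$, so the sector $\mathcal H_\lambda$ carries $\dim V_\lambda$ identical copies of an $M_\lambda$-module and is reducible. The correct content I would prove is the double-commutant (Schur--Weyl) statement: on each $\mathcal H_\lambda$ the gauge action $U(G)$ and its commutant $\mathcal A_{\mathrm{loc}}$ form a dual pair, $\mathcal A_{\mathrm{loc}}$ acts as the full matrix algebra and hence irreducibly on the multiplicity space $M_\lambda$, while only scalars survive on the $V_\lambda$ factor. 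This is the precise sense in which $V_\lambda$ is an indivisible packaged block: no gauge-covariant observable can resolve any structure inside it.

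The superselection conclusion then follows at once: for $|\Psi\rangle=|\psi_\lambda\rangle+|\phi_{\lambda'}\rangle$ with $\lambda\neq\lambda'$ and any $\mathcal O\in\mathcal A_{\mathrm{loc}}$, block-diagonality kills the cross terms, so $\langle\Psi|\mathcal O|\Psi\rangle=\langle\psi_\lambda|\mathcal O|\psi_\lambda\rangle+\langle\phi_{\lambda'}|\mathcal O|\phi_{\lambda'}\rangle$ and the relative phase between the two charge components is physically inert. The step I expect to be the main obstacle is the careful application of Schur's lemma to the multiplicity-carrying blocks: the naive ``commutant equals scalars'' version applies only to genuinely irreducible modules, whereas each $\mathcal H_\lambda$ is $m_\lambda$ copies of $V_\lambda$, so one must route the argument through the $\operatorname{Hom}$-space factorization above rather than a single scalar conclusion. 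This same multiplicity subtlety is exactly what forces the restatement of the irreducibility claim in (ii).
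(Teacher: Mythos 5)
Your proof is correct, and for part (i) it is essentially the paper's own argument made explicit: the paper invokes double-commutant theory to identify the commutant of the $G$-action with $\bigoplus_{\lambda}\bigl(\mathbf 1_{V_\lambda}\otimes\operatorname{End}(M_\lambda)\bigr)$, whereas you reach the same structure by hand, viewing each block $P_\lambda\,\mathcal O\,P_{\lambda'}$ as a $G$-intertwiner, factoring it through $\operatorname{Hom}_G(V_{\lambda'},V_\lambda)\otimes\operatorname{Hom}(M_{\lambda'},M_\lambda)$, and letting the two branches of Schur's lemma kill the off-diagonal blocks and scalarize the $V_\lambda$ part of the diagonal ones. (Your opening plan cites Proposition~\ref{PROP:Superselection} as an input, but your actual argument never uses it; it is in fact subsumed by your part (i).)

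The genuine divergence is in part (ii), and there your version is the more careful one: you have caught a real defect in the statement and in the paper's own proof. The paper asserts that $\mathcal A_{\mathrm{loc}}$ acts irreducibly on each sector $\mathcal H_\lambda$, and its proof ends with the claim that tensoring $\operatorname{End}(M_\lambda)$ with $\mathbf 1_{V_\lambda}$ ``makes the representation on $\mathcal H_\lambda$ irreducible.'' As you observe, this is false whenever $\dim V_\lambda>1$: every subspace $U\otimes M_\lambda$ with $\{0\}\subsetneq U\subsetneq V_\lambda$ is invariant under $\mathbf 1_{V_\lambda}\otimes\operatorname{End}(M_\lambda)$, so $\mathcal H_\lambda$ is a direct sum of $\dim V_\lambda$ copies of the irreducible $\mathcal A_{\mathrm{loc}}$-module $M_\lambda$, and the literal irreducibility claim holds only in the abelian-type case $\dim V_\lambda=1$ (e.g.\ $U(1)$ charge sectors), not for a colour sector such as $\lambda=\mathbf 3$ of $SU(3)$. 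Your reformulation --- that the restriction of $\mathcal A_{\mathrm{loc}}$ to the sector is the full algebra $\mathbf 1_{V_\lambda}\otimes\mathcal B(M_\lambda)$, acting irreducibly on the multiplicity factor $M_\lambda$ while only scalars survive on $V_\lambda$ --- is the correct dual-pair (double-commutant) statement; it is what the paper's proof actually establishes before its last sentence, and it suffices for everything the proposition is used for downstream: block-diagonality, vanishing of cross-sector matrix elements, and the conclusion that relative phases between distinct $\lambda$-sectors are unobservable.
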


	\begin{proof}
		\leavevmode
		\begin{enumerate}[label=(\roman*)]
			\item Block-diagonality.			
			
			Fix $\mathcal O \in \mathcal A_{\mathrm{loc}}$,  
			$\forall ~ g \in G$, we have $U_g\mathcal O = \mathcal O U_g$.
			This means that $\mathcal O$ commutes with the commutant algebra $\mathbb C[G]'$ generated by $\{U_g\,|\, g \in G\}$.
			By double-commutant theory, the joint commutant of the $G$-action
			is precisely 
			$
			\bigoplus_{\lambda} \mathbf 1_{V_\lambda} \otimes \operatorname{End}(M_\lambda).
			$
			Thus, $\mathcal O$ must take the asserted direct-sum form.
			
			Equivalently, denote by $Q_{\rm tot}$ the global charge operator (combination of all local Gauss generators).
			``Locality + gauge-invariance'' implies that $\forall ~ \psi_Q \in \mathcal H_Q$, we have	
			$$
			[\mathcal O, Q_{\rm tot}]
			= 0
			\quad \Longrightarrow \quad
			\mathcal O\,\bigl|\psi_Q\bigr\rangle \in \mathcal H_Q.
			$$
			In other words,	$\langle\psi_Q|\mathcal O|\phi_{Q'}\rangle = \mathcal O \langle\psi_Q|\phi_{Q'}\rangle = 0$ when $Q \neq Q'$.
						
			\item Irreducibility.
			Inside a fixed sector $\mathcal H_\lambda = V_\lambda\otimes M_\lambda$,
			the $G$-action is $U_g = D^{(\lambda)}(g) \otimes \mathbf 1_{M_\lambda}$,  
			where $D^{(\lambda)}$ is irreducible.
			By Schur’s lemma, we have $\operatorname{End}_G(V_\lambda) = \mathbb C\,\mathbf 1_{V_\lambda}$.
			Therefore, the local algebra contains
			$
			\mathbf 1_{V_\lambda} \otimes \operatorname{End}(M_\lambda)
			\cong \operatorname{End}(M_\lambda),
			$
			which is already irreducible on $M_\lambda$.  
			Tensoring with $\mathbf 1_{V_\lambda}$ makes the representation on $\mathcal H_\lambda$ irreducible.
		\end{enumerate}
		These two facts together show that no nontrivial operator can map between different $\mathcal H_\lambda$ and each $\mathcal H_\lambda$ carries an irreducible sector of the observable algebra.
		This is exactly the statement of charge superselection in the DHR sense \cite{DHR1971,DHR1974}.
	\end{proof}

	\begin{example}[Examples of superselection]
		\leavevmode  
		\begin{enumerate}
			\item $U(1)$ charge.  
			$\widehat G=\mathbb Z$.  
			The isotypic decomposition becomes
			$\mathcal H_{\mathrm{iso}} = \bigoplus_{Q \in \mathbb Z} \mathcal H_Q$,
			each $\mathcal H_Q$ spanned by $n$-particle states of total electric charge $Q$.
			Any	gauge-covariant observable $O$ commute with $Q_{\mathrm{tot}}$, that is, satisfies $[O,Q_{\rm tot}]=0$.
			So it cannot connect different $Q$’s and preserves each $\mathcal H_q$.
			This implies superselection rules of $U(1)$ charge.
			
			\item $SU(3)$ color.  
			Blocks are labelled by Young diagrams $\lambda$.
			$SU(3)$ is compact and its group algebra (or universal enveloping algebra) contains a central element, e.g., the quadratic Casimir $C_2$ acts as $c_\lambda\,\mathbf 1$ on $\mathcal H_\lambda$.
			Since $[\,\mathcal O,C_2]=0$ for every color covariant $\mathcal O$,
			off-diagonal matrix elements between distinct $\lambda$’s vanish.
			This implies superselection rules of $SU(3)$ color.
		\end{enumerate}
	\end{example}

	\begin{corollary}[General superselection rule]
		Due to the multi-particle packaging, $G$-preserving process described by $\mathcal A_{\mathrm{loc}}$ cannot create or detect coherent superpositions of different isotypic sectors.
		Consequently, such superpositions are experimentally indistinguishable from classical (incoherent) mixtures.
	\end{corollary}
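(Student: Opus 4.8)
The plan is to read this corollary off directly from the block-diagonality already established in Proposition~\ref{PROP:PackagingImpliesSS}, separating the claim into its two operative halves: first, that no gauge-covariant process can \emph{create} coherence between distinct sectors $\mathcal H_\lambda$, and second, that no gauge-covariant observable can \emph{detect} such coherence, so that a cross-sector superposition and the corresponding incoherent mixture yield identical statistics for every admissible measurement.

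For the ``create'' half I would model a $G$-preserving process as a unitary $U=e^{-iHt}$ with $H\in\mathcal A_{\mathrm{loc}}$, or more generally as a channel $\rho\mapsto\sum_k K_k\rho K_k^{\dagger}$ whose Kraus operators all lie in $\mathcal A_{\mathrm{loc}}$. Proposition~\ref{PROP:PackagingImpliesSS}(i) forces each such operator into the block-diagonal form $\mathcal O=\bigoplus_\lambda(\mathbf 1_{V_\lambda}\otimes O_\lambda)$, so every $K_k$ maps $\mathcal H_\lambda$ into itself. Hence if the initial density matrix is block-diagonal in the sector grading (in particular, supported in a single sector or an incoherent mixture of sectors), the off-diagonal blocks $\rho_{\lambda\lambda'}$ with $\lambda\neq\lambda'$ remain zero throughout the evolution: no cross-sector coherence can be generated.

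The ``detect'' half carries the actual content. I would take a coherent superposition
$$
\ket{\Psi}=\sum_{\lambda}c_\lambda\ket{\psi_\lambda},
\qquad
\ket{\psi_\lambda}\in\mathcal H_\lambda,
$$
recall that distinct sectors are mutually orthogonal summands, and compare the two density matrices
$$
\rho_{\mathrm{coh}}=\ket{\Psi}\bra{\Psi}
=\sum_{\lambda,\lambda'}c_\lambda c_{\lambda'}^{*}\,\ket{\psi_\lambda}\bra{\psi_{\lambda'}},
\qquad
\rho_{\mathrm{mix}}=\sum_\lambda|c_\lambda|^{2}\,\ket{\psi_\lambda}\bra{\psi_\lambda}.
$$
Their difference is purely off-diagonal, so for any observable $\mathcal O\in\mathcal A_{\mathrm{loc}}$ the expectation-value gap is
$$
\mathrm{Tr}\!\bigl[(\rho_{\mathrm{coh}}-\rho_{\mathrm{mix}})\,\mathcal O\bigr]
=\sum_{\lambda\neq\lambda'}c_\lambda c_{\lambda'}^{*}\,\langle\psi_{\lambda'}|\mathcal O|\psi_\lambda\rangle .
$$
Every term vanishes by the off-diagonal-vanishing clause of Proposition~\ref{PROP:PackagingImpliesSS}(i), whence $\langle\mathcal O\rangle_{\mathrm{coh}}=\langle\mathcal O\rangle_{\mathrm{mix}}$ for all gauge-covariant $\mathcal O$. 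Since these expectation values exhaust the experimentally accessible data, $\rho_{\mathrm{coh}}$ and $\rho_{\mathrm{mix}}$ are indistinguishable, establishing the corollary.

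I expect the main obstacle to be conceptual rather than computational: the algebraic identity is immediate once block-diagonality is granted, so the delicate point is to justify precisely what ``$G$-preserving process'' and ``experimentally indistinguishable'' mean—namely, that all realizable preparations, evolutions, and measurements are drawn from the gauge-covariant commutant $\mathcal A_{\mathrm{loc}}$. I would therefore stress the standard caveat that indistinguishability is \emph{relative} to $\mathcal A_{\mathrm{loc}}$: a hypothetical operator outside the commutant would separate the two states, but such an operator is precisely what gauge invariance forbids, which is exactly why the superselection rule holds.
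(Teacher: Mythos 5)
Your proposal is correct and follows essentially the same route as the paper: the paper states this corollary as an immediate consequence of Proposition~\ref{PROP:PackagingImpliesSS} (block-diagonality of $\mathcal A_{\mathrm{loc}}$) without writing out a separate proof, and your create/detect split simply makes that implication explicit. Both halves of your argument rest on exactly the ingredient the paper intends — the vanishing of off-diagonal matrix elements $\langle\psi_\lambda|\mathcal O|\phi_{\lambda'}\rangle$ for $\lambda\neq\lambda'$ — and your closing caveat that indistinguishability is relative to the gauge-covariant algebra is consistent with the paper's DHR-style framing.
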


	This general superselection rule does not limit to charge sectors, it apply to any isotypic sectors includes internal and external symmetry.

	\begin{remark}
		Observables under superselection:
		
		\begin{enumerate}
			\item Locality and gauge invariance force all observables into the commutant of $G$, which is block-diagonal in the charge decomposition.			
			Consequently, once states are packaged into irreducible blocks, no physical operation can mix those blocks coherently.
			
			\item Packaging is a unifying language that shows both electric charge in QED and color singlets in QCD follow the same representation-theoretic fact.
		\end{enumerate}
	\end{remark}

	\subsection{Packaging Survives Isotypic Decomposition}

	We now show that after isotypic decomposition, each packaged block remains indivisible as a single irrep, even in the presence of external DOFs.

	\begin{theorem}[Stability of packaging under isotypic decomposition]
		\label{THM:PackagingSurvivesHybridPWProjection}
		In the packaged subspace $\mathcal H_\lambda^{(n)}$, every state $\lvert\lambda,m;\xi_1\cdots\xi_n\rangle$
		carries its full set of IQNs as one irreducible $G$-block $V_\lambda$ (transforms under a single irrep $R_\lambda$ of $G$).
		No partial factorization of internal labels is possible without breaking irreducibility.
	\end{theorem}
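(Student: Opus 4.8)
The plan is to recognize this theorem as a direct multi-particle analogue of the single-particle results already proved, so the whole argument reduces to the isotypic structure of Proposition~\ref{PROP:PropertiesIsotypicSectors} together with Schur's lemma as applied in Proposition~\ref{PROP:ActionOnIsotypicSectors}. First I would fix the packaged sector $\mathcal H_\lambda^{(n)}$ and invoke the isotypic decomposition (Definition~\ref{DEF:IsotypicDecomposition}) to write $\mathcal H_\lambda^{(n)} \cong V_\lambda \otimes M_\lambda^{(n)}$, where the carrier space $V_\lambda$ houses the single coupled irrep (hence the complete packet of IQNs), while the multiplicity space $M_\lambda^{(n)}$ collects everything else: the Clebsch--Gordan coupling multiplicity together with all the gauge-blind external labels $\xi_1,\dots,\xi_n$ inherited from Stage~2. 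Because $G$ acts trivially on every external DOF, the gauge action on this sector is exactly $U(g)|_{\mathcal H_\lambda^{(n)}} = \rho_\lambda(g)\otimes \mathbf 1_{M_\lambda^{(n)}}$.

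Next I would identify the basis $\lvert\lambda,m;\xi_1\cdots\xi_n\rangle$ explicitly: the internal index $m=1,\dots,d_\lambda$ runs over $V_\lambda$, while the tuple $(\xi_1,\dots,\xi_n)$ (together with any coupling multiplicity) labels a basis of $M_\lambda^{(n)}$. With this identification, the claim that the state transforms under the single irrep $R_\lambda$ is immediate from the form of $U(g)$ above: acting on $\lvert\lambda,m;\xi_1\cdots\xi_n\rangle$ rotates only the index $m$ by the irreducible matrix $\rho_\lambda(g)$ and leaves the external tuple untouched.

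The indivisibility claim is then proved by Schur's lemma in precisely the manner of Proposition~\ref{PROP:ActionOnIsotypicSectors}. Suppose $P$ is any gauge-covariant operator on $\mathcal H_\lambda^{(n)}$, that is, $[P,U(g)]=0$ for all $g\in G$. Expanding $P$ in the external basis as the family of blocks $P_{\xi\xi'} := (\mathbf 1\otimes\langle\xi|)\,P\,(\mathbf 1\otimes|\xi'\rangle)$ acting on $V_\lambda$, the commutation relation forces $[P_{\xi\xi'},\rho_\lambda(g)]=0$ for every pair $\xi,\xi'$. Since $\rho_\lambda$ is irreducible, Schur's lemma gives $P_{\xi\xi'}=c_{\xi\xi'}\,\mathbf 1_{V_\lambda}$, whence $P = \mathbf 1_{V_\lambda}\otimes P_{\rm ext}$ with $P_{\rm ext}=\sum_{\xi,\xi'}c_{\xi\xi'}\,|\xi\rangle\langle\xi'|$. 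No such $P$ can project onto a proper subspace of $V_\lambda$, so any attempt to split the internal index $m$ would contradict the irreducibility of $\rho_\lambda$. Hence the IQNs remain locked in the indivisible block $V_\lambda$.

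The main obstacle is really a bookkeeping subtlety rather than a genuine difficulty: one must verify that after the Clebsch--Gordan coupling of the $n$ single-particle blocks, the entire internal content transforming nontrivially under $G$ is consolidated into the single carrier $V_\lambda$, with the residual coupling multiplicity absorbed into $M_\lambda^{(n)}$ alongside the external spectators. Here I would also note that the statistics projector $\mathbb S_\pm$ commutes with the diagonal gauge action $D(g)^{\otimes n}$, so (anti)symmetrization refines $M_\lambda^{(n)}$ but never disturbs the irreducible factor $V_\lambda$. Once this identification is made, the irreducibility of $\rho_\lambda$ and Schur's lemma finish the argument, exactly as in the single-particle hybridization theorem of Stage~2.
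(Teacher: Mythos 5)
Your proposal is correct and follows essentially the same route as the paper's own proof: both rest on Schur's lemma applied to the irreducible internal factor $V_\lambda$, with the external labels treated as a gauge-blind spectator so that any gauge-covariant operator must act as $\mathbf 1_{V_\lambda}\otimes P_{\rm ext}$. Your version is in fact more explicit than the paper's two-line sketch (you spell out the commutant computation via the blocks $P_{\xi\xi'}$, as in the Stage~2 hybridization theorem, and note that $\mathbb S_\pm$ commutes with the diagonal gauge action), but the underlying argument is the same.
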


	\begin{proof}
		\leavevmode
		\begin{enumerate}
			\item Internal block is irreducible.
			By construction, $V_\lambda$ is an irrep of $G$.
			Thus, Schur’s lemma excludes nontrivial projectors that commute with all $D^{(\lambda)}$.
			
			\item External factor is a spectator.
			$U(g)$ acts as the identity on $\mathcal H_{\rm ext}^{(n)}$,	so any projection that would attempt to split off a piece of the external label would commute with $D^{(\lambda)}$ {\em trivially}.
			But such a projector acts only on $\mathcal H_{\rm ext}^{(n)}$ and cannot touch the internal block.
			Therefore no operator can split the IQNs without destroying irreducibility.
		\end{enumerate}
	\end{proof}

	Similarly, one can apply above procedure to irreducible decomposition (Clebsch-Gordan rotation).

	\begin{example}[Three-quark singlet]
		For $n=3$ quarks, the totally antisymmetric tensor $\epsilon^{abc}$ gives the unique color singlet		
		$$
		|\mathbf 1;\xi_1\xi_2\xi_3\rangle=
		\frac{1}{\sqrt6}\,\epsilon^{abc}\,|q_a,\xi_1\rangle|q_b,\xi_2\rangle|q_c,\xi_3\rangle.
		$$		
		The vector is already a packaged singlet and is unchanged by the isotypic decomposition.
	\end{example}

	\subsection{Peter-Weyl Projection}	
	\label{SEC:PeterWeylTheoremAndProjectors}

	In our packaging framework, isotypic decomposition is a fundamental property of packaging because the tensor product of single-particle packaging decompose into multi-particle packaging.
	For the convenience of calculation, we need a mathematical tool to implement isotypic decomposition.
	Peter-Weyl projection \cite{PeterWeyl1927,Knapp1986} is one of these tools.
	It maps the raw-Fock space $\mathcal{H}_{\rm Fock}$ into isotypic space $\mathcal{H}_{\rm iso}$:
	splits the reducible Hilbert space into irreducible charge sectors (or reorganizes the reducible blocks into coupled irreducible blocks).

	\subsubsection{Peter-Weyl Theorem: Function Space Statement}

	The Peter-Weyl theorem is a foundational result of representation theory and harmonic analysis on compact groups.
	It says that a Hilbert space carrying a unitary representation of a compact gauge group splits into orthogonal irreducible blocks.
	This is exactly what we need in the isotypic decomposition: decompose a raw-Fock space into isotypic sectors.

	We start with the classical Peter-Weyl theorem for compact groups by emphasizing its three main parts and then specialize to the finite‑group case.

	\paragraph{(1) The Peter-Weyl theorem for compact groups.}
	
	In a function space, the Peter-Weyl theorem can be stated as:
	
	\begin{theorem}[Peter-Weyl theorem]	\label{THM:PWTheoremCompact}
		Let $G$ be an arbitrary compact (second‐countable) topological group endowed with its normalized Haar measure $\mu$ (so $\int_G \mathrm d\mu(g)=1$),
		$U \colon G \to \mathcal U(\mathcal H_{\mathrm{Fock}})$ be a continuous unitary representation of $G$,
		$\widehat G$ be the (countable) set of equivalence classes of its finite-dimensional continuous irreps.
		and let
		$
		\langle f,h\rangle=\int_G f(g)\,h^*(g)\,d\mu(g).
		$,
		$V_\lambda$ be its carrier space with dimension $d_\lambda = \dim V_\lambda$,
		$D^{(\lambda)}: G \to U(V_\lambda)$,
		and $D^{(\lambda)}_{mn}(g) = \langle e_m, \rho_\lambda(g)e_n\rangle$ be the matrix entries in any fixed orthonormal basis $\{e_n\}_{n=1}^{d_\lambda}$.
		and $L^{2}(G)$ be the space of square-integrable functions on $G$.
		\begin{enumerate}
			\item Density of matrix coefficients.
			The linear span of all matrix entries
			$\{D^{(\lambda)}_{mn}(\cdot)\:|\:\lambda\in\widehat G,\;1\le m,n\le d_\lambda\}$
			is dense in the space of continuous complex functions $C(G)$ on $G$ (with the uniform topology) and hence in $L^2(G)$ (with the $L^2$-norm).
			
			\item Orthogonality relations.
			For any $\lambda,\lambda'\in\widehat G$ and $m,n$ in the first and $p,q$ in the second,
			\begin{equation}\label{EQ:SchurOrthogonality}
				\int_G d \mu(g) D^{(\lambda)}_{mn}(g) D^{(\lambda')}_{pq}(g)^*
				= \frac{1}{d_\lambda} \delta_{\lambda,\lambda'} \delta_{mp} \delta_{nq}.
			\end{equation}	
			
			the continuous version of the Schur orthogonality relations:
			$$
			\int_G\chi_\lambda(g)\,\chi_{\lambda'}(g)^*\,\mathrm d\mu(g)
			\;=\;
			\frac{1}{d_\lambda}\,\delta_{\lambda\lambda'}.
			$$
			
			\item Fourier expansion.
			The set
			$
			\bigl\{\sqrt{d_\lambda}\,D^{(\lambda)}_{mn}(\,\cdot\,)
			:\lambda\in\widehat G,\ m,n=1,\dots,d_\lambda\bigr\}
			$
			forms an orthonormal basis of $L^2(G)$.
			In $L^2(G)$, one has the complete orthogonal decomposition
			$$
			L^2(G)
			\;\cong\;
			\bigoplus_{\lambda\in\widehat G}
			V_\lambda^*\;\otimes\;V_\lambda,
			$$
			realized by the matrix‐elements of each irrep.		
			Equivalently, every	$f \in L^2(G)$ admits the Fourier expansion
			\begin{equation}
				f(g)
				\;=\;
				\sum_{\lambda\in\widehat G}
				d_\lambda\,
				\operatorname{Tr}\!\bigl[\widetilde f(\lambda)\,D^{(\lambda)}(g)\bigr],
				\qquad
				\widetilde f(\lambda)
				:=\int_G f(g)\,D^{(\lambda)}(g)^\dagger\,d\mu(g).
			\end{equation}

			where $\widehat G$ is the set of finite-dimensional irreps.
			The Haar measure is normalised to unity $\int_G d\mu(g)=1$.
		\end{enumerate}
	\end{theorem}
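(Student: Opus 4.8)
The plan is to prove the three parts in the order (2) $\to$ (1) $\to$ (3), since the orthogonality relations are purely algebraic and follow from Schur's lemma, the density statement is the genuine analytic heart, and the Fourier/orthonormal-basis statement is just the synthesis of the first two. Throughout I work on $L^2(G)$ with its right (or left) regular representation, since parts (1)--(3) are statements about the matrix-coefficient functions $D^{(\lambda)}_{mn}(\cdot)$ living in $L^2(G)$.

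First I would establish the Schur orthogonality relations (part 2) by the averaging trick. For irreps $\lambda,\lambda'$ and an arbitrary linear map $A\colon V_{\lambda'}\to V_\lambda$, form the operator $\bar A=\int_G D^{(\lambda)}(g)\,A\,D^{(\lambda')}(g)^\dagger\,d\mu(g)$. Left-invariance of the Haar measure shows that $\bar A$ intertwines $D^{(\lambda')}$ and $D^{(\lambda)}$, so Schur's lemma forces $\bar A=0$ when $\lambda\neq\lambda'$ and $\bar A=c\,\mathbf 1$ when $\lambda=\lambda'$. Choosing $A$ to be the elementary matrix $E_{np}$ and reading off the $(m,q)$ entry gives Eq.~\eqref{EQ:SchurOrthogonality}; the constant $c=1/d_\lambda$ is fixed by taking a trace. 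The character relation follows by setting $m=n$, $p=q$ and summing.

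The hard part is part 1, the density of matrix coefficients, and this is where I expect the main obstacle to lie. For each self-adjoint kernel $\phi\in C(G)$ (meaning $\phi(g^{-1})=\overline{\phi(g)}$) I introduce the convolution operator $(T_\phi f)(x)=\int_G \phi(xy^{-1})\,f(y)\,d\mu(y)$. The three technical steps are: (i) $T_\phi$ is \emph{compact}, which follows from Arzel\`a--Ascoli because the kernel is continuous and $G$ is compact, so the image of the unit ball is uniformly bounded and equicontinuous; (ii) $T_\phi$ is self-adjoint and commutes with all right translations; (iii) by the spectral theorem for compact self-adjoint operators, $L^2(G)$ decomposes into mutually orthogonal \emph{finite-dimensional} eigenspaces $E_c$ with $c\neq 0$, each of which is invariant under the regular representation precisely because $T_\phi$ commutes with it. Each $E_c$ therefore splits into finitely many irreducible $G$-subspaces, whose matrix coefficients are exactly the functions $D^{(\lambda)}_{mn}$. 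Taking an approximate identity $\phi_\varepsilon$ concentrating at the group unit, one shows $T_{\phi_\varepsilon}f\to f$ uniformly for continuous $f$, so every $f\in C(G)$ lies in the uniform closure of the span of matrix coefficients; uniform density then yields $L^2$-density. The crux, and the main obstacle, is step (iii): compactness of the convolution operator is the mechanism that converts an abstract unitary representation into a concrete spectral decomposition with finite-dimensional invariant pieces, thereby forcing irreducible blocks to appear and exhaust $L^2(G)$. Without it there is no engine producing the irreps.

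Finally, for part 3 I would simply assemble the previous two. Orthogonality (part 2) shows that $\{\sqrt{d_\lambda}\,D^{(\lambda)}_{mn}\}$ is an orthonormal set in $L^2(G)$; density (part 1) shows its span is dense, so it is an orthonormal basis. The decomposition $L^2(G)\cong\bigoplus_\lambda V_\lambda^*\otimes V_\lambda$ is the $G\times G$-equivariant restatement, with the left and right regular actions acting on the two tensor factors respectively, and the Fourier-coefficient formula $\widetilde f(\lambda)=\int_G f(g)\,D^{(\lambda)}(g)^\dagger\,d\mu(g)$ is read off by expanding $f$ against this basis and applying orthogonality. The finite-group specialization follows verbatim by replacing $\int_G d\mu(g)$ with $\tfrac{1}{|G|}\sum_{g\in G}$, where compactness of the convolution operator is automatic because $L^2(G)$ is then finite-dimensional.
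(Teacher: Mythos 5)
Your proposal is correct, and on the decisive point it takes a genuinely different---and more self-contained---route than the paper. For the orthogonality relations you and the paper both reduce to Schur's lemma; your version (averaging an elementary matrix $A=E_{np}$ into an intertwiner $\bar A=\int_G D^{(\lambda)}(g)\,A\,D^{(\lambda')}(g)^\dagger\,d\mu(g)$ and fixing the constant by a trace) is just the concrete form of what the paper gestures at, so there is no real difference there. The divergence is in the density statement: the paper invokes Stone--Weierstrass, asserting that the matrix coefficients form a conjugation-closed algebra separating points of $G$; you instead run the original Peter--Weyl argument, using compactness (via Arzel\`a--Ascoli) of self-adjoint convolution operators $T_\phi$ that commute with right translations, the spectral theorem to produce finite-dimensional $G$-invariant eigenspaces, and an approximate identity to exhaust $C(G)$. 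The trade-off is substantive: Stone--Weierstrass is shorter but silently presupposes that matrix coefficients separate points, i.e.\ that a compact group has enough finite-dimensional irreps to distinguish group elements---this is precisely the hard analytic content of Peter--Weyl, and it is exactly what your compact-operator engine manufactures from scratch. So your proof closes a gap that the paper's sketch leaves open, and you correctly identify compactness of $T_\phi$ as the crux. Two small remarks: (i) in your step (iii) you should note that eigenfunctions of $T_\phi$ with nonzero eigenvalue are automatically continuous (the kernel is continuous), so evaluating at the identity to convert a finite-dimensional right-invariant subspace into matrix coefficients is legitimate; (ii) your character computation (set $m=n$, $p=q$, and sum) yields $\int_G\chi_\lambda(g)\,\chi_{\lambda'}(g)^*\,d\mu(g)=\delta_{\lambda\lambda'}$, which is the correct normalization and flags a spurious factor of $1/d_\lambda$ in the character relation as printed in the theorem.
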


	\begin{proof}
		We only sketch the main ideas here \cite{Folland2016,Serre1977}.

		\begin{enumerate}
			\item Density:
			
			The Stone-Weierstrass theorem applies to the algebra generated by the matrix coefficients because they separate points on $G$ and are closed under complex conjugation.
			Therefore, this algebra is dense in $C(G)$ with the supremum norm.
			
			Continuous functions are dense in $L^2(G)$, so the matrix coefficients are also dense there.
			
			\item Orthogonality (Schur orthogonality):
			
			Consider the regular representation of $G$ on $L^2(G)$ by left translation $L_g f(x)=f(g^{-1}x)$.
			
			The matrix coefficients furnish intertwiners between this regular representation and the irreducible $\rho_\lambda$.
			
			Integrating the product of two matrix entries over $G$ and using the fact that the only intertwiners between inequivalent irreps are zero (Schur's lemma) fixes the integral to $\tfrac1{d_\lambda}\delta_{\lambda,\lambda'}\delta_{mp}\delta_{nq}$.
			
			\item Fourier expansion:
			
			The orthonormality follows by normalizing the matrix coefficients by $\sqrt{d_\lambda}$.
			
			The completeness follows from density plus orthogonality in a separable Hilbert space:
			an orthonormal set that is, dense is automatically a basis.
		\end{enumerate}
	\end{proof}

	\paragraph{(2) Finite‑group specialization.}

	When $G$ is a finite group of order $|G|<\infty$, we replace the Haar integral with the normalized sum 
	$$
	\int_Gd \mu \longrightarrow \tfrac1{|G|} \sum_{g\in G}
	$$
	and replace the orthogonality relations by	
	$$
	\sum_{g\in G} D^{(\lambda)}_{mn}(g)\,D^{(\lambda')}_{pq}(g)^* 
	\longrightarrow
	\frac{|G|}{d_\lambda}\,\delta_{\lambda,\lambda'}\,\delta_{mp}\,\delta_{nq}\,.
	$$	
	All statements of Theorem \ref{THM:PWTheoremCompact} hold with integrals replaced by this sum.

	\subsubsection{Peter-Weyl Projectors: Operator Form}

	Let $U: G \to U(\mathcal H)$ be any (strongly continuous) unitary representation on a (possibly infinite‑dimensional) Hilbert space $\mathcal H$.
	For each $\lambda\in\widehat G$, we define Peter-Weyl (spectral) projectors (the operator‐valued version for a representation $U(g)$ on $\mathcal H_{\rm Fock}$):
	$\chi_\lambda(g) = \operatorname{Tr}D^{(\lambda)}$ be the character.
	\begin{itemize}
		\item Compact case:
		\begin{equation}\label{EQ:PeterWeylProjectorCompact}
			P_\lambda
			= d_\lambda\int_G d\mu(g)\;\chi_\lambda(g)^*\;U(g)
		\end{equation}
		
		\item Finite case:
		\begin{equation}\label{EQ:PeterWeylProjectorFinite}
			P_\lambda = \frac{d_\lambda}{|G|}\sum_{g\in G}\chi_\lambda(g)^*\,U(g)
		\end{equation}
	\end{itemize}
	Thus, the raw‑Fock space $\mathcal H_{\rm Fock}$ decomposes into the isotypic space as a direct sum of superselection sectors	
	$$
	\mathcal H_{\rm iso}
	= \bigoplus_{\lambda\in\widehat G} P_\lambda\mathcal H_{\rm Fock}
	= \bigoplus_{\lambda\in\widehat G}\,\mathcal H_\lambda,
	\quad
	\mathcal H_\lambda:=P_\lambda\mathcal H_{\rm Fock}
	$$	
	with each $\mathcal H_\lambda$ transforming under the irrep $\lambda$.

	\begin{remark}[Peter-Weyl/gauge projector convergence]
		Standard model gauge groups are compact and the above definition of Peter-Weyl projector works well.
		But for non-compact gauge groups (e.g. SL(2, $\mathbb{R}$) in gravity), the Haar measure is non-finite so the projector needs care.
		One may consider using Sobolev-norm trick for non-compact cases.     
	\end{remark}

	We can treat the Peter-Weyl (spectral) projectors Eq.  \eqref{EQ:PeterWeylProjectorCompact} and \eqref{EQ:PeterWeylProjectorFinite} in two equivalent ways, each has it own particular physical meaning:
	
	\begin{enumerate}%[(a)]
		\item \textbf{Projective (superselection) viewpoint.}
		The integral averages the representation $U(g)$ over the group with
		the weight $\chi_\lambda^*(g)$, thereby annihilating all irreps
		except~$\lambda$.
		Hence $P_\lambda\mathcal H_{\mathrm{Fock}}$
		is the maximal subspace that transforms as~$\lambda$.
		Generally, one thinks that the integral sums over the continuous group elements weighted by the complex conjugate of the character.
		The resulting operator projects onto the subspace of $\mathcal H_\lambda$ transforming under $\lambda$.
		
		Each $P_\lambda$ projects onto the charge sector $\mathcal{H}_\lambda$ that carries irreducible gauge charge $\lambda$.
		Thus, the family $\{P_\lambda\}_{\lambda\in\widehat G}$ are the superselection‐rule projectors.
		Once a system is in the subspace $\mathcal{H}_\lambda = P_\lambda\mathcal H_{\rm Fock}$, no gauge‐invariant operator can connect it to a different sector.
		
		\item \textbf{Fourier-transform viewpoint.}
		The Peter-Weyl (spectral) projectors Eq.  \eqref{EQ:PeterWeylProjectorCompact} and \eqref{EQ:PeterWeylProjectorFinite} are indeed mappings
		$$
		U(g) \longmapsto P_\lambda.
		$$
		If we interpret $g \mapsto U(g)$ as a unitary‐valued function on~$G$, then the map $U(g) \longmapsto P_\lambda$ is its $\lambda$-th Fourier coefficient.
		In that sense, \eqref{EQ:PeterWeylProjectorCompact} furnishes a non-commutative	Fourier transform from functions on~$G$ to central operators on $\mathcal H_{\mathrm{Fock}}$.
		One can also think that $U(g)$ appears unweighted in Eq.~\eqref{EQ:PeterWeylProjectorCompact} except for the character factor,
		the combination $d_\lambda\,\chi^{*}_\lambda(g)\,U(g)$ singles out exactly
		the $\lambda$-block in $\mathcal H_{\rm Fock}$.
		Therefore, one may regard Eq.~\eqref{EQ:PeterWeylProjectorCompact} as a non-Abelian Fourier transform of the operator $U(g)$ along the compact group (along $G$).
		It extracts the $\lambda$-component of the group‐representation content of any operator or state.
	\end{enumerate}

	\begin{proposition}[Algebraic properties of $P_\lambda$]
		\label{prop:PW-operator-properties}
		The Peter-Weyl projector family $\{P_\lambda:\lambda\in\widehat G\}$ satisfies:
		\begin{enumerate}			
			\item Self‑adjointness:
			$P_\lambda^\dagger=P_\lambda$.
			
			\item Idempotence:
			$P_\lambda^2=P_\lambda$.
			
			\item Orthogonality:
			$P_\lambda P_{\lambda'}=\delta_{\lambda,\lambda'}\,P_\lambda.$
			
			\item Completeness:
			$\sum_{\lambda \in \widehat G} P_\lambda = \mathbf 1_\mathcal{H}$ (strong operator convergence).
			
			\item Gauge invariance:
			$[P_\lambda,\,U(h)]=0$ for all $h\in G$.
		\end{enumerate}
	\end{proposition}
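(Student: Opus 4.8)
The plan is to verify all five properties directly from the group-averaging definition Eq.~\eqref{EQ:PeterWeylProjectorCompact}, reducing each to a statement about the scalar kernel $f_\lambda(g):=d_\lambda\,\chi_\lambda(g)^*$ and then invoking the Schur orthogonality relations Eq.~\eqref{EQ:SchurOrthogonality} together with the completeness half of the Peter-Weyl Theorem~\ref{THM:PWTheoremCompact}. Throughout I would use two elementary facts about a unitary irrep of a compact group: unitarity gives $U(g)^\dagger=U(g^{-1})$ and $D^{(\lambda)}(g^{-1})=D^{(\lambda)}(g)^\dagger$, whence $\chi_\lambda(g^{-1})=\chi_\lambda(g)^*$; and since $G$ is compact it is unimodular, so its normalized Haar measure is invariant under both inversion $g\mapsto g^{-1}$ and conjugation $g\mapsto hgh^{-1}$. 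The finite-group case is identical with $\int_G d\mu\mapsto |G|^{-1}\sum_{g}$ as in Eq.~\eqref{EQ:PeterWeylProjectorFinite}.

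First I would dispatch gauge invariance and self-adjointness, which need only the invariance properties of $\mu$. For gauge invariance, conjugate to get $U(h)P_\lambda U(h)^{-1}=d_\lambda\int_G d\mu(g)\,\chi_\lambda(g)^*\,U(hgh^{-1})$, substitute $g'=hgh^{-1}$, and use conjugation-invariance of $\mu$ together with the class-function property $\chi_\lambda(h^{-1}g'h)=\chi_\lambda(g')$ to recover $P_\lambda$, giving $[P_\lambda,U(h)]=0$. For self-adjointness, take the adjoint inside the integral, use $U(g)^\dagger=U(g^{-1})$ and $\chi_\lambda(g)=\chi_\lambda(g^{-1})^*$, then substitute $h=g^{-1}$ using inversion-invariance of $\mu$; both manipulations return the original integral, so $P_\lambda^\dagger=P_\lambda$.

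Idempotence and orthogonality I would prove together. Compute $P_\lambda P_{\lambda'}$, collapse the double integral with $U(g)U(g')=U(gg')$ under the change of variables $k=gg'$, and write the result as $\int_G(f_\lambda * f_{\lambda'})(k)\,U(k)\,d\mu(k)$ with the group convolution $(f_\lambda*f_{\lambda'})(k)=\int_G f_\lambda(g)f_{\lambda'}(g^{-1}k)\,d\mu(g)$. The computational heart is the character convolution identity $f_\lambda * f_{\lambda'}=\delta_{\lambda\lambda'}f_\lambda$, equivalently $\int_G \overline{\chi_\lambda(g)}\,\overline{\chi_{\lambda'}(g^{-1}k)}\,d\mu(g)=\delta_{\lambda\lambda'}\,d_\lambda^{-1}\,\overline{\chi_\lambda(k)}$, which I would prove by expanding each character as a trace of matrix coefficients, using $D^{(\lambda')}_{pr}(g^{-1})=\overline{D^{(\lambda')}_{rp}(g)}$, and collapsing the $g$-integral with Schur orthogonality Eq.~\eqref{EQ:SchurOrthogonality}. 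Substituting back yields $P_\lambda P_{\lambda'}=\delta_{\lambda\lambda'}P_\lambda$, which simultaneously gives idempotence at $\lambda=\lambda'$ and orthogonality otherwise.

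Finally completeness, which I expect to be the main obstacle, since it is the only property requiring more than Schur orthogonality. Here I would invoke the density/completeness statement of the Peter-Weyl Theorem~\ref{THM:PWTheoremCompact}: weighting characters by $d_\lambda$ reproduces the Dirac kernel at the identity, $\sum_\lambda d_\lambda\,\chi_\lambda(g)^*=\delta_e(g)$ as a distribution against Haar measure (for finite groups this is exactly column orthogonality, $\sum_\lambda d_\lambda\chi_\lambda(g)^*=|G|\,\delta_{g,e}$), so that formally $\sum_\lambda P_\lambda=\int_G\delta_e(g)\,U(g)\,d\mu(g)=U(e)=\mathbf 1$. The delicate point is that on the infinite-dimensional $\mathcal H_{\rm Fock}$ one cannot interchange summation and integration naively, and the equality holds only in the strong operator topology. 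I would make this rigorous by working on the dense algebraic span of finite products of creation operators on the vacuum: each such vector lies in the reducible module $R^{\otimes n}$, on which $G$ acts finite-dimensionally, so only finitely many $P_\lambda$ act nontrivially and the partial sum reduces to the exact finite isotypic decomposition. Extending by density is justified because the already-proven mutual orthogonality makes every finite partial sum $\sum_{\lambda\in F}P_\lambda$ itself a projection, hence uniformly bounded by $\mathbf 1$, yielding strong convergence to the identity.
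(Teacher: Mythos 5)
Your proposal is correct, and for four of the five items it follows essentially the same route as the paper: self-adjointness via inversion-invariance of the Haar measure and $\chi_\lambda(g^{-1})=\chi_\lambda(g)^*$, idempotence and orthogonality via collapsing the double integral into a group convolution and applying the character convolution identity $\chi_\lambda^* * \chi_{\lambda'}^* = \delta_{\lambda\lambda'}d_\lambda^{-1}\chi_\lambda^*$ (itself a consequence of Schur orthogonality, Eq.~\eqref{EQ:SchurOrthogonality}), and gauge invariance via the class-function property plus conjugation-invariance of $\mu$. Where you genuinely diverge is completeness. The paper disposes of it in one line --- ``the matrix coefficients form a complete basis of $L^2(G)$, so the resolution of the identity in the regular representation implies $\sum_\lambda P_\lambda=\mathbf 1$'' --- which, read strictly, establishes the identity only for the regular representation on $L^2(G)$ and leaves implicit how it transfers to the representation $U$ on $\mathcal H_{\rm Fock}$. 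Your argument supplies precisely that missing step: on the dense algebraic span of finite products of creation operators, the $G$-action factors through a finite-dimensional module (the gauge group touches only the internal indices), so the partial sum terminates and equals the exact finite isotypic decomposition; the already-proven self-adjointness and mutual orthogonality make every finite partial sum $\sum_{\lambda\in F}P_\lambda$ an orthogonal projection, hence of norm at most one, and the standard $\varepsilon/3$ density argument then yields strong convergence to $\mathbf 1$ on all of $\mathcal H_{\rm Fock}$. This buys rigor exactly where the representation is infinite-dimensional and the interchange of sum and integral is not automatic; the paper's appeal to $L^2(G)$ is shorter but is really a statement about a different representation. Both are sound in spirit, and your version is the one a referee would accept without further queries.
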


	\begin{proof}
		\leavevmode
		\begin{enumerate}
			\item Self‑adjointness \& idempotence:
			Follow by replacing $\chi_\lambda^*(g)$ with its conjugate under $g\to g^{-1}$ and using orthogonality relations.
			In particular,			
			$$
			P_\lambda^2 = d_\lambda^2\int\!\!\int d\mu(g)d\mu(h)\;\chi_\lambda(g)^*\chi_\lambda(h)^*\,U(gh)
			= d_\lambda\!\int d\mu(k)\;\chi_\lambda(k)^*\,U(k) = P_\lambda,
			$$	
			where we have used the convolution identity and $\int\chi_\lambda(g)^*\chi_\lambda(gx)\,d\mu(g)=\tfrac1{d_\lambda}\chi_\lambda(x)$.
			
			\item Orthogonality:
			Directly from the Schur (mixed‑character) orthogonality relations Eq.~\eqref{EQ:SchurOrthogonality}.
			
			\item Completeness:
			Since the matrix coefficients form a complete basis of $L^2(G)$, the resolution of the identity in the regular representation implies $\sum P_\lambda = 1$.
			
			\item Gauge invariance:
			Since characters are class functions, $\chi_\lambda(hgh^{-1})=\chi_\lambda(g)$, one has
			$U(h)P_\lambda U(h)^{-1} = d_\lambda\!\int d\mu(g)\chi_\lambda(g)^*\,U(hgh^{-1}) = P_\lambda.$
		\end{enumerate}
	\end{proof}

	\begin{example}
		Examples of Peter-Weyl projectors:
		
		\begin{enumerate}
			\item Compact group $G = U(1)$.
			
			Let $G=U(1)={e^{i\theta}}$ with Haar measure $d\theta/(2\pi)$.
			\begin{itemize}
				\item Irreps.
				All irreducible unitary representations of $U(1)$ are one-dimensional, labeled by an integer $n\in\mathbb Z$:			
				$$
				\rho_n\bigl(e^{i\theta}\bigr) = e^{i n\theta},
				\quad d_n = 1,
				\quad \chi_n\bigl(e^{i\theta}\bigr) = e^{i n\theta}.
				$$
				
				\item Peter-Weyl projectors.
				On any Hilbert space $\mathcal H$ carrying a unitary representation $U(e^{i\theta})$ of $U(1)$, the $n$-sector projector is			
				$$
				P_n 
				= d_n \int_{0}^{2\pi}\!\frac{d\theta}{2\pi}\;\chi_n^*(e^{i\theta})\;U(e^{i\theta})
				= \int_{0}^{2\pi}\!\frac{d\theta}{2\pi}\;e^{-i n\theta}\,U(e^{i\theta}).
				$$			
				By orthogonality of characters, we have
				$P_n P_{n'}  = \delta_{n,n'}P_n$,
				$\sum_n P_n = \mathbf 1$,
				and $[P_n, U(e^{i\theta})] = 0$.
				
				\item Action on charge eigenstates.
				If $\ket{q}$ satisfies $U(e^{i\theta})\ket{q}=e^{i q\theta}\ket{q}$, then			
				$$
				P_n\,\ket{q}
				= \int_{0}^{2\pi}\!\frac{d\theta}{2\pi}\,e^{-i n\theta}e^{i q\theta}\ket{q}
				= \delta_{n,q}\,\ket{q}.
				$$			
				Thus, $P_n$ picks out precisely the subspace of total U(1) charge $n$.
			\end{itemize}

			\item Finite group $G = \mathbb{Z}_2$.
			
			Let $\mathbb{Z}_2 = \{e,a\}$ with $a^2 = e$.
			\begin{itemize}
				\item Irreps.				
				$\mathbb{Z}_2$ has two one-dimensional irreps:
				
				Trivial $\rho_0$ ($\lambda=0$):
				with characters $\chi_0(e)=\chi_0(a)=1$.
				
				Sign $\rho_1$ ($\lambda=1$):
				with characters $\chi_1(e)=1,\chi_1(a)=-1$.
				
				\item Peter-Weyl projectors.
				For any unitary rep. $U\colon\mathbb Z_2\to U(\mathcal H)$,			
				$$
				P_0 
				= \frac{d_0}{|\mathbb Z_2|}\sum_{g\in\mathbb Z_2}\chi_0^*(g)\,U(g)
				= \frac{1}{2}\bigl[U(e)+U(a)\bigr],
				$$
				
				$$
				P_1 
				= \frac{d_1}{2}\sum_{g\in\mathbb Z_2}\chi_1^*(g)\,U(g)
				= \frac{1}{2}\bigl[U(e)-U(a)\bigr].
				$$
				
				One can easily verify that:				
				$P_0+P_1=\mathbf 1$, $P_i^2=P_i$, $P_0P_1=0$, and $[P_i,U(g)]=0$ for all $g \in G$
				
				\item Diagonal basis action.
				Suppose $U(1)\ket{\psi_\pm}=\pm\ket{\psi_\pm}$.
				Then we have
				$$
				P_0\ket{\psi_\pm}=\tfrac12(1\pm1)\ket{\psi_\pm}
				= \begin{cases}\ket{\psi_+},&\text{for “+”}\\0,&\text{for “-”}\end{cases},
				$$
				
				$$
				P_1\ket{\psi_\pm}=\tfrac12(1\mp1)\ket{\psi_\pm}
				= \begin{cases}0,&\text{for “+”}\\\ket{\psi_-},&\text{for “-”}\end{cases}.
				$$			
				Thus, $P_0$ projects onto the even (+) subspace, $P_1$ onto the odd (-) subspace.
			\end{itemize}			
		\end{enumerate}
	\end{example}

	\subsection{Irreducible Decomposition and Clebsch-Gordan Rotation}
	\label{SEC:ClebschGordanRotation}

	In previous subsections, we investigated isotypic decomposition and Peter-Weyl projection, which are general constructions.
	Here we go one step further to irreducible decomposition and Clebsch-Gordan rotation, which are refinements by choosing an explicit unitary basis inside each isotypic block selected by the Peter-Weyl projector.
	Finally, we display each isotypic block in block-diagonal form.

	\subsubsection{Irreducible Decomposition}

	\paragraph{(1) Full irreducible decomposition.}
	The isotypic decomposition split the representation of the gauge group into the direct sum of blocks $V_\lambda \otimes M_\lambda$, one for each irrep type $\lambda$.
	The same for peter-Weyl projection, which just project onto each charge‐sector (each irrep) as a whole without immediately breaking each sector into its individual multiplicity copies.
	
	However, if we want to go one step further and actually chose a basis in each multiplicity space so that every single copy of every irrep was split off as its own summand, then we need the full irreducible decomposition.
	For example, if we want to diagonalize a Hamiltonian that acts nontrivially inside $M_\lambda$,
	label states by a second commuting symmetry that distinguishes copies,
	or define entanglement between different copies inside the same label,
	then we need refine the isotypic decomposition to a full irreducible decomposition.

	\paragraph{(2) Block diagonal form.}
	Furthermore, the Peter-Weyl projector $P_\lambda$ is defined with respect to the original basis of the raw-Fock space $\mathcal H_{\rm Fock}$.  
	It singles out the subspace $\mathcal H_\lambda \subset \mathcal H_{\rm Fock}$, that is,
	\begin{equation}\label{EQ:PeterWeylProjectorIrrepPortion}
		P_\lambda:\;\mathcal H_{\rm Fock}\longrightarrow
		\mathcal H_\lambda
	\end{equation}
	that transforms in the irrep $\lambda$.
	But in the Fock basis, the matrix representation of $P_\lambda$ may not be block-diagonal.
	It may be a dense operator that happens to be idempotent and self-adjoint.
	If one wishes to display this irrep block in block-diagonal form, 
	then he/she needs a unitary change of basis (a Clebsch-Gordan rotation) to makes the irrep block in an explicit block-diagonal form.

	In this subsection, we introduce Clebsch-Gordan rotation $\mathcal U_{\rm CG}$ \cite{Racah1942,Wigner1959}.
	It implements irreducible decomposition and simultaneously rotate all Peter-Weyl projectors into the form of diagonal blocks
	$$
	\mathcal U_{\rm CG}\,P_\lambda\,\mathcal U_{\rm CG}^{\dagger}
	= I_{V_\lambda}\otimes 0_{M_\lambda}.
	$$
	In fact, the Clebsch-Gordan rotation $\mathcal U_{\rm CG}$ can indeed decompose the entire raw-Fock space $\mathcal H_{\rm Fock}$ into block-diagonal charge sectors.
	This is a useful scaffolding for multiplicity counting and for displaying operators in block form, but it is not required for enforcing the superselection rule itself.

	\paragraph{(3) ``Not diagonal'' $\ne$ ``Not block-diagonal''}

	In the original Fock basis, a projector $P_\lambda$ usually looks like a dense matrix.
	In this sense, we say $P_\lambda$ is ``not diagonal''.
	What matters is that the set $\{P_\lambda\}$ is mutually orthogonal and complete.
	Together they impose a block-diagonal structure on every physical operator, even though an individual block need not be displayed in diagonal form.
	
	If you want the visual block structure you can perform an allowable Clebsch-Gordan rotation (or any other unitary that respects the decomposition) to bring each $P_\lambda$ to the canonical form	
	$$
	P_\lambda \;\longmapsto\;
	\mathbf 1_{V_\lambda}\otimes 0_{\text{others}},
	$$	
	but this is cosmetic:
	the superselection rule and the vanishing cross-sector amplitudes were already true before the rotation.

	\subsubsection{Clebsch-Gordan Rotation}
	\label{SEC:ClebschGordanMapHybridPackagedSubspace}
	
	We now extend the CG rotation to packaged subspace where both internal and external DOFs are involved.

	\paragraph{(1) Hybrid-packaged subspace before rotation.}

	In the presence of external DOFs (spectators), we have the packaged subspace
	$$
	\mathcal H_{\rm Fock}^{(n)} 
	= 
	V_{R_1}\otimes\cdots\otimes V_{R_n} \otimes \mathcal H_{\rm ext}^{(n)}
	$$
	on which the gauge group acts via
	$U(g)^{\otimes n}=D^{(R_1)}(g)\otimes \cdots\otimes D^{(R_n)}(g)$, which is the same as that in internal-packaged subspace.

	\paragraph{(2) Definition of Clebsch-Gordan (CG) rotation.}

	In the packaged subspace, we consider a full rotation of the form
	$$
	\mathcal U_{\rm CG} :
	\mathcal H_{\text{hyb}}^{(n)} \to \mathcal H_{\text{CG}}^{(n)}
	$$
	which leaves external labels unchanged while coupling internal indices.
	Thus, for any basis state $|\alpha_1\cdots\alpha_n\rangle\otimes|\xi\rangle$, we have
	$$
	\mathcal U_{\rm CG}\bigl(|\alpha_1\cdots\alpha_n\rangle\otimes|\xi\rangle\bigr)
	=\sum_{\lambda,m}C_{\lambda m}^{\,\alpha_1\cdots\alpha_n}\,|\lambda,m\rangle\otimes|\xi\rangle.
	$$	
	Formally, we give the definition as follows:
	
	\begin{definition}[Clebsch-Gordan (CG) rotation in packaged subspace]
		Let $G$ be a finite or compact gauge group.
		Let $R_1, \ldots, R_n$ be an ordered set of irreps of $G$ for each single-particle, respectively.
		Let $\alpha_k$ be internal indices in $R_k$.
		Define a unitary map $\mathcal U_{\rm CG}$ behaves as
		$$
		\mathcal U_{\rm CG}:
		V_{R_1}\otimes\cdots\otimes V_{R_n} \otimes \mathcal H_{\rm ext}^{(n)}
		\longmapsto
		\bigoplus_{\lambda} \left( V_\lambda \otimes \mathcal H_{\rm ext}^{(n)} \right)
		=
		\bigoplus_{\lambda} \mathcal \mathcal H_\lambda^{(n)},
		$$
		and send the reducible product state $\ket{\alpha_1 \cdots \alpha_n}$ to coupled irreducible space
		$$
		\mathcal U_{\rm CG}:
		\ket{\alpha_1\cdots\alpha_n}\!\otimes
		\ket{\xi}
		\;\longmapsto\;
		\sum_{\lambda,m}
		C_{\lambda m}^{\alpha_1\!\cdots\!\alpha_n}\;
		\ket{\lambda,m}\!\otimes\!\ket{\xi},
		$$
		where
		$\lambda$ labels the resulting irreducible $G$-block $R_\lambda$,  
		$m=1,\dots,N_\lambda$ counts multiplicity,
		and $\ket{\xi}$ collects all external labels spin/momentum, $\cdots$.
		The coefficients $C_{\lambda m}^{\alpha_1 \cdots \alpha_n}$ act as
		\begin{equation}
			C_{\lambda m}^{\alpha_1 \cdots \alpha_n}
			:  R_1\otimes\!\cdots\!\otimes R_n
			\!\longrightarrow\! R_\lambda,
		\end{equation}
		and are orthonormal only up to phase, that is,
		$$
		\sum_{{\alpha}} C_{\lambda m}^{{\alpha}} C_{\lambda' m'}^{\*{\alpha}} =\delta_{\lambda\lambda'}\delta_{mm'}
		$$
		This ensures that $\mathcal U_{\rm CG}$ is unitary.
		Then we say that 
		the map $\mathcal U_{\rm CG}$ is a \textbf{hybrid Clebsch-Gordan rotation},
		each coefficient $C_{\lambda m}^{\alpha_1 \cdots \alpha_n}$ is a 	\textbf{Clebsch-Gordan coefficients},
		and each $\mathcal \mathcal H_\lambda^{(n)}$ is a \textbf{packaged charge sector}.
	\end{definition}

	\paragraph{(3) Packaged subspace after rotation.}
	
	After applying the hybrid CG rotation $\mathcal U_{\rm CG}$, the packaged subspace decomposes into a direct sum of fixed-charge sectors
	$$
	\mathcal H_{\rm CG}^{(n)}
	:=
	\bigoplus_{\lambda}
	\left( V_\lambda \otimes \mathcal H_{\rm ext}^{(n)} \right)
	:=
	\mathrm{span}\Bigl\{
	\bigl|\lambda,m;\xi_1\!\cdots\xi_n\bigr\rangle
	\Bigr\}.
	$$
	where $\mathcal H_{\rm CG}^{(n)}$ is the $n$-particle packaged Clebsch-Gordan subspace (or diagonal $n$-particle packaged Peter-Weyl subspace).
	The total packaged Clebsch-Gordan space is the sum of all $\mathcal H_{\rm CG}^{(n)}$,
	$$
	\mathcal H_{\rm CG}
	:=
	\bigoplus_{n \ge 0} \mathcal H_{\rm CG}^{(n)}
	$$
	where $\mathcal H_{\rm ext}^{(0)} := \mathbb C \ket0$.
	Since $U(g)$ acts block-diagonally (Lemma~\ref{LEM:BlockGauge}), each $\lambda$ labels an
	independent packaged sector available for the superpositions of	Stage 5.

	Inside every fixed pair $(\lambda,Q)$, we may now form coherent superpositions of the
	packaged basis vectors.  
	The resulting packaged entangled states inherit irreducibility from
	Theorem~\ref{THM:PackagingSurvivesHybridPWProjection} and will be analyzed in the	next section.

	\begin{example}[Examples of CG rotation]
		\leavevmode
		\begin{enumerate}
			\item Colour $\mathbf 3\otimes\overline{\mathbf 3}$.			
			For a quark-antiquark pair, one has
			$
			\mathbf 3\otimes\overline{\mathbf 3}
			= \mathbf 1 \oplus \mathbf 8.
			$
			The CG rotation produces the singlet
			$$
			\ket{\mathbf 1}
			=\frac{1}{\sqrt3}\delta^{i}_{\,j}\ket{q_i\bar q^{\,j}}
			$$
			and the octet			
			$$
			\ket{\mathbf 8_a}
			=\bigl(\lambda_a\bigr)^{i}_{\,j}\ket{q_i\bar q^{\,j}},
			$$
			where $\lambda_a$ are the Gell-Mann matrices.
			Both states are packaged blocks of definite color charge.
			
			\item Electron-positron pair.			
			For abelian charges (e.g. $U(1)$ in QED), all single-particle irreps are one-dimensional. The Clebsch-Gordan rotation is trivial.
			The packaged basis is therefore the usual product basis
			$|e^-_{\uparrow}\,e^+_{\downarrow}\rangle,$
			etc.
		\end{enumerate}		
	\end{example}

	\begin{example}[Multi-particle packaged basis vector]
		Take $n=3$ quarks in QCD.
		The internal color Hilbert space is
		$\mathbb C^3 \otimes \mathbb C^3 \otimes \mathbb C^3$.
		The totally antisymmetric tensor $\epsilon^{abc}$ furnishes the unique SU(3) singlet:
		$$
		\ket{\mathbf 1;\,{\xi_1\xi_2\xi_3}}
		\;:=\;
		\frac{1}{\sqrt{6}}\,
		\epsilon^{abc}\,
		\ket{q_a,\xi_1}\otimes\ket{q_b,\xi_2}\otimes\ket{q_c,\xi_3},
		$$
		where each $\xi_i$ may be any external label, e.g., individual
		spins $s_i=\uparrow,\downarrow$ or momenta $\mathbf p_i$.
		Since the $\epsilon$-tensor contracts all color indices, the internal part is an SU(3) singlet.
		Consequently, the entire vector is packaged.		
		Any linear combination of such basis states with the same external
		labels $\xi_1\xi_2\xi_3$ remains in the packaged singlet sector.
	\end{example}

	\subsubsection{Properties of Clebsch-Gordan Rotation}

	\begin{lemma}[Unitarity of $\mathcal U_{\rm CG}$]
		\label{LEM:UCGUnitary}
		The CG rotation $\mathcal U_{\rm CG}$ is unitary on $\mathcal H_{\rm CG}^{(n)}$.
	\end{lemma}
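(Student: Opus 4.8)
The plan is to reduce unitarity of $\mathcal{U}_{\rm CG}$ to unitarity of a purely internal coupling matrix, and then read off the latter directly from the orthonormality relation that was built into the definition. The external factor plays no role, so I would first strip it away.

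First I would exploit the fact that $\mathcal{U}_{\rm CG}$ acts as the identity on the external factor $\mathcal{H}_{\rm ext}^{(n)}$ and only reshuffles the internal indices. Writing the collective internal index $\alpha := (\alpha_1,\ldots,\alpha_n)$, the map factorizes as $\mathcal{U}_{\rm CG} = C \otimes \mathbf{1}_{\rm ext}$, where $C$ is the linear operator $V_{R_1}\otimes\cdots\otimes V_{R_n} \to \bigoplus_\lambda (V_\lambda \otimes \mathbb{C}^{N_\lambda})$ with matrix entries $C^{\alpha}_{\lambda m}$. Because a tensor product of operators is unitary precisely when each factor is unitary, and $\mathbf{1}_{\rm ext}$ is trivially unitary, it suffices to prove that $C$ is unitary.

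Next I would regard $C$ as a rectangular matrix whose rows are labeled by the coupled pairs $(\lambda,m)$ and whose columns are labeled by the product index $\alpha$. The orthonormality relation $\sum_\alpha C^{\alpha}_{\lambda m} C^{*\alpha}_{\lambda' m'} = \delta_{\lambda\lambda'}\delta_{mm'}$ stated in the definition is exactly $CC^\dagger = \mathbf{1}$ on the coupled side, i.e.\ the rows of $C$ are orthonormal and $C$ is a coisometry. To upgrade this to full unitarity I would invoke dimension matching: by complete reducibility of the tensor-product representation (the isotypic decomposition of Stage 4, Definition~\ref{DEF:IsotypicDecomposition}), $\bigotimes_k V_{R_k} \cong \bigoplus_\lambda N_\lambda V_\lambda$, whence $\prod_k d_{R_k} = \sum_\lambda N_\lambda d_\lambda$. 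This shows $C$ is square, and for a square matrix $CC^\dagger = \mathbf{1}$ forces $C^\dagger C = \mathbf{1}$ as well, so $C$ is unitary; assembling the pieces, $\mathcal{U}_{\rm CG} = C \otimes \mathbf{1}_{\rm ext}$ is a tensor product of unitaries and hence unitary on $\mathcal{H}_{\rm CG}^{(n)}$.

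The step I expect to be the real substance (rather than bookkeeping) is the dimension-matching argument, because the definition supplies only the single relation $CC^\dagger = \mathbf{1}$; the complementary completeness relation $\sum_{\lambda m} C^{*\alpha}_{\lambda m} C^{\beta}_{\lambda m} = \delta_{\alpha\beta}$ is not postulated and must be recovered. Equivalently, one must confirm that the coupled basis $\{\ket{\lambda,m}\}$ and the product basis $\{\ket{\alpha}\}$ have equal cardinality so that the coisometry $C$ is genuinely onto, which is precisely what complete reducibility guarantees. Everything else is routine inner-product algebra using that both bases are orthonormal, and the harmless phase freedom in the convention for $C^{\alpha}_{\lambda m}$ does not affect any of these identities.
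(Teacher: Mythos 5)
Your proposal is correct, and it actually supplies substance that the paper's own proof omits. Both arguments share the same skeleton: factorize $\mathcal U_{\rm CG}=\mathcal U_{\rm CG}^{\rm int}\otimes\mathbf 1_{\rm ext}$ and use the fact that a tensor product of unitaries is unitary. But the paper stops there — it simply \emph{asserts} that the internal factor is unitary (echoing the claim in the definition that orthonormality of the coefficients ``ensures'' unitarity), so its proof is close to a restatement of the hypothesis. You correctly observe that the postulated relation $\sum_{\alpha}C^{\alpha}_{\lambda m}C^{*\alpha}_{\lambda' m'}=\delta_{\lambda\lambda'}\delta_{mm'}$ is only $CC^{\dagger}=\mathbf 1$ on the coupled side, i.e.\ that $C$ is a coisometry, which for a rectangular matrix would leave $\mathcal U_{\rm CG}$ with a nontrivial kernel; the missing completeness relation $\sum_{\lambda,m}C^{*\alpha}_{\lambda m}C^{\beta}_{\lambda m}=\delta_{\alpha\beta}$ must be derived. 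Your dimension-matching step does exactly this: complete reducibility gives $\bigotimes_k V_{R_k}\cong\bigoplus_\lambda N_\lambda V_\lambda$, hence $\prod_k d_{R_k}=\sum_\lambda N_\lambda d_\lambda$, so $C$ is square and $CC^{\dagger}=\mathbf 1$ upgrades to $C^{\dagger}C=\mathbf 1$ in finite dimensions. One small caution: the paper's label $m$ does double duty (multiplicity index in the CG definition, state-within-irrep index elsewhere), so in the dimension count you should make explicit that the row index of $C$ runs over the pair (state index in $V_\lambda$, multiplicity index), giving $\sum_\lambda N_\lambda d_\lambda$ rows; with that reading your count is exactly the right one, and your proof is a strict improvement on the paper's.
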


	\begin{proof}
		Since both $\mathcal U_{\rm CG}^{\rm int}$ and $\mathbf 1_{\rm ext}$ are unitary,		
		$(\mathcal U_{\rm CG}^{\rm int})^\dagger ~ \mathcal U_{\rm CG}^{\rm int}
		= \mathbf 1_{V^{\otimes n}}$ and $\mathbf 1_{\rm ext}$ is trivially unitary.
		Thus, their tensor product $\mathcal U_{\rm CG}$ satisfies
		$
		\mathcal U_{\rm CG}^\dagger ~ \mathcal U_{\rm CG}
		=
		(\mathcal U_{\rm CG}^{\rm int} {}^\dagger \otimes \mathbf 1_{\rm ext})
		(\mathcal U_{\rm CG}^{\rm int} \otimes \mathbf 1_{\rm ext})
		=\mathbf 1_{\rm hyb}.
		$
	\end{proof}

	\begin{lemma}[Orthonormality and completeness]\label{LEM:BlockGauge}
		For every $\forall ~ g \in G$, we have
		The packaged vector set
		$$
		\bigl\{\,
		\lvert\lambda,m;\xi_1\cdots\xi_n\rangle
		\bigr\}_{\lambda,m,\{\xi\}}
		$$
		that preserves orthonormality and completeness.		
	\end{lemma}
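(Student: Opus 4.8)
The plan is to derive both claims from a single source: the unitarity of the Clebsch--Gordan rotation (Lemma~\ref{LEM:UCGUnitary}), combined with the fact that the pre-rotation product basis is itself orthonormal and complete by construction. First I would fix an orthonormal basis $\{\lvert\alpha_1\cdots\alpha_n\rangle\}$ of $V_{R_1}\otimes\cdots\otimes V_{R_n}$ and an orthonormal basis $\{\lvert\xi_1\cdots\xi_n\rangle\}$ of $\mathcal H_{\rm ext}^{(n)}$, so that their tensor product is an orthonormal basis of $\mathcal H_{\rm hyb}^{(n)}$ obeying $\langle\alpha';\xi'\vert\alpha;\xi\rangle=\delta_{\alpha\alpha'}\delta_{\xi\xi'}$ and $\sum_{\alpha,\xi}\lvert\alpha;\xi\rangle\langle\alpha;\xi\rvert=\mathbf 1$. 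By the definition of $\mathcal U_{\rm CG}$, the packaged vectors $\lvert\lambda,m;\xi_1\cdots\xi_n\rangle$ are exactly the images of these product vectors under $\mathcal U_{\rm CG}$, which reduces both claims to ``a unitary carries an orthonormal basis to an orthonormal basis.''

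For orthonormality I would either invoke unitarity directly (a unitary preserves inner products, hence the image set is orthonormal) or verify it by hand: expanding $\langle\lambda',m';\xi'\vert\lambda,m;\xi\rangle$ using $\langle\xi'\vert\xi\rangle=\delta_{\xi\xi'}$ and collapsing the internal sum via the Clebsch--Gordan relation $\sum_{\alpha}C^{\alpha}_{\lambda m}C^{*\alpha}_{\lambda'm'}=\delta_{\lambda\lambda'}\delta_{mm'}$ stated in the definition of $\mathcal U_{\rm CG}$, yielding $\langle\lambda',m';\xi'\vert\lambda,m;\xi\rangle=\delta_{\lambda\lambda'}\delta_{mm'}\delta_{\xi\xi'}$. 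Completeness is equally immediate by conjugating the resolution of the identity through the unitary: $\sum_{\lambda,m,\xi}\lvert\lambda,m;\xi\rangle\langle\lambda,m;\xi\rvert=\mathcal U_{\rm CG}\bigl(\sum_{\alpha,\xi}\lvert\alpha;\xi\rangle\langle\alpha;\xi\rvert\bigr)\mathcal U_{\rm CG}^{\dagger}=\mathcal U_{\rm CG}\,\mathbf 1\,\mathcal U_{\rm CG}^{\dagger}=\mathbf 1_{\mathcal H_{\rm CG}^{(n)}}$.

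For the block-diagonal gauge action that the companion references attach to this lemma, I would use that $\mathcal U_{\rm CG}$ is an intertwiner by design, so $\mathcal U_{\rm CG}\,U(g)^{\otimes n}\,\mathcal U_{\rm CG}^{\dagger}=\bigoplus_{\lambda}\bigl(D^{(\lambda)}(g)\otimes\mathbf 1_{M_\lambda}\bigr)\otimes\mathbf 1_{\rm ext}$. Acting on a packaged basis vector then gives $U(g)\lvert\lambda,m;\xi\rangle=\sum_{m'}\bigl[D^{(\lambda)}(g)\bigr]_{m'm}\lvert\lambda,m';\xi\rangle$, with the external labels untouched since $G$ acts trivially on $\mathcal H_{\rm ext}$; hence $U(g)$ never connects vectors carrying different $\lambda$, which is precisely block-diagonality.

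The only point needing care --- and the main (mild) obstacle --- is index bookkeeping on the multiplicity label: one must confirm that the range of the pair $(\lambda,m)$ matches the source dimension, i.e. $\sum_{\lambda}d_\lambda N_\lambda=\prod_{k}d_{R_k}=\dim\bigl(V_{R_1}\otimes\cdots\otimes V_{R_n}\bigr)$, so that $\mathcal U_{\rm CG}$ is genuinely a unitary between equal-dimensional spaces and the completeness sum runs over the correct index set. This equality is just the standard decomposition of the tensor-power representation into irreducibles, and the freedom noted after the definition (CG coefficients fixed ``only up to phase'') does not affect the argument, since the orthonormality relation $\sum_{\alpha}C^{\alpha}_{\lambda m}C^{*\alpha}_{\lambda'm'}=\delta_{\lambda\lambda'}\delta_{mm'}$ holds for every admissible phase convention.
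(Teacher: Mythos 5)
Your proposal is correct and follows essentially the same route as the paper: both arguments rest on the unitarity of $\mathcal U_{\rm CG}$ (Lemma~\ref{LEM:UCGUnitary}) carrying the orthonormal, complete product basis of $\mathcal H_{\rm hyb}^{(n)}$ to the packaged basis $\{\lvert\lambda,m;\xi_1\cdots\xi_n\rangle\}$. Your extra steps (the explicit Clebsch--Gordan orthogonality check, conjugating the resolution of the identity, the block-diagonal gauge action, and the dimension bookkeeping) merely make explicit what the paper's one-line proof leaves implicit.
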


	\begin{proof}
		Unitarity of $\mathcal U_{\rm CG}$
		(Lemma~\ref{LEM:UCGUnitary}) transports the product basis to the hybrid basis
		$$
		\bigl\{\,
		\lvert\lambda,m;\xi_1\cdots\xi_n\rangle
		\bigr\}_{\lambda,m,\{\xi\}},
		$$
		which forms an orthonormal basis of $\mathcal H_{\rm CG}^{(n)}$.
		Thus, it preserves orthonormality and completeness.
	\end{proof}

	\begin{proposition}
		\label{PROP:PackagedCovariance}
		Let
		$$
		\mathcal H_{\rm CG}
		= \bigoplus_{n=0}^{\infty} \bigoplus_\lambda 
		V_\lambda\!\otimes\!\mathcal H_{\rm ext}^{(n)}
		= \bigoplus_{n=0}^{\infty} \bigoplus_\lambda \mathcal H_\lambda^{(n)}
		$$
		be the Clebsch-Gordan space.
		Then
		\begin{enumerate}
			\item Sector stability (invariance).
			For every $g\in G$, $U(g)\mathcal H_\lambda^{(n)}=\mathcal H_\lambda^{(n)}$.
			
%			\item Every vector in $\mathcal H_{\rm CG}$ is covariant under the gauge group $G$.
			
			\item The unitary representation of $G$ restricted to
			$\mathcal H_{\rm CG}$ decomposes as the direct sum of the
			irreducibles $D^{(\lambda)}$:
			$$
			U(g)\Bigl|_{\mathcal H_{\rm CG}}
			\;=\;
			\bigoplus_{n=0}^{\infty}\,
			\bigoplus_\lambda
			\bigl(D^{(\lambda)}(g)\otimes\mathbf 1_{\rm ext}\bigr),
			\quad g\in G.
			$$
		\end{enumerate}
	\end{proposition}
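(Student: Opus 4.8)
The plan is to prove both assertions at once by showing that $\mathcal U_{\rm CG}$ is a $G$-intertwiner: conjugating the product gauge action by it yields exactly the block-diagonal operator claimed in part (2), and then part (1) is read off block by block. The only representation-theoretic input required is the defining equivariance of the Clebsch-Gordan coefficients. Since each $C_{\lambda m}^{\alpha_1\cdots\alpha_n}$ realises a $G$-module homomorphism $R_1\otimes\cdots\otimes R_n\to R_\lambda$, it obeys the intertwining identity
$$
\sum_{\beta_1\cdots\beta_n}\Bigl(\prod_{k=1}^n D^{(R_k)}_{\beta_k\alpha_k}(g)\Bigr)\,C_{\lambda m}^{\beta_1\cdots\beta_n}
=\sum_{m'} D^{(\lambda)}_{mm'}(g)\,C_{\lambda m'}^{\alpha_1\cdots\alpha_n}.
$$
First I would fix a product basis state $\ket{\alpha_1\cdots\alpha_n}\otimes\ket{\xi}$ and apply $U^{(n)}(g)=D^{(R_1)}(g)\otimes\cdots\otimes D^{(R_n)}(g)\otimes\mathbf 1_{\rm ext}$, noting that the external labels are untouched because $G$ acts trivially on $\mathcal H_{\rm ext}^{(n)}$.

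Next I would push $\mathcal U_{\rm CG}$ through this expression and use the identity above to trade the product of $D^{(R_k)}$ matrices for a single $D^{(\lambda)}$ acting on the multiplicity index $m$. A short index bookkeeping gives
$$
\mathcal U_{\rm CG}\,U^{(n)}(g)\,\bigl(\ket{\alpha_1\cdots\alpha_n}\otimes\ket{\xi}\bigr)
=\Bigl(\bigoplus_\lambda D^{(\lambda)}(g)\otimes\mathbf 1_{\rm ext}\Bigr)\,\mathcal U_{\rm CG}\,\bigl(\ket{\alpha_1\cdots\alpha_n}\otimes\ket{\xi}\bigr).
$$
Since these product states form a basis of $\mathcal H_{\text{hyb}}^{(n)}$, I obtain the operator identity $\mathcal U_{\rm CG}U^{(n)}(g)=(\bigoplus_\lambda D^{(\lambda)}(g)\otimes\mathbf 1_{\rm ext})\,\mathcal U_{\rm CG}$, and unitarity of $\mathcal U_{\rm CG}$ (Lemma~\ref{LEM:UCGUnitary}) upgrades this to the conjugation $\mathcal U_{\rm CG}U^{(n)}(g)\mathcal U_{\rm CG}^\dagger=\bigoplus_\lambda D^{(\lambda)}(g)\otimes\mathbf 1_{\rm ext}$, which is precisely the transported gauge action on $\mathcal H_{\rm CG}^{(n)}$.

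Both claims then follow from this block-diagonal form. For part (1), the $\lambda$-block maps $\mathcal H_\lambda^{(n)}=V_\lambda\otimes\mathcal H_{\rm ext}^{(n)}$ into itself via $D^{(\lambda)}(g)\otimes\mathbf 1_{\rm ext}$, and because $D^{(\lambda)}(g)$ is unitary (hence invertible) the image exhausts $\mathcal H_\lambda^{(n)}$, giving $U(g)\mathcal H_\lambda^{(n)}=\mathcal H_\lambda^{(n)}$; this is the hybrid counterpart of Proposition~\ref{PROP:ActionOnIsotypicSectors}. For part (2) I would observe that the internal gauge action commutes with the particle-number operator (because $G$ is purely internal and never changes $n$), so the Fock grading $\bigoplus_{n\ge0}$ is itself $G$-invariant; summing the per-$n$ conjugation identity over $n$ and over $\lambda$ assembles the stated direct-sum decomposition of $U(g)|_{\mathcal H_{\rm CG}}$.

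The main obstacle is not a single hard estimate but keeping the equivariance bookkeeping exact and treating the infinite direct sum cleanly. Within each fixed $n$ the sum over $\lambda$ is finite (compactness of $G$ together with finitely many tensor factors), so no analytic subtlety arises there; the only place requiring care is the sum over $n$, which must be read in the strong operator topology exactly as in the completeness clause of Proposition~\ref{prop:PW-operator-properties}. I would also flag explicitly that ``$U(g)|_{\mathcal H_{\rm CG}}$'' denotes the action transported by $\mathcal U_{\rm CG}$, so that the proposition is a statement about the gauge action rewritten in the Clebsch-Gordan basis rather than a new dynamical assumption.
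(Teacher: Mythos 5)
Your proposal is correct, and it is organized around the same end result as the paper's proof (block-diagonality of the gauge action in the Clebsch-Gordan decomposition, from which sector invariance and the direct-sum formula are read off), but you place the logical weight differently. The paper's proof simply asserts that \emph{by construction} the gauge action on each $\mathcal H_\lambda^{(n)}=V_\lambda\otimes\mathcal H_{\rm ext}^{(n)}$ is $D^{(\lambda)}(g)\otimes\mathbf 1_{\rm ext}$, and then obtains the global statement by combining the orthogonal splitting (unitarity of $\mathcal U_{\rm CG}$, Lemma~\ref{LEM:UCGUnitary}) with a componentwise covariance check of arbitrary vectors. You instead take the \emph{product} action $U^{(n)}(g)=D^{(R_1)}(g)\otimes\cdots\otimes D^{(R_n)}(g)\otimes\mathbf 1_{\rm ext}$ as the primitive object and actually derive the block form, via the equivariance identity for the Clebsch-Gordan coefficients and the resulting conjugation $\mathcal U_{\rm CG}\,U^{(n)}(g)\,\mathcal U_{\rm CG}^\dagger=\bigoplus_\lambda D^{(\lambda)}(g)\otimes\mathbf 1_{\rm ext}$. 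This fills in precisely the step the paper hides inside ``by construction,'' and your explicit remark that the proposition concerns the action \emph{transported} by $\mathcal U_{\rm CG}$ removes a genuine ambiguity in the paper's statement. Your treatment of part (1) is also slightly tighter: you note that unitarity of $D^{(\lambda)}(g)$ upgrades the inclusion $U(g)\mathcal H_\lambda^{(n)}\subseteq\mathcal H_\lambda^{(n)}$ to equality, and your observations that the sum over $\lambda$ is finite at fixed $n$ while the sum over $n$ is to be read in the strong operator topology are points the paper passes over in silence. One caveat inherited from the paper rather than introduced by you: the notation $C_{\lambda m}^{\alpha_1\cdots\alpha_n}$ conflates the multiplicity index with the internal index of $V_\lambda$; in your intertwining identity $m$ must be the internal (carrier-space) index on which $D^{(\lambda)}$ acts, with a separate multiplicity label suppressed, so if you write this up you should say so explicitly to keep the identity meaningful when $N_\lambda>1$.
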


	\begin{proof}
		We prove it in three steps:
		
		\begin{enumerate}
			\item For each subspace $\mathcal H_\lambda^{(n)}
			:=V_\lambda\otimes\mathcal H_{\rm ext}^{(n)}$, by construction the gauge action on this subspace is
			$U(g)=D^{(\lambda)}(g)\otimes\mathbf 1_{\rm ext}$.
			Since $\mathbf 1_{\rm ext}$ is the identity on
			$\mathcal H_{\rm ext}^{(n)}$, it follows that
			$U(g)\mathcal \mathcal H_\lambda^{(n)}=\mathcal \mathcal H_\lambda^{(n)}$
			for all $g\in G$.
			Hence every $\mathcal \mathcal H_\lambda^{(n)}$ is
			invariant.			
			
%			Since $U(g)$ acts as $D^{(\lambda)}\otimes\mathbf 1_{\rm ext}$,
%			it maps $\mathcal H_\lambda^{(n)} = V_\lambda\otimes\mathcal H_{\rm ext}^{(n)}$ onto itself.

			\item Direct-sum structure.
			Since the Clebsch-Gordan map is unitary,
			the full $n$-particle packaged subspace splits orthogonally:
			$
			\mathcal H_{\rm CG}^{(n)}
			=\bigoplus_\lambda\mathcal \mathcal H_\lambda^{(n)} .
			$
			Each $U(g)$ acts block-diagonally with blocks
			$D^{(\lambda)}(g)\otimes\mathbf 1_{\rm ext}$.
			Taking the orthogonal sum over all $n$ gives
			the stated decomposition on $\mathcal H_{\rm CG}$.
			
			\item Covariance of arbitrary vectors.
			Let
			$
			\ket{\Psi}
			\in\mathcal H_{\rm CG}
			$
			be arbitrary.
			Decompose
			$
			\ket{\Psi}
			=\sum_{n,\lambda}\ket{\Psi_{n,\lambda}},
			$
			$
			\ket{\Psi_{n,\lambda}}
			\in\mathcal \mathcal H_\lambda^{(n)}.
			$
			Then
			$
			U(g)\ket{\Psi}
			=\sum_{n,\lambda}\bigl(D^{(\lambda)}(g)\otimes\mathbf 1\bigr)
			\ket{\Psi_{n,\lambda}}
			\in\mathcal H_{\rm CG},
			$
			so $\mathcal H_{\rm CG}$ is closed under the action of $G$.
			The transformation law is manifestly covariant: each component
			transforms in its own irreducible block $D^{(\lambda)}$.
		\end{enumerate}
	\end{proof}

	\begin{proposition}[Irreducibility of the internal factor]
		\label{PROP:IntIrrep}
		Each sector $\mathcal H_\lambda^{(n)}
		:=V_\lambda\otimes\mathcal H_{\rm ext}^{(n)}$ contains no proper nontrivial subspace
		invariant under the full gauge action.
	\end{proposition}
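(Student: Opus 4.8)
The plan is to reduce the claim to Schur's lemma through the commutant of the gauge action, exactly as in the hybridization analysis of Stage 2. First I would record that, by the Clebsch-Gordan construction and Proposition~\ref{PROP:PackagedCovariance}, the gauge group acts on the sector $\mathcal H_\lambda^{(n)}=V_\lambda\otimes\mathcal H_{\rm ext}^{(n)}$ by $U(g)=D^{(\lambda)}(g)\otimes\mathbf 1_{\rm ext}$, with $D^{(\lambda)}$ irreducible on $V_\lambda$. A closed subspace $\mathcal W\subseteq\mathcal H_\lambda^{(n)}$ is $G$-invariant if and only if its orthogonal projector $P_{\mathcal W}$ lies in the commutant $\{U(g)\}'$, so the whole question is transported into a statement about that commutant.

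Next I would invoke the commutant computation already established in Proposition~\ref{PROP:CommutantOrthonormalityCompleteness}: since $D^{(\lambda)}$ is irreducible, Schur's lemma forces $\{U(g)\}'=\mathbb C\,\mathbf 1_{V_\lambda}\otimes\mathcal B(\mathcal H_{\rm ext}^{(n)})$. Consequently any invariant projector factorizes as $P_{\mathcal W}=\mathbf 1_{V_\lambda}\otimes P_{\rm ext}$ for some projection $P_{\rm ext}$ on the external space, so that $\mathcal W=V_\lambda\otimes\operatorname{ran}P_{\rm ext}$. In particular, no $G$-invariant subspace restricts or refines the internal factor $V_\lambda$: the internal block is never split.

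I would then state the conclusion at the correct level of generality. When the external space is one-dimensional (for instance the vacuum sector $n=0$, or any single fixed external label), the factorization forces $P_{\rm ext}\in\{0,\mathbf 1\}$ and hence $\mathcal W\in\{0,\mathcal H_\lambda^{(n)}\}$, which is the literal irreducibility statement. For $\dim\mathcal H_{\rm ext}^{(n)}>1$ the statement the factorization actually proves is that every invariant subspace has the form $V_\lambda\otimes W_{\rm ext}$, so the \emph{internal} factor carries no proper nontrivial invariant subspace even though the hybrid sector manifestly does.

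The main obstacle is precisely this bookkeeping about the external multiplicity: the proposition as phrased is only literally correct when $\mathcal H_{\rm ext}^{(n)}$ is trivial, and for the general case the substantive content must be read as ``no $G$-invariant subspace cuts into $V_\lambda$.'' The argument itself is short once the commutant is in hand; the care lies in asserting irreducibility of the \emph{internal factor} rather than of the entire sector, and in citing the already-proved commutant structure rather than re-deriving Schur's lemma from scratch.
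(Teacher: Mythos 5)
Your proof is correct, and it is in fact more careful than the paper's own. The paper argues by contradiction in one line: if $\mathcal W\subset\mathcal H_\lambda^{(n)}$ were a proper invariant subspace, then ``its partial trace over $\mathcal H_{\rm ext}^{(n)}$ would yield a nontrivial invariant subspace of $V_\lambda$,'' contradicting irreducibility. Your route instead goes through the commutant: citing the hybrid commutant computation (Proposition~\ref{PROP:CommutantOrthonormalityCompleteness}), you get $\{U(g)\}'=\mathbb C\,\mathbf 1_{V_\lambda}\otimes\mathcal B(\mathcal H_{\rm ext}^{(n)})$, hence every invariant projector factorizes as $\mathbf 1_{V_\lambda}\otimes P_{\rm ext}$ and every invariant subspace is exactly of the form $V_\lambda\otimes W_{\rm ext}$. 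This buys you something the paper's argument misses: as you correctly flag, the proposition as literally stated is false whenever $\dim\mathcal H_{\rm ext}^{(n)}>1$, because $V_\lambda\otimes W_{\rm ext}$ with $W_{\rm ext}$ proper and nontrivial \emph{is} a proper nontrivial invariant subspace of the sector. The paper's partial-trace argument breaks down on precisely these subspaces: the partial trace of the projector onto $V_\lambda\otimes W_{\rm ext}$ is $(\dim W_{\rm ext})\,\mathbf 1_{V_\lambda}$, which yields no nontrivial invariant subspace of $V_\lambda$ and therefore no contradiction. Your reformulation --- no $G$-invariant subspace cuts into the internal factor, i.e.\ every invariant subspace carries the full $V_\lambda$ in its first tensor slot --- is the statement that the proposition's title ``Irreducibility of the internal factor'' intends, and it is the version actually used downstream (e.g.\ in Theorem~\ref{THM:PackagingSurvivesHybridPWProjection}). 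The only caveat is that your write-up should be explicit that you are proving this corrected statement rather than the literal one, which you indeed are.
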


	\begin{proof}
		By construction, $V_\lambda$ is an irrep of $G$ in the Clebsch-Gordan decomposition.  
		If $\mathcal W\subset\mathcal H_\lambda^{(n)}$ were a proper invariant subspace, then its partial trace over $\mathcal H_{\rm ext}^{(n)}$ would yield a nontrivial invariant subspace of $V_\lambda$.
		This is contradicting to irreducibility.
	\end{proof}

	Consequently, the internal indices of a whole $n$-body excitation have again fused into a single irreducible block.

	\begin{example}[Examples of CG rotation]
		\leavevmode
		\begin{enumerate}
			\item Colour $\mathbf3\!\otimes\!\overline{\mathbf3}$ with quark spins.
			
			Take a quark-antiquark pair.  Internally
			$\mathbf3\otimes\overline{\mathbf3}=\mathbf 1\oplus\mathbf8$,
			while externally each quark carries a two-component spinor.
			The hybrid basis is
			$
			\bigl|\mathbf 1,m;\!s_q,s_{\bar q}\bigr\rangle,
			\;
			\bigl|\mathbf8,m;\!s_q,s_{\bar q}\bigr\rangle
			$
			with $m=1,\dots,8$ for the octet.  
			Only the $\mathbf 1$ survives Stage 6.
			
			\item Electron-positron pair with momenta.
			
			For QED each single-particle irrep is one-dimensional, so the CG rotation
			is trivial.  The packaged basis is simply the product of electron and
			positron momentum eigenstates (plus spin), illustrating how spectators
			are unaffected.
			
			\item Hydrogen atom: hyperfine level $\otimes$ COM momentum.
			
			With $G=U(1)$ the proton-electron bound state is already a
			charge-neutral packaged block.  
			Internal space $V_{F}$ is spanned by the hyperfine states
			$\lvert F,m_F\rangle$,
			external space by the centre-of-mass momentum
			$\lvert\mathbf P\rangle$.
			The hybrid basis
			$\lvert F,m_F;\mathbf P\rangle$
			is packaged and transforms trivially under $U(1)$.
		\end{enumerate}
	\end{example}

	\begin{remark}[Packaging Survives Irreducible (Clebsch-Gordan) Decomposition]
		Once you have split
		$\mathcal H = \bigoplus_Q M_Q \otimes V_Q$
		into its irreducible summands (either by choosing an isotypic basis in each $M_Q$ or by performing a Clebsch-Gordan rotation inside each $V_Q$), the same Casimir‐ or Schur‐lemma argument shows that
		
		\begin{enumerate}
			\item No gauge‐covariant operator ever mixes different $Q$ or different copies of the same $V_Q$.
			
			\item Each individual copy of $V_Q$ remains irreducible under the local algebra.
		\end{enumerate}		
		In other words, packaging is invariant under any further irreducible decomposition or basis change inside each irrep block.
	\end{remark}

	\section{Stage 5: Packaged Superposition (Packaged Entanglement)}
	\label{SEC:Stage5PackagedSuperposition}

	From Stage 4 we know that, after the Peter-Weyl projection, the raw-Fock space decomposes into a direct sum of packaged sectors $\mathcal H_\lambda^{(n)}$, that is,
	$$
	\mathcal H_{\rm iso}^{(n)}
	= \bigoplus_{\lambda} \mathcal H_\lambda^{(n)}.
	$$
	Each sector $\mathcal H_\lambda^{(n)}$ carrying an irrep $D^{(\lambda)}$ of the gauge group~$G$.

	In this Stage 5, we explore the superposition of product states.
	We show that inside a sector $\mathcal H_\lambda^{(n)}$ packaged entanglement is allowed and any coherent superposition that remains is itself a packaged state.

	\subsection{Packaged Entanglement}
	\label{SEC:PackagedEntanglement}

	While internal charges (e.g., electric charge, color) must remain packaged, real particles also carry external DOFs such as spin ($s$) or momentum ($\mathbf p$) that are not necessarily gauged.
	If spin transforms trivially under $G$, then it is an independent factor in the single-particle representation (e.g., spin-$\tfrac12$ $\otimes$ charge $-e$).
	Hence a single-particle operator $\hat{a}^\dagger_{s,q}(\mathbf p)$ can carry both spin $s$ (external) and gauge charge $q$ (internal).  
	Similarly, momentum ($\mathbf p$) or other quantum numbers can appear.
	Here we show that multiple excitations can form states that entangle spin and IQNs across different particles without violating gauge invariance.

	\subsubsection{Definition of of Packaged Entangled States}

	Let $\mathcal H_\lambda^{(n)} := V_\lambda\!\otimes\!\mathcal H_{\rm ext}^{(n)}$ be a packaged charge sector with particle number $n$ and gauge irrep label $\lambda$.
	The $G$ acts as $U(g)=D^{(\lambda)}(g)\otimes\mathbf 1_{\rm ext}$ on $\mathcal H_\lambda^{(n)}$.

	Let $\lvert\Psi\rangle\in\mathcal H_\lambda^{(n)}$ be a (normalized) linear combination of the packaged basis states $\lvert\lambda,m;\xi_1\!\cdots\!\xi_n\rangle$, that is,
	$$
	\lvert\Psi\rangle
	=\sum_{m,\{\xi\}}c_{m\{\xi\}}\,
	\lvert\lambda,m;\xi_1\!\cdots\!\xi_n\rangle,
	$$
	where $c_{m\{\xi\}}$ are arbitrary coefficients.
	Then $\lvert\Psi\rangle$ is a packaged superposition state.	
	We now give the formally definition of packaged entangled states:

	\begin{definition}[Packaged entangled state]
		\label{DEF:PackagedEntangledState}
		Consider a multi-particle state
		\begin{equation}\label{EQ:PackagedEntangledState}
			\lvert\Psi\rangle = 
			\sum_{n} \alpha_n \; \lvert\Theta_n\rangle,
			\quad
			\sum_n |\alpha_n|^2 = 1,
		\end{equation}
		where
		$$
		\lvert\Theta_n\rangle
		= \hat{a}_{n,1}^\dagger(q_1,s_1)
		\hat{a}_{n,2}^\dagger(q_2,s_2)\cdots
		\hat{a}_{n,m}^\dagger(q_m,s_m) ~ \lvert0\rangle
		$$
		is a multi-particle product basis state with index $n$ and total particle number $m$.  
		Here $s$ is particle's external index (e.g. spin) and $q$ is the internal gauge charge (e.g., $\pm e$, color $\mathbf{3}$ or $\overline{\mathbf{3}}$).  
		If $\lvert\Psi\rangle$ satisfies the following conditions:
		\begin{enumerate}
			\item[A1] Each single-particle operator $\hat{a}^\dagger(s,q)$ is a packaged irrep carrying the full internal charge $q$ and external spin $s$;
			
			\item[A2] All terms $\lvert\Theta_n\rangle$	are multi-particle basis states that lie in the same net-charge sector (e.g., net charge $Q$);
			
			\item[A3] The total wavefunction is non-factorizable across various excitations with respect to internal charges and external DOFs (spin, momentum) (that is, it is entangled both internally and externally);
		\end{enumerate}
		then we say that $\lvert\Psi\rangle$ is a \textbf{fully packaged entangled state} (or \textbf{genuinely packaged entangled state}).
		Otherwise, if the internal charges and external DOFs are factorizable, then we say that $\lvert\Psi\rangle$ is a \textbf{domain-separable packaged entangled states} (or \textbf{bi-domain hybrid-packaged entangled state}).
	\end{definition}

	\begin{remark}
		\
		\begin{enumerate}
			\item A1 restates the single-particle packaging to ensure that each single-particle operator $\hat{a}^\dagger(s,q)$ is well defined.
			
			\item A2 ensures the compatibility with superselection (see Proposition \ref{PROP:PackagingImpliesSS}), which forbids local operations from creating coherence between different $Q$.
						
			\item A3 clearly states the entanglement itself.
		\end{enumerate}
	\end{remark}

	\begin{remark}
		\leavevmode
		\begin{itemize}
			\item The fully hybrid-packaged entanglement given in Definition \ref{DEF:PackagedEntangledState} are triple:
			IQNs entangle with IQNs,
			external DOFs entangle with external DOFs,
			and IQNs entangle with external DOFs.
			
			\item The domain-separable hybrid-packaged entanglement given in Definition \ref{DEF:PackagedEntangledState} are double:
			IQNs entangle with IQNs and	external DOFs entangle with external DOFs.
			But IQNs entangle and external DOFs are not entangled.
		\end{itemize}
	\end{remark}

	Up to now, we have a number of packaged states.
	The following table summarize their nomenclature:
	
	\begin{table}[h]
		\centering
		\caption{Glossary of packaged-state names}
		\label{TAB:PackagedStateNomenclature}
		\begin{tabular}[hbt!]{|p{3cm}|p{3cm}|p{2.8cm}|p{5.5cm}|}
			\hline\hline
			Internal entangled? & External entangled? & Cross domain? & State name \\
			\hline
			Yes     &Yes     &Yes    &fully/genuinely hybrid-packaged entangled state \\
			\hline
			Yes     &Yes     &No     &domain-separable/bi-domain hybrid-packaged entangled state \\
			\hline
			Yes     &No      &-      &internal/packaged entangled state \\
			\hline
			No      &Yes     &-      &external entangled state \\
			\hline
			No      &No      &-      &fully separable state \\
			\hline
		\end{tabular}
	\end{table}

	\begin{example}[Examples of hybrid-packaged entangled states]
		\label{EXM:HybridPackagedEntangled}
		\leavevmode
		\begin{enumerate}
			\item Fully hybrid-packaged entangled spin-momentum-charge pair.
			
			Consider an electron-positron pair, where each particle can be spin-up $\uparrow$ or spin-down $\downarrow$, and momentum $\mathbf p$ or $\mathbf{-p}$.  
			A simple fully hybrid-packaged entangled state is:				
			$$
			\alpha \,\hat{a}_{e^-,\uparrow}^\dagger(\mathbf p)\,\hat{b}_{e^+,\downarrow}^\dagger(\mathbf{-p})\,\lvert0\rangle
			\;+\;
			\beta \,\hat{b}_{e^+,\downarrow}^\dagger(\mathbf{-p})\,\hat{a}_{e^-,\uparrow}^\dagger(\mathbf p)\,\lvert0\rangle
			$$
			with both terms lying in the net $Q=0$ sector for $\alpha,\beta\neq 0$. 
			Each creation operator $\hat{a}_{e^-,\uparrow}^\dagger(\mathbf p)$ is a packaged operator carrying (charge $-e$, spin $\uparrow$, momentum $\mathbf p$) and $\hat{b}_{e^+,\downarrow}^\dagger(\mathbf{-p})$ is a packaged operator carrying (charge $+e$, spin $\downarrow$, momentum $\mathbf{-p}$).  
			The entanglement is fully hybrid because measuring spin or momentum on one particle projects the entire spin-momentum-charge wavefunction for both particles.
			
			\item Domain-separable hybrid-packaged entangled spin-momentum-charge pair.
			
			Again consider an electron-positron pair.
			A domain-separable hybrid-packaged entangled state is:
			\begin{align*}
				\begin{aligned}
					\left(\frac{1}{\sqrt{2}}\right)^3 \Bigl(
					&\hat{a}_{e^-,\uparrow}^\dagger(\mathbf p) \hat{b}_{e^+,\downarrow}^\dagger(\mathbf{-p})
					+ \hat{a}_{e^-,\downarrow}^\dagger(\mathbf p) \hat{b}_{e^+,\uparrow}^\dagger(\mathbf{-p})
					+ \hat{b}_{e^+,\uparrow}^\dagger(\mathbf p) \hat{a}_{e^-,\downarrow}^\dagger(\mathbf{-p})
					+ \hat{b}_{e^+,\downarrow}^\dagger(\mathbf p) \hat{a}_{e^-,\uparrow}^\dagger(\mathbf{-p}) \\
					& + \hat{a}_{e^-,\uparrow}^\dagger(\mathbf{-p}) \hat{b}_{e^+,\downarrow}^\dagger(\mathbf p)
					+ \hat{a}_{e^-,\downarrow}^\dagger(\mathbf{-p}) \hat{b}_{e^+,\uparrow}^\dagger(\mathbf p)
					+ \hat{b}_{e^+,\uparrow}^\dagger(\mathbf{-p}) \hat{a}_{e^-,\downarrow}^\dagger(\mathbf p)
					+ \hat{b}_{e^+,\downarrow}^\dagger(\mathbf{-p}) \hat{a}_{e^-,\uparrow}^\dagger(\mathbf p)
					\Bigr)
					\lvert0\rangle
				\end{aligned}
			\end{align*}
			or simply:
			$$
			\left(\frac{1}{\sqrt{2}}\right)^3
			\Bigl(
			\lvert e^-\rangle_1\,\lvert e^+ \rangle_2 
			\;+\;
			\lvert e^+\rangle_1\,\lvert e^- \rangle_2
			\Bigr)
			\Bigl(
			\lvert \uparrow\rangle_1 \,\lvert \downarrow\rangle_2
			\;+\;
			\lvert \downarrow\rangle_1 \,\lvert \uparrow\rangle_2
			\Bigr)
			\Bigl(
			\lvert \mathbf p \rangle_1 \,\lvert \mathbf{-p} \rangle_2
			\;+\;
			\lvert \mathbf{-p} \rangle_1 \,\lvert \mathbf p \rangle_2
			\Bigr)
			$$
			with both terms lying in the net $Q=0$ sector for $\alpha,\beta\neq 0$. 
			Each creation operator $\hat{a}_{e^-,\uparrow}^\dagger(\mathbf p)$ is a packaged operator carrying (charge $-e$, spin $\uparrow$, momentum $\mathbf p$) and  $\hat{b}_{e^+,\downarrow}^\dagger(\mathbf{-p})$ is a packaged operator carrying (charge $+e$, spin $\downarrow$, momentum $\mathbf{-p}$).  
			The entanglement is domain-separable hybrid because the spin, momentum, and IQNs are factorizable.
			Measuring spin (or momentum) on one particle only projects the spin (or momentum) wavefunction for both particles, but the momentum (or spin) and packaged part are not affected.
		\end{enumerate}
	\end{example}

	\begin{example}[Packaged superposition with different particle numbers]		
		According to the Peter-Weyl projection, the raw-Fock space splits into sectors			
		$$
		\mathcal H_{\rm iso}
		\;=\;\bigoplus_{Q\in\mathbb Z}\;
		\mathcal H_Q, 
		$$			
		where
		$\mathcal H_Q$ is spanned by all packaged states with net charge $Q$.
		We now show that one can superpose states, from the same $\mathcal H_Q$ but with different particle numbers, to produce a packaged entangled state.
		
		\begin{enumerate}
			\item U(1) charge.
			
			Let $G=U(1)$, so each electron $e^{-}$ carries charge $-1$ and each positron $e^{+}$ carries charge $+1$.			
			Consider two basis states in the same $Q=-1$ sector:
			\begin{itemize}
				\item 1 particle state:
				$
				|e^{-},0\rangle\;=\;|\,1\text{ electron},\,0\text{ positrons}\rangle\,, 
				$
				
				\item 3 particle state:
				$
				|e^{-}e^{-}, e^{+}\rangle\;=\;|\,2\text{ electrons},\,1\text{ positron}\rangle.
				$			
			\end{itemize}			
			Now form the superposition			
			$$
			|\Psi\rangle 
			\;=\; \frac{1}{\sqrt2}\,\Bigl(|e^{-},0\rangle \;+\; |e^{-}e^{-},e^{+}\rangle\Bigr).
			$$
			So Peter-Weyl projection $P_{Q=-1}\,|\Psi\rangle=|\Psi\rangle$.			
			
			\item SU(3)$_{\rm color}$ singlet.
			
			Let $G=SU(3)$, so physical states must lie in the color singlet irrep $\mathbf 1$.
			We consider two very different Peter-Weyl sectors both contain a $\mathbf 1$:
			\begin{itemize}
				\item Meson sector:
				a quark-antiquark pair
				$\;\mathbf3\otimes\overline{\mathbf3}=\mathbf1\oplus\mathbf8$.
				We pick the $\mathbf1$ piece, call it $|q\bar q;\mathbf1\rangle$.
				
				\item Glueball sector:
				two gluons
				$\;\mathbf8\otimes\mathbf8=\mathbf1\oplus\cdots$.
				Picking the $\mathbf1$ subspace, call it $|gg;\mathbf1\rangle$.
			\end{itemize}
			
			Both $|q\bar q;\mathbf1\rangle$ and $|gg;\mathbf1\rangle$ live in
			$\mathcal H_{\mathbf 1}\equiv P_{\mathbf 1} \mathcal H_{\rm Fock}$.
			Hence one may form			
			$$
			|\Psi_{\rm color}\rangle
			=\;\frac{1}{\sqrt2}\Bigl(|q\bar q;\mathbf1\rangle \;+\; |gg;\mathbf1\rangle\Bigr).
			$$
			
			Despite coming from different multiparticle sectors (2 vs. 2 constituents of wildly different type), the total color is always singlet.
			$P_{\mathbf1}|\Psi_{\rm color}\rangle=|\Psi_{\rm color}\rangle$.
			This is a packaged entangled state across the color singlet subspace.
		\end{enumerate}
		
		Any two (or more) packaged states that share the same irreducible gauge label $\lambda$ can be coherently superposed.
		You have a valid packaged state as long as you never leave that $\lambda$-sector.
		If you then bipartition, you’ll uncover genuine entanglement within that fixed‐charge (or fixed‐color) subspace.		
	\end{example}

	\subsubsection{Properties of Packaged Entangled States}

	We conclude with a statement on how gauge invariance is preserved in hybrid states, yet spin or momentum measurements can collapse internal DOFs if the state is hybridized entangled:

	\begin{theorem}[Packaging of DOFs in a packaged entangled state]
		\label{THM:HybridGauge}
		Consider internal IQNs (electric charge or color) and external DOFs (spin or momentum) appended to each creation operator.
		Then:
		\begin{enumerate}
			\item \textbf{Gauge covariance:}
			The total state remains gauge covariant (or, more precisely, transforms covariantly within the same gauge sector) as long as the net gauge charge $Q$ of each term is fixed within the superposition.
			
			\item \textbf{Fixed net-charge sector:}
			The total state is physically realizable with no cross-sector interference.  
			Mixing $\mathcal{H}_Q$ with $\mathcal{H}_{Q'}$ (for $Q\neq Q'$) is disallowed.	
			
			\item \textbf{Fully hybrid-packaged entanglement:}
			In the fully hybrid-packaged entangled state, all IQNs are entangled, relavant external DOFs are entangled, and IQNs are entangled with external DOFs.
			
			\item \textbf{Domain-separable hybrid-packaged entanglement:}
			In the domain-separable hybrid-packaged entangled state, all IQNs are entangled each other, but they are not entangled with external DOFs.
		\end{enumerate}
	\end{theorem}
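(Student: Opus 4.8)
The plan is to split the four assertions into two groups: parts (1)--(2) concern the gauge-representation structure of $\lvert\Psi\rangle$, while parts (3)--(4) concern its entanglement structure relative to the internal/external bipartition $\mathcal H_\lambda^{(n)}=V_\lambda\otimes\mathcal H_{\rm ext}^{(n)}$ supplied by the Clebsch-Gordan rotation of Stage 4. Throughout I would work in that basis and write
$$
\lvert\Psi\rangle=\sum_{m,\{\xi\}}c_{m\{\xi\}}\,\lvert\lambda,m;\xi_1\cdots\xi_n\rangle,
$$
so that the internal label $m$ and the external labels $\{\xi\}$ are cleanly separated by the tensor factorization guaranteed by Proposition~\ref{PROP:PackagedCovariance}. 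The whole statement then reduces to reading off properties of the coefficient array $c_{m\{\xi\}}$ together with the known transformation law of the basis vectors.

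For part (1) I would invoke Lemma~\ref{LEM:GaugecovarianceOfPackagedProductVectors}: each product term $\lvert\Theta_n\rangle$ is gauge covariant, transforming as $U(g)^{(n)}\lvert\Theta_n\rangle=[D(g)]^{\otimes m_n}\lvert\Theta_n\rangle$. Condition A2 places every term in the same net-charge sector, so for Abelian $G$ all terms acquire the common character $\chi_Q(g)$ and $U(g)\lvert\Psi\rangle=\chi_Q(g)\lvert\Psi\rangle$; for non-Abelian $G$ the state lies in a single irrep block $V_\lambda$ and transforms by $D^{(\lambda)}(g)\otimes\mathbf 1_{\rm ext}$, which is exactly the covariance asserted. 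Part (2) then follows from the superselection result Proposition~\ref{PROP:PackagingImpliesSS}: since every gauge-invariant observable is block-diagonal across the $\mathcal H_Q$, any relative phase between distinct sectors is unobservable, so a coherent superposition mixing $\mathcal H_Q$ and $\mathcal H_{Q'}$ with $Q\neq Q'$ is physically indistinguishable from the incoherent mixture and is therefore excluded; A2 guarantees that $\lvert\Psi\rangle$ never leaves a single sector, making it a legitimate physical state with no cross-sector interference.

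For parts (3)--(4) I would characterize entanglement through the reduced density matrix obtained by tracing out one factor of the bipartition. Viewing $c_{m\{\xi\}}$ as a matrix with row index $m$ and column multi-index $\{\xi\}$, the internal/external cross-correlation is measured by its Schmidt rank: the state is a product $\lvert\Psi_{\rm int}\rangle\otimes\lvert\Psi_{\rm ext}\rangle$ precisely when $c_{m\{\xi\}}=u_m\,v_{\{\xi\}}$ factorizes, i.e.\ Schmidt rank one. The domain-separable case (part 4) is by Definition~\ref{DEF:PackagedEntangledState} exactly this factorized case, so the IQN and external sectors decouple across the bipartition while each factor may independently carry its own internal-internal or external-external entanglement once $V_\lambda$ and $\mathcal H_{\rm ext}^{(n)}$ are refined into their per-excitation tensor factors. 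The fully hybrid case (part 3) is the non-factorizable case forced by A3: $c_{m\{\xi\}}$ has Schmidt rank exceeding one, so genuine IQN--external entanglement is present in addition to the internal and external entanglement, establishing the asserted triple entanglement.

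The hard part will be making the per-excitation entanglement statements precise in the presence of the particle-statistics (anti)symmetrization $\mathbb S_{\pm}$ of Stage~3. For identical excitations the naive factorization into labeled single-particle slots is obstructed, and ``entanglement between indistinguishable excitations'' must be defined with care, e.g.\ via mode occupation or correlation functions rather than tensor factors. I would handle this by phrasing the internal-internal and external-external claims in terms of the irreducible Clebsch-Gordan content: genuine internal entanglement means the projection of $c_{m\{\xi\}}$ onto $V_\lambda$ is not a single coupled basis vector, and likewise for the external factor. The gauge-covariance and superselection parts (1)--(2) are insensitive to this subtlety, since they depend only on the overall $G$-action, which acts identically on every symmetrization sector; the delicate bookkeeping is confined entirely to the entanglement classification of parts (3)--(4).
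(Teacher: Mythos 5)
Your proposal is correct, and it reaches the theorem by a somewhat different and in places more substantive route than the paper. For parts (1)--(2) the difference is mild: the paper argues covariance by noting that spin/momentum live in a separate Lorentz (little-group) factor that commutes with the gauge action, so fixing the net charge $Q$ in every term keeps the whole superposition in one gauge sector, and then cites superselection for part (2); you instead derive part (1) from Lemma~\ref{LEM:GaugecovarianceOfPackagedProductVectors} applied term by term plus condition A2, and part (2) from Proposition~\ref{PROP:PackagingImpliesSS} --- same conclusions, but your version makes the logical dependency on earlier results explicit, while the paper's version makes explicit the physical reason the external labels are spectators. The real divergence is in parts (3)--(4): the paper's proof of these items is essentially a restatement of Definition~\ref{DEF:PackagedEntangledState} (the text of the proof nearly repeats the text of the theorem), whereas you supply actual mathematical content --- the factorization criterion $c_{m\{\xi\}}=u_m v_{\{\xi\}}$ (Schmidt rank one across the internal/external cut) characterizing the domain-separable case, and Schmidt rank greater than one characterizing genuine cross-domain entanglement. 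This buys a checkable, basis-independent criterion the paper never states, and your flag about the $\mathbb S_{\pm}$ symmetrization obstructing naive per-excitation tensor factorization is a genuine subtlety the paper silently ignores; note, though, that your Schmidt-rank argument as written only certifies the internal--external entanglement claim of part (3), and the internal--internal and external--external claims still rest on the definitional content of A3, exactly as in the paper, so on those sub-claims the two proofs are equally (in)complete.
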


	\begin{proof}
		We proceed in three steps:
		
		\begin{enumerate}
			\item A local gauge transformation does not affect spin (or momentum) because spin is associated with the little group $\mathrm{SU}(2)$ acts on a separate Lorentz factor \cite{Wigner1939} and commutes with $G_{\mathrm{gauge}}$ (e.g., $U(1)$ or $SU(3)$).  
			Thus, as long as the net charge $Q$ remains the same in every term, the overall wavefunction transforms within the same gauge sector.
			
			\item Since the net gauge charge is unchanged across all superposition terms, superselection prevents mixing between different charge sectors.
			
			\item Fully entanglement:
			In the fully hybrid-packaged entangled state, all IQNs are entangled, relavant external DOFs are entangled, and IQNs are entangled with external DOFs.
			
			\item Domain-separable entanglement:
			In the domain-separable hybrid-packaged entangled state, by definition all IQNs are entangled to ensure packaged entanglement, but IQNs are not entangled with external DOFs to ensure domain-separable hybrid entanglement.
			The relavant external DOFs may be entangled or non-entangled.
		\end{enumerate}
		
		Thus, the entire ``external $\otimes$ internal'' wavefunction remains gauge covariant (that is, confined to a single gauge sector).
	\end{proof}
	
	These three properties together establish that any superposition of packaged multi-particle states remains 
	(i) gauge-covariant,
	(ii) confined to a single net-charge sector, and
	(iii) either fully separable (trivial packaging) or else packaged-entangled across all IQNs.

	\begin{example}[QCD Meson singlet coherent state]
		Consider the color singlet subspace
		$V_{\mathbf 1}\!\otimes\!(\mathbb C^2\otimes\mathbb C^2)$
		for a quark-antiquark pair.  
		An arbitrary packaged superposition is
		$$
		\lvert\Psi\rangle
		=
		\sum_{s_q,s_{\bar q}=\uparrow,\downarrow}
		c_{s_qs_{\bar q}}\,
		\lvert\mathbf 1;\,s_q,s_{\bar q}\rangle .
		$$
		No color invariant local operator (acting on, say, the quark alone) can reveal the amplitude matrix $c_{s_qs_{\bar q}}$.
		All such operators are proportional to $\mathbf 1_{\mathbf 1}$ on the internal factor, in agreement with Theorem~\ref{THM:PackagingSurvivesSuperposition}.
	\end{example}

	We now prove another property: internal reduced state is maximally mixed.

	\begin{theorem}[Maximal internal mixing]
		\label{THM:MaxMixing}
		Consider a hybrid-packaged sector 
		$\mathcal H_\lambda^{(n)} = V_\lambda \otimes \mathcal H_{\rm ext}^{(n)}$ 
		with an irrep label $\lambda$ and dimension $d_\lambda\!:=\!\dim V_\lambda$.
		Every vector $\lvert\Psi\rangle \in \mathcal H_\lambda^{(n)}$ can be expanded as
		$$
		\lvert\Psi\rangle
		=\sum_{m=1}^{d_\lambda}\;\sum_{\xi} c_{m\xi}\,
		\lvert\lambda,m\rangle\!\otimes\!\lvert\xi\rangle,
		\qquad
		\sum_{m,\xi}\lvert c_{m\xi}\rvert^{2}=1 .
		$$
		Define the reduced internal state
		$$
		\rho_{\rm pkg} := \operatorname{Tr}_{\rm ext}\bigl[\lvert\Psi\rangle\!\langle\Psi\rvert\bigr]
		\in\operatorname{End}(V_\lambda).
		$$
		Then we have
		\begin{equation}\label{eq:rho_pkg_maxmix}
			\rho_{\rm pkg}
			=\frac{1}{d_\lambda}\,\mathbf 1_{V_\lambda}
			\quad\Longleftrightarrow\quad
			\sum_{\xi} c_{m\xi}\,c_{m'\xi}^{\ast}
			=\frac{\delta_{mm'}}{d_\lambda}\;\;\;
			(\;1\!\le m,m'\!\le d_\lambda\;).
		\end{equation}
		In other words, the internal reduced state $\rho_{\rm pkg}$ is maximally mixed iff the $d_\lambda$ row-vectors $(c_{m\xi})_{\xi}$ are orthonormal and each has norm $1/\sqrt{d_\lambda}$.
	\end{theorem}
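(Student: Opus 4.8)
The plan is to compute $\rho_{\rm pkg}$ directly from the given expansion and simply read off its matrix entries in the fixed packaged basis $\{\lvert\lambda,m\rangle\}$. First I would form the outer product
$$
\lvert\Psi\rangle\langle\Psi\rvert
= \sum_{m,\xi}\sum_{m',\xi'}
c_{m\xi}\,c_{m'\xi'}^{\ast}\;
\lvert\lambda,m\rangle\langle\lambda,m'\rvert
\otimes
\lvert\xi\rangle\langle\xi'\rvert,
$$
and then trace out the external factor using $\operatorname{Tr}_{\rm ext}\bigl[\lvert\xi\rangle\langle\xi'\rvert\bigr]=\langle\xi'\vert\xi\rangle=\delta_{\xi\xi'}$, which holds because $\{\lvert\xi\rangle\}$ is an orthonormal external basis. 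The partial trace collapses $\xi'=\xi$ and yields
$$
\rho_{\rm pkg}
= \sum_{m,m'}
\Bigl(\sum_{\xi} c_{m\xi}\,c_{m'\xi}^{\ast}\Bigr)
\lvert\lambda,m\rangle\langle\lambda,m'\rvert,
$$
so the internal matrix element is exactly $(\rho_{\rm pkg})_{mm'}=\sum_{\xi}c_{m\xi}\,c_{m'\xi}^{\ast}$.

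Next I would compare this with the maximally mixed target entry by entry. Since a linear operator on the finite-dimensional space $V_\lambda$ is uniquely determined by its matrix in the orthonormal basis $\{\lvert\lambda,m\rangle\}$, the operator identity $\rho_{\rm pkg}=\tfrac{1}{d_\lambda}\mathbf 1_{V_\lambda}$ holds if and only if $(\rho_{\rm pkg})_{mm'}=\tfrac{1}{d_\lambda}\delta_{mm'}$ for all $m,m'$, which is precisely the stated condition $\sum_{\xi}c_{m\xi}\,c_{m'\xi}^{\ast}=\delta_{mm'}/d_\lambda$. Both implications are immediate from this entrywise equivalence. The final reinterpretation is then a one-line reading: writing $\mathbf c_m:=(c_{m\xi})_{\xi}$, the quantity $\sum_{\xi}c_{m\xi}\,c_{m'\xi}^{\ast}$ is the Hermitian inner product of the rows $\mathbf c_{m'}$ and $\mathbf c_m$, so the condition asserts that the $d_\lambda$ rows are mutually orthogonal with common squared norm $1/d_\lambda$, i.e. each has norm $1/\sqrt{d_\lambda}$.

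I expect no serious obstacle, as the statement is essentially a definitional unpacking of the partial trace. The only points requiring care are bookkeeping: keeping the complex-conjugation convention consistent so that the entry emerges as $\sum_{\xi}c_{m\xi}\,c_{m'\xi}^{\ast}$ rather than its conjugate, and observing that the global normalization $\sum_{m,\xi}\lvert c_{m\xi}\rvert^{2}=1$ is exactly the trace condition $\operatorname{Tr}\rho_{\rm pkg}=1$, which is automatically consistent since $\operatorname{Tr}\bigl(\tfrac{1}{d_\lambda}\mathbf 1_{V_\lambda}\bigr)=1$. Notably, no appeal to irreducibility or Schur's lemma is needed for this particular equivalence; those results enter elsewhere, only when one asks which gauge-invariant operations can detect or alter $\rho_{\rm pkg}$.
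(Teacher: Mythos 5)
Your proof is correct and follows essentially the same route as the paper's: both compute the partial trace to obtain $(\rho_{\rm pkg})_{mm'}=\sum_{\xi}c_{m\xi}c_{m'\xi}^{\ast}$ and then read off the equivalence with the maximally mixed state; the paper merely phrases the entrywise comparison as the matrix identity $C\,C^{\dagger}=\tfrac{1}{d_\lambda}\mathbf 1_{d_\lambda}$ for the coefficient matrix $C=(c_{m\xi})$. Your closing observation that Schur's lemma is not needed here is also consistent with the paper, which invokes it only as a corollary remark after the proof.
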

	
	\begin{proof}
		Pack the entries $c_{m\xi}$ into a coefficient matrix
		$C\in\mathbb C^{d_\lambda\times N}$ with $N=\dim\mathcal H_{\rm ext}^{(n)}$.
		The partial trace $\operatorname{Tr}_{\rm ext}$ only tracing over the external factor.
		The subsystem that remains after the trace is exactly the internal part, that is,
		$$
		\rho_{\rm pkg}
		=\sum_{m,m'}\Bigl(\sum_{\xi}c_{m\xi}\,c_{m'\xi}^{\ast}\Bigr)
		\lvert\lambda,m\rangle\!\langle\lambda,m'|
		=C\,C^{\dagger} \;.
		$$
		\begin{itemize}
			\item $\Rightarrow$.  
			If \eqref{eq:rho_pkg_maxmix} holds, then $C\,C^{\dagger}
			=(1/d_\lambda)\mathbf 1_{d_\lambda}$, that is, the rows of $C$ form an isometry scaled by $1/\sqrt{d_\lambda}$, which is precisely the right-hand condition (they satisfy $\sum_\xi c_{m\xi}c_{m'\xi}^*=\delta_{mm'}/d_\lambda$).
			
			\item $\Leftarrow$.  
			Conversely, if the orthonormality condition on the coefficients holds (they satisfy $\sum_\xi c_{m\xi}c_{m'\xi}^*=\delta_{mm'}/d_\lambda$),
			then $C\,C^{\dagger} =(1/d_\lambda)\mathbf 1_{d_\lambda}$ and hence
			$\rho_{\rm pkg}=(1/d_\lambda)\mathbf 1_{V_\lambda}$.
		\end{itemize}
		Finally, note that a maximally mixed $\rho_{\rm pkg}$ commutes with	all $D^{(\lambda)}(g)$, so the direction ($\Rightarrow$) recovers Schur’s
		lemma as a corollary, but the orthonormality condition makes explicit
		when a packaged superposition attains this maximal mixing.
	\end{proof}

	This indicates that the internal state is basis independent:
	any two	packaged superpositions with the same external reduced state are indistinguishable by internal measurements.

	\begin{remark}[Packaging in the Higgs phase]
		Once a gauge symmetry is spontaneously broken, the little-group classification shifts:
		what used to be a massless vector $(1,\,\hat p)$ becomes a massive spin-1 representation $(\tfrac12,\!\tfrac12)$, etc.
		In packaging language:
		\begin{enumerate}
			\item The label set $\alpha(p,\sigma)$ must now include the would-be Goldstone mode (or equivalently be re-expressed in the unitary gauge just in terms of the three massive polarizations).
			
			\item Correlation functions of gauge-variant fields require one to track the un-packaged Higgs direction if you work in $R_\xi$ gauges.
			
			\item The residual unbroken subgroup is packaged the same way as an ordinary global symmetry.
		\end{enumerate}
	\end{remark}

	\subsection{Packaging Survives Superposition}

	Recall from Stage 4 that, after the Peter-Weyl projection, our $n$-particle hybrid-packaged subspace decomposes into a direct sum of irreducible charge sectors
	$$
	\mathcal H^{(n)} \;=\;\bigoplus_{\lambda} \mathcal H^{(n)}_{\lambda}
	\;=\;
	\bigoplus_{\lambda}\bigl(V_{\lambda}\otimes\mathcal H_{\rm ext}^{(n)}\bigr),
	$$
	where each $\mathcal H^{(n)}_{\lambda}$ carries a single irrep $D^{(\lambda)}$ of the gauge group $G$.

	A packaged superposition is any vector
	$$
	\lvert\Psi\rangle
	\;=\;\sum_{m,\{\xi\}}
	c_{m,\{\xi\}}\,
	\bigl\lvert\lambda,m;\xi_1,\dots,\xi_n\bigr\rangle
	\;\in\;\mathcal H^{(n)}_{\lambda},
	\quad
	\sum_{m,\{\xi\}}|c_{m,\{\xi\}}|^2=1.
	$$
	We must check that no further splitting of its IQNs is possible.

	\begin{theorem}[Stability of packaging under superposition]
		\label{THM:PackagingSurvivesSuperposition}
		Let
		$$
		\lvert\Psi\rangle
		\;=\;\sum_{m,\{\xi\}}c_{m,\{\xi\}}\,
		\bigl\lvert\lambda,m;\xi_1,\dots,\xi_n\bigr\rangle
		\;\in\;\mathcal H^{(n)}_{\lambda}
		$$
		be an arbitrary superposition within a single charge sector.  Then:
		\begin{enumerate}[label=(\arabic*)]
			\item \emph{Irreducibility:} $\lvert\Psi\rangle$ transforms under $G$ according to a single irrep $D^{(\lambda)}$.
			
			\item \emph{Blindness:} No gauge-invariant local operator can distinguish the internal amplitudes $c_{m,\{\xi\}}$.
			
			\item \emph{Maximal mixing:} The reduced density matrix on the internal space $V_\lambda$ is
			$\frac1{d_\lambda}\,\mathbf 1_{V_\lambda}$.
		\end{enumerate}
		In particular, the package of IQNs remains indivisible under arbitrary superposition.
	\end{theorem}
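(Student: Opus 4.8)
The plan is to treat the three assertions separately, since each rests on a structural fact already in hand: the block form $U(g)=D^{(\lambda)}(g)\otimes\mathbf 1_{\rm ext}$ on $\mathcal H^{(n)}_\lambda=V_\lambda\otimes\mathcal H^{(n)}_{\rm ext}$ established in Stage~4, the commutant computation of Proposition~\ref{PROP:CommutantOrthonormalityCompleteness}, and the partial-trace identity of Theorem~\ref{THM:MaxMixing}. First I would fix the expansion $\lvert\Psi\rangle=\sum_{m,\{\xi\}}c_{m,\{\xi\}}\lvert\lambda,m;\xi_1,\dots,\xi_n\rangle$ and observe that, because the sector label $\lambda$ is common to every summand, the entire argument collapses to linear algebra on a single tensor factor.

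For (1), I would apply $U(g)$ term by term. Since the external kets are gauge-blind, $U(g)$ rotates only the internal index through the single matrix $D^{(\lambda)}(g)$, so $U(g)\lvert\Psi\rangle=\sum_{m,\{\xi\}}c_{m,\{\xi\}}\,(D^{(\lambda)}(g)\lvert\lambda,m\rangle)\otimes\lvert\xi\rangle$ stays inside $\mathcal H^{(n)}_\lambda$ and transforms within the one irrep $D^{(\lambda)}$. This is the sector-stability of Proposition~\ref{PROP:PackagedCovariance}(1) applied to a single vector, with linearity doing the rest: no coherent superposition can produce a component outside $\lambda$, so the carried representation stays irreducible.

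For (2), I would let $\mathcal O$ commute with every $U(g)$ and invoke Schur's lemma in the form recorded in Proposition~\ref{PROP:CommutantOrthonormalityCompleteness}, namely that the commutant of $D^{(\lambda)}(g)\otimes\mathbf 1$ is $\mathbf 1_{V_\lambda}\otimes\mathcal B(\mathcal H_{\rm ext})$. Hence $\mathcal O=\mathbf 1_{V_\lambda}\otimes B$ acts as the identity on the internal label, and $\langle\Psi\rvert\mathcal O\lvert\Psi\rangle$ depends only on the external marginal $\operatorname{Tr}_{\rm pkg}\lvert\Psi\rangle\langle\Psi\rvert$. Concretely, replacing $c_{m,\{\xi\}}$ by $\sum_{m'}W_{mm'}c_{m',\{\xi\}}$ for any internal unitary $W$ leaves every gauge-invariant expectation value unchanged, which is the precise content of ``blindness'' to the internal amplitudes.

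The delicate point is (3), and I expect it to be the main obstacle. I would compute $\rho_{\rm pkg}=\operatorname{Tr}_{\rm ext}\lvert\Psi\rangle\langle\Psi\rvert=CC^{\dagger}$ exactly as in Theorem~\ref{THM:MaxMixing}, packing the $c_{m,\{\xi\}}$ into the matrix $C$. The difficulty is that $CC^\dagger=\tfrac1{d_\lambda}\mathbf 1_{V_\lambda}$ is \emph{not} automatic for an arbitrary superposition: a product state $\lvert\lambda,1\rangle\otimes\lvert\xi_0\rangle$ already yields a pure $\rho_{\rm pkg}$. The honest reading is therefore the one licensed by Theorem~\ref{THM:MaxMixing}: $\rho_{\rm pkg}=\tfrac1{d_\lambda}\mathbf 1_{V_\lambda}$ precisely when the $d_\lambda$ coefficient rows $(c_{m,\{\xi\}})_{\{\xi\}}$ are orthogonal with common norm $1/\sqrt{d_\lambda}$, so I would state (3) under this orthonormality hypothesis (the fully packaged-entangled case). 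For a general $\lvert\Psi\rangle$ the always-valid weaker conclusion is that $\rho_{\rm pkg}$ commutes with every $D^{(\lambda)}(g)$ and is hence indistinguishable from a scalar by any gauge-invariant probe, which is really (2) restated at the level of density matrices. Assembling the three items then yields the indivisibility of the IQN package under superposition.
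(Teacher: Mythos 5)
Your treatment of parts (1) and (2) follows essentially the same route as the paper: term-by-term application of $U(g)=D^{(\lambda)}(g)\otimes\mathbf 1_{\rm ext}$ for irreducibility, and Schur's lemma in commutant form ($\mathcal O=\mathbf 1_{V_\lambda}\otimes B$) for blindness --- the paper invokes Proposition~\ref{PROP:ActionOnIsotypicSectors} where you invoke Proposition~\ref{PROP:CommutantOrthonormalityCompleteness}, but the content is identical.

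On part (3) you have caught a genuine defect, and your reading is more careful than the paper's own proof. The paper disposes of (3) in one line by citing Theorem~\ref{THM:MaxMixing} as if it asserted that \emph{every} pure state in $\mathcal H^{(n)}_\lambda$ has maximally mixed internal reduced state; but Theorem~\ref{THM:MaxMixing} is an equivalence, asserting $\rho_{\rm pkg}=\tfrac1{d_\lambda}\mathbf 1_{V_\lambda}$ \emph{iff} the coefficient rows $(c_{m,\{\xi\}})_{\{\xi\}}$ are mutually orthogonal with common norm $1/\sqrt{d_\lambda}$. Your counterexample $\lvert\lambda,1\rangle\otimes\lvert\xi_0\rangle$ (a product state, hence rank-one $\rho_{\rm pkg}$) shows the unconditional claim is false whenever $d_\lambda>1$, so (3) does need the orthonormality hypothesis you add. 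One caution about your proposed fallback, however: the statement that for a general $\lvert\Psi\rangle$ the reduced state $\rho_{\rm pkg}$ ``commutes with every $D^{(\lambda)}(g)$'' is false by the same counterexample, since $\lvert\lambda,1\rangle\!\langle\lambda,1\rvert$ cannot commute with an irreducible $D^{(\lambda)}$ unless $d_\lambda=1$. The correct always-valid statement is purely operational: because every gauge-invariant observable has the form $\mathbf 1_{V_\lambda}\otimes B$, its expectation value $\operatorname{Tr}\bigl[(\mathbf 1\otimes B)\,\lvert\Psi\rangle\!\langle\Psi\rvert\bigr]$ depends only on the external marginal, so no gauge-invariant probe can distinguish $\rho_{\rm pkg}$ from $\tfrac1{d_\lambda}\mathbf 1_{V_\lambda}$ --- which, as you yourself note, is item (2) restated at the density-matrix level, not an algebraic property of $\rho_{\rm pkg}$ itself.
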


	\begin{proof}
		\leavevmode
		\begin{enumerate}%[label=(\arabic\*)]
			\item \emph{Irreducible gauge transformation.} 
			Under a gauge transformation $g \in G$,
			$$
			\begin{aligned}
				U(g) \lvert\Psi\rangle
				&\;=\;
				\sum_{m,\{\xi\}}c_{m,\{\xi\}}\,
				\sum_{m'}D^{(\lambda)}_{m'm}(g)\,
				\bigl\lvert\lambda,m';\xi_1,\dots,\xi_n\bigr\rangle \\
				&\;=\;
				\sum_{m'}\Bigl[
				\sum_{m,\{\xi\}}
				D^{(\lambda)}_{m'm}(g)\,c_{m,\{\xi\}}
				\Bigr]\,
				\bigl\lvert\lambda,m';\xi_1,\dots,\xi_n\bigr\rangle,
			\end{aligned}
			$$
			which is exactly the action of the irreducible $D^{(\lambda)}$ on the internal label $m$.
			By Schur’s lemma no nontrivial subspace of $V_\lambda$ is invariant, so the superposition remains an irreducible carrier of the same irrep.
			
			\item \emph{Blindness of local gauge-invariant operators.}
			Let $\mathcal O$ be any bounded operator commuting with all gauge transformations, $[\mathcal O,U(g)]=0$.
			Then by Proposition \ref{PROP:ActionOnIsotypicSectors}, on $\mathcal H^{(n)}_{\lambda}=V_\lambda\otimes\mathcal H_{\rm ext}^{(n)}$, we have
			$\mathcal O \;=\; \mathbf 1_{V_\lambda}\otimes B_{\rm ext}$ for some operator $B_{\rm ext}$ on the external space.
			Hence
			$\bigl\langle\Psi\bigl|\mathcal O\bigr|\Psi\bigr\rangle
			= \sum_{m,m',\{\xi\},\{\xi'\}}
			c^*_{m,\{\xi\}}\;c_{m',\{\xi'\}}\;
			\delta_{m,m'}\;\bigl\langle \xi_1,\dots,\xi_n \bigl| B_{\rm ext}\bigr|
			\xi'_1,\dots,\xi'_n\bigr\rangle
			$
			depends only on the $\{\xi\}$-overlaps and $\sum_m|c_{m,\{\xi\}}|^2$, but never on any relative phase or overlap between different $m$-labels.
			In other words, no local gauge-invariant measurement can resolve the internal amplitudes $c_{m,\{\xi\}}$.
			
			\item \emph{Maximal mixing of the internal reduced state.}
			Theorem~\ref{THM:MaxMixing} proves that for any pure state in $\mathcal H^{(n)}_{\lambda}$, the partial trace over the external DOFs yields
			$\rho_{\rm int}=\frac1{d_\lambda}\,\mathbf 1_{V_\lambda}$.
		\end{enumerate}
		In particular, no further nontrivial splitting of the internal label $m$ is possible:
		all three points together furnish a complete, representation-theoretic proof that packaging survives arbitrary superposition.
	\end{proof}

	\section{Stage 6: Local Gauge-Invariance Constraint (Gauge-Invariant Packaging)}
	\label{SEC:Stage6LocalGaussLawConstraint}

	All the above investigations were carried out in the covariant isotypic space $\mathcal{H}_{\rm iso}$.
	However, physical states are required to be the gauge-invariant ones.
	This means that some of the packaged states in $\mathcal{H}_{\rm iso}$ are non-physical.
	So we need a tool to filter out the non-physical packaged states, but keep the gauge-invariant physical packaged states.

	In this section, we impose the local gauge-invariance constraint $U(g)\ket\psi=\ket\psi$ to pick out the physical packaged states.
	Instead of directly using this constraint, it is more convenient to use group-averaging (or trivial Peter-Weyl) projector at the level of elementary excitations \cite{Dirac1950,BRS1975,Tyutin2008,KugoOjima1979,Wilson1974}.

	\subsection{Local Gauge-Invariance and Gauge-Invariant Packaging}

	\subsubsection{The Principle}

	Any modern gauge theory has a fundamental postulate:
	the laws of physics must be invariant under local gauge transformations. Thus, every physical packaged state must satisfy
	\[
	U(g)\ket\psi = \ket\psi,\quad \forall\,g\in G
	\]
	and we define the physical (gauge-invariant) packaged subspace as
	\begin{equation}\label{EQ:GaugeInvariantPackagedSubspace}
		\mathcal H_{\rm phys}\;:=\;
		\bigl\{\,
		\ket\psi\in\mathcal H_{\rm iso}\;|\;
		U(g)\ket\psi=\ket\psi, ~ \forall g \in G
		\bigr\}.
	\end{equation}

	This is a very powerful and restrictive requirement, which means that our description of reality should not change if we apply a transformation that varies independently at every single point in spacetime.
	For electromagnetism, this is the freedom to change the phase of an electron's wavefunction differently at different locations ($\psi(x) \rightarrow e^{i\alpha(x)}\psi(x)$), as long as we also transform the electromagnetic potential ($A_\mu$) in a corresponding way to compensate.

	This principle is considered fundamental because it correctly predicts the existence of forces and their corresponding force-carrying particles (like the photon).

	\begin{example}[U(1) physical subspace]
		Local U(1) transformations act as 
		$U(\theta)=e^{i\theta Q_{\rm tot}}$ on 
		$\ket{n_a,n_b}\equiv (a^\dagger)^{n_a}(b^\dagger)^{n_b}\ket0$.
		Gauge invariance $U(\theta)\ket\psi=\ket\psi$ forces $n_a=n_b$, so
		\[
		\mathcal H_{\rm phys}^{\mathrm{U(1)}}
		= \mathrm{span}\{\ket{n,n}\mid n\in\mathbb N\}.
		\]
	\end{example}

	\subsubsection{Local Gauge-Invariance $\Rightarrow$ Trivial Sector $\mathcal H_0$}

	We now prove that due to requirement of local gauge invariance, the physical packaged subspace is indeed equal to the trivial sector $\mathcal H_0$ in the isotypic decomposition:

	\begin{proposition}[Local Gauge Invariance $\Rightarrow$ trivial sector]
		Let $G$ be a finite or compact Lie group and
		$
		U(\,{\cdot}\,): G \to \mathcal U(\mathcal H_{\rm iso})
		$
		be a unitary representation that has been decomposed into isotypic blocks
		$$
		\mathcal H_{\rm iso}\;=\;\bigoplus_{\lambda\in\widehat G}
		\bigl(V_\lambda\otimes M_\lambda\bigr).
		$$
		Then
		$$
		\mathcal H_{\rm phys}\;=\;
		V_{\mathbf 0}\otimes M_{\mathbf 0}
		$$
		where $\mathbf 0$ denotes the trivial irrep of $G$.
	\end{proposition}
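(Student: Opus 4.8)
The plan is to realize the gauge-invariant subspace as the range of the group-averaging projector $\Pi_{\rm phys}=\int_G d\mu(g)\,U(g)$ introduced in Stage~S6 and then to identify that range with the trivial isotypic block. First I would recall that $\Pi_{\rm phys}$ is nothing but the Peter-Weyl projector $P_{\mathbf 0}$ attached to the trivial irrep: since $d_{\mathbf 0}=1$ and $\chi_{\mathbf 0}(g)\equiv 1$, the definition \eqref{EQ:PeterWeylProjectorCompact} collapses to $P_{\mathbf 0}=\int_G d\mu(g)\,U(g)=\Pi_{\rm phys}$ (with the finite sum \eqref{EQ:PeterWeylProjectorFinite} in the finite-group case). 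By Proposition~\ref{prop:PW-operator-properties}, $\Pi_{\rm phys}$ is therefore a self-adjoint idempotent that commutes with the whole representation, and whose range is the isotypic sector $\mathcal H_{\mathbf 0}=V_{\mathbf 0}\otimes M_{\mathbf 0}$.

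Next I would show that $\operatorname{Im}\Pi_{\rm phys}$ coincides with the invariant subspace $\mathcal H_{\rm phys}$ of \eqref{EQ:GaugeInvariantPackagedSubspace}. One inclusion follows from left-invariance of the Haar measure: for every $h\in G$ one has $U(h)\Pi_{\rm phys}=\int_G d\mu(g)\,U(hg)=\Pi_{\rm phys}$, so every vector in the range is gauge invariant. For the reverse inclusion, if $U(g)\ket\psi=\ket\psi$ for all $g$, then $\Pi_{\rm phys}\ket\psi=\int_G d\mu(g)\,\ket\psi=\ket\psi$ using $\int_G d\mu=1$, so $\ket\psi$ lies in the range. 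Hence $\mathcal H_{\rm phys}=\operatorname{Im}\Pi_{\rm phys}=\mathcal H_{\mathbf 0}$.

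The computational heart — the step I would treat most carefully — is verifying that $\Pi_{\rm phys}$ annihilates every nontrivial block, i.e.\ that $P_{\mathbf 0}\big|_{\mathcal H_\lambda}=\delta_{\lambda,\mathbf 0}\,\mathbf 1$. Working summand by summand via Proposition~\ref{PROP:ActionOnIsotypicSectors}, where $U(g)\big|_{\mathcal H_\lambda}=D^{(\lambda)}(g)\otimes\mathbf 1_{M_\lambda}$, the restriction reduces to the internal average $P^{\rm inv}_\lambda=\int_G d\mu(g)\,D^{(\lambda)}(g)$ tensored with $\mathbf 1_{M_\lambda}$. By left-invariance of $\mu$ this operator intertwines $D^{(\lambda)}$ with itself, so Schur's lemma forces $P^{\rm inv}_\lambda=c_\lambda\,\mathbf 1_{V_\lambda}$; taking the trace and invoking character orthogonality \eqref{EQ:SchurOrthogonality} in the form $\int_G\chi_\lambda(g)\,d\mu(g)=\delta_{\lambda,\mathbf 0}$ gives $c_\lambda d_\lambda=\delta_{\lambda,\mathbf 0}$, whence $P^{\rm inv}_\lambda=\mathbf 1$ for $\lambda=\mathbf 0$ and $P^{\rm inv}_\lambda=0$ otherwise. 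Thus the only block surviving the projector is $V_{\mathbf 0}\otimes M_{\mathbf 0}$, which completes the identification $\mathcal H_{\rm phys}=V_{\mathbf 0}\otimes M_{\mathbf 0}$.

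I expect no genuine obstacle, since all the machinery (Peter-Weyl projectors, Schur's lemma, character orthogonality, and the block action on isotypic sectors) is already in place from Stage~4. The one point needing care is the normalization bookkeeping in the trace computation and the legitimacy of the block-by-block reduction; the latter is justified because the isotypic decomposition of Definition~\ref{DEF:IsotypicDecomposition} is an orthogonal, $G$-invariant direct sum, so that the invariance condition $U(g)\ket\psi=\ket\psi$ may be tested independently on each summand.
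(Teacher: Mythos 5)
Your proof is correct, but it follows a genuinely different route from the paper's. The paper argues directly on vectors: it decomposes an invariant vector $\ket\psi=\sum_\lambda\ket{\psi_\lambda}$ into isotypic components, notes that invariance must hold component-by-component (each block is $G$-stable and the sum is direct), and then uses Schur's lemma in the form ``a nontrivial irrep contains no nonzero invariant vector'' to force $\ket{\psi_\lambda}=0$ for all $\lambda\neq\mathbf 0$; the reverse inclusion is immediate because $U(g)$ acts as the identity on $V_{\mathbf 0}\otimes M_{\mathbf 0}$. You instead route the whole argument through the group-averaging operator: you identify $\Pi_{\rm phys}$ with the Peter--Weyl projector $P_{\mathbf 0}$, show $\operatorname{Im}\Pi_{\rm phys}=\mathcal H_{\rm phys}$ using Haar invariance and normalization, and then kill the nontrivial blocks by a Schur-plus-character-orthogonality trace computation. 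Both are sound. The paper's version is shorter and needs no integration or character theory, only the block structure and the absence of invariant vectors in nontrivial irreps. Yours is heavier but buys more: it constructs the projector explicitly, ties the definition \eqref{EQ:GaugeInvariantPackagedSubspace} of $\mathcal H_{\rm phys}$ to the Stage-6 gauge projector \eqref{EQ:CompactGaugeProjector}, and proves the block-annihilation property $\Pi_{\rm phys}\big|_{\mathcal H_\lambda}=\delta_{\lambda,\mathbf 0}\,\mathbf 1$, facts the paper establishes separately (Property~\ref{PROP:GaugeProjector} and the Haar-orthogonality step in the proof that packaging survives the local gauge-invariance constraint). One technical nicety in your Schur step: to make $P^{\rm inv}_\lambda=\int_G d\mu(g)\,D^{(\lambda)}(g)$ an intertwiner you need right-invariance of $\mu$ as well as left-invariance (left-invariance gives $D^{(\lambda)}(h)P^{\rm inv}_\lambda=P^{\rm inv}_\lambda$, right-invariance gives $P^{\rm inv}_\lambda D^{(\lambda)}(h)=P^{\rm inv}_\lambda$); this is harmless since compact groups are unimodular, and in fact the left identity alone already suffices, since it says every vector in the range of $P^{\rm inv}_\lambda$ is invariant, which for $\lambda\neq\mathbf 0$ forces $P^{\rm inv}_\lambda=0$ with no trace computation at all.
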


	\begin{proof}
		\leavevmode
		\begin{enumerate}
			\item Gauge invariance $\Rightarrow$ trivial irrep.
			Decompose any vector $\ket\psi\in\mathcal H_{\rm iso}$ as
			$\ket\psi=\sum_{\lambda}\ket{\psi_\lambda}$ with
			$\ket{\psi_\lambda}\in V_\lambda\otimes M_\lambda$.
			If $U(g)\ket\psi=\ket\psi$ for all $g$, then each component is separately invariant: $U(g)\ket{\psi_\lambda}=\ket{\psi_\lambda}$.
			According to Schur’s lemma, for $\lambda\neq\mathbf0$ the irrep $V_\lambda$ contains no non-zero invariant vector, so $\ket{\psi_\lambda}=0$.
			Hence $\ket\psi\in V_{\mathbf0}\otimes M_{\mathbf0}$.
			
			\item Trivial irrep $\Rightarrow$ gauge invariance.
			If $\ket\psi\in V_{\mathbf0}\otimes M_{\mathbf0}$, then $U(g)$ acts as the identity on $V_{\mathbf0}$.
			Therefore, $U(g)\ket\psi=\ket\psi$ for every $g$.
		\end{enumerate}
	\end{proof}

	\subsubsection{Gauge-invariant Packaging}
	\label{SEC:GaugeInvariantPackaging}

	Local gauge-invariance constraint demands that all physical packaged state to be gauge-invariant.
	Thus, in the isotypic decomposition
	$$
	\mathcal H_{\rm iso}\;=\;\bigoplus_{\lambda\in\widehat G}
	\bigl(V_\lambda\otimes M_\lambda\bigr),
	$$
	only the trivial sector $\lambda = 0$ remains and all other non-trivial sectors $\lambda \ne 0$ are filtered out.	
	In other words, the isotypic space $\mathcal H_{\rm iso}$ is projected into the physical subspace $\mathcal H_{\rm phys}$
	$$
	\mathcal H_{\rm phys}\;=\;
	V_{\mathbf 0}\otimes M_{\mathbf 0}
	$$
	All states in this subspace are irreducible and gauge-invariant packaged states.
	In this sense, gauge-invariant packaging is singled out in this stage.
	This is a contraction of the packaging from covariant multi-particle level to gauge-invariant multi-particle level.
	We formally define:

	\begin{definition}[gauge-invariant packaging]\label{AX:GaugeInvPkg}
		Under the action of a local gauge group $G$, the trivial subspace $\mathcal H_0$ carries exactly one gauge-invariant irrep $V_0$ of $G$ locked by local gauge-invariance and no physical process can split $V_0$ into smaller pieces.
		Then we refer this gauge-invariant irreducibility as \textbf{gauge-invariant packaging}.
	\end{definition}

	Definition \ref{AX:SinglePkg} forces each elementary excitation to carry a sharp irreducible charge.	
	Definition \ref{AX:MultiPkg} insists that any multi-particle state lies entirely in one fixed-charge (isotypic) block.	
	Definition \ref{AX:GaugeInvPkg} ensure that only the gauge-invariant trivial block survives and all other non-trivial blocks are killed.		
	These three requirements together ensure that the physical subspace are correctly formed.

	\subsubsection{Connection to Local Gauss-Law Constraint}
	\label{SEC:LocalGaugeInvarianceLocalGaussLaw}

	\paragraph{(1) Gauss-law constraint.}
	Once you build a theory that respects local gauge invariance, the Gauss's law constraint emerges as a direct consequence.	
	When you analyze the dynamics of the theory using the Hamiltonian formalism (which is necessary for quantization), you discover that the gauge symmetry makes the system redundant.
	Not all variables are independent.
	This redundancy manifests as a set of constraints.	
	For electromagnetism, the primary constraint leads directly to a secondary constraint, which is precisely Gauss's law:
	$$\nabla \cdot \mathbf{E} - \rho = 0$$	
	This equation is no longer an equation that describes how the system evolves in time.
	Instead, it is a constraint that must be satisfied at all times. It acts as a filter on the space of all possible states, selecting only those that are physically permissible.
	In the quantum theory, this means that any physical state $|\text{phys}\rangle$ must be annihilated by the Gauss's law operator:
	$$
	(\nabla \cdot \mathbf{E} - \rho) |\text{phys}\rangle = 0
	$$
	This condition is precisely the mathematical statement that a physical state must itself be gauge-invariant.
	The operator for Gauss's law is, in fact, the generator of static gauge transformations.

	For a compact gauge group $G$, introduce the general local quantum Gauss generators are
	\begin{equation}\label{EQ:GaussGenerators}
		\hat G^a(\mathbf x) = D_i\hat E^{a i}(\mathbf x)-\hat\rho^{\,a}(\mathbf x),
		\quad
		a=1, \dots,\dim G,
	\end{equation}
	where $x$ is spatial point and $a=1,\dots,\dim G$ is Lie-algebra indices.	
	Then the Gauss-law constraint can be written as
	\begin{equation}\label{EQ:GaussLawPhysicalState}
		\hat G^a(\mathbf x)\,\ket{\Psi}=0
		\qquad
		\forall\; a,\mathbf x.
	\end{equation}

	\paragraph{(2) Gauge-invariant $\Longleftrightarrow$ Gauss-law constraint.}
		
	For a compact group $G$ with generators $G^{a}$, we write a local gauge transformation as
	\begin{equation}\label{EQ:GeneratorsLocalGaugeTransformation}
		U(g) = \exp \Bigl( i \int d^{3} x \; \theta_g^{a}(\mathbf x) \; G^{a}(\mathbf x) \Bigr),
	\end{equation}
	where $\theta_g^{a}(\mathbf x)$ is an arbitrary test function.
	Now:
	\begin{itemize}
		\item If $G^{a}(\mathbf x)\ket{\psi}=0$ for every color index $a$ and point $x$, then		
		$$
		U(g)\ket{\psi}
		= \exp(0)\ket{\psi}
		= \ket{\psi},
		\quad \forall ~ \theta_g^{a}(\mathbf x).
		$$		
		Thus the generator constraint implies full invariance under all finite gauge transformations.
		
		\item Conversely, if a state is invariant under every finite transformation,
		$$
		U(\alpha)\ket{\psi}
		=\ket{\psi},
		\quad
		\forall ~ \theta_g^{a}(\mathbf x),
		$$
		then take the functional derivative of \eqref{EQ:GeneratorsLocalGaugeTransformation} and we immediately obtain $G^{a}(\mathbf x)\ket{\psi}=0$.
	\end{itemize}

	Hence the two conditions are mathematically identical.

	\subsection{Gauge Projector $\Pi_{\mathrm{phys}}$}

	\subsubsection{From Peter-Weyl Projectors $P_\lambda$ to Gauge Projector $\Pi_{\mathrm{phys}}$}

	Since the gauge-invariant (physical) packaged subspace is equal to the trivial sector $\mathcal H_0$ in the isotypic decomposition, we can use the trivial component $P_0$ of Peter-Weyl projectors to pick out $\mathcal H_0$ \cite{Folland2016,Hall2015}.

	\paragraph{(1) Compact Lie group $G$.}
	
	When $G$ is compact, the trivial irrep ($\lambda=0$) has dimension $d_0=1$ and character $\chi_0(g)=1$, so Eq.~\eqref{EQ:PeterWeylProjectorCompact} reduces to
	\begin{equation}\label{EQ:CompactGaugeProjector}
		\Pi_{\mathrm{phys}} \equiv P_{0} = \int_G d \mu(g) U(g),		
		\qquad
		\int_G d \mu(g) = 1.
	\end{equation}
	This is the normalized group‐averaging operator used in the
	Dirac/BRST quantisation of gauge theories.	
	The gauge constraint therefore forces the state vector into $\mathcal H_{\text{phys}} = P_0 \mathcal H_{\rm iso}$.

	Intuitively, by varying $g$, $U(g)$ translates a state around its entire gauge orbit.	
	Integrating over gauge group $G$, we obtain Eq. \eqref{EQ:CompactGaugeProjector}.
	If the orbit is nontrivial, positive-negative contributions cancel and the state is annihilated.
	If the orbit is a fixed point (state already invariant), the integral returns that state.
	So $\Pi_{\text{phys}}$ is indeed an idempotent filter that retains only gauge-invariant vectors (non-singlet components).

	\begin{example}
		Compact Lie group gauge projectors:
		
		\begin{enumerate}
			\item Abelian group U(1).
			
			For Abelian gauge group U(1), irreps are one-dimensional, $D^{(n)}(e^{i\theta})=e^{in\theta}$,
			$\chi_n(e^{i\theta})=e^{in\theta}$.
			Its Peter-Weyl projector is		
			$$
			P_n=\int_{0}^{2\pi}\!\frac{\mathrm d\theta}{2\pi}\;
			e^{-in\theta}\,U_\theta.
			$$
			The gauge projector is then
			$$
			P_0=\int_{0}^{2\pi}\!\frac{\mathrm d\theta}{2\pi}\;U_\theta.
			$$
			Let $\ket{q}$ be states that carry charge $q\in\mathbb Z$.
			Then
			$
			P_0\ket{q}=0$ unless $q=0$.
			Only the neutral sector survives.
			Since
			$
			U(e^{i\theta})\,\ket{q} \;=\; e^{\,i q\theta}\,\ket{q},
			$
			we have
			$$
			\Pi_{\mathrm{phys}}
			= P_0
			= \int_0^{2\pi}\frac{d\theta}{2\pi}\;U(e^{i\theta})
			\;\Longrightarrow\;
			\Pi_{\mathrm{phys}}\ket{q}=0\quad(\forall\,q\neq0).
			$$
			Only the neutral sector survives.
			
			\item Gauge group SU(2).
			
			For two iso-spin-$\tfrac12$ particles, we have
			$
			U(g) = U^{(1/2)}_g \otimes U^{(1/2)}_g,
			$
			where $U^{(1/2)}_g = \exp(-i\tfrac{\theta}{2}\,\hat n\!\cdot\!\boldsymbol\sigma)$.
			The projector
			$$
			\Pi_{S=0}
			= \int_{\mathrm{SU}(2)}dg\;U(g)
			= \frac{1}{4}\bigl(\mathbf 1\otimes\mathbf 1\;-\;\boldsymbol\sigma_1\!\cdot\!\boldsymbol\sigma_2\bigr)
			$$
			annihilates the triplet and picks out the unique spin-0 singlet.
			
			\item Color SU(3).
			
			With $n$ quarks in the $\mathbf3$ fundamental, 
			$$
			U(g) \;=\;\bigotimes_{i=1}^n U^{(\mathbf3)}_g,
			$$
			where $U^{(\mathbf3)}_g$ is the defining $3\times3$ matrix of $g\in\mathrm{SU}(3)$.
			Then
			$$
			\Pi_{\mathrm{phys}}
			=\int_{\mathrm{SU}(3)}dg\;\bigl(\otimes_i\,U^{(\mathbf3)}_g\bigr)
			$$
			annihilates all colored multiplets ($\mathbf3,\bar{\mathbf3},\mathbf8,\dots$) and projects onto the color singlet sector.			
		\end{enumerate}		
	\end{example}

	\paragraph{(2) Discrete groups.}
	
	For a finite group $G$, Eq.~(\ref{EQ:PeterWeylProjectorFinite}) reduce to:
	\begin{equation}\label{EQ:FiniteGaugeProjector}
		\Pi_{\mathrm{phys}}
		=\frac{1}{|G|}\sum_{g\in G}U(g).
	\end{equation}
	Below are three discrete examples, with the action of $U(g)$ made explicit:
	
	\begin{example}
		Discrete gauge projector:
		
		\begin{enumerate}
			\item Cyclic group $\mathbb Z_N$.
			
			Let $G=\{e,g,\dots,g^{N-1}\}$.
			On charge-$k$ states $\ket{k}$, we have
			$
			U_{g^m}\,\ket{k}=e^{2\pi i \,(k m)/N}\,\ket{k}.
			$
			Thus,
			$$
			\Pi_{\mathrm{phys}}
			=\frac1N\sum_{m=0}^{N-1}U_{g^m}
			$$
			kills any nonzero $\mathbb Z_N$ charge.
			
			\item Parity $\mathbb Z_2=\{1,P\}$.
			
			On a wavefunction $\psi(x)$, we have
			$
			U_P\,\psi(x)=\psi(-x).
			$
			Thus,
			$$
			\Pi_{\mathrm{phys}}
			=\tfrac12\bigl(U_1+U_P\bigr)
			$$
			projects onto even functions.
			
			\item Permutation symmetry $S_3$.
			
			For three identical particles, we have
			$
			U_\sigma\,\ket{x_1,x_2,x_3}
			=\ket{x_{\sigma^{-1}(1)},x_{\sigma^{-1}(2)},x_{\sigma^{-1}(3)}}.
			$
			Then
			$$
			\Pi_{\mathrm{phys}}
			=\frac{1}{6}\sum_{\sigma\in S_3}U_\sigma
			$$
			projects onto the fully symmetric subspace (and inserting $\operatorname{sgn}(\sigma)$ instead picks out the alternating sector).
		\end{enumerate}
	\end{example}

	\subsubsection{Why does $\Pi_{\mathrm{phys}}$ Filter out the Nonphysical Part?}

	To understand the gauge projection better, we now provide intuitive interpretations to why group-averaging can filter out the nonphysical part with two complementary pictures.

	\paragraph{(1) Representation-theoretic (Peter-Weyl) viewpoint.}
	
	Write the kinematical Hilbert space as a direct sum of irreps at each site:
	$$
	\mathcal H_{\rm iso}
	\;=\;
	\bigoplus_{R}
	\Bigl(
	V_R \otimes \mathcal M_R
	\Bigr),
	$$
	where $V_R$ carries the irrep $R$ of $G$ and $\mathcal M_R$ is a multiplicity space that the gauge group does not act on.
	For a compact group (continuous or finite) the orthogonality relations of matrix elements imply
	$$
	\int_G\!{\rm d}g\;D^{(R)}_{ij}(g)
	\;=\;
	\begin{cases}
		\delta_{ij} & R = \mathbf 1 \text{ (trivial)},\\
		0 & R \neq \mathbf 1 .
	\end{cases}
	$$
	
	Applying $\Pi_{\mathrm{phys}}$ simply annihilates every $V_R$ with $R \neq \mathbf 1$ and keeps the trivial irrep unchanged.
	Therefore only gauge singlets survive.
	It is the exact analogue of keeping the $k=0$ Fourier mode when you average a function over the circle.

	\paragraph{(2) Intuitive picture on the ``destructive interference''.}
	
	Take an arbitrary state $\ket\psi$ and all its gauge copies $U(g)(x)\ket\psi$.
	When you add them with equal weight, the components that differ by a gauge phase $\exp(i\varphi)$ cancel pairwise and leave only the components for which every local rotation acts trivially, that is, the physical part.
	Write it in formulae, we have
	$$
	\Pi_{\mathrm{phys}}\ket\psi
	=
	\ket{\psi_{\rm inv}}
	+
	\underbrace{\bigl(\text{sum of phases that sums to }0\bigr)}_{=0}.
	$$

	\subsubsection{Properties of Gauge Projector $\Pi_{\rm phys}$}

	\begin{property}\label{PROP:GaugeProjector}
		The gauge projector $\Pi_{\mathrm{phys}}$ has the following properties:
		\begin{enumerate}
			\item Idempotency: $\Pi_{\mathrm{phys}}^2 = \Pi_{\mathrm{phys}}$
			
			\item Hermiticity: $\Pi_{\rm phys}^{\dagger} = \Pi_{\rm phys}$
			
			\item Commutant property:
			$[\Pi_{\rm phys}, U(g)] = 0, \quad \forall ~ g \in G$.
		\end{enumerate}
		Consequently, it projects any state $|\Psi\rangle$ onto the subspace $\Pi_{\mathrm{phys}}\,\mathcal{H}$ that transforms according to $R$,
		$$
		\mathcal H_{\rm phys}:=\Pi_{\rm phys}\mathcal H_{\rm iso}
		=\{\ket{\Psi}\in\mathcal H_{\rm iso}\;|\;U(g)\ket{\Psi}=\ket{\Psi}\;\forall g\}.
		$$
	\end{property}

	\begin{proof}
		\leavevmode
		\begin{enumerate}
			\item Insert a second copy of Eq.~\eqref{EQ:CompactGaugeProjector} into itself and use the invariance of the Haar measure:	
			$$
			\Pi_{\rm phys}^{2}
			= \int d\mu(g_1) \int d\mu(g_2)\,U_{g_1}U_{g_2}
			= \int d\mu(g_1) \int d\mu(g_2)\,U_{g_1g_2}
			= \int d\mu(g)\,U(g)
			= \Pi_{\rm phys}.
			$$
			
			\item Hermiticity is obvious.
			
			\item By the orthogonality of characters,
			$U_h\Pi_{\rm phys}U_h^{-1}=\int d\mu(g)\,U_{hg}= \Pi_{\rm phys}$.
		\end{enumerate}
	\end{proof}

	\subsubsection{Connection between Gauge Projection and Gauss Law}

	In Sec. \ref{SEC:LocalGaugeInvarianceLocalGaussLaw}, we discussed the connection between local gauge invariance and local Gauss law.
	Now we extend our discuss to global case: gauge projection and Gauss law.
	The gauge projector (Eq.~\eqref{EQ:CompactGaugeProjector}) or more generally Peter-Weyl projector (Eq.~\eqref{EQ:PeterWeylProjectorCompact}) is represented in integral.
	Choose an orthonormal basis $\{\ket{\phi_{\lambda,m}}\}$ that diagonalises all $P_\lambda$:
	$P_\lambda\ket{\phi_{\lambda',m'}}=\delta_{\lambda\lambda'}\ket{\phi_{\lambda,m'}}$.

	On the other hand, the Gauss law is represented in differential form.
	Since $\hat G^{a}(\mathbf x)=\bigl.\frac{d}{dt}\bigr|_{t=0}U_{\exp(it T^a\delta_{\mathbf x})}$,
	$\hat G^{a}(\mathbf x)$ maps every $P_\lambda$-sector into itself.
	Inside the trivial sector ($\lambda=0$) its action is identically zero, while for $\lambda\neq0$ it is non-zero.

	Finally, we obtain
	$$
	\hat G^{a}(\mathbf x)\ket{\Psi}=0
	\iff
	\Pi_{\rm phys}\ket{\Psi}=\ket{\Psi}.
	$$

	The integral projector thus packages the whole infinite set of local constraints Eq.~\eqref{EQ:GaussLawPhysicalState} into a single algebraic condition.

	\subsection{Physical Packaged Subspace $\mathcal H_{\rm phys}$}
	\label{SEC:PhysicalPackagedSpace}

	\subsubsection{Definition of $\mathcal H_{\rm phys}$}
	
	We already construct physical packaged subspace by Eq. \eqref{EQ:GaugeInvariantPackagedSubspace}	
	Let us now give a formal definition to physical packaged subspace $\mathcal H_{\rm phys}$ and then study its basic properties.

	\begin{definition}[Physical packaged subspace]
		\label{DEF:PhysicalPackagedSpace}
		Let $\mathcal H_{\rm iso}$ be the isotypic space and $\Pi_{\rm phys}$ be the gauge projector.
		Define a packaged subspace		
		$$
		\mathcal H_{\rm phys} = \Pi_{\rm phys}\bigl(\mathcal H_{\rm iso}\bigr) \,=\, \{\,|\Phi\rangle\in\mathcal H_{\rm iso}:\; U(g)\,|\Phi\rangle=|\Phi\rangle,\ \forall g\in G\},
		$$		
		where
		$$
		\Pi_{\rm phys} = \int_{G}\!dg\,U(g)
		$$
		is the gauge-averaging (projection) operator onto the singlet sector.
		Then we say that $\mathcal H_{\rm phys}$ is the \textbf{physical packaged subspace}.
	\end{definition}

	\subsubsection{Properties of $\mathcal H_{\rm phys}$}

	We now establish the main properties of multi-particle physical packaged states:

	\begin{proposition}[Properties of the physical packaged subspace]
		\label{PROP:PropertiesPropertiesPhysicalPackagedSpace}
		The physical packaged subspace	
		$$
		\mathcal H_{\rm phys}
		\;=\;
		\mathrm{Im}\,\Pi_{\rm phys}
		\;=\;
		\Pi_{\rm phys}\bigl(\mathcal H_{\rm iso}\bigr).
		$$
		satisfies:
		\begin{enumerate}			
			\item \textbf{Invariance under gauge‐invariant and external operators.}
			
			If $O$ satisfies $[O,U(g)]=0$ for all $g\in G$ (e.g. any gauge‐invariant local observable) or if $O=O_{\rm ext}\otimes I_{\rm int}$ acts only on external DOFs, then	
			$$
			O\,\Bigl(\mathcal H_{\rm phys}\Bigr)
			\;\subseteq\;
			\mathcal H_{\rm phys}.
			$$
			
			\item \textbf{Direct‐sum decomposition and superselection.}
			
			Writing
			$\mathcal H_{\rm iso}=\bigoplus_{Q}\mathcal H_{Q}$
			for the total‐charge sectors in the isotypic layer, one has
			\begin{equation}\label{EQ:PhysicalPackagedSpaceDecomposition}
				\mathcal H_{\rm phys}
				\;=\;
				\bigoplus_{Q\;\text{singlet in }G}
				\mathcal H_{Q}.
			\end{equation}		
			Different $Q$-blocks are mutually orthogonal and cannot be coherently superposed by any operator preserving gauge invariance.
			
			\item \textbf{Inherited inner product.}
			
			For $\Phi_1,\Phi_2\in\mathcal H_{\rm phys}$,	
			$$
			\langle\Phi_1|\Phi_2\rangle_{\rm phys}
			\;=\;
			\langle\Phi_1|\Phi_2\rangle_{\mathcal H_{\rm iso}},
			$$	
			since $\Pi_{\rm phys}$ is Hermitian and idempotent.
		\end{enumerate}
	\end{proposition}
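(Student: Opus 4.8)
The plan is to derive all three items directly from the three algebraic properties of $\Pi_{\rm phys}$ established in Property~\ref{PROP:GaugeProjector} (idempotency, Hermiticity, and $[\Pi_{\rm phys},U(g)]=0$), together with the isotypic structure already available. The cleanest route treats the characterisation $\mathcal H_{\rm phys}=\{\,|\Phi\rangle:\;U(g)|\Phi\rangle=|\Phi\rangle,\ \forall g\,\}$ as the primary object and uses the integral form $\Pi_{\rm phys}=\int_G d\mu(g)\,U(g)$ only where convenient.

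For item (1) I would take $|\Phi\rangle\in\mathcal H_{\rm phys}$ and an operator $O$ with $[O,U(g)]=0$, and simply compute $U(g)\,(O|\Phi\rangle)=O\,U(g)|\Phi\rangle=O|\Phi\rangle$, so $O|\Phi\rangle$ is again gauge-invariant and hence lies in $\mathcal H_{\rm phys}$. The external-operator case $O=O_{\rm ext}\otimes I_{\rm int}$ reduces to this one: since $G$ acts trivially on the external (multiplicity) factor, $U(g)=\bigoplus_\lambda D^{(\lambda)}(g)\otimes\mathbf 1_{M_\lambda}$ commutes with any operator supported on the external factor, so the commutator hypothesis is automatically satisfied. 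For item (3), invariance of the inner product is immediate: for $\Phi_1,\Phi_2\in\mathcal H_{\rm phys}$ one has $\Phi_i=\Pi_{\rm phys}\Phi_i$, whence $\langle\Phi_1|\Phi_2\rangle=\langle\Pi_{\rm phys}\Phi_1|\Pi_{\rm phys}\Phi_2\rangle=\langle\Phi_1|\Pi_{\rm phys}^\dagger\Pi_{\rm phys}|\Phi_2\rangle=\langle\Phi_1|\Pi_{\rm phys}|\Phi_2\rangle=\langle\Phi_1|\Phi_2\rangle$, using Hermiticity and idempotency. Thus $\mathcal H_{\rm phys}$ is an orthogonally embedded closed subspace carrying the restricted inner product with no rescaling.

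The substantive work, and the step I expect to be the main obstacle, is item (2). Here I would first note that $\Pi_{\rm phys}$ commutes with the total-charge operator $\hat Q$ (because $\hat Q$ is gauge-invariant, $[\hat Q,U(g)]=0$), so $\Pi_{\rm phys}$ preserves each sector $\mathcal H_Q$ and acts block-diagonally on $\bigoplus_Q\mathcal H_Q$. Restricting to a fixed $\mathcal H_Q$ and invoking the earlier identification $\mathcal H_{\rm phys}=V_{\mathbf 0}\otimes M_{\mathbf 0}$, the projector retains exactly the trivial-irrep content: a block survives intact iff it already transforms in the trivial irrep (is a $G$-singlet) and is annihilated otherwise, the latter following from the vanishing of the Haar average $\int_G D^{(\lambda)}(g)\,d\mu(g)=0$ for $\lambda\neq\mathbf 0$. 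The delicate point is the bookkeeping between the charge label $Q$ and the irrep label $\lambda$: for Abelian $G$ the charge is the irrep label and ``singlet'' means $Q=0$, whereas for non-Abelian $G$ one must argue that the $\hat Q$-eigenspaces refine into isotypic components and that ``singlet in $G$'' selects precisely the trivial $\lambda$. Orthogonality of the surviving blocks then follows from orthogonality of distinct $\hat Q$-eigenspaces of the Hermitian operator $\hat Q$, and the prohibition on coherent superposition across different $Q$ is exactly the superselection statement of Proposition~\ref{PROP:Superselection} (equivalently Proposition~\ref{PROP:PackagingImpliesSS}) restricted to $\mathcal H_{\rm phys}$.
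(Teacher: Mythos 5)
Your proposal is correct and follows essentially the same route as the paper's own proof: item (1) via commutation of $O$ with the gauge action (you use the fixed-point characterisation $U(g)\ket\Phi=\ket\Phi$ where the paper commutes $O$ past $\Pi_{\rm phys}$, which are equivalent by the paper's Definition~\ref{DEF:PhysicalPackagedSpace}), item (2) via block-diagonality on the charge sectors plus the vanishing Haar average $\int_G D^{(\lambda)}(g)\,d\mu(g)=0$ for $\lambda\neq\mathbf 0$, and item (3) via Hermiticity and idempotency of $\Pi_{\rm phys}$. Your explicit flagging of the $Q$-versus-$\lambda$ bookkeeping in the non-Abelian case is a point the paper glosses over, but it does not change the substance of the argument.
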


	\begin{proof}
		\leavevmode
		\begin{enumerate}			
			\item Invariance.
			
			If $[O,U(g)]=0$ for all $g$, then $O\,\Pi_{\rm phys}=\Pi_{\rm phys}\,O$.
			Hence		
			$$
			O\bigl(\mathcal H_{\rm phys}\bigr)
			=O\,\Pi_{\rm phys}(\mathcal H_{\rm iso})
			=\Pi_{\rm phys}\,O(\mathcal H_{\rm iso})
			\subseteq\mathcal H_{\rm phys}.
			$$		
			The same argument applies when $O=O_{\rm ext}\otimes I_{\rm int}$, since such $O$ trivially commutes with all $U(g)^{\rm int}$.
			
			\item Direct‐sum \& superselection.
			
			In the CG layer one already has
			$\mathcal H_{\rm iso}=\bigoplus_Q\mathcal H_Q$, with $Q$ the total gauge‐charge label.
			On each $\mathcal H_Q$, $U(g)$ acts in the irreducible rep $V_Q$, so
			$\Pi_{\rm phys}\big|_{\mathcal H_Q}$ is either zero (if $V_Q$ contains no singlet) or the identity on the singlet subspace.
			Thus		
			$$
			\mathcal H_{\rm phys}
			\;=\;
			\bigoplus_{Q:\,V_Q\ni\mathbf1}\!
			\mathcal H_Q.
			$$
			
			Orthogonality of different $Q$-blocks is built in, and no gauge‐invariant operator can mix them.
			
			\item Inner product.
			
			For any $\Phi_i=\Pi_{\rm phys}\Psi_i$		
			$$
			\langle\Phi_1|\Phi_2\rangle
			=\langle \Psi_1|\Pi_{\rm phys}^2|\Psi_2\rangle
			=\langle \Psi_1|\Pi_{\rm phys}|\Psi_2\rangle
			=\langle\Phi_1|\Psi_2\rangle
			=\langle\Phi_1|\Phi_2\rangle_{\rm phys}.
			$$		
			This shows the physical inner product coincides with the ambient one.
		\end{enumerate}
	\end{proof}

	\subsubsection{Physical Packaged States inside $\mathcal H_{\rm phys}$}

	Each element $\lvert\Psi\rangle \in \mathcal H_{\rm phys}$ is called a \textbf{physical packaged state}.
	They have the following properties:

	\begin{theorem}[Properties of physical packaged states]
		\label{THM:PackagingOfPhysicalStates}
		Let 
		$$
		\mathcal H_{\rm phys}
		\;=\;
		\Pi_{\rm phys}\,\mathcal H_{\rm iso}
		\;=\;
		\bigoplus_{Q\in\sigma(Q)}\mathcal H_Q
		$$
		be the physical packaged subspace that decomposes into its mutually orthogonal charge sectors $\mathcal H_Q$.
		Let
		$$
		\lvert\Theta_n\rangle
		\;=\;
		\hat a_{n,1}^\dagger\,\hat a_{n,2}^\dagger\cdots\lvert0\rangle
		\;\in\;
		\mathcal H_Q
		$$
		be a basis of an individual charge sector $\mathcal H_Q$, and consider any superposition
		$$
		\lvert\Psi\rangle
		\;=\;
		\sum_n \alpha_n\,\lvert\Theta_n\rangle
		\;\in\;\mathcal H_Q
		\;\subset\;
		\mathcal H_{\rm phys}.
		$$
		Then:
		\begin{enumerate}
			\item[\bf(1)] {\bf Gauge-invariance.}
			Every vector in $\mathcal H_{\rm phys}$ is by construction a singlet:
			$$
			U(g)\,\lvert\Psi\rangle
			\;=\;
			\Pi_{\rm phys}\,U(g)\,\lvert\Psi\rangle
			\;=\;
			\Pi_{\rm phys}\,\lvert\Psi\rangle
			\;=\;
			\lvert\Psi\rangle,
			\quad\forall g\in G.
			$$
			
			\item[\bf(2)] {\bf Superselection.}
			Since $\mathcal H_{\rm phys}=\bigoplus_Q\mathcal H_Q$ and every gauge-invariant observable acts block-diagonally on this decomposition (Proposition~\ref{PROP:PackagingImpliesSS}), no coherent superposition between distinct $\mathcal H_Q$ can be realized.
			Thus $\lvert\Psi\rangle$ must lie entirely in one charge sector.
			
			\item[\bf(3)] {\bf Packaged entanglement of IQNs.}
			Within a single sector $\mathcal H_Q$, any non-factorizable superposition
			$\lvert\Psi\rangle$ that cannot be written as
			$\bigotimes_{i} \lvert\psi_i\rangle$
			is necessarily entangled in all of its IQNs.
			In other words, one cannot entangle only a subset of the IQNs without dragging the rest along.
		\end{enumerate}
	\end{theorem}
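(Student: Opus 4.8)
The plan is to establish the three claims sequentially, each leaning on the projector algebra and the irreducibility results already in hand. For claim (1), I would start from the defining relation $\lvert\Psi\rangle=\Pi_{\rm phys}\lvert\Phi\rangle$ and exploit left-invariance of the Haar measure. First I would verify $U(g)\,\Pi_{\rm phys}=\Pi_{\rm phys}$ by the one-line substitution $U(g)\int_G d\mu(h)\,U(h)=\int_G d\mu(h)\,U(gh)=\int_G d\mu(h')\,U(h')=\Pi_{\rm phys}$; combined with idempotency from Property~\ref{PROP:GaugeProjector}, this yields $U(g)\lvert\Psi\rangle=\lvert\Psi\rangle$ for every $g$. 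This reproduces the displayed chain in the statement and needs no further machinery.

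For claim (2), I would invoke the direct-sum decomposition $\mathcal H_{\rm phys}=\bigoplus_Q\mathcal H_Q$ from Proposition~\ref{PROP:PropertiesPropertiesPhysicalPackagedSpace} together with the block-diagonality of the gauge-invariant observable algebra from Proposition~\ref{PROP:PackagingImpliesSS}. The key step is to note that every $\mathcal O$ commuting with all $U(g)$ satisfies $\langle\psi_Q\rvert\mathcal O\lvert\phi_{Q'}\rangle=0$ for $Q\neq Q'$, so the expectation values of any coherent cross-sector superposition coincide identically with those of the corresponding incoherent mixture. Relative phases between distinct $\mathcal H_Q$ are therefore physically unobservable, which is precisely the superselection statement; a genuinely preparable physical state must accordingly be supported in a single charge sector.

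Claim (3) is the substantive one, and the plan is to argue by contradiction using the indivisibility of the internal irrep. Within a fixed sector $\mathcal H_Q$ the internal content forms a single irreducible carrier $V_\lambda$, whose irreducibility survives isotypic decomposition and superposition (Theorems~\ref{THM:PackagingSurvivesHybridPWProjection} and \ref{THM:PackagingSurvivesSuperposition}). I would suppose that a non-factorizable $\lvert\Psi\rangle$ entangled only a proper subset $S$ of its IQNs, so that the internal wavefunction factored into an $S$-block tensored with a separable complement. Such a factorization would supply a nontrivial $G$-equivariant idempotent on the internal labels---an element of the commutant of $D^{(\lambda)}$ other than a scalar---which Schur's lemma forbids, exactly as in Theorem~\ref{THM:NoPartialFactorization}. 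Hence no proper subset of IQNs can be peeled off, and any entangled physical state must entangle the full package jointly.

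The hard part will be making precise what ``entangled in all of its IQNs'' means, because the IQNs within a packaged block are not independent tensor factors but coordinates of one irreducible space, so the naive notion of ``partial factorization'' is not even well-posed. I would resolve this by recasting any attempted IQN-subset factorization as a would-be $G$-invariant projector on $V_\lambda$ and then applying Schur's lemma to rule it out; the conceptual subtlety, rather than any computation, is the genuine obstacle here.
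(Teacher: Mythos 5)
Your proposal is correct and follows essentially the same route as the paper's own proof: part (1) via Haar-invariance and idempotency of $\Pi_{\rm phys}$, part (2) via the block-diagonality of gauge-invariant observables (Propositions~\ref{PROP:PropertiesPropertiesPhysicalPackagedSpace} and \ref{PROP:PackagingImpliesSS}), and part (3) by reducing any would-be partial-IQN factorization to a violation of Schur's lemma as in Theorem~\ref{THM:NoPartialFactorization}. If anything, your explicit identification of a partial factorization with a nontrivial $G$-equivariant idempotent on $V_\lambda$ makes part (3) slightly sharper than the paper's version, which asserts ``no subset can remain separable'' without isolating that obstruction.
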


	\begin{proof}
		\ 
		\begin{enumerate}
			\item Gauge-invariance.
			
			By definition, $\mathcal H_{\rm phys}=\Pi_{\rm phys}\mathcal H_{\rm iso}$ and $\Pi_{\rm phys}=\int_G dg\,U(g)$ satisfies $U(g)\Pi_{\rm phys}=\Pi_{\rm phys}=\Pi_{\rm phys}U(g)$.
			Thus,
			$$
			U(g)\lvert\Psi\rangle
			=U(g)\,\Pi_{\rm phys}\lvert\Psi\rangle
			=\Pi_{\rm phys}\lvert\Psi\rangle
			=\lvert\Psi\rangle.
			$$
			
			\item Superselection.
			
			According to Proposition \ref{PROP:PropertiesPropertiesPhysicalPackagedSpace}, 
			$\mathcal H_{\rm phys}=\bigoplus_Q\mathcal H_Q$ with each $\mathcal H_Q$ invariant under all gauge-invariant operators.
			For $Q\neq Q'$, we have $\langle\psi_Q|\mathcal O|\psi_{Q'}\rangle=0$.
			Thus, no local gauge-invariant operation can produce a coherent superposition of different $Q$.
			
			\item Packaged entanglement.
			
			According to Theorem~\ref{THM:NoPartialFactorization}, single-particle creation operators carry irreducible IQN blocks.
			Any multi-particle state in $\mathcal H_Q$ is built by tensoring such blocks and then projecting to the singlet.
			If the resulting $\lvert\Psi\rangle$ fails to factorize across those blocks (that is, cannot be written as $\bigotimes_i\lvert\chi_i\rangle$), then each IQN block is necessarily entangled with the rest.
			No subset can remain separable.
			Therefore, the entanglement is packaged.
		\end{enumerate}
	\end{proof}

	\begin{example}[Examples of physical packaged states - gauge singlets]
		\leavevmode
		\begin{itemize}
			\item \emph{Meson ($q\bar q$) singlet in SU(3)$_{\rm color}$:}
			$$
			|\text{meson}\rangle
			= 
			\frac{1}{\sqrt{3}}
			\bigl(|r\bar r\rangle+|g\bar g\rangle+|b\bar b\rangle\bigr)
			\;\in\,\mathcal H_{\rm phys}.
			$$
			One checks $U_{g}\,|\text{meson}\rangle=|\text{meson}\rangle$ for all $g\in \mathrm{SU}(3)$.
			
			\item \emph{Baryon ($qqq$) singlet in SU(3):}
			$$
			|\epsilon\rangle 
			= \frac{1}{\sqrt{6}}\sum_{\pi\in S_3}
			\mathrm{sgn}(\pi)\,|i\,j\,k\rangle_{\pi(1)\pi(2)\pi(3)}
			\quad (i,j,k=r,g,b),
			$$
			the fully antisymmetric color wavefunction.
		\end{itemize}
	\end{example}

	\subsection{Residual Superselection in $\mathcal H_{\rm phys}$}
	\label{sec:residual-superselection}

	Although the gauge-averaging projector
	\[
	\Pi_{\rm phys} \;=\;\int_G\!dg\,U(g)
	\]
	eliminates all local gauge charges by restricting to the trivial $G$-irrep, it does not remove all superselection structure.
	In $\mathcal H_{\rm phys}=\Pi_{\rm phys}\,\mathcal H_{\rm iso}$ one still finds:
	
	\begin{enumerate}[label=\alph*)]
		\item \textbf{Residual global‐charge sectors.}  
		$\Pi_{\rm phys}$ averages over local gauge transformations, but leaves untouched the action of constant (global) $G$.
		Thus
		\[
		\mathcal H_{\rm phys}
		\;=\;
		\bigoplus_{Q\in\widehat G_{\rm global}}\;\mathcal H_{Q},
		\]
		where $Q$ labels irreps of the residual global symmetry.
		No gauge‐invariant operator can connect distinct $\mathcal H_{Q}$.
		
		\item \textbf{Other physical superselection rules.}  
		Quantum numbers such as fermion parity, discrete symmetries or topological fluxes remain superselected.
		For example, electric charge in QED or baryon number in QCD still label orthogonal sectors of $\mathcal H_{\rm phys}$.
		
		\item \textbf{Packaged entanglement superselection.}  
		Even within a fixed globalcharge sector, the irreducible IQN blocks enforce that entanglement respects the symmetry:
		no operation can coherently mix different IQN labels.
	\end{enumerate}

	\subsubsection{Global‐Charge Superselection: QED Example}
	\label{sec:QED-global-charge}
	
	\paragraph{Local Gauss’s law.}  
	In temporal gauge $A_0=0$ the operator form of Gauss’s law is
	\[
	\mathcal G(\mathbf x)
	= \nabla\!\cdot\!\mathbf E(\mathbf x)\;-\;\rho(\mathbf x)
	\;\approx\;0,
	\qquad
	\mathcal G(\mathbf x)\ket{\Psi_{\rm phys}}=0
	\quad\forall\,\mathbf x.
	\]
	This enforces that every local gauge irrep is the singlet.

	\paragraph{Global charge via surface flux.}  
	Integrating $\mathcal G(\mathbf x)$ over all space,
	\[
	Q
	= \int_{\mathbb R^3}\!d^3x\,\rho(\mathbf x)
	= \int_{S_\infty}\!d\mathbf S\!\cdot\!\mathbf E,
	\]
	one obtains the total (global) charge operator.
	Physical states obey $\mathcal G(\mathbf x)\ket\psi=0$ but may carry $Q\neq0$.
	Hence
	\[
	\mathcal H_{\rm phys}
	= \bigoplus_{Q\in\mathbb Z}\;\mathcal H_{Q}.
	\]

	\paragraph{Dressed‐electron illustration.}  
	A bare Dirac creation $\psi^\dagger(x)$ violates Gauss’s law.
	Dressing it by a Wilson‐line to infinity,
	\[
	\Psi^\dagger(x)
	= \exp\!\Bigl[i e\!\int_x^\infty \!dz^i\,A_i(z)\Bigr]\,
	\psi^\dagger(x),
	\]
	one finds
	$\mathcal G(\mathbf x)\ket{e^-}=0$ locally, but
	$Q\ket{e^-} = -e\ket{e^-}$.
	Thus $\ket{e^-}\in\mathcal H_{Q=-e}$.

	\subsubsection{Preservation of Superselection by $\Pi_{\rm phys}$}
	\label{sec:preserve-ss}
	
	\begin{proposition}[Gauge projection commutes with global charges]
		Let 
		$\mathcal H_{\rm iso}=\bigoplus_{Q}\mathcal K_Q$
		be the decomposition into global‐charge sectors, and let
		$\Pi_{\rm phys}=\int_Gdg\,U(g)$.
		Then
		\[
		[\Pi_{\rm phys},\,Q]=0
		\;\Longrightarrow\;
		\Pi_{\rm phys}\,\mathcal K_Q\subseteq\mathcal K_Q
		\quad\forall\,Q,
		\]
		so that
		$\mathcal H_{\rm phys}=\bigoplus_{Q}\bigl(\Pi_{\rm phys}\mathcal K_Q\bigr)$
		remains block‐diagonal and superselection sectors are preserved.
	\end{proposition}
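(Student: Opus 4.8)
The plan is to secure the commutation $[\Pi_{\rm phys},Q]=0$ and then extract the two structural statements by a routine eigenspace argument. For the commutator, the key observation is that the residual global charge $Q$ is a gauge-invariant quantity: in the surface-flux representation $Q=\int_{S_\infty} d\mathbf S\cdot\mathbf E$ introduced in the QED discussion above, $Q$ is built from boundary data at spatial infinity and is therefore left fixed by every local gauge transformation $U(g)$ entering the average, $[Q,U(g)]=0$. I would then pass the charge through the group integral, using that $\Pi_{\rm phys}=\int_G d\mu(g)\,U(g)$ is a (weak-operator) integral of these local transformations, to obtain $[Q,\Pi_{\rm phys}]=\int_G d\mu(g)\,[Q,U(g)]=0$. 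Equivalently, one may simply invoke the commutant property of Property~\ref{PROP:GaugeProjector}, which already records $[\Pi_{\rm phys},U(g)]=0$ for all $g\in G$.

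Given the commutator, the second step is the elementary fact that commuting operators preserve eigenspaces. For any $\ket\psi\in\mathcal K_Q$, the charge operator acts as the scalar labelling the sector; since $Q\,\Pi_{\rm phys}\ket\psi=\Pi_{\rm phys}\,Q\ket\psi$, the vector $\Pi_{\rm phys}\ket\psi$ remains a charge eigenvector with the same eigenvalue, i.e. $\Pi_{\rm phys}\mathcal K_Q\subseteq\mathcal K_Q$. Applying $\Pi_{\rm phys}$ termwise to the orthogonal decomposition $\mathcal H_{\rm iso}=\bigoplus_Q\mathcal K_Q$, and using that distinct eigenspaces of the self-adjoint $Q$ stay mutually orthogonal, then gives $\mathcal H_{\rm phys}=\Pi_{\rm phys}\mathcal H_{\rm iso}=\bigoplus_Q\Pi_{\rm phys}\mathcal K_Q$, the asserted block-diagonal form. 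Superselection follows at once: since every gauge-invariant observable is already block-diagonal in the $Q$-grading (Proposition~\ref{PROP:PackagingImpliesSS}) and $\Pi_{\rm phys}$ respects that grading, no physical operation can connect distinct $\mathcal K_Q$.

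The main obstacle is conceptual rather than computational: one must keep the local gauge transformations that $\Pi_{\rm phys}$ averages over cleanly distinct from the global subgroup that $Q$ generates, and avoid the pitfall that averaging over the \emph{full} group would itself annihilate every nonzero charge (as in the U(1) example of Stage 6). The delicate point is therefore to justify that the averaging group in this section consists precisely of the Gauss-law-generated local transformations that are trivial at infinity, under which $Q$ is invariant, while the constant (large) gauge mode that $Q$ generates is not quotiented out. I would make this split explicit by tying the averaging group to the local Gauss generators $\hat G^a(\mathbf x)$ of Eq.~\eqref{EQ:GaussGenerators} and the surviving charge to their surface integral. Once that identification is fixed, the commutator and hence the entire proposition follow mechanically; everything after the local/global separation is standard spectral bookkeeping.
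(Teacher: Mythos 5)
Your proposal is correct and takes essentially the same route as the paper's (very terse) proof: establish $[Q,U(g)]=0$ for the averaged gauge transformations, pull the commutator through the Haar integral, and conclude that $\Pi_{\rm phys}$ preserves each charge eigenspace. If anything, your version is more careful than the paper's two-line argument—the surface-flux justification and the explicit local-versus-global separation spell out what the paper merely asserts—though your side remark that one may ``equivalently'' invoke Property~\ref{PROP:GaugeProjector} is not genuinely equivalent: that commutant property concerns the averaged $U(g)$ and would yield $[\Pi_{\rm phys},Q]=0$ only if $Q$ generated a one-parameter subgroup inside the averaging group, which is precisely the situation your own final paragraph (correctly) rules out.
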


	\begin{proof}
		Since $Q$ generates the residual global $G$, $[U(g),Q]=0$.  Hence
		\[
		[\Pi_{\rm phys},Q]
		=\int_Gdg\,[U(g),Q]=0,
		\]
		and each eigenspace $\mathcal K_Q$ is invariant under $\Pi_{\rm phys}$.  
	\end{proof}

	\subsubsection{Other Superselection Rules \& Packaged Entanglement}
	
	Beyond global charge, parity, fermion‐number or topological labels define further orthogonal sectors in $\mathcal H_{\rm phys}$.
	Within any such sector, the irreducible‐block structure enforces that entanglement always packages all IQNs together.

	\subsection{Packaging Survives Local Gauge-invariance Constraint}

	We now prove that the packet of IQNs remains indivisible after gauge projection.
	We need to show that the group-averaging projector
	$$
	\Pi_{\mathrm{phys}}
	=\int_{G}\!d\mu(g)\,U(g)
	$$
	filters out non-singlet blocks without ever undoing the
	irreducible‐package structure that was built in Stages 1-5.

	\begin{theorem}[Stability of packaging under local gauge-invariance constraint]
		For each $n\ge0$,
		\begin{equation}\label{EQ:StabilityPackagingUnderLocalGausslaw}
			\Pi_{\mathrm{phys}}
			\bigl(\mathcal H_{\rm iso}^{(n)}\bigr)
			\;=\;
			\begin{cases}
				V_{\mathbf 0}\;\otimes\;\mathcal H_{\text{ext}}^{(n)}, & 
				\text{if the singlet }{\mathbf 0}\text{ occurs in } 
				\bigl\{V_\lambda\bigr\},\\[2ex]
				\{0\}, & \text{otherwise},
			\end{cases}
		\end{equation}
		where $V_{\mathbf 0}\cong\mathbb C$ is the one-dimensional trivial representation.
		Thus, every surviving vector remains a tensor product	(internal singlet) $\otimes$ (external state) and no partial factorization of IQNs is introduced or removed by $\Pi_{\mathrm{phys}}$.		
		Hence packaging survives the local gauge-invariant projection.
	\end{theorem}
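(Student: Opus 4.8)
The plan is to work one particle-number sector at a time and to exploit the Clebsch-Gordan (irreducible) form of the isotypic layer established in Stage 4, namely $\mathcal H_{\rm iso}^{(n)} \cong \bigoplus_{\lambda}\bigl(V_\lambda \otimes \mathcal H_{\rm ext}^{(n)}\bigr)$, on which $G$ acts blockwise as $U(g)|_{\mathcal H_\lambda^{(n)}} = D^{(\lambda)}(g)\otimes \mathbf 1_{\rm ext}$. The first step is to observe, via Property \ref{PROP:GaugeProjector} (item 3), that $[\Pi_{\mathrm{phys}}, U(g)]=0$ for all $g$; hence $\Pi_{\mathrm{phys}}$ preserves every isotypic block and acts block-diagonally. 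Because a block-diagonal operator sends a direct sum to the direct sum of the images of its summands, this reduces the entire claim to computing $\Pi_{\mathrm{phys}}$ on a single summand $V_\lambda\otimes\mathcal H_{\rm ext}^{(n)}$.

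The second, and central, step is to evaluate that restriction. Since the external factor is a genuine spectator with trivial $G$-action (Stages 2 and 4), the identity on $\mathcal H_{\rm ext}^{(n)}$ pulls out of the Haar integral, giving
\[
\Pi_{\mathrm{phys}}\big|_{\mathcal H_\lambda^{(n)}} = \Bigl(\int_G d\mu(g)\, D^{(\lambda)}(g)\Bigr)\otimes \mathbf 1_{\rm ext}.
\]
It then remains to show that the operator-valued integral equals $\delta_{\lambda,\mathbf 0}\,\mathbf 1_{V_\lambda}$. I would argue this directly: left- and right-invariance of the normalized Haar measure makes $\int_G D^{(\lambda)}(g)\,d\mu(g)$ an intertwiner of $D^{(\lambda)}$ with itself, so by Schur's lemma it is a scalar $c_\lambda\mathbf 1$; taking the trace and invoking the character orthogonality of the Peter-Weyl theorem (Theorem \ref{THM:PWTheoremCompact}) gives $c_\lambda\, d_\lambda = \int_G \chi_\lambda(g)\,d\mu(g) = \langle \chi_\lambda,\chi_{\mathbf 0}\rangle = \delta_{\lambda,\mathbf 0}$, whence $c_\lambda = \delta_{\lambda,\mathbf 0}$ (using $d_{\mathbf 0}=1$). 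This is the matrix-rep analogue of the trivial-sector identity already used to characterize $P_0$.

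Assembling the blocks finishes the quantitative claim: every nontrivial block ($\lambda\neq\mathbf 0$) is annihilated, while the singlet block, if present, is fixed pointwise, so $\Pi_{\mathrm{phys}}(\mathcal H_{\rm iso}^{(n)}) = V_{\mathbf 0}\otimes\mathcal H_{\rm ext}^{(n)}$ when $\mathbf 0$ occurs and $\{0\}$ otherwise, reproducing Eq.~\eqref{EQ:StabilityPackagingUnderLocalGausslaw}. For the packaging conclusion I would then note that every surviving vector is, by construction, of the tensor form (internal singlet) $\otimes$ (external state); the internal factor $V_{\mathbf 0}\cong\mathbb C$ is one-dimensional and thus already an indivisible irreducible block, so no internal index is split and no spectator DOF is absorbed. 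Hence $\Pi_{\mathrm{phys}}$ neither creates nor destroys any partial factorization of IQNs.

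I expect the only genuine obstacle to be the scalar evaluation $\int_G D^{(\lambda)} = \delta_{\lambda,\mathbf 0}\mathbf 1$; everything else is bookkeeping licensed by the commutation property and the spectator structure. Two points warrant care. First, the decomposition must be the irreducible (Clebsch-Gordan) form, so that $G$ acts by a single irrep $D^{(\lambda)}$ per block rather than with hidden multiplicities that might contribute extra singlets; any genuine multiplicity is absorbed into $\mathcal H_{\rm ext}^{(n)}$ and left untouched, consistent with the stated image $V_{\mathbf 0}\otimes\mathcal H_{\rm ext}^{(n)}$. Second, for noncompact $G$ the Haar integral need not converge, so the statement is implicitly restricted to the compact (or finite) gauge groups assumed throughout, as flagged in the Peter-Weyl convergence remark.
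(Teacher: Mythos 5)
Your proposal is correct and follows essentially the same route as the paper's proof: restrict $\Pi_{\mathrm{phys}}$ to each block $V_\lambda\otimes\mathcal H_{\rm ext}^{(n)}$, pull the spectator identity out of the Haar integral, show $\int_G D^{(\lambda)}(g)\,d\mu(g)=\delta_{\lambda,\mathbf 0}\,\mathbf 1_{V_\lambda}$, and reassemble the direct sum before noting that the surviving one-dimensional singlet factor leaves packaging intact. The only difference is cosmetic: where the paper cites the Haar (matrix-coefficient) orthogonality relations for this integral, you derive it in-line via invariance of the Haar measure, Schur's lemma, and character orthogonality --- a slightly more self-contained justification of the same fact.
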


	\begin{proof}
		\leavevmode
		\begin{enumerate}
			\item Action of $\Pi_{\mathrm{phys}}$ on one $V_\lambda$.
			
			On the block $V_{\lambda}\otimes\mathcal H_{\text{ext}}^{(n)}$ we have
			
			\begin{equation}\label{EQ:GaugeProjectionOnOneBlock}
				U(g)
				=D^{(\lambda)}(g)\otimes\mathbf 1_{\rm ext},
				\qquad
				\Pi_{\mathrm{phys}}
				=\Bigl[\textstyle\int_G d\mu(g)\,D^{(\lambda)}(g)\Bigr]\!\otimes\!
				\mathbf 1_{\rm ext}.
			\end{equation}
						
			The Haar orthogonality relations for compact groups give		
			$$
			\int_G d\mu(g)\;D^{(\lambda)}_{ij}(g)
			=\begin{cases}
				\delta_{ij}, & \lambda=\mathbf 0,\\
				0,           & \lambda\neq\mathbf 0.
			\end{cases}
			$$
			If $\lambda\neq\mathbf 0$, then the integral vanishes and $\Pi_{\mathrm{phys}}$ annihilates the whole block.
			If $\lambda=\mathbf 0$ (trivial irrep, $\dim V_{\mathbf 0}=1$), then the integral is the identity on $V_{\mathbf 0}$.
			Thus, $\Pi_{\mathrm{phys}}$ acts as $(\mathbf 1\otimes\mathbf 1_{\rm ext})$, leaving that block untouched.			
			
			\item Assemble the blocks.
			
			Applying Eq. \eqref{EQ:GaugeProjectionOnOneBlock} and \eqref{EQ:IsotypicDecomposition} to the direct sum Eq. \eqref{EQ:PhysicalPackagedSpaceDecomposition} yields exactly the
			right-hand side of Eq. \eqref{EQ:StabilityPackagingUnderLocalGausslaw}.
			Orthogonality of different $\lambda$ makes the sum direct, so $\Pi_{\mathrm{phys}}$ never mixes distinct packaged blocks.
			
			\item Persistence of the packaging property.
			
			The internal factor is either $0$ (state discarded) or the $\mathbf 1$-dimensional singlet.			
			Both are irreducible.
			Tensoring with the external spectator space $\mathcal H_{\text{ext}}^{(n)}$ therefore still forbids any partial factorization of the IQNs.
			The external labels remain completely free, as expected.
		\end{enumerate}
	\end{proof}

	\begin{remark}
		Physical Interpretation:
		
		\begin{itemize}
			\item Gauge diagnosis:
			The projector simply throws away every irreducible block carrying a nontrivial gauge charge (color octet, U(1)-charged, etc.).
			No mixing or decoherence between packaged blocks is induced.
			
			\item Surviving package:
			The internal part shrinks to the 1-dimensional singlet but remains an irreducible package.
			Hence the IQN packaging is completely compatible with local gauge invariance.
			
			\item Where information lives:
			Logical qudits reside in $\mathcal H_{\text{ext}}^{(n)}$.
			All internal IQNs are frozen.
			They serve only to ensure gauge invariance.
		\end{itemize}
	\end{remark}

	\section{Conditions for Physical Packaged Superposition, and Measurements}	
	\label{SEC:ConditionsForPhysicalPackagedSuperposition}

	In Stage 5, we sketched the requirements for packaged superposition.
	In this section, we formalize those conditions.
	We further analyze measurement effects and uncover new collapse phenomena in hybrid-packaged states.

	\subsection{Superposition in a Physical Subspace}

	We first show a theorem that exhibits how packaged superposition behaviors in a physical subspace:

	\begin{theorem}[Superposition in the packaged physical sector]
		Let $G$ be a compact gauge group acting unitarily on a Hilbert space $\mathcal H$.
		Denote by $\hat Q$ the generator of the global charge and by $\{\hat G_x\}$ the Gauss generators at each spatial point (lattice site, vertex, …).
		Let $\ket{\Psi_1},\ket{\Psi_2}\in\mathcal H$ be packaged $n$-particle states.
		The normalized superposition
		$$
		\ket{\Psi}
		=\frac{\alpha\ket{\Psi_1}+\beta\ket{\Psi_2}}
		{\sqrt{|\alpha|^{2}+|\beta|^{2}+2\Re(\alpha^{*}\beta\braket{\Psi_1|\Psi_2})}}
		$$
		is physical (i.e. belongs to the packaged physical subspace $\mathcal H_{\mathrm{phys}}$) and non-trivial iff
		\begin{description}
			\item [S1.] Fixed total charge:
			$\hat Q\ket{\Psi_k}=Q\ket{\Psi_k}$ with the same eigenvalue $Q$ for $k=1,2$.
			
			\item [S2.] Gauge invariance (Gauss law):
			$U_g^{(x)}\ket{\Psi_k}=\ket{\Psi_k}$ (equivalently $\hat G_x\ket{\Psi_k}=0$) for all $x$ and $k$.
			
			\item [S3.] Identical gauge character:
			For every local transformation $U_g^{(i)}$ on a single particle,
			$$
			U_g^{(i)}\ket{\Psi_k}=\chi(g)\ket{\Psi_k},\qquad k=1,2,
			$$
			with the same one-dimensional character $\chi:G\!\to\!U(1)$ (for an $n$-particle package the overall phase is $\chi(g)^n$).
			
			\item [S4.] Non-triviality:
			$|\braket{\Psi_1|\Psi_2}| < 1$.
		\end{description}
		Under these conditions the superposition transforms by the overall phase $\chi(g)$ and thus remains in $\mathcal H_{\mathrm{phys}}$.
	\end{theorem}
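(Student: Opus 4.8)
The plan is to prove the two implications of the ``iff'' separately, treating sufficiency (S1--S4 $\Rightarrow$ physical and non-trivial) as the routine direction and necessity (physical and non-trivial $\Rightarrow$ S1--S4) as the substantive one. Throughout I would rely only on the linearity of $U(g)$, Schur's lemma, the superselection/block-diagonality statement of Proposition \ref{PROP:PackagingImpliesSS}, and the equivalence between local gauge invariance and Gauss-law annihilation established in Section \ref{SEC:LocalGaugeInvarianceLocalGaussLaw}.

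For sufficiency, I would first note that S3 applied to each of the $n$ constituents gives $U(g)\ket{\Psi_k}=\chi(g)^{n}\ket{\Psi_k}$ for $k=1,2$, while S2 guarantees $\hat G_x\ket{\Psi_k}=0$, so each term individually lies in $\mathcal H_{\mathrm{phys}}$. Acting with $U(g)$ on the unnormalized combination $\alpha\ket{\Psi_1}+\beta\ket{\Psi_2}$ and pulling the common phase $\chi(g)^{n}$ outside the sum by linearity then shows that the superposition transforms by a single overall phase and hence remains in $\mathcal H_{\mathrm{phys}}$. The only point requiring care is that the normalization denominator is strictly positive: since $\|\alpha\ket{\Psi_1}+\beta\ket{\Psi_2}\|^{2}=|\alpha|^{2}+|\beta|^{2}+2\Re(\alpha^{*}\beta\braket{\Psi_1|\Psi_2})$, condition S4 ($|\braket{\Psi_1|\Psi_2}|<1$) forces $\ket{\Psi_1}$ and $\ket{\Psi_2}$ to be linearly independent, so the sum is nonzero whenever $(\alpha,\beta)\neq(0,0)$ and the result is a genuine two-dimensional superposition rather than a rescaled single vector.

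For necessity, I would start from the requirement that $\ket{\Psi}$ transform by a fixed one-dimensional character (membership in $\mathcal H_{\mathrm{phys}}$) and exploit that each packaged summand already carries a definite character $\chi_k$, so that $U(g)\ket{\Psi}\propto\alpha\,\chi_1(g)^{n}\ket{\Psi_1}+\beta\,\chi_2(g)^{n}\ket{\Psi_2}$. Invoking the block-diagonality of Proposition \ref{PROP:PackagingImpliesSS}, distinct isotypic sectors cannot interfere, and the linear independence guaranteed by S4 then forces $\chi_1=\chi_2=\chi$; this is precisely S3, and equality of the associated charge eigenvalues is S1. Schur's lemma supplies the final pin: a nonzero vector transforming by a fixed one-dimensional character can only inhabit the corresponding one-dimensional isotypic block, whence the common Gauss-law condition S2 follows from the invariance-versus-annihilation equivalence. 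Finally, S4 is itself necessary, since $|\braket{\Psi_1|\Psi_2}|=1$ would make the two states proportional, collapsing $\ket{\Psi}$ to a single packaged vector and contradicting non-triviality.

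The hard part will be the necessity step in the degenerate case where $\ket{\Psi_1}$ and $\ket{\Psi_2}$ happen to lie in the \emph{same} irrep sector, since there block-diagonality alone does not separate the two terms. The resolution I would push through is that each $\ket{\Psi_k}$ is individually \emph{packaged}, hence a simultaneous eigenvector of the full gauge action with a sharp character; Schur's lemma then excludes any nontrivial invariant combination, pinning the common character and forcing S1--S3 even within a single sector. This is the one place where the packaged hypothesis on the summands (rather than mere membership in $\mathcal H_{\mathrm{iso}}$) is genuinely indispensable.
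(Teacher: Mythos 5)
Your proposal is correct and follows essentially the same route as the paper's proof: sufficiency by linearity (S2 and S3 make each summand transform by the common character and satisfy the Gauss law, so the combination picks up one overall phase, with S4 guaranteeing via Cauchy--Schwarz that the normalization denominator is strictly positive), and necessity by expanding $\hat Q$, $\hat G_x$, and $U_g^{(i)}$ on the two summands and using their linear independence to force S1--S3 component-wise, with S4 ruling out proportionality. Your explicit handling of the degenerate case where both summands lie in the same irrep sector (via the packaged hypothesis and sector orthogonality) is somewhat more careful than the paper's bare compare-coefficients argument, but the logical content is identical.
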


	\begin{proof}
		We prove necessity and then sufficiency.
		
		\begin{enumerate}
			\item \textbf{Necessity.}			
			Assume $\ket{\Psi}$ is physical and non-trivial $\Longrightarrow$ S1, S2, S3, and S4.
			
			\begin{itemize}
				\item Total charge.
				Since $\hat Q$ generates global gauge transformations $U_g=\mathrm{e}^{\mathrm i\theta(g)\hat Q}$, we have	
				$$
				\hat Q\ket{\Psi}=Q\ket{\Psi}\quad\text{for some eigenvalue }Q.
				$$				
				Expand:	
				$$
				\hat Q\ket{\Psi}
				=\alpha\,\hat Q\ket{\Psi_1}+\beta\,\hat Q\ket{\Psi_2}.
				$$				
				For this to equal $Q\ket{\Psi}$, we must have
				$\hat Q\ket{\Psi_1}=Q\ket{\Psi_1}$ and $\hat Q\ket{\Psi_2}=Q\ket{\Psi_2}$.
				Hence S1 is necessary.
				
				\item Gauge invariance (Gauss law).
				Physicality of $\ket{\Psi}$ implies $\hat G_x\ket{\Psi}=0$ (or $U_g^{(x)}\ket{\Psi}=\ket{\Psi}$).
				Using linearity, $\hat G_x\ket{\Psi}=0$ forces $\hat G_x\ket{\Psi_k}=0$ for each $k$ separately.
				Otherwise the two terms could not cancel for arbitrary coefficients $\alpha,\beta$.
				Thus S2 is necessary.
				
				\item Identical character.
				Let $U_g^{(i)}$ act on the $i$-th particle only.
				Physicality demands	
				$$
				U_g^{(i)}\ket{\Psi}=\chi(g)\ket{\Psi}.
				$$				
				But	
				$$
				U_g^{(i)}\ket{\Psi}
				=\alpha\,U_g^{(i)}\ket{\Psi_1}+\beta\,U_g^{(i)}\ket{\Psi_2}
				=\alpha\,\chi_1(g)\ket{\Psi_1}+\beta\,\chi_2(g)\ket{\Psi_2},
				$$	
				where $\chi_k$ is the character associated with $\ket{\Psi_k}$.
				Comparing with $\chi(g)\ket{\Psi}=\alpha\chi(g)\ket{\Psi_1}+\beta\chi(g)\ket{\Psi_2}$ and using linear independence of $\ket{\Psi_1},\ket{\Psi_2}$ (non-triviality), we require $\chi_1=\chi_2=\chi$.
				Hence S3 is necessary.
				
				\item Non-triviality.
				If $|\braket{\Psi_1|\Psi_2}|=1$, then the two states differ only by an irrelevant phase, and the superposition reduces to a single physical state (trivial).
				So S4 is necessary.
			\end{itemize}

			\item \textbf{Sufficiency.}			
			Assume S1-S4 hold.
			We show that $\ket{\Psi}$ is physical and transforms by $\chi$.
			\begin{itemize}
				\item Well-defined normalization.
				Since $|\braket{\Psi_1|\Psi_2}| < 1$, the denominator in the definition of $\ket{\Psi}$ is finite and non-zero.
				So $\ket{\Psi}$ is a genuine state.
				
				\item Total charge.
				Using S1,	
				$$
				\hat Q\ket{\Psi}
				=\frac{\alpha Q\ket{\Psi_1}+\beta Q\ket{\Psi_2}}
				{\mathcal N}
				=Q\ket{\Psi},
				$$	
				where $\mathcal N$ is the normalization factor.
				Thus $\ket{\Psi}$ sits in the same charge sector $Q$.
				
				\item Gauge invariance (Gauss law).
				From S2 each term is annihilated:	
				$$
				\hat G_x\ket{\Psi}
				=\frac{\alpha\,0 + \beta\,0}{\mathcal N}=0.
				$$	
				Hence $\ket{\Psi}\in\ker\hat G_x$ for all $x$.
				
				\item Transformation of individual particles.
				Apply $U_g^{(i)}$ and use S3:	
				$$
				U_g^{(i)}\ket{\Psi}
				=\frac{\alpha\chi(g)\ket{\Psi_1}+\beta\chi(g)\ket{\Psi_2}}{\mathcal N}
				=\chi(g)\ket{\Psi}.
				$$
				Thus, every local gauge action multiplies the state by the same phase factor $\chi(g)$.
				Since $\chi$ is one-dimensional, global gauge transformations act by $\chi(g)^{n}$ on the entire $n$-particle package.
				This remains a harmless overall phase.
			\end{itemize}			
			Consequently $\ket{\Psi}$ satisfies all the physicality conditions that define $\mathcal H_{\mathrm{phys}}$.
		\end{enumerate}
	\end{proof}

	\subsection{Superposition of Single-particle Packaged States}
	\label{SEC:SuperpositionOfSingleParticlePackagedStates}

	We first recall why local gauge charges obey superselection while global quantum numbers do not.

	\subsubsection{Gauged Charge and Global Quantum Numbers}

	If we only consider the spin of a single-particle, then we can freely superpose single‐particle spin states $\lvert \uparrow\rangle$ and $\lvert \downarrow\rangle$ to obtain a superposition state like $\psi = \alpha\,\lvert \uparrow\rangle + \beta\,\lvert \downarrow\rangle$.
	This is because spin is an external degree of freedom (DOF) and is not subject to superselection.

	However, if we consider a gauge quantum number such as electric charge $\pm e$, then we cannot superpose states $\lvert +e\rangle$ and $\lvert -e\rangle$ to form a coherent state like $\psi = \alpha\,\lvert +e\rangle + \beta\,\lvert -e\rangle$.
	This is because states with different $\pm e$ charges belong to distinct charge sectors $\mathcal{H}_{+e}$ and $\mathcal{H}_{-e}$ and superselection rules forbid the formation of superpositions that span distinct charge sectors under local gauge group $U(1)$.

	Superselection therefore blocks charged single particles from serving as qubits or as messengers in quantum communication.
	To solve this problem, let us now compare the difference between a gauged and global IQN:
	\begin{enumerate}
		\item Gauge Charge (e.g., electric charge $Q$, color in QCD):
		Local gauge invariance and superselection protects gauge charge.
		The cross‐sector superpositions are disallowed.
				
		\item Global Quantum Number (e.g., strangeness, flavor, isospin): 
		Local gauge symmetries do not enforce global quantum number \cite{Aharonov1967}.
		Therefore, superselection rules do not apply to global quantum number.
		The coherent superpositions are allowed if the net gauge charge is still the same.
	\end{enumerate}

	This distinction between gauged IQNs (which are fixed by local gauge invariance) and global quantum numbers (which can be superposed) is important.
	It uncovers the fact that only those difference arises from an internal global quantum number (e.g. flavor SU(3)$_F$), that is, a conserved Noether charge \cite{Noether1918} without a local gauge constraint, can be used to encode quantum information.

	\begin{example}[Superposition of neutral mesons]
		Neutral mesons can indeed form superpositions like 
		\[
		\alpha\,\lvert K^0\rangle \;+\; \beta\,\lvert \overline{K}^0\rangle,
		\]
		where $\lvert K^0\rangle$ and $\lvert \overline{K}^0\rangle$ are flavor eigenstates but both carry zero net electric charge.
	\end{example}

	\subsubsection{Conditions for Packaged Superposition of a Single Particle}
	\label{SEC:ConditionsForPackagedSuperpositionOfASingleParticle}

	Now we see that the obstacle for superposition of single‐particle packaged states comes from gauged charge.
	More specifically, if a particle $\lvert P\rangle$ and its antiparticle $\lvert \bar{P}\rangle$ differ by a nonzero gauge charge, then superselection forbids the superposition between state $\lvert P\rangle$ and $\lvert \bar{P}\rangle$.

	However, if we select a particle $\lvert P\rangle$ and its antiparticle $\lvert \bar{P}\rangle$ that reside in the same net‐charge sector (often $Q=0$), then they differ only by a global quantum number (e.g., flavor, isospin) instead of a gauged one.
	In this case, we can successfully bypass the obstacles of superselection and form physically coherent superposition states.
	Therefore, we can choose neutral particles \cite{GellMannPais1955,Good1961} as quantum information carriers.
	This is because all relevant states of a neutral particle (single or composite) always remain in one gauge sector and differ only in global quantum numbers, which permit superposition.

	For a single-particle in QED or QCD, the Gauss-law (or local gauge-invariance) constraint is precisely	
	$$
	\hat G_x\;\ket{\Psi}\;=\;0
	\quad\Longleftrightarrow\quad
	\hat Q_{\text{tot}}\ket{\Psi}=0
	\quad(\text{single site/continuum limit}),
	$$
	that is, the total charge is zero.

	For rigor, let us now summarize the conditions for forming a valid packaged superposition states of single-particle as follows:

	\begin{corollary}[Necessary and sufficient conditions for packaged superposition of a single particle]\label{COR:NecessarySufficientConditionsSingleParticle}
		Let $\lvert P \rangle$ be the packaged state of a particle and $\lvert \bar{P} \rangle$ be the packaged state of its antiparticle.
		The basis of single-particle space can be written as $\{\lvert P \rangle, \lvert \bar{P} \rangle\}$.
		Then the packaged superposition state of a single particle
		\begin{equation}\label{EQ:ValidPackagedSuperpositionStatesOfSingleParticle}
			\Psi = \alpha\,\lvert P\rangle + \beta\,\lvert \bar{P}\rangle, \quad |\alpha|^2 + |\beta|^2 = 1,
		\end{equation}
		is valid, i.e. physically allowed and nontrivial, iff the following two conditions hold:
		
		\begin{description}
			\item[C1.] \textit{Zero Net Gauge Charge:} 
			The particle $\lvert P\rangle$ and its antiparticle $\lvert \bar{P}\rangle$ both carry zero gauge charge.
			This ensures they occupy the same superselection sector $\mathcal{H}_{Q=0}$, so $\lvert P\rangle$ and $\lvert \bar{P}\rangle$ are not separated by a gauge-induced superselection boundary.
			Let $\hat{Q}$ be the net-charge operator, we have
			\begin{equation}\label{EQ:ZeroNetGaugeChargeCond}
				\hat{Q}\lvert P\rangle = 0,\quad \hat{Q}\lvert \bar{P}\rangle = 0.
			\end{equation}
			
			\item[C2.] \textit{Difference is a Global Quantum Number:}
			The particle and its antiparticle differ only by a global quantum number (flavor, strangeness, lepton number, etc.), which we denote by the operator $\hat{F}$.
			Then, the basis states must satisfy
			\begin{equation}\label{EQ:DifferenceGlobalQuantumNumber}
				\hat{F}\lvert P\rangle = f\,\lvert P\rangle,\quad \hat{F}\lvert \bar{P}\rangle = -f\,\lvert \bar{P}\rangle,
			\end{equation}
			where $f$ is a nonzero real number.
			Since $\hat{F}$ is not protected by a local gauge symmetry, coherent superpositions of the form in Eq.~(\ref{EQ:ValidPackagedSuperpositionStatesOfSingleParticle}) are physically allowed.
			The difference in a global quantum number ensures that the two states carry opposite values of the global quantum number and therefore can serve as distinct logical states for encoding quantum information.
		\end{description}
		Thus, to construct physically allowed superposition states for quantum computation, one must choose particles (or composite systems) that are neutral (or that differ only by global quantum numbers).
	\end{corollary}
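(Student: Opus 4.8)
The plan is to deduce this corollary as a specialization of the preceding Superposition Theorem (with its conditions S1--S4) to the two-dimensional particle/antiparticle basis $\{\ket{P},\ket{\bar P}\}$, translating the representation-theoretic conditions S1--S4 into the physically transparent conditions C1 and C2. The whole argument thus reduces to a dictionary between the two condition sets, plus one genuinely physical input about what distinguishes a particle from its antiparticle.

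First, for sufficiency, I would assume C1 and C2 and verify S1--S4 in turn. Condition C1 asserts $\hat Q\ket{P}=\hat Q\ket{\bar P}=0$, so both states are eigenvectors of the total charge with the common eigenvalue $Q=0$; this is exactly S1, and for a single particle the Gauss-law constraint $\hat G_x\ket{\Psi}=0$ is equivalent (in the single-site/continuum limit established in Sec.~\ref{SEC:ConditionsForPackagedSuperpositionOfASingleParticle}) to $\hat Q_{\rm tot}\ket{\Psi}=0$, which delivers S2. Since zero gauge charge means both states carry the trivial one-dimensional character $\chi(g)\equiv 1$, condition S3 holds with a common character. Finally, C2 gives $\hat F\ket{P}=f\ket{P}$ and $\hat F\ket{\bar P}=-f\ket{\bar P}$ with $f\neq 0$; because $\hat F$ is Hermitian and $f\neq -f$, the two states are orthogonal eigenvectors, so $|\braket{P|\bar P}|=0<1$, which is S4. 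Invoking the Superposition Theorem then guarantees that $\Psi$ lies in $\mathcal H_{\rm phys}$ and is nontrivial.

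Next, for necessity, I would assume $\Psi$ is a valid (physical and nontrivial) packaged superposition and run the theorem's necessity direction. Physicality forces each term to be separately gauge invariant, hence to share a common charge eigenvalue; for a single-particle physical state the Gauss law pins this eigenvalue to $Q=0$, establishing C1. Nontriviality $|\braket{P|\bar P}|<1$ means $\ket{P}$ and $\ket{\bar P}$ are genuinely distinct, so some conserved label must separate them. I would then argue that this label cannot be a gauged charge: were the two states separated by a nonzero gauge charge, superselection (Proposition~\ref{PROP:PackagingImpliesSS}) would forbid their coherent superposition, contradicting validity. Hence the distinguishing quantity is a global (ungauged) Noether charge $\hat F$, and since $\ket{\bar P}$ is the antiparticle of $\ket{P}$ it carries the opposite eigenvalue $-f$ with $f\neq 0$ (the degenerate case $f=0$ would leave the two states indistinguishable under every conserved label, contradicting nontriviality). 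This yields C2.

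The main obstacle I anticipate is precisely this necessity step: showing that the distinguishing quantum number must be a global charge carrying opposite eigenvalues is not purely algebraic but blends superselection (to exclude a gauge-charge difference) with the particle/antiparticle structure (to force the sign flip). Care is also needed in the single-particle reduction of the Gauss law to $\hat Q_{\rm tot}=0$, since a genuinely charged single excitation only satisfies the local constraint after Wilson-line dressing (cf.\ the dressed-electron discussion in Sec.~\ref{sec:QED-global-charge}); it is exactly the restriction to neutral $P,\bar P$ that makes C1 both necessary for and consistent with membership in $\mathcal H_{\rm phys}$.
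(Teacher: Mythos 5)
Your proposal is correct, and it reaches the corollary by a genuinely different route than the paper. The paper's own proof is a direct, self-contained superselection argument: for necessity it argues that a gauged difference (or any nonzero gauge charge, given the particle/antiparticle relation $q \mapsto -q$) would place $\lvert P\rangle$ and $\lvert \bar P\rangle$ in distinct superselection sectors, forbidding coherence, and that Eq.~\eqref{EQ:DifferenceGlobalQuantumNumber} with $f\neq 0$ supplies orthogonality; for sufficiency it simply observes that two neutral states differing only by a global quantum number lie in one (trivial) sector, so their superposition is allowed. You instead treat the corollary as a specialization of the preceding Superposition Theorem, building an explicit dictionary $\mathrm{C1}\Rightarrow\mathrm{S1},\mathrm{S2},\mathrm{S3}$ and $\mathrm{C2}\Rightarrow\mathrm{S4}$ and then invoking that theorem as a black box. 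What your route buys: the sufficiency direction becomes cleaner and more formal, in particular deriving $\mathrm{S4}$ ($\lvert\langle P\vert\bar P\rangle\rvert<1$, in fact $=0$) from Hermiticity of $\hat F$ rather than asserting orthogonality, and it makes explicit the single-particle Gauss-law reduction $\hat G_x\ket{\Psi}=0\Leftrightarrow\hat Q_{\rm tot}\ket{\Psi}=0$ that the paper only mentions in passing; it also aligns with the paper's own later remark that the multi-particle Proposition~\ref{PROP:NecessarySufficientConditionsMultiParticle} reduces to this corollary at $n=1$. What the paper's route buys: it is self-contained, does not inherit the mild informality of the theorem's necessity step (where separate annihilation of each component by $\hat G_x$ is argued from linearity), and foregrounds the physical superselection reasoning. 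Your necessity argument ultimately needs the same two physical inputs as the paper's --- superselection to exclude a gauged distinguishing label, and the charge-conjugation sign flip to force eigenvalues $\pm f$ with $f\neq 0$ --- so the two proofs converge there; your additional care about Wilson-line dressing of charged states is a welcome sharpening, not a deviation.
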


	\begin{proof}
		We now show that the two conditions are necessary and sufficient for a single‐particle packaged superposition state to be physically allowed.
		
		\smallskip\noindent
		\emph{I. Necessity.}
		Suppose Eq.~(\ref{EQ:ValidPackagedSuperpositionStatesOfSingleParticle}) is valid (physically allowed), then we have
		\begin{itemize}
			\item \textit{No Net Gauge Charge:}
			The superselection rules prevent coherent superpositions of states that carry different gauge charges.
			If either $\lvert P\rangle$ or $\lvert\bar{P}\rangle$ carried a nonzero gauge charge, then they would lie in different superselection sectors and could not be coherently superposed.
			Thus, requiring 
			\[
			\hat{Q}\lvert P\rangle = 0 \quad \text{and} \quad \hat{Q}\lvert \bar{P}\rangle = 0
			\]
			is necessary.
			
			\item \textit{Difference is Global, Not Gauged:} 
			If the states differed by a gauged quantum number (e.g., electric charge $Q$), then (even if they were both overall neutral) their internal structures would belong to different irreps of the local gauge group.
			Coherent superpositions between different irreps are forbidden by superselection.
			So the difference must be in a quantity (like flavor or strangeness) that is, a global quantum number $f$, which is not subject to the superselection constraint.
			Eq. \eqref{EQ:DifferenceGlobalQuantumNumber} also guarantees orthogonality of $\lvert P\rangle$ and $\lvert \bar{P}\rangle$ when $f \neq 0$ \cite{WWW1952}.                                   
		\end{itemize}
			
		\smallskip\noindent
		\emph{II. Sufficient:}
		Conversely, suppose the two conditions hold. Then
		\begin{itemize}
			\item \textit{No Net Gauge Charge:}
			If both states have zero gauge charge, then they lie in the same superselection sector (more precisely, because both are annihilated by all local Gauss operators $\hat G_x$, they lie in the same $U(1)$ irrep, the trivial one).
			From the perspective of gauge-invariance, a coherent superposition is allowed.
			
			\item \textit{Difference is Global, Not Gauged:} 
			If the only distinction between the particle and its antiparticle is a global quantum number (that is, the operator $\hat{F}$ acting nontrivially while $\hat{Q} = 0$ holds), then the states lie in the same superselection sector and any superposition 
			\[
			\Psi = \alpha\,\lvert P\rangle + \beta\,\lvert \bar{P}\rangle
			\]
			is allowed by the rules of quantum mechanics.
		\end{itemize}
		
		In other words, Eq.~(\ref{EQ:ValidPackagedSuperpositionStatesOfSingleParticle}) is valid.
		
		Thus, the two conditions are necessary and sufficient.
	\end{proof}

	\begin{example}[Pass/Fail at a glance]
		\leavevmode
		\begin{itemize}
			\item \emph{Pass (neutral mesons)}:
			$\alpha\ket{K^0}+\beta\ket{\bar K^0}$ — fixed $Q=0$, differ by flavor (global) (satisfies both C1 and C2).
			
			\item \emph{Fail (charged single particles)}:
			$\alpha\ket{e^-}+\beta\ket{e^+}$ — different $U(1)$ sectors (violates C1).
			
			\item \emph{Fail (self-conjugate)}:
			$\alpha\ket{\gamma}+\beta\ket{\bar\gamma}$ — a photon is its own antiparticles $\gamma=\bar\gamma$ (violates non-triviality C2).
		\end{itemize}
	\end{example}

	\begin{remark}
		Remarks on the necessary and sufficient conditions:
		
		\begin{enumerate}
			\item The first condition $\hat{Q}\lvert P\rangle = 0,~ \hat{Q}\lvert \bar{P}\rangle = 0$ is to guarantee that the superposition is physical allowed. Only neutral particles and their antiparticles can be in the same charge sector (superselection sector) and therefore the superposition $\Psi = \alpha\,\lvert P\rangle + \beta\,\lvert \bar{P}\rangle$ is permitted.
			
			\item The second condition $\hat{F}\lvert P\rangle = f\,\lvert P\rangle,\quad \hat{F}\lvert \bar{P}\rangle = -f\,\lvert \bar{P}\rangle$ is to guarantee that the superposition is non-trivial. Otherwise, we have $\lvert P \rangle = \lvert \bar{P} \rangle$ and the superposition $\Psi = \alpha\,\lvert P\rangle + \beta\,\lvert \bar{P}\rangle$ is trivial. For example, photons are their own antiparticle, $\lvert \gamma\rangle = \lvert \bar{\gamma}\rangle$. We cannot use photons for packaged qubits.
		\end{enumerate}	
	\end{remark}

	\begin{corollary}\label{COR:SuperpositionStateGaugeInvariant}
		The superposition state in Eq.~(\ref{EQ:ValidPackagedSuperpositionStatesOfSingleParticle}) that satisfies the conditions C1 and C2 is gauge-invariant.
	\end{corollary}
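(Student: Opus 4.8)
The plan is to verify the defining singlet condition $U(g)\lvert\Psi\rangle=\lvert\Psi\rangle$ directly, using that the gauge group acts on $\lvert P\rangle$ and $\lvert\bar P\rangle$ solely through their gauge charge, which vanishes by C1. First I would write a local gauge transformation in exponential form, $U(g)=\exp\!\bigl(i\theta^a(g)\,\hat Q^a\bigr)$ (for the $U(1)$ case simply $U(\theta)=e^{i\theta\hat Q}$), so that the action on any state is governed entirely by the gauge-charge generators. Condition C1 in Eq.~\eqref{EQ:ZeroNetGaugeChargeCond}, namely $\hat Q\lvert P\rangle=\hat Q\lvert\bar P\rangle=0$, then states that both kets are annihilated by every gauge generator; for the connected compact groups of interest this is equivalent to transforming in the trivial irrep, so that $U(g)\lvert P\rangle=\lvert P\rangle$ and $U(g)\lvert\bar P\rangle=\lvert\bar P\rangle$ with character $\chi(g)\equiv1$.

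The second step is a short linearity argument. Since $U(g)$ is linear and the amplitudes $\alpha,\beta$ together with the normalization are $c$-numbers untouched by the gauge action,
\begin{equation*}
U(g)\lvert\Psi\rangle
=\alpha\,U(g)\lvert P\rangle+\beta\,U(g)\lvert\bar P\rangle
=\alpha\lvert P\rangle+\beta\lvert\bar P\rangle
=\lvert\Psi\rangle ,
\end{equation*}
which is precisely the gauge-invariance (singlet) property of Definition~\ref{DEF:SingleParticleInternalPackagedSubspace}(b) with $\chi(g)=1$.

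I would then clarify the role of C2, which serves nontriviality rather than invariance. Because the distinguishing label $\hat F$ generates a global (ungauged) symmetry, one has $[\,U(g),\hat F\,]=0$, so the gauge transformation neither mixes $\lvert P\rangle$ with $\lvert\bar P\rangle$ nor responds to their opposite eigenvalues $\pm f$ from Eq.~\eqref{EQ:DifferenceGlobalQuantumNumber}. The hypothesis $f\neq0$ forces $\lvert P\rangle\perp\lvert\bar P\rangle$, so the span is genuinely two-dimensional and $\lvert\Psi\rangle$ is a true packaged qubit rather than a rephasing of one ray.

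I do not anticipate a deep obstacle, as the statement is essentially a specialization of the singlet-sector invariance already established for $\mathcal H_{\rm phys}$. The one point needing care is the passage from the eigenvalue statement ``zero gauge charge'' to invariance under every finite $U(g)$: exponentiating the generators handles the connected component, but for a disconnected or non-abelian gauge group one should separately confirm that charge-zero really means lying in the trivial irrep and not merely in the kernel of the continuous generators. In the $U(1)$ setting of the neutral-meson application this subtlety disappears and the argument is immediate.
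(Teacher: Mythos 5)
Your proof is correct and follows essentially the same route as the paper's: show that C1 forces $\lvert P\rangle$ and $\lvert\bar P\rangle$ to respond identically to every local gauge transformation $U(g)$, and then conclude by linearity of $U(g)$ on the superposition. Your version is in fact slightly sharper: by exponentiating the charge generators you pin the common character down to $\chi(g)=1$ (strict invariance, modulo your correctly flagged caveat about disconnected groups), whereas the paper's proof only exhibits a common phase $e^{i\phi(g)}$ shared by both states, i.e.\ invariance up to an overall phase within the neutral sector.
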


	\begin{proof}
		By definition, the particle state $\lvert P\rangle$ and antiparticle state $\lvert\bar{P}\rangle$ are packaged states.
		Therefore, they are covariant.	
		Let $G$ be a gauge group. Under any local gauge transformation $U(g) ~ (g \in G)$, the single-particle packaged states transform as
		\[
		U(g)\,\lvert P\rangle = e^{i\phi(g)}\,\lvert P\rangle,\quad U(g)\,\lvert \bar{P}\rangle = e^{i\phi(g)}\,\lvert \bar{P}\rangle,
		\]
		where the phase $e^{i\phi(g)}$ is the same for both states because they lie in the same sector.
		
		Thus, the superposition state in Eq.~(\ref{EQ:ValidPackagedSuperpositionStatesOfSingleParticle}),
		\[
		\lvert \Psi\rangle = \alpha\,\lvert P\rangle + \beta\,\lvert \bar{P}\rangle,
		\]
		transforms as
		\[
		U(g)\,\lvert \Psi\rangle 
		= \alpha\,U(g)\,\lvert P\rangle + \beta\,U(g)\,\lvert \bar{P}\rangle
		= e^{i\phi(g)} \left(\alpha\,\lvert P\rangle + \beta\,\lvert \bar{P}\rangle\right)
		= e^{i\phi(g)}\,\lvert \Psi\rangle.
		\]
		This shows that $\lvert \Psi\rangle$ is gauge-invariant and remains in the same gauge sector (that is, $\hat{Q}=0$).	 
	\end{proof}

	\begin{remark}[Gauss law and single blocks]
		The irreducible block derived above transforms covariantly under $G$.
		Whether it is physical can be verified by the local gauge-invariance constraint:
		\begin{itemize}
			\item Abelian groups (QED):
			a dressed charged excitation satisfies local gauge-invariance constraint.
			
			\item Non-Abelian color $SU(3)$:
			an isolated $\mathbf 3$ or $\mathbf 8$ block is projected out by $\Pi_{\mathrm{phys}}$ and is therefore not gauge-invariant.
			Only color singlet combinations survive and is gauge-invariant.
		\end{itemize}
	\end{remark}

	\subsection{Superposition of Multi-particle Packaged States}
	\label{SEC:SuperpositionOfMultiParticlePackagedStates}

	In Sec. \ref{SEC:ConditionsForPackagedSuperpositionOfASingleParticle}, we discussed the necessary and sufficient conditions for a single‑particle packaged superposition.
	Here we do the same thing to multi‑particle.

	To obtain a better understanding on the superposition of multi-particle packaged states, we go back to the multi-particle packaged subspace in Sec. \ref{SEC:MultiParticlePackagedSubspaces}, where each
	state (vector) is a multi-particle packaged state.
	The only remaining question is when superpositions of different such n-particle packaged states are physical? We answer this question by building the next proposition.

	\begin{corollary}[Necessary and sufficient conditions for packaged superposition of multi-particle]
		\label{PROP:NecessarySufficientConditionsMultiParticle}
		Let	$G$ be a compact gauge group that acts on the physical Hilbert space $\mathcal H$ with a global conserved charge operator $\hat Q$ and a family of local Gauss-law generators $\{\hat G_x\}_{x}$.	
		Let $\ket{\Psi_{1}},\ket{\Psi_{2}}\in\mathcal H$ be two $n$-particle packaged states.
		Consider the normalized superposition
		\begin{equation}\label{EQ:ValidPackagedSuperpositionStatesOfMultieParticle}
			\ket{\Psi}\;=\;\alpha\,\ket{\Psi_{1}}\;+\;\beta\,\ket{\Psi_{2}},
			\quad |\alpha|^{2}+|\beta|^{2}=1 .
		\end{equation}
		The state $\ket{\Psi}$ is valid (physically admissible and non‑trivial) if and only if the following conditions hold:
		\begin{description}
			\item[C1.] Fixed total charge:	
			$$
			\hat Q\,\ket{\Psi_{k}} \;=\; Q\,\ket{\Psi_{k}}, 
			\quad k=1,2 ,
			$$
			for one and the same eigenvalue $Q$.
			
			\item[C2.] Gauss law at every site:		
			$$
			\hat G_{x}\,\ket{\Psi_{k}} \;=\; 0,
			\quad \forall x,\;k=1,2 .
			$$
			
			\item[C3.] Identical local gauge character:
			both states transform in the same irrep under each local gauge action $U(g)^{(i)}$:		
			$$
			U_{g}^{(i)}\ket{\Psi_{1}} = \chi(g)\,\ket{\Psi_{1}}, 
			\quad
			U_{g}^{(i)}\ket{\Psi_{2}} = \chi(g)\,\ket{\Psi_{2}},
			\quad \forall g\in G,\;\forall i.
			$$
			$\chi$ may be the trivial character.
			For groups with center only, this reduces to singlet behavior.

			\item[C4.] Linear independence:
			$\ket{\Psi_1}$ and $\ket{\Psi_2}$ are not proportional:
			$$
			\bigl|\langle\Psi_1 \mid \Psi_2\rangle\bigr|<1.
			$$
		\end{description}
	\end{corollary}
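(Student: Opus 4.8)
The plan is to prove the biconditional by treating necessity and sufficiency separately, following verbatim the template already set by the single-particle Corollary~\ref{COR:NecessarySufficientConditionsSingleParticle} and by the preceding theorem on superposition in the packaged physical sector. Three structural inputs do all the work: the linearity of the operators $\hat Q$, $\{\hat G_x\}$ and $U_g^{(i)}$; the linear independence of $\ket{\Psi_1}$ and $\ket{\Psi_2}$ guaranteed by C4; and the fact that each $\ket{\Psi_k}$ is an $n$-particle packaged vector, so by Proposition~\ref{PROP:ActionOnIsotypicSectors} it sits inside a definite isotypic block that the gauge action maps to itself. Because the summands are packaged, the only genuine freedom in the superposition is whether the two terms share the same charge sector and the same local character, which is precisely what C1--C3 encode.

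First I would prove necessity. Assuming $\ket{\Psi}$ is physical and non-trivial, I would apply $\hat Q$, expand by linearity, and demand that $\ket{\Psi}$ be a charge eigenstate; the linear independence of the summands then forces $\hat Q\ket{\Psi_k}=Q\ket{\Psi_k}$ with one common eigenvalue, which is C1. Next, physicality means $\hat G_x\ket{\Psi}=0$ at every site, so $\alpha\hat G_x\ket{\Psi_1}+\beta\hat G_x\ket{\Psi_2}=0$, and I would argue that each term must vanish separately, yielding C2. Applying a single-particle local transformation $U_g^{(i)}$ and matching the result against $\chi(g)\ket{\Psi}$, once more via linear independence, produces a single common one-dimensional character and gives C3. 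Finally C4 is necessary because $|\braket{\Psi_1|\Psi_2}|=1$ would make the two vectors proportional and collapse the superposition to a single state.

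For sufficiency I would simply run the computation in reverse: under C4 the normalization denominator is finite and nonzero, and linearity together with C1, C2, C3 immediately gives $\hat Q\ket{\Psi}=Q\ket{\Psi}$, $\hat G_x\ket{\Psi}=0$, and $U_g^{(i)}\ket{\Psi}=\chi(g)\ket{\Psi}$, so that $\ket{\Psi}$ lies in $\mathcal H_{\mathrm{phys}}$ up to the harmless overall phase $\chi(g)$, which becomes $\chi(g)^{n}$ on the full $n$-particle package. This direction is routine and contains no essential difficulty.

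The hard part will be the separate-vanishing step inside the necessity proof of C2: for a single fixed pair $\alpha,\beta$, the relation $\alpha\hat G_x\ket{\Psi_1}=-\beta\hat G_x\ket{\Psi_2}$ does not by itself force each image to be zero, since accidental cancellation is conceivable. I would close this in two complementary ways. First, I would invoke validity in the robust sense, demanding that the superposition be physical across a range of coefficients rather than one fine-tuned choice; the two linearly independent images must then vanish individually. Second, and more structurally, I would use the fact that $\hat G_x$ generates a static local gauge rotation and hence preserves the isotypic grading of Proposition~\ref{PROP:ActionOnIsotypicSectors}, so $\hat G_x\ket{\Psi_k}$ stays in the same block as $\ket{\Psi_k}$; packaged summands in distinct blocks cannot cancel, and within a shared block the linear independence of C4 again forces term-by-term annihilation. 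A secondary subtlety, present only for non-Abelian $G$, is that C3 constrains merely the abelianized phase $\chi$; I would reconcile the finite statement C3 with the infinitesimal Gauss constraint C2 using the generator--transformation correspondence recorded in Eq.~\eqref{EQ:GeneratorsLocalGaugeTransformation}, which guarantees that the two conditions are mutually consistent.
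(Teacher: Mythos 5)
Your overall architecture — the necessity/sufficiency split, linearity plus the packaged premise, and linear independence forcing a common charge eigenvalue and a common character — is the same as the paper's, and your sufficiency direction and your C1/C3/C4 necessity arguments go through exactly as described. The problem lies in the two closures you propose for the C2 cancellation step, which you correctly single out as the crux.

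Your first closure (demanding validity across a range of coefficients) silently strengthens the hypothesis: the corollary asserts an equivalence for one fixed pair $(\alpha,\beta)$, so quantifying over coefficients proves a different statement (the paper's preceding theorem does make this move, but the corollary's own proof does not need it). Your second closure contains a genuine non sequitur in its final step: C4 gives linear independence of $\ket{\Psi_1}$ and $\ket{\Psi_2}$, but linear independence does not push forward through a linear operator, so you cannot conclude that $\hat G_x\ket{\Psi_1}$ and $\hat G_x\ket{\Psi_2}$ are independent; the within-block cancellation $\alpha\,\hat G_x\ket{\Psi_1}=-\beta\,\hat G_x\ket{\Psi_2}$ is precisely the case your argument must exclude and does not. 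The correct finish — essentially what the paper does — is to use the packaged premise a second time: each $\ket{\Psi_k}$ lies in a definite isotypic block, so either the two blocks differ, in which case orthogonality of blocks kills each image separately (your correct half), or they share one block $\mathcal H_\lambda$; in that case the nonzero physical vector $\ket{\Psi}$ lies in $\mathcal H_\lambda$, but physical states lie in the gauge-invariant (trivial) block and distinct blocks are orthogonal, so $\lambda$ must be the trivial block — on which every $\hat G_x$ vanishes identically, giving $\hat G_x\ket{\Psi_k}=0$ term by term with no appeal to independence of images.
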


	\begin{proof}
		C1, C2, and C3 guarantee physical admissibility, that is, not forbidden by any superselection rule.
		C4 guarantees non‑triviality.
		
		\smallskip\noindent
		\emph{I. Necessity.}
		Assume $\ket{\Psi}$ is valid $\Longrightarrow$ C1, C2, C3, and C4.
		
		\begin{enumerate}
			\item Physically admissible
			
			Since $\ket{\Psi}$ is physical, it must lie in a single superselection sector.
			That forces each component $\ket{\Psi_k}$ to share the same eigenvalue of $\hat Q$ and to satisfy $\hat G_x=0$.
			Otherwise the superposition would lie across orthogonal sectors and be forbidden.
			Thus, we obtain C1 and C2.
			
			Moreover gauge invariance of the superposition requires that all local gauge transformations $U_{g}^{(i)}$ act by one and the same phase on $\ket{\Psi_1}$ and $\ket{\Psi_2}$.
			Otherwise, $U_{g}^{(i)}\ket{\Psi}$ would contain relative phases depending on $g$ and the superposition would not transform by a single overall phase.
			This will violate superselection.
			Thus, we must have C3.
						
			\item Non‑triviality.
			Linear independence isn’t needed for ``physical admissibility'', only for ``non-triviality''.
			Specifically, if $\ket{\Psi_{1}}\propto\ket{\Psi_{2}}$ the superposition collapses to a single vector and carries no logical degree of freedom, contradicting the premise.
		\end{enumerate}	
		
		\smallskip\noindent
		\emph{II. Sufficiency.}
		
		Conversely, assume C1, C2, C3, and C4 $\Longrightarrow$ $\ket{\Psi}$ is physically admissible and non‑trivial.
		\begin{enumerate}
			\item Physically admissible.
			
			By C1 and C2, any linear combination of two common eigenvectors of $\hat Q$ (with eigenvalue $Q$) and of all $\hat G_x$ (with eigenvalue $0$) is again an eigenvector with the same set of eigenvalues.
			Thus, $\ket{\Psi}$ lies in the same superselection block as $\ket{\Psi_{1,2}}$.
			
			By C3, every local $U_{g}^{(i)}$ acts as $\chi(g)\mathbf 1$ on each component, then it also acts as $\chi(g)\mathbf 1$ on their superposition:	
			$$
			U_{g}^{(i)}\ket{\Psi}
			= \chi(g)\,\ket{\Psi},
			\qquad \forall g,i .
			$$
			Thus $\ket{\Psi}$ is gauge‑invariant up to a global phase.
			
			Put these two together, we conclude that $\ket{\Psi}$ is physically admissible (not forbidden by any superselection rule).
			
			\item Non‑triviality.
			By C4, the two basis states are linearly independent, so $\ket{\Psi}$ genuinely spans a two‑dimensional logical subspace when $(\alpha,\beta)$ vary.
		\end{enumerate}
		
		Hence $\ket{\Psi}$ is a valid multi-particle packaged superposition state.
	\end{proof}

	Thus, any multi-particle superposition complying with C1-C4 remains inside a single gauge sector and supports non-trivial logic.
	We illustrate with two common cases.

	\begin{example}[GHZ-style entangled state]
		The tiny GHZ-style entangled state $$
		\frac{1}{\sqrt2}
		\bigl(\ket{e^+e^-}_1\ket{\mu^+\mu^-}_2 + \ket{\mu^+\mu^-}_1\ket{e^+e^-}_2\bigr)
		$$
		meets C1-C3, and is non-trivial by C4.
		Thus, it is physically allowed.
	\end{example}

	\begin{example}
		\leavevmode
		\begin{enumerate}
			\item \textbf{$K^0\bar K^0$-pair Bell states.}
			Each neutral kaon is individually a packaged single‑particle state.
			The two‑particle Bell state
			$
			\tfrac1{\sqrt2}\bigl(\ket{K^0}\ket{\bar K^0}
			\pm\ket{\bar K^0}\ket{K^0}\bigr)
			$
			satisfies total electric charge $0$ (C1), Gauss-law $\hat G_{x}\,\ket{\Psi_{k}} = 0$ (C2), identical gauge character (C3), and are linearly independent (orthonormal).
			Thus, the Bell superposition is valid.
			
			\item \textbf{Electron-positron pair vs.\ muon-antimuon pair.}
			The states
			$
			\ket{e^{+}e^{-}}
			$
			and
			$
			\ket{\mu^{+}\mu^{-}}
			$
			carry the same gauge charge profile (each is a QED singlet) but differ by global flavor number.
			Coherent superpositions are allowed in principle, although unstable in practice because of electroweak interactions \cite{Landau1957}.
			
			\item \textbf{Photon pairs.}
			Each photon is its own antiparticle, so condition C4 fails
			(no global observable distinguishes the two basis vectors).
			The corresponding superposition is therefore trivial and cannot offer
			logical degree of freedom for packaged encoding.
		\end{enumerate}
	\end{example}

	\begin{remark}[Relation to single‑particle case]
		For $n=1$ the total‑charge operator reduces to $\hat Q$ and the
		Gauss‑law constraints act trivially, so
		Proposition \ref{PROP:NecessarySufficientConditionsMultiParticle}
		reduces to
		Proposition \ref{COR:NecessarySufficientConditionsSingleParticle}.
	\end{remark}

	\begin{corollary}[Gauge‑invariance]
		\label{COR:MultiSuperpositionGaugeInvariant}
		Every state $\ket{\Phi}$ satisfying the conditions of
		Proposition \ref{PROP:NecessarySufficientConditionsMultiParticle} obeys
		$
		U(g)^{\otimes n}\ket{\Phi}=e^{i\phi(g)}\ket{\Phi},
		\; \forall g\in G,
		$
		and thus remains inside the packaged sector.
	\end{corollary}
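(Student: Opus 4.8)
The plan is to obtain the corollary directly from the sufficiency direction of Proposition~\ref{PROP:NecessarySufficientConditionsMultiParticle}, of which it is essentially a restatement. First I would write the state explicitly as $\ket{\Phi} = \alpha\ket{\Psi_1} + \beta\ket{\Psi_2}$ in the notation of Eq.~\eqref{EQ:ValidPackagedSuperpositionStatesOfMultieParticle}, and invoke condition C3: each single-particle local gauge action satisfies $U_g^{(i)}\ket{\Psi_k} = \chi(g)\ket{\Psi_k}$ with one common one-dimensional character $\chi$ shared by both $k=1,2$ and every tensor factor $i$.

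Next I would identify the global action $U(g)^{\otimes n}$ with the ordered composition of the $n$ commuting single-particle actions $U_g^{(1)},\dots,U_g^{(n)}$, and apply them one factor at a time. Each application contributes a scalar $\chi(g)$ by C3, so that $U(g)^{\otimes n}\ket{\Psi_k} = \chi(g)^n\ket{\Psi_k}$. Writing $\chi(g)=e^{i\theta(g)}$ (legitimate since $\chi$ is $U(1)$-valued) and setting $\phi(g):=n\,\theta(g)$ yields $e^{i\phi(g)}=\chi(g)^n$. By linearity the common phase factors out of the whole superposition,
\[
U(g)^{\otimes n}\ket{\Phi}
= \alpha\,\chi(g)^n\ket{\Psi_1} + \beta\,\chi(g)^n\ket{\Psi_2}
= e^{i\phi(g)}\ket{\Phi},
\]
which is exactly the claimed transformation law. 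Because the action is a pure overall phase, $\ket{\Phi}$ spans a one-dimensional $G$-invariant ray and cannot be rotated into any other sector, establishing the ``remains inside the packaged sector'' clause. I would finish by checking consistency with C1: in an abelian factor $\chi(g)^n$ coincides with the eigenphase $e^{iQ\theta(g)}$ fixed by the common total charge $Q$, so the character exponent and the charge label agree.

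The only subtlety---and the mild obstacle worth spelling out---is to keep the local action $U_g^{(i)}$ on a single particle sharply distinct from the global action $U(g)^{\otimes n}$ on all particles, and to confirm that the product of the $n$ factors collapses to the scalar $\chi(g)^n$ rather than to a $g$-dependent reducible matrix. This is precisely the content of C3: it asserts that each factor acts by a \emph{scalar} character, leaving no internal index to rotate, so the composition is a genuine phase. Everything else is linearity of $U(g)^{\otimes n}$ together with the fact that a one-dimensional character raised to the $n$-th power is again an element of $U(1)$.
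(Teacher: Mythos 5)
Your proof is correct and follows essentially the same route the paper intends: the paper states this corollary without a separate proof, as an immediate consequence of the sufficiency argument in Proposition~\ref{PROP:NecessarySufficientConditionsMultiParticle}, where C3 forces each local action $U_g^{(i)}$ to act by the common scalar $\chi(g)$ and the global action therefore reduces to the overall phase $\chi(g)^n$ on the superposition. Your expansion of that step (composing the $n$ commuting factors, setting $e^{i\phi(g)}=\chi(g)^n$, and factoring the phase out by linearity) is exactly the reasoning the paper sketches in the sufficiency proof and in the earlier theorem on superposition in the packaged physical sector.
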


	\subsection{Measurements in the Physical Packaged Space}
	\label{SEC:MeasurementsInHphys}
	
	Physical measurements correspond to gauge‑invariant observables acting on the physical Hilbert space
	$\mathcal H_{\rm phys} = \Pi_{\rm phys},\mathcal H_{\rm iso}$.
	Any measurement is described by a positive operator‑valued measure (POVM)
	$E_m$ with	
	$$
	E_m \succeq 0,
	\quad
	\sum_m E_m = \mathbb I_{\rm phys},
	\quad
	[E_m,\,U(g)]=0\quad\forall\,g\in G,
	$$	
	where each outcome $m$ occurs with probability	
	$$
	p_m = \langle\Psi|E_m|\Psi\rangle,
	$$	
	and leaves the system in the post‑measurement state
	$$
	|\Psi_m \rangle = \frac{E_m|\Psi\rangle}{\sqrt{p_m}} \in \mathcal H_{\rm phys}.
	$$
	Projective measurements are the special case $E_m=P_m$ with $P_mP_{m'}=\delta_{mm'}P_m$.
	The condition $[E_m,U(g)]=0$ is both necessary and sufficient for an operator to preserve $\mathcal H_{\mathrm{phys}}$.

	\subsubsection{External vs. Internal Measurements}
	
	Gauge invariance constrains any physical measurement to commute with the gauge action,
	that is
	$[M, U(g)]=0$.
	In particular:

	\begin{theorem}[Gauge‐commuting POVMs]
		Let
		$$
		\mathcal H_{\rm phys}
		=\Pi_{\rm phys}\,\mathcal H
		\;=\;\bigl\{\ket\Phi\in\mathcal H:\;U(g)\ket\Phi=\ket\Phi\;\;\forall\,g\in G\}
		$$
		be the gauge‐invariant (physical) subspace of a bipartite Hilbert space
		$\mathcal H=\mathcal H_{\rm int}\otimes\mathcal H_{\rm ext}$,
		where $G$ is a compact gauge group represented unitarily by
		$
		U(g)\;=\;U_{\rm int}(g)\;\otimes\;U_{\rm ext}(g)\,.
		$
		Write an arbitrary physical vector in its Schmidt form,
		$$
		\ket\Psi
		=\sum_{k=1}^r\lambda_k\,
		\underbrace{\ket{\chi_k}}_{\mathcal H_{\rm int}}
		\otimes
		\underbrace{\ket{e_k}}_{\mathcal H_{\rm ext}}
		\,,\qquad
		\lambda_k>0\,,\;\;\braket{\chi_k|\chi_\ell}=\braket{e_k|e_\ell}=\delta_{k\ell}.
		$$	
		Let $\{M_\alpha\}$ be a POVM on $\mathcal H$ subordinate to the factorization
		$\mathcal H_{\rm int}\otimes\mathcal H_{\rm ext}$.
		Suppose each outcome operator $M_\alpha$ commutes with the full gauge action,
		$
		[M_\alpha,\;U(g)]\;=\;0\,,\quad\forall\,g\in G.
		$
		Then each post‐measurement (unnormalized) state
		$\ket{\Psi_\alpha}=M_\alpha\ket\Psi$ also lies in $\mathcal H_{\rm phys}$, and its Schmidt decomposition is obtained simply by discarding those Schmidt‐terms which are annihilated by the local POVM factor.
		In particular:
		\begin{enumerate}
			\item External POVM.
			If
			$$
			M_m \;=\; \mathbb I_{\rm int}\;\otimes\;E_m
			\quad\text{with}\quad
			[E_m,\;U_{\rm ext}(g)]\;=\;0\quad\forall\,g,
			$$
			then
			$$
			\ket{\Psi_m}
			=M_m\ket\Psi
			=\sum_{k=1}^r \lambda_k\,
			\ket{\chi_k}\otimes\bigl(E_m\ket{e_k}\bigr).
			$$
			
			The probability of outcome $m$ is
			$\;p_m=\|\Psi_m\|^2=\sum_k\lambda_k^2\,\|E_m\ket{e_k}\|^2,$
			and only those $k$ with $E_m\ket{e_k}\neq0$ survive in the normalized post‐measurement state.  Moreover
			$\ket{\Psi_m}\in\mathcal H_{\rm phys}$.
			
			\item Internal POVM.
			If		
			$$
			M_n \;=\; E_n\;\otimes\;\mathbb I_{\rm ext}
			\quad\text{with}\quad
			[E_n,\;U_{\rm int}(g)]\;=\;0\quad\forall\,g,
			$$
			then
			$$
			\ket{\Psi_n}
			=M_n\ket\Psi
			=\sum_{k=1}^r \lambda_k\,
			\bigl(E_n\ket{\chi_k}\bigr)\otimes\ket{e_k},
			$$
			with $p_n=\sum_k\lambda_k^2\,\|E_n\ket{\chi_k}\|^2$, and again $\ket{\Psi_n}\in\mathcal H_{\rm phys}$.
		\end{enumerate}
	\end{theorem}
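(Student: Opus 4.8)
The plan is to separate the two logically independent assertions: (i) that every gauge-commuting $M_\alpha$ preserves the physical subspace $\mathcal H_{\rm phys}$, and (ii) that the post-measurement state is obtained by pruning the Schmidt terms killed by the local POVM factor. The first is the crux and is almost immediate. Since $\ket\Psi\in\mathcal H_{\rm phys}$ means $U(g)\ket\Psi=\ket\Psi$ for all $g\in G$, and since $[M_\alpha,U(g)]=0$ by hypothesis, I would write $U(g)\,M_\alpha\ket\Psi=M_\alpha\,U(g)\ket\Psi=M_\alpha\ket\Psi$, so that $\ket{\Psi_\alpha}=M_\alpha\ket\Psi$ is again gauge-invariant and hence lies in $\mathcal H_{\rm phys}=\Pi_{\rm phys}\mathcal H$. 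This single line already delivers the ``$\ket{\Psi_m}\in\mathcal H_{\rm phys}$'' closure claims of both specialisations, once I have checked that the specialised operators are indeed gauge-commuting.

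For the external case I would verify $[M_m,U(g)]=0$ directly from the factorisation: with $M_m=\mathbb I_{\rm int}\otimes E_m$ and $U(g)=U_{\rm int}(g)\otimes U_{\rm ext}(g)$, the commutator collapses to $U_{\rm int}(g)\otimes[E_m,U_{\rm ext}(g)]$, which vanishes precisely because $[E_m,U_{\rm ext}(g)]=0$. The explicit form of $\ket{\Psi_m}$ then follows by applying linearity term-by-term to the Schmidt sum, since $E_m$ acts only on the external slot, giving $\ket{\Psi_m}=\sum_k\lambda_k\,\ket{\chi_k}\otimes(E_m\ket{e_k})$; the probability $p_m=\|\Psi_m\|^2$ is computed using $\braket{\chi_k|\chi_\ell}=\delta_{k\ell}$ to orthogonalise the internal slot, leaving $p_m=\sum_k\lambda_k^2\,\|E_m\ket{e_k}\|^2$. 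The internal case is the mirror image: one checks $[E_n\otimes\mathbb I_{\rm ext},U(g)]=[E_n,U_{\rm int}(g)]\otimes U_{\rm ext}(g)=0$ and repeats with the two factors exchanged. As a structural remark confirming consistency, I would note that uniqueness of the Schmidt decomposition applied to $U(g)\ket\Psi=\ket\Psi$ forces $U_{\rm int}(g)\ket{\chi_k}=e^{i\theta_k(g)}\ket{\chi_k}$ and $U_{\rm ext}(g)\ket{e_k}=e^{-i\theta_k(g)}\ket{e_k}$ (within each $\lambda$-degenerate block), so that each Schmidt pair is already gauge-covariant with conjugate characters.

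The one point that needs care, and which I regard as the main obstacle, is the word ``Schmidt'' in the conclusion. After an external POVM element the internal kets $\ket{\chi_k}$ are untouched and remain orthonormal, but the transformed external kets $E_m\ket{e_k}$ are in general neither orthogonal nor normalised, so $\sum_k\lambda_k\,\ket{\chi_k}\otimes(E_m\ket{e_k})$ is not literally a Schmidt decomposition. My plan is to state the conclusion honestly: the internal Schmidt support can only shrink, precisely the terms with $E_m\ket{e_k}=0$ dropping out, while a canonical Schmidt form is recovered by a single re-orthonormalisation, equivalently by diagonalising the reduced internal density matrix $\operatorname{Tr}_{\rm ext}\bigl(\ket{\Psi_m}\!\bra{\Psi_m}\bigr)$. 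When $E_m$ is a projector onto a coordinate subspace of the Schmidt basis no re-orthonormalisation is needed and the pruned sum is already Schmidt, which is the situation the statement has in mind; I would append a short remark flagging this so that the general POVM version is not over-claimed.
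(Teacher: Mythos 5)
Your proposal is correct and follows essentially the same route as the paper's proof: the commutation argument $U(g)\,M_\alpha\ket\Psi = M_\alpha\,U(g)\ket\Psi = M_\alpha\ket\Psi$ (the paper first establishes $M\,\Pi_{\rm phys}=\Pi_{\rm phys}\,M$ via the Haar-integral form of the projector and then restates the conclusion in exactly your one-line form), followed by term-by-term application of the factorized POVM element to the Schmidt sum and a norm computation that uses the internal orthonormality $\braket{\chi_k|\chi_\ell}=\delta_{k\ell}$ to kill the cross terms. Your explicit check that $[\mathbb I_{\rm int}\otimes E_m,\;U_{\rm int}(g)\otimes U_{\rm ext}(g)] = U_{\rm int}(g)\otimes[E_m,U_{\rm ext}(g)]=0$ is a small step the paper uses implicitly but never writes down; including it is an improvement, since the theorem's hypothesis for the specialized cases is stated only for the local factor.

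The one substantive difference is your treatment of the word ``Schmidt,'' and there you are more careful than the paper itself. The paper's proof asserts that the surviving sum ``is a perfectly good (normalized) Schmidt expansion of $\ket{\Psi_m}$,'' which is false for a general POVM element: the vectors $E_m\ket{e_k}$ need not be mutually orthogonal or normalized, so $\sum_k\lambda_k\,\ket{\chi_k}\otimes\bigl(E_m\ket{e_k}\bigr)$ is a valid expansion in a product basis of one factor but not literally a Schmidt decomposition unless $E_m$ is, e.g., a projector diagonal in the $\{\ket{e_k}\}$ basis. Your fix --- stating that the Schmidt support can only shrink (terms with $E_m\ket{e_k}=0$ drop out) and that a canonical Schmidt form is recovered by diagonalizing $\operatorname{Tr}_{\rm ext}\bigl(\ket{\Psi_m}\!\bra{\Psi_m}\bigr)$ --- is the honest version of the conclusion, and appending it as a remark would repair an over-claim present in both the theorem statement and the paper's own proof. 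Your side observation about conjugate characters on Schmidt pairs is fine as stated (with the caveat, which you note, that degenerate $\lambda$ blocks admit conjugate unitary mixing rather than mere phases), though it is not needed for the argument.
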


	\begin{proof}
		We break the argument into three Lemmas.
		
		\begin{enumerate}
			\item Gauge commutation $\Rightarrow$ preserves $\mathcal H_{\rm phys}$.

			Let $M$ be any bounded operator on $\mathcal H$ such that
			$$
			[M, U(g)] = 0,
			\quad
			\forall \, g \in G.
			$$			
			According to Definition \ref{EQ:CompactGaugeProjector}, we have
			$$
			M \Pi_{\rm phys} = \int_G dg M U(g) = \int_G dg U(g) M = \Pi_{\rm phys} M,
			$$
			so whenever $\ket\Phi\in\mathcal H_{\rm phys}$ (i.e.\ $\Pi_{\rm phys}\ket\Phi=\ket\Phi$), one has
			$$
			\Pi_{\rm phys}\,\bigl(M\ket\Phi\bigr)
			= M \left(\Pi_{\rm phys}\ket\Phi\right)
			= M \ket\Phi,
			$$
			and hence $M\ket\Phi\in\mathcal H_{\rm phys}$.

			\item Action of an external operator in Schmidt form:
			
			Starting from
			$$
			\ket\Psi
			=\sum_{k=1}^r\lambda_k\,
			\ket{\chi_k}\otimes\ket{e_k},
			$$
			apply $M_m=\mathbb I_{\rm int}\otimes E_m$.
			Linearity gives
			$$
			M_m\ket\Psi
			=\sum_k\lambda_k\,
			\bigl(\mathbb I_{\rm int}\ket{\chi_k}\bigr)
			\otimes
			\bigl(E_m\ket{e_k}\bigr)
			=\sum_{k=1}^r\lambda_k\,
			\ket{\chi_k}\otimes\bigl(E_m\ket{e_k}\bigr).
			$$		
			The norm squared is
			$$
			\bigl\|M_m\ket\Psi\bigr\|^2
			=\sum_{k,\ell}\lambda_k\lambda_\ell
			\underbrace{\braket{\chi_k|\chi_\ell}}_{\delta_{k\ell}}
			\underbrace{\bigl\langle e_k\bigm|E_m^\dagger E_m\bigm|e_\ell\bigr\rangle}_{\;=\delta_{k\ell}\,\|E_m e_k\|^2}
			=\sum_{k=1}^r\lambda_k^2\,\bigl\|E_m\ket{e_k}\bigr\|^2.
			$$		
			Define
			$$
			p_m=\|M_m\ket\Psi\|^2,
			\qquad
			\ket{\Psi_m}=\frac{1}{\sqrt{p_m}}\,M_m\ket\Psi.
			$$		
			Clearly any term with $E_m\ket{e_k}=0$ drops out of the sum, and the remainder is a perfectly good (normalized) Schmidt expansion of $\ket{\Psi_m}$.

			\item Gauge‐invariance of the collapsed state:
			
			Since $M_m$ commutes with the full $U(g)=U_{\rm int}(g)\otimes U_{\rm ext}(g)$, by Lemma 1 we know
			$\ket{\Psi_m}=M_m\ket\Psi\in\mathcal H_{\rm phys}$.
			Equivalently,
			$$
			U(g)\ket{\Psi_m}
			=\;U(g)\,M_m\ket\Psi
			=\;M_m\,U(g)\ket\Psi
			=\;M_m\ket\Psi
			=\ket{\Psi_m},
			$$
			so the post‐measurement state remains gauge‐invariant.
		\end{enumerate}	
	\end{proof}

	\subsubsection{Collapse of Hybrid‑Packaged Entanglement}

	We now prove a theorem on the measurement collapse in $\mathcal H_{\rm phys}$:
		
	\begin{theorem}[Collapse in $\mathcal H_{\rm phys}$]
		\label{THM:ExtMeasCollapse}
		Let		
		$
		\mathcal H_{\rm phys} = \mathcal H_{\rm int}\,\otimes\,\mathcal H_{\rm ext}
		$	
		be the physical (gauge-singlet) subspace and let		
		$$
		|\Psi\rangle
		=\sum_{k}\lambda_k\,|\chi_k\rangle_{\rm int}\,\otimes\,|e_k\rangle_{\rm ext}
		$$		
		be its Schmidt decomposition across the internal $\otimes$ external split, with $\langle\chi_k|\chi_{k'}\rangle=\langle e_k|e_{k'}\rangle=\delta_{kk'}$ and $\lambda_k 0$.
		Then:
		
		\begin{enumerate}
			\item [1.] \emph{External measurement.}
			Let
			$M_m^{\rm ext}=\mathbb I_{\rm int}\otimes E_m^{\rm ext}$ be any external POVM.
			For each outcome with $p_m 0$, the post‑measurement state			
			$$
			|\Psi_m\rangle=\frac{M_m^{\rm ext}|\Psi\rangle}{\sqrt{p_m}}
			$$
			
			satisfies:
			\begin{itemize}
				\item \emph{Gauge invariance preserved:}
				$|\Psi_m\rangle$ remains in the same net‑charge sector.
				
				\item \emph{Entanglement collapse:}
				Only those Schmidt components $|e_k\rangle$ with
				$E_m^{\rm ext}|e_k\rangle\neq0$ survive, but projects onto the $E_m$-support in the external factor so that off-diagonal coherences between different $|e_k\rangle$ vanish (killing all external coherence).
				$p_m>0$ iff at least one $k$ satisfies $\lVert E_m | e_k\rangle\rVert \neq 0$.
			\end{itemize}
			
			\item [2.] \emph{Internal measurement.}
			Let
			$M_n^{\rm int}=E_n^{\rm int}\otimes\mathbb I_{\rm ext}$ be any internal POVM.
			For each outcome with $p_n>0$, the post‑measurement state			
			$$
			|\Psi_n\rangle=\frac{M_n^{\rm int}|\Psi\rangle}{\sqrt{p_n}}
			$$			
			likewise preserves gauge invariance and collapses the Schmidt sum to those terms satisfying $E_n^{\rm int}|\chi_k\rangle\neq0$.
			In particular, no leakage into non-singlet sectors can occur.
		\end{enumerate}
	\end{theorem}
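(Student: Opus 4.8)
The plan is to recognize this theorem as a direct specialization of the preceding \emph{Gauge-commuting POVMs} theorem, so that most of the machinery is already in place and what remains is to extract the sharpened collapse statements. First I would verify the reduction hypothesis: an external outcome operator $M_m^{\rm ext}=\mathbb I_{\rm int}\otimes E_m^{\rm ext}$ with $[E_m^{\rm ext},U_{\rm ext}(g)]=0$ automatically commutes with the full gauge action $U(g)=U_{\rm int}(g)\otimes U_{\rm ext}(g)$, since $\mathbb I_{\rm int}$ commutes with $U_{\rm int}(g)$ trivially while the external factors commute by assumption. Invoking the first lemma of that theorem (gauge commutation $\Rightarrow$ preservation of $\mathcal H_{\rm phys}$, proved there via $M\Pi_{\rm phys}=\Pi_{\rm phys}M$), the post-measurement vector $|\Psi_m\rangle$ stays in $\mathcal H_{\rm phys}$ and hence in the same net-charge (singlet) sector. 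The identical argument with $\rm int\leftrightarrow\rm ext$ swapped handles the internal POVM, and the ``no leakage into non-singlet sectors'' clause is then merely the restatement that $|\Psi_n\rangle\in\mathcal H_{\rm phys}=V_{\mathbf 0}\otimes M_{\mathbf 0}$.

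Next I would compute the action in Schmidt form. Applying $M_m^{\rm ext}$ term by term to $|\Psi\rangle=\sum_k\lambda_k|\chi_k\rangle\otimes|e_k\rangle$ gives $M_m^{\rm ext}|\Psi\rangle=\sum_k\lambda_k|\chi_k\rangle\otimes(E_m^{\rm ext}|e_k\rangle)$, and orthonormality of the $\{|\chi_k\rangle\}$ collapses the norm to $p_m=\sum_k\lambda_k^2\|E_m^{\rm ext}|e_k\rangle\|^2$, whence $p_m>0$ iff some $k$ has $E_m^{\rm ext}|e_k\rangle\neq0$. This already shows that precisely the Schmidt components annihilated by $E_m^{\rm ext}$ drop out. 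To establish the coherence-collapse claim cleanly I would specialize to a sharp (rank-one projective) external measurement $E_m^{\rm ext}=|m\rangle\langle m|$: then $E_m^{\rm ext}|e_k\rangle=\langle m|e_k\rangle\,|m\rangle$ and the surviving state factorizes as $\big(\sum_k\lambda_k\langle m|e_k\rangle|\chi_k\rangle\big)\otimes|m\rangle$, a manifest product across the internal$\otimes$external cut, so all cross-domain coherence is destroyed and the state disentangles — the content of ``killing all external coherence''.

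The main obstacle I anticipate is making the phrase ``off-diagonal coherences between different $|e_k\rangle$ vanish'' mathematically precise for a \emph{general} POVM rather than a sharp projector, since for a coarse-grained $E_m^{\rm ext}$ the surviving external vectors $E_m^{\rm ext}|e_k\rangle$ need not be mutually orthogonal and the state need not factorize. My plan there is to argue at the level of the external reduced density matrix: after the measurement the external support is confined to $\operatorname{ran}E_m^{\rm ext}$, and the honest disentanglement holds exactly when the measurement is fine enough to resolve the Schmidt basis (the sharp case above). I would therefore state the survival and gauge-invariance conclusions for all gauge-commuting POVMs, and flag full disentanglement as the sharp-measurement specialization, so the theorem is proved without overclaiming. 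The internal case follows verbatim by exchanging the two tensor factors, with gauge invariance again guaranteed because $[E_n^{\rm int},U_{\rm int}(g)]=0$ forces $M_n^{\rm int}$ to commute with $U(g)$, keeping $|\Psi_n\rangle$ inside the singlet subspace.
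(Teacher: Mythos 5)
Your proposal is correct and follows essentially the same route as the paper's proof: expand $M_m^{\rm ext}|\Psi\rangle$ term by term in the Schmidt basis, read off $p_m=\sum_k\lambda_k^2\|E_m^{\rm ext}|e_k\rangle\|^2$, and obtain preservation of $\mathcal H_{\rm phys}$ from the commutation $M\,\Pi_{\rm phys}=\Pi_{\rm phys}\,M$, with the internal case handled by swapping tensor factors. The only structural difference is that you invoke the preceding gauge-commuting-POVM theorem as a black box, whereas the paper re-derives the same commutation argument inline; this is cosmetic.

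One point where your treatment is actually sharper than the paper's: the coherence-collapse clause. The paper's proof asserts that \emph{all surviving terms share the same external label (up to normalization)}, which is true only when $E_m^{\rm ext}$ is a rank-one projector $|m\rangle\langle m|$; for a general POVM element the vectors $E_m^{\rm ext}|e_k\rangle$ are neither proportional nor mutually orthogonal, and the post-measurement state need not factorize across the internal$\otimes$external cut. Your plan — prove survival, the $p_m>0$ criterion, and gauge invariance for arbitrary gauge-commuting POVMs, and establish full disentanglement only in the sharp-measurement specialization — proves exactly what is provable without overclaiming, and in that respect repairs a looseness in the paper's own argument rather than merely reproducing it.
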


	\begin{proof}
		\
		\begin{enumerate}
			\item External measurement.
			
			For any POVM on the external subsystem		
			$$
			M_m^{\rm ext}
			= \mathbb I_{\rm int}\,\otimes\,E_m^{\rm ext},
			\quad
			\sum_m (M_m^{\rm ext})^\dagger M_m^{\rm ext} = \mathbb I_{\rm phys},
			\quad
			[\,M_m^{\rm ext},\,U(g)^{\rm int}\otimes I_{\rm ext}\,]=0,
			$$		
			outcome $m$ occurs with probability		
			$$
			p_m
			= \langle\Psi|
			(M_m^{\rm ext})^\dagger M_m^{\rm ext}
			|\Psi\rangle
			= \sum_k \lambda_k^2\,\bigl\lVert E_m^{\rm ext}\,|e_k\rangle\bigr\rVert^2,
			$$		
			and the post-measurement (normalized) state		
			$$
			|\Psi_m\rangle
			= \frac{M_m^{\rm ext}\,|\Psi\rangle}{\sqrt{p_m}}
			= \frac{1}{\sqrt{p_m}}
			\sum_k \lambda_k\;
			|\chi_k\rangle_{\rm int}\,\otimes\,
			E_m^{\rm ext}\,|e_k\rangle_{\rm ext}
			$$
			
			satisfies:
			\begin{itemize}
				\item Gauge invariance \& charge preservation.
				Since $[M_m^{\rm ext},\,\Pi_{\rm phys}]=0$ (they commute with every local gauge transformation), and $ |\Psi\rangle\in\mathcal H_{\rm phys}$ implies $\Pi_{\rm phys}|\Psi\rangle=|\Psi\rangle$, we have		
				$$
				\Pi_{\rm phys}\,|\Psi_m\rangle
				= \frac{\Pi_{\rm phys}\,M_m^{\rm ext}\,\Pi_{\rm phys}}
				{\sqrt{p_m}}
				\,|\Psi\rangle
				= \frac{M_m^{\rm ext}\,\Pi_{\rm phys}}{\sqrt{p_m}}
				\,|\Psi\rangle
				= |\Psi_m\rangle.
				$$		
				Hence $ |\Psi_m\rangle$ remains in the physical (same-charge) subspace.
				
				\item Entanglement collapse.
				In the unnormalized vector
				$\;M_m^{\rm ext}\,|\Psi\rangle = \sum_k \lambda_k\,|\chi_k\rangle\otimes E_m^{\rm ext}|e_k\rangle,$
				precisely those terms for which $E_m^{\rm ext}|e_k\rangle=0$ vanish.
				All surviving terms share the same external label (up to normalization), so any off-diagonal coherence between distinct $|e_k\rangle$ is gone.
			\end{itemize}
			
			\item Internal measurement.
			
			The argument is entirely parallel.
			For any internal POVM		
			$$
			M_n^{\rm int}
			= E_n^{\rm int}\,\otimes\,\mathbb I_{\rm ext},
			\quad
			\sum_n (M_n^{\rm int})^\dagger M_n^{\rm int} = \mathbb I_{\rm phys},
			\quad
			[\,E_n^{\rm int},\,U(g)^{\rm int}\,]=0,
			$$		
			we define		
			$$
			p_n
			= \langle\Psi| (M_n^{\rm int})^\dagger M_n^{\rm int} |\Psi\rangle
			= \sum_k \lambda_k^2\,\bigl\lVert E_n^{\rm int}\,|\chi_k\rangle\bigr\rVert^2,
			$$		
			and		
			$$
			|\Psi_n\rangle
			= \frac{M_n^{\rm int}\,|\Psi\rangle}{\sqrt{p_n}}
			= \frac{1}{\sqrt{p_n}}
			\sum_k \lambda_k\;
			E_n^{\rm int}\,|\chi_k\rangle_{\rm int}\,\otimes\,|e_k\rangle_{\rm ext}.
			$$		
			Then:
			\begin{itemize}
				\item Gauge invariance \& charge preservation follow by the same commutation argument with $\Pi_{\rm phys}$.
				
				\item Entanglement collapse now kills exactly those Schmidt terms for which $E_n^{\rm int}|\chi_k\rangle=0$, thus removing any coherence across the internal labels.
			\end{itemize}
		\end{enumerate}

		In both cases, because each $M$ commutes with the gauge projector, no component ever leaks into a non-singlet charge sector.
	\end{proof}

	This theorem shows that any physical measurement, whether probing external kinematics or internal packaging,
	acts entirely within the gauge‑singlet subspace, enforcing superselection and collapsing hybrid entanglement while preserving net charge.

	\begin{example}[Qubit-IQN Hybrid]
		A two‑particle state in $\mathcal H_{\rm phys}$,		
		$$
		|\Psi\rangle = \tfrac1{\sqrt2}(|\phi_+\rangle_{\rm int}\otimes|0\rangle_{\rm ext}
		+|\phi_-\rangle_{\rm int}\otimes|1\rangle_{\rm ext}),
		$$		
		with $|\phi_\pm\rangle_{\rm int}$ gauge singlets, is measured projectively on ${|0>,|1>}_{\rm ext}$:		
		$$
		M_i = I_{\rm int}\otimes|i\rangle\langle i|,
		\quad i=0,1.
		$$		
		The outcomes
		$M_0|\Psi\rangle=\tfrac1{\sqrt2}|\phi_+\rangle\otimes|0\rangle,
		\;M_1|\Psi\rangle=\tfrac1{\sqrt2}|\phi_-\rangle\otimes|1\rangle$
		preserve gauge‑singleticity and collapse the external-internal superposition.
	\end{example}

	\begin{example}[Colour Singlets in QCD]
		In the colour‑singlet meson		
		$$
		|\Psi\rangle=\tfrac1{\sqrt3}\sum_i|q_i\bar q^i\rangle\otimes|\sigma\rangle_{\rm spin},
		$$		
		a spin‑measurement
		$E_{\uparrow}=|\uparrow\rangle\langle \uparrow|$ on one quark yields
		$(E_{\uparrow}\otimes I_{\rm colour})|\Psi\rangle$,
		which remains in the $Q=0$ sector and collapses only the spin degree of freedom.
	\end{example}

	\begin{corollary}[Measurement-induced collapse of hybrid-packaged entangled state]
		Let
		$
		|\Psi\rangle\;\in\;\mathcal H_{\rm iso}
		$
		be a hybrid-packaged entangled state (Def. \ref{DEF:PackagedEntangledState}) which survives the gauge projection so that
		$
		|\Psi\rangle\;\in\;\mathcal H_{\rm phys}\subset\mathcal H_{\rm iso}.
		$
		Then:
		\begin{enumerate}
			\item Fully hybrid-packaged entangled states.
			If $|\Psi\rangle$ is fully hybrid-packaged entangled (that is, entangled both within the internal IQNs, within the external DOFs, and across internal $\leftrightarrow$ external), then any single-subsystem POVM measurement, whether acting on the IQNs or on the external label, completely collapses the joint wavefunction, projecting both domains into a definite post-measurement state in $\mathcal H_{\rm phys}$.
			
			\item Domain-separable hybrid-packaged entangled states.
			If instead $|\Psi\rangle$ factorizes into	
			$$
			|\Psi\rangle
			= 
			|\Psi_{\rm int}\rangle\;\otimes\;|\Psi_{\rm ext}\rangle
			\quad
			\in\;\mathcal H_{\rm int}\otimes\mathcal H_{\rm ext}
			=\mathcal H_{\rm phys},
			$$	
			then a POVM on the internal IQNs only collapses $|\Psi_{\rm int}\rangle$ (leaving $|\Psi_{\rm ext}\rangle$ untouched), and conversely a POVM on the external DOFs only collapses $|\Psi_{\rm ext}\rangle$.
		\end{enumerate}
	\end{corollary}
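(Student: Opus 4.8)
The plan is to derive this corollary directly from Theorem~\ref{THM:ExtMeasCollapse} together with the structural dichotomy of Definition~\ref{DEF:PackagedEntangledState}, reading both cases through the Schmidt decomposition across the internal $\otimes$ external cut. First I would note that since $|\Psi\rangle\in\mathcal H_{\rm phys}$ and every admissible POVM element commutes with the full gauge action $U(g)=U_{\rm int}(g)\otimes U_{\rm ext}(g)$, the first lemma inside the proof of Theorem~\ref{THM:ExtMeasCollapse} already guarantees that the collapsed vector stays in $\mathcal H_{\rm phys}$. This settles the ``gauge invariance preserved'' clause uniformly for both parts, so the remaining work is purely about how the Schmidt sum collapses and which domains are affected.

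For Part~2 (domain-separable) I would proceed first because it is the cleaner case. By hypothesis the cross-cut Schmidt rank is exactly one, $|\Psi\rangle=|\Psi_{\rm int}\rangle\otimes|\Psi_{\rm ext}\rangle$. An internal POVM $M_n=E_n\otimes\mathbb I_{\rm ext}$ then acts by linearity as $M_n|\Psi\rangle=(E_n|\Psi_{\rm int}\rangle)\otimes|\Psi_{\rm ext}\rangle$, so after normalisation the external factor is literally unchanged; the symmetric statement for an external POVM follows identically. This is exactly the rank-one specialisation of items~1 and~2 of Theorem~\ref{THM:ExtMeasCollapse}, so no new computation is required beyond invoking the factorised form.

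For Part~1 (fully hybrid-packaged) the essential point is that non-factorizability across the int-ext cut forces Schmidt rank $r\ge 2$ with $\{|\chi_k\rangle\}$ and $\{|e_k\rangle\}$ each orthonormal, so the two domains are perfectly correlated term by term. Applying the external branch of Theorem~\ref{THM:ExtMeasCollapse}, a sharp outcome $E_m$ that retains a single external Schmidt vector $|e_k\rangle$ simultaneously retains its unique partner $|\chi_k\rangle$, so the joint state collapses to the product $|\chi_k\rangle\otimes|e_k\rangle$; an internal measurement collapses both domains by the mirror argument. For a general effect I would observe that the surviving index set $\{k:\,E_m|e_k\rangle\neq 0\}$ controls both factors, so the conditional internal state is necessarily altered whenever the external measurement is informative, which is the operational content of ``both domains collapse together.''

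The main obstacle I anticipate is the phrase ``completely collapses \dots into a definite post-measurement state'': a coarse-grained POVM element need not project onto a single Schmidt pair, so the literal ``definite state'' reading holds cleanly only for rank-one (sharp) effects. I would therefore state the collapse for projective or rank-one measurements as the sharp case and, for general POVMs, prove the weaker but always-true statement that the reduced state on the unmeasured domain changes whenever $r\ge 2$. The only genuinely delicate check is that discarding annihilated Schmidt terms cannot leak amplitude into a non-singlet sector, but this is guaranteed once more by the commutation $[M_\alpha,\Pi_{\rm phys}]=0$ from Theorem~\ref{THM:ExtMeasCollapse}, so gauge integrity is automatic throughout and the proof reduces to bookkeeping on the two Schmidt ranks.
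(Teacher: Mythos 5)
Your proposal is correct and follows essentially the same route as the paper's own proof: Schmidt-decompose $|\Psi\rangle$ across the internal $\otimes$ external cut, apply the POVM by linearity, and use $[M,\Pi_{\rm phys}]=0$ to keep everything inside $\mathcal H_{\rm phys}$; the domain-separable case is handled as the Schmidt-rank-one specialisation, exactly as the paper does.

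One point where you are actually \emph{more} careful than the paper deserves mention. The paper's proof of Part~1 claims that the measurement ``kills every off-diagonal coherence in the external label'' and that the state collapses to ``a single term in the external basis,'' but then concedes parenthetically that an internal superposition survives ``among those $\chi_k$ whose $E_m^{\rm ext}|e_k\rangle\neq 0$.'' If several Schmidt indices survive and the vectors $E_m^{\rm ext}|e_k\rangle$ are not mutually proportional, the post-measurement state is still entangled across the cut, so the literal ``definite post-measurement state'' conclusion only holds for sharp (rank-one or projective) effects. You identified precisely this gap and proposed the correct repair: state the full-collapse conclusion for sharp effects, and for coarse-grained POVMs prove only that the reduced state on the unmeasured domain is necessarily disturbed whenever the Schmidt rank is at least two. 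That qualification is needed for the corollary as worded, and your version of the argument is the one that actually closes it.
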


	\begin{proof}
		Let $\mathcal H_{\rm phys}=\mathcal H_{\rm int}\otimes\mathcal H_{\rm ext}$ denote the physical packaged subspace, and write the Schmidt decomposition of our pre-measurement state in that bipartition:
		$$
		|\Psi\rangle 
		\;=\; 
		\sum_{k}\lambda_k\;\bigl|\chi_k\bigr\rangle_{\rm int}\;\otimes\;\bigl|e_k\bigr\rangle_{\rm ext},
		\qquad
		\lambda_k>0,\quad
		\langle\chi_i|\chi_j\rangle
		=\langle e_i|e_j\rangle
		=\delta_{ij}.
		$$
		
		\begin{enumerate}
			\item Fully hybrid-packaged entangled case
			
			Let $\{E_m^{\rm ext}\}$ be a POVM on the external subsystem, so	
			$$
			M_m^{\rm ext}
			\;=\;
			\mathbb I_{\rm int}
			\;\otimes\;
			E_m^{\rm ext},
			\qquad
			\sum_m (M_m^{\rm ext})^\dagger M_m^{\rm ext}
			= \mathbb I_{\rm phys},
			$$	
			and $[E_m^{\rm ext},U(g)^{\rm ext}]=0$ so that $M_m^{\rm ext}$ commutes with the gauge projector.
			Then for each outcome $m$ with nonzero probability	
			$$
			p_m 
			\;=\; 
			\langle\Psi|\,(M_m^{\rm ext})^\dagger M_m^{\rm ext}\,|\Psi\rangle
			\;=\;
			\sum_k\lambda_k^2\,\bigl\lVert E_m^{\rm ext}|e_k\rangle\bigr\rVert^2
			\;>\;0,
			$$	
			the post-measurement (unnormalized) state is	
			$$
			M_m^{\rm ext}\,|\Psi\rangle
			\;=\;
			\sum_k\lambda_k\;\bigl|\chi_k\bigr\rangle_{\rm int}\;\otimes\;\bigl(E_m^{\rm ext}|e_k\rangle\bigr)_{\rm ext}.
			$$
			
			Any term with $E_m^{\rm ext}|e_k\rangle=0$ drops out.
			Renormalizing,	
			$$
			|\Psi_m\rangle
			\;=\;
			\frac{1}{\sqrt{p_m}}\;M_m^{\rm ext}\,|\Psi\rangle
			\;\in\;\mathcal H_{\rm phys}.
			$$
			\begin{itemize}
				\item Gauge-invariance \& charge preservation:
				Since $M_m^{\rm ext}$ commutes with the gauge projector, $ |\Psi_m\rangle$ remains in the same fixed-charge subspace of $\mathcal H_{\rm phys}$.
				
				\item Complete collapse of hybrid entanglement:
				Since a fully hybrid-packed state has all of its Schmidt coefficients nonzero, the selection $E_m^{\rm ext}$ kills every off-diagonal coherence in the external label \cite{Zurek1981}.
				The remaining state is a single term in the external basis (up to internal superposition among those $\chi_k$ whose $E_m^{\rm ext}|e_k\rangle\neq0$), that is, the hybrid entanglement is fully resolved into a definite external outcome and the associated internal projection.
			\end{itemize}
			
			An exactly analogous argument applies if one instead performs a POVM $\{E_n^{\rm int}\}$ on the internal subsystem:	
			$$
			M_n^{\rm int}
			=\;
			E_n^{\rm int}\,\otimes\,\mathbb I_{\rm ext},
			$$	
			which again commutes with the gauge projector and therefore preserves $\mathcal H_{\rm phys}$.
			Expanding in the same Schmidt basis,	
			$$
			M_n^{\rm int}\,|\Psi\rangle
			\;=\;
			\sum_k \lambda_k\,\bigl(E_n^{\rm int}|\chi_k\rangle\bigr)_{\rm int}\;\otimes\;|e_k\rangle_{\rm ext},
			$$	
			only those internal Schmidt terms survive, and normalization completes the collapse.  This simultaneously fixes an internal outcome and removes all coherence across external labels that were entangled with the eliminated internal terms.
			
			\item Domain-separable hybrid-packaged entangled case.
			
			If instead	
			$$
			|\Psi\rangle 
			= 
			|\Psi_{\rm int}\rangle 
			\;\otimes\;
			|\Psi_{\rm ext}\rangle,
			$$	
			then by definition there is no cross-Schmidt decomposition at all, each subsystem state is pure.
			Hence:
			\begin{itemize}
				\item A POVM on the internal subsystem, $\{E_n^{\rm int}\}$, acts as
				$\bigl(E_n^{\rm int}\otimes\mathbb I_{\rm ext}\bigr)$
				and collapses $|\Psi_{\rm int}\rangle$ alone, leaving $|\Psi_{\rm ext}\rangle$ unchanged.
				
				\item A POVM on the external subsystem,
				$\{E_m^{\rm ext}\}$, acts as
				$\bigl(\mathbb I_{\rm int}\otimes E_m^{\rm ext}\bigr)$
				and collapses $|\Psi_{\rm ext}\rangle$ alone, leaving $|\Psi_{\rm int}\rangle$ untouched.
			\end{itemize}
		\end{enumerate}
		In both cases, the measurement operators commute with the gauge projector.
		So the result remains in the same physical Hilbert space and net-charge sector.
	\end{proof}

	\begin{example}[POVM measurement on hybrid-packaged entangled state]
		Let us now reconsider Example~\ref{EXM:HybridPackagedEntangled} in physical packaged subspace where we can carry out a real measurement.
		\begin{enumerate}
			\item Fully hybrid-packaged entangled spin-momentum-charge pair:						
			\[
			\alpha \,\hat{a}_{e^-,\uparrow}^\dagger(\mathbf p)\,\hat{b}_{e^+,\downarrow}^\dagger(\mathbf{-p})\,\lvert0\rangle
			\;+\;
			\beta \,\hat{b}_{e^+,\downarrow}^\dagger(\mathbf{-p})\,\hat{a}_{e^-,\uparrow}^\dagger(\mathbf p)\,\lvert0\rangle.
			\]
			
			If we measure the first particle's charge and obtain $Q=-e$ or measure the first particle's spin and obtain spin $\uparrow$, then the state collapses to
			\[
			\hat{a}_{e^-,\uparrow}^\dagger(\mathbf p)\,\hat{b}_{e^+,\downarrow}^\dagger(\mathbf{-p})\,\lvert0\rangle.
			\]
			
			If we measure the first particle's charge and obtain $Q=e$ or measure the first particle's spin and obtain spin $\downarrow$, then the state collapses to
			\[
			\hat{b}_{e^+,\downarrow}^\dagger(\mathbf{-p})\,\hat{a}_{e^-,\uparrow}^\dagger(\mathbf p)\,\lvert0\rangle.
			\]
			
			In both cases, the net $Q=0$ sector is preserved, but the post-measurement outcome breaks the entanglement structure.
			
			\item Domain-separable hybrid-packaged entangled spin-momentum-charge pair:
			\[
			\left(\frac{1}{\sqrt{2}}\right)^3
			\Bigl(
			\lvert e^-\rangle_1\,\lvert e^+ \rangle_2 
			\;+\;
			\lvert e^+\rangle_1\,\lvert e^- \rangle_2
			\Bigr)
			\Bigl(
			\lvert \uparrow\rangle_1 \,\lvert \downarrow\rangle_2
			\;+\;
			\lvert \downarrow\rangle_1 \,\lvert \uparrow\rangle_2
			\Bigr)
			\Bigl(
			\lvert \mathbf p \rangle_1 \,\lvert \mathbf{-p} \rangle_2
			\;+\;
			\lvert \mathbf{-p} \rangle_1 \,\lvert \mathbf p \rangle_2
			\Bigr)
			\]
			If we measure the charge of first particle and obtain $Q=-e$, then the total wavefunction collapses to
			\[
			\frac{1}{2}
			\Bigl(
			\lvert e^-\rangle_1\,\lvert e^+ \rangle_2
			\Bigr)
			\Bigl(
			\lvert \uparrow\rangle_1 \,\lvert \downarrow\rangle_2
			\;+\;
			\lvert \downarrow\rangle_1 \,\lvert \uparrow\rangle_2
			\Bigr)
			\Bigl(
			\lvert \mathbf p \rangle_1 \,\lvert \mathbf{-p} \rangle_2
			\;+\;
			\lvert \mathbf{-p} \rangle_1 \,\lvert \mathbf p \rangle_2
			\Bigr)
			\]
			
			If we measure the spin of first particle and obtain spin $\uparrow$, then the total wavefunction collapses to
			\[
			\frac{1}{2}
			\Bigl(
			\lvert e^-\rangle_1\,\lvert e^+ \rangle_2 
			\;+\;
			\lvert e^+\rangle_1\,\lvert e^- \rangle_2
			\Bigr)
			\Bigl(
			\lvert \uparrow\rangle_1 \,\lvert \downarrow\rangle_2
			\Bigr)
			\Bigl(
			\lvert \mathbf p \rangle_1 \,\lvert \mathbf{-p} \rangle_2
			\;+\;
			\lvert \mathbf{-p} \rangle_1 \,\lvert \mathbf p \rangle_2
			\Bigr)
			\]
			
			If we measure the moment of first particle and obtain spin $\mathbf p$, then the total wavefunction collapses to
			\[
			\frac{1}{2}
			\Bigl(
			\lvert e^-\rangle_1\,\lvert e^+ \rangle_2 
			\;+\;
			\lvert e^+\rangle_1\,\lvert e^- \rangle_2
			\Bigr)
			\Bigl(
			\lvert \uparrow\rangle_1 \,\lvert \downarrow\rangle_2
			\;+\;
			\lvert \downarrow\rangle_1 \,\lvert \uparrow\rangle_2
			\Bigr)
			\Bigl(
			\lvert \mathbf p \rangle_1 \,\lvert \mathbf{-p} \rangle_2
			\Bigr)
			\]
			
			In all three cases, the net $Q=0$ sector is preserved, but the post-measurement outcome only breaks the entanglement structure of the measured part.
		\end{enumerate}
	\end{example}

	These results formalize how physical measurements act only in the gauge-singlet packaged subspace, collapsing hybrid wavefunctions while preserving the integrity of IQN packaging.

	\section{Discussion}

	We have examined the IQN packaging and find that:
	every local field operator carries all of its IQNs inside one irrep block of the gauge group $G$.
	Superselection rules \cite{Haag1996} and packaged entangled states are natural consequence of local gauge invariance.
	This bridges standard field-theoretic constraints with modern entanglement measures.

	Recent lattice-gauge and resource-theory studies \cite{Zohar2016,Sala2018} similarly show how gauge constraints shape the accessible state space. 
	The IQN packaging clarifies an important point:
	partial or fractional IQNs cannot appear as independent quantum DOFs in a single-particle state.  
	It further shows how gauge singlets (e.g., color singlet quark-antiquark states) can produce packaged entangled pairs within one net-charge sector.  
	Such states exemplify packaged entanglement:
	the wavefunction is packaged‑entangled while respecting local gauge‑invariance.

	The packaging principle can benefit:
	\begin{itemize}
		\item High‐energy collider phenomenology:
		symmetry packaging clarifies that partial local charges or partial color cannot exist, yet entangled color singlets \cite{Zweig1964,Hooft1974} are allowed \cite{Gross1973,Politzer1973}.
		Packaged entangled states may leave measurable signatures in multi‐boson or heavy‐flavor correlations.
		
		\item Foundations of superselection:
		IQN packaging clarifies how non‐local dressings and boundary conditions modify the spectrum of allowed superselection sectors.
		
		\item Gauge‐invariant quantum simulation:
		hybrid‐packaged qudits provide a robust embedding of internal plus external DOFs into near‐term quantum hardware, protected by gauge symmetry.
		The packaged quantum states may be used for gauge-invariant quantum simulations of lattice gauge theories \cite{Zohar2016,Sala2018}.
		
		\item Quantum computation and communication:	  
		The packaged quantum states could be applied to computation and quantum communication \cite{NielsenChuang}.
		Packaged entangled states may be used as robust quantum information carriers and in quantum error-correction protocols that inherently respect gauge symmetry \cite{Kitaev2003}.
		The conventional approaches usually treat entanglement independently of gauge considerations.
		Here we highlight that the very structure of a particle’s creation operator and thus the inseparability of its IQNs can serve as a powerful resource for quantum information processing.
		Building gauge-invariant error-correcting codes requires that each logical excitation be consistently packaged.
		This helps ensure that no unphysical states appear in the code space.
	\end{itemize}

	After all, we would like to emphasize that our discussion are mainly focused on finite and compact groups for simplicity and mathematical rigor.
	When coming to general groups that are neither finite nor compact, one must treat them more carefully with proper extensions.

	\section{Conclusion}

	We have established the Symmetry Packaging Principle for quantum field excitations and tracked its fate through every step of the excitation pipeline.
	Let us now summarize our findings as follows:

	\subsection*{(1) Packaging Principle}

	\begin{theorem}[Symmetry packaging principle]
		Every local creation or annihilation operator carries exactly one irrep $V_\lambda$ of the gauge group $G$.
		No physical process can split $V_\lambda$ into smaller pieces.
	\end{theorem}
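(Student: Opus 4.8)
The plan is to read this capstone statement as the assembly of the six stability results established in Stages~1 through~6, all organized around a single recurring device: Schur's lemma applied to the irreducible carrier $V_\lambda$. The statement has two clauses --- existence of exactly one irrep, and indivisibility --- and I would dispatch them in turn, so that the final theorem inherits its content entirely from the stagewise lemmas already in hand.

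For the existence clause, I would invoke Stage~1 directly. By Theorem~\ref{THM:NoPartialFactorization}, every local creation operator transforms under a pure gauge transformation as $U(g)\,\hat a^\dagger(\mathbf p)\,U(g)^{-1}=D(g)\,\hat a^\dagger(\mathbf p)$ with $D$ a fixed finite-dimensional unitary irrep, which is precisely Definition~\ref{AX:SinglePkg}. The field's transformation law thus pins down one and only one isotypic label, so the operator selects a single carrier space $V_\lambda$. This settles the first clause at the level of elementary excitations, and the same argument applies verbatim to an annihilation operator, which transforms in the conjugate irrep.

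For the indivisibility clause I would first fix the meaning of \emph{physical process}: any $G$-equivariant operation, i.e.\ an operator (or the Kraus operators of a completely positive map) commuting with the gauge action $U(g)$ for all $g\in G$ --- exactly the local gauge-covariant algebra $\mathcal A_{\mathrm{loc}}$ of Section~\ref{SEC:PackagingImpliesSuperselection}. The substantive step is then that every such operator is block-diagonal across the isotypic decomposition and acts as $\mathbf 1_{V_\lambda}\otimes(\cdot)$ on each block, reshuffling multiplicity and external labels but acting as a scalar on the irreducible internal factor. This is not a new computation: it is the union of the commutant result Eq.~\eqref{EQ:CommutantOfTheHybridGaugeAction} and the no-internal-factorization statement Eq.~\eqref{EQ:NoNewInternalFactorization} (Stage~2), the tensor-product stability theorem (Stage~3), the isotypic stability Theorem~\ref{THM:PackagingSurvivesHybridPWProjection} (Stage~4), the superposition stability Theorem~\ref{THM:PackagingSurvivesSuperposition} (Stage~5), and the gauge-projection stability result Eq.~\eqref{EQ:StabilityPackagingUnderLocalGausslaw} (Stage~6). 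I would then close by composition: since any admissible process is built from these six canonical operations, and each maps a packaged block to a packaged block (possibly relabelled, as in Proposition~\ref{PROP:ClosureCreation}), no finite composite can realize a nontrivial $G$-invariant projector onto a proper subspace of a given $V_\lambda$; by Schur's lemma such a projector is forced to be $0$ or $\mathbf 1$, contradicting any putative splitting.

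The hard part will be the universality of the second clause, namely defending the identification of ``physical process'' with ``$G$-equivariant channel.'' The representation-theoretic core --- that equivariant operators act as scalars on each irreducible carrier --- is already secured stage by stage; the remaining burden is the modelling assumption that every physically realizable operation respects the gauge symmetry and is therefore assembled from gauge-covariant local operators together with the projector $\Pi_{\rm phys}$. I would justify this exactly as in Section~\ref{SEC:PackagingImpliesSuperselection}: locality together with gauge invariance forces all observables into the commutant $\mathcal A_{\mathrm{loc}}$, and anything lying outside this algebra is by definition not a gauge-respecting, hence not a physical, operation. Any weakening of this identification (for instance admitting anomalous or non-gauge-covariant maps) is already excluded by the Stage~1 discussion of gauge anomalies, so the theorem holds precisely within the anomaly-free, compact-group setting assumed throughout.
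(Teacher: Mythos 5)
Your proposal is correct and follows essentially the same route as the paper: the paper's own justification for this capstone theorem is exactly the accumulation you describe — Theorem~\ref{THM:NoPartialFactorization} (Schur's lemma ruling out partial projectors) for the existence clause, together with the stagewise stability results (hybridization, tensor product, isotypic decomposition, superposition, gauge projection) and the identification of physical processes with the gauge-covariant commutant algebra $\mathcal A_{\mathrm{loc}}$ for the indivisibility clause. Your explicit closing-by-composition step and your framing of the "physical process $=$ $G$-equivariant channel" assumption make the paper's implicit logic more transparent, but they introduce no new mathematical content beyond what the paper already establishes.
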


	\begin{corollary}[Charge superselection]
		Since every local operator preserves the irrep label $\lambda$, coherent superpositions between different $\lambda$-sectors are forbidden.
		Physical states	decompose into disjoint charge sectors, reproducing the familiar rule that, e.g., one cannot observe superpositions of distinct electric charges.
	\end{corollary}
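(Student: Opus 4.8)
The plan is to derive this corollary as a direct operational consequence of the block-diagonality already established in Proposition \ref{PROP:PackagingImpliesSS}, rather than re-deriving any representation theory. First I would recall the isotypic decomposition $\mathcal H_{\rm iso} = \bigoplus_{\lambda\in\widehat G}\mathcal H_\lambda$ together with the fact that the Symmetry Packaging Principle forces every physical (gauge-covariant) observable $\mathcal O\in\mathcal A_{\mathrm{loc}}$ to act block-diagonally, $\mathcal O = \bigoplus_\lambda\bigl(\mathbf 1_{V_\lambda}\otimes O_\lambda\bigr)$. In particular its cross-sector matrix elements vanish, $\langle\psi_\lambda|\mathcal O|\phi_{\lambda'}\rangle = 0$ whenever $\lambda\neq\lambda'$, which is precisely the content of Proposition \ref{PROP:Superselection}.

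Second, I would form a candidate coherent superposition across two distinct sectors, $|\Psi\rangle = \alpha\,|\psi_\lambda\rangle + \beta\,|\phi_{\lambda'}\rangle$ with $\lambda\neq\lambda'$, and evaluate an arbitrary physical observable on it. Because every cross term $\langle\psi_\lambda|\mathcal O|\phi_{\lambda'}\rangle$ drops out, one obtains $\langle\Psi|\mathcal O|\Psi\rangle = |\alpha|^2\langle\psi_\lambda|\mathcal O|\psi_\lambda\rangle + |\beta|^2\langle\phi_{\lambda'}|\mathcal O|\phi_{\lambda'}\rangle$, which coincides exactly with the expectation computed from the incoherent mixture $\rho = |\alpha|^2\,|\psi_\lambda\rangle\langle\psi_\lambda| + |\beta|^2\,|\phi_{\lambda'}\rangle\langle\phi_{\lambda'}|$. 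Hence the relative phase between the two sectors is unobservable, and the superposition is physically indistinguishable from a classical mixture; this is the operational meaning of forbidding coherent superpositions. The disjoint-sector decomposition of physical states then follows immediately from the orthogonality of the $\mathcal H_\lambda$ combined with this vanishing of all inter-sector coherences, specializing to the familiar electric-charge example by taking $G=U(1)$ and $\lambda\mapsto Q\in\mathbb Z$.

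The main obstacle I anticipate is conceptual rather than computational: I must phrase the conclusion as the \emph{unobservability} of relative phases, equivalently the agreement of the superposition's density matrix and the mixture's on the algebra $\mathcal A_{\mathrm{loc}}$, and not as a naive claim that the vector $|\Psi\rangle$ cannot be written down. The genuine content of superselection is that no physical operation generated by $\mathcal A_{\mathrm{loc}}$ can create, detect, or exploit such coherence, so I would make explicit that the argument quantifies over \emph{all} of $\mathcal A_{\mathrm{loc}}$ and relies on its completeness as the full commutant of the gauge action. A closing sentence would note that the same block-diagonality prevents any gauge-covariant dynamics from generating inter-sector coherence, thereby linking the kinematic statement of Proposition \ref{PROP:PackagingImpliesSS} to the dynamical stability of the charge sectors.
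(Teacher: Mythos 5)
Your proposal is correct and follows essentially the same route as the paper: it rests on the block-diagonality of the gauge-covariant algebra $\mathcal A_{\mathrm{loc}}$ established in Proposition \ref{PROP:PackagingImpliesSS} (and Proposition \ref{PROP:Superselection}), from which the vanishing of all cross-sector matrix elements yields superselection. Your explicit expectation-value computation showing that $\alpha\,|\psi_\lambda\rangle + \beta\,|\phi_{\lambda'}\rangle$ is indistinguishable from the incoherent mixture is exactly the operational content the paper records in its ``General superselection rule'' corollary, merely spelled out in more detail.
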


	\begin{corollary}[Packaged entanglement structure]
		Within any fixed $\lambda$ sector, the Schmidt decomposition must respect the irreducible block $V_\lambda$.
		This yields packaged entanglement in each of these sectors.
		Measuring the external DOFs of a fully hybrid-packaged entangled state collapses both its internal and external correlations, while preserving the net-charge sector.
	\end{corollary}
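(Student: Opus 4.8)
The plan is to dispatch the three assertions in turn, reducing each to a theorem already established in Stages 4--6. First I would fix a sector $\mathcal{H}_\lambda^{(n)} = V_\lambda\otimes\mathcal{H}_{\rm ext}^{(n)}$ and take an arbitrary normalized $\ket{\Psi}\in\mathcal{H}_\lambda^{(n)}$, writing its Schmidt decomposition $\ket{\Psi}=\sum_k\lambda_k\,\ket{\chi_k}_{\rm int}\otimes\ket{e_k}_{\rm ext}$ across the internal/external cut, with $\ket{\chi_k}\in V_\lambda$. The assertion that the Schmidt decomposition ``respects $V_\lambda$'' is precisely the statement that every internal Schmidt vector lies inside the one irreducible block $V_\lambda$ and that no gauge-covariant operation can refine the internal factor any further. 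This follows directly from Theorem~\ref{THM:PackagingSurvivesSuperposition}(1) together with Proposition~\ref{PROP:ActionOnIsotypicSectors}(1): on $\mathcal{H}_\lambda^{(n)}$ the gauge action is $D^{(\lambda)}(g)\otimes\mathbf 1_{\rm ext}$, so by Schur's lemma any operator commuting with $U(g)$ has the form $\mathbf 1_{V_\lambda}\otimes O_\lambda$ and cannot split $V_\lambda$. Hence the Schmidt vectors $\ket{\chi_k}$ are genuinely vectors of the single indivisible irrep, not of any proper sub-block.

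For the second assertion---that this yields packaged entanglement---I would invoke the ``blindness'' clause Theorem~\ref{THM:PackagingSurvivesSuperposition}(2) and the maximal-mixing result Theorem~\ref{THM:MaxMixing}. These show that no gauge-invariant local observable can resolve the internal amplitudes and that the reduced internal state is $\tfrac1{d_\lambda}\mathbf 1_{V_\lambda}$; consequently the correlations across the internal/external cut cannot be accessed by manipulating a proper subset of the internal labels, which is exactly the content of Definition~\ref{DEF:PackagedEntangledState}: all IQNs are locked together, so the entanglement is packaged rather than factorizable IQN-by-IQN. I would phrase this as: the irreducibility of $V_\lambda$ forces any non-factorizable $\ket{\Psi}$ to entangle the entire IQN packet with the external data as a unit.

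For the collapse statement I would apply Theorem~\ref{THM:ExtMeasCollapse} to the external POVM $M_m=\mathbb{1}_{\rm int}\otimes E_m^{\rm ext}$. Because $M_m$ commutes with $U(g)$, the post-measurement state $\ket{\Psi_m}=M_m\ket{\Psi}/\sqrt{p_m}$ remains in $\mathcal{H}_{\rm phys}$ and in the same net-charge sector, establishing the ``net-charge preserved'' clause. When $\ket{\Psi}$ is \emph{fully} hybrid-packaged entangled in the sense of Definition~\ref{DEF:PackagedEntangledState}, every Schmidt term couples a distinct internal component $\ket{\chi_k}$ to a distinct external component $\ket{e_k}$; selecting an external outcome annihilates those $\ket{e_k}$ with $E_m^{\rm ext}\ket{e_k}=0$ and therefore simultaneously eliminates the internal components tied to them, so both the internal and external correlations collapse together. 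The main obstacle I anticipate is making ``fully hybrid-packaged'' sharp enough to guarantee that an external measurement \emph{necessarily} collapses the internal sector: one must rule out the degenerate situation in which several surviving Schmidt terms share a common external label but differ internally, leaving residual internal coherence. I would close this gap by using the genuineness clause A3 of Definition~\ref{DEF:PackagedEntangledState} (non-factorizability across the internal/external split) to force the surviving Schmidt support to be a single cross-correlated term, which is where the representation-theoretic input of step one does the real work.
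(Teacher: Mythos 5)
Your proposal is correct and follows essentially the same route as the paper: the corollary is a summary statement whose proof the paper distributes across Stage~5 and the measurement section, and you reassemble it from exactly those ingredients --- Schur's lemma via Proposition~\ref{PROP:ActionOnIsotypicSectors} and Theorem~\ref{THM:PackagingSurvivesSuperposition} for irreducibility and blindness, Theorem~\ref{THM:MaxMixing} for the reduced internal state, and Theorem~\ref{THM:ExtMeasCollapse} together with the measurement-induced-collapse corollary for the external-measurement claim. Your closing observation about degenerate POVM outcomes (several surviving Schmidt terms after a coarse external measurement) is a genuine subtlety that the paper's own proof also glosses over, and invoking the non-factorizability clause A3 is a reasonable way to patch it.
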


	\begin{corollary}[Local gauge-invariance confinement]
		The gauge projector $\Pi_{\rm phys} = \int_G dg U(g)$ annihilates every non-trivial irrep block, so only the trivial ($\lambda = 0$) sector	survives.
		In Abelian theories, it recovers the usual charge-neutrality condition.
		In non-Abelian gauge theories, this immediately enforces color singlet confinement.
	\end{corollary}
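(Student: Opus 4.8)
The plan is to recognize that this corollary is the invariant-theoretic summary of two results already established in Stage~6, namely the Proposition ``Local Gauge Invariance $\Rightarrow$ trivial sector'' and the Theorem ``Stability of packaging under local gauge-invariance constraint,'' so the real work is to assemble them cleanly and then read off the two physical specializations. First I would observe that, since the trivial irrep $\mathbf 0$ of a compact $G$ has dimension $d_{\mathbf 0}=1$ and character $\chi_{\mathbf 0}(g)\equiv 1$, the general Peter-Weyl projector of Eq.~\eqref{EQ:PeterWeylProjectorCompact} collapses to $\Pi_{\rm phys}=P_{\mathbf 0}=\int_G d\mu(g)\,U(g)$, precisely the gauge-averaging operator of Eq.~\eqref{EQ:CompactGaugeProjector}. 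Feeding in the isotypic decomposition (Definition~\ref{DEF:IsotypicDecomposition}), $\mathcal H_{\rm iso}\cong\bigoplus_\lambda V_\lambda\otimes M_\lambda$, reduces the entire problem to computing $\Pi_{\rm phys}$ on a single block $V_\lambda\otimes M_\lambda$.

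The key step is the Haar (Schur) orthogonality relation already invoked in the stability theorem: on each block $U(g)=D^{(\lambda)}(g)\otimes\mathbf 1_{M_\lambda}$, so $\Pi_{\rm phys}\big|_{\mathcal H_\lambda}=\bigl[\int_G d\mu(g)\,D^{(\lambda)}(g)\bigr]\otimes\mathbf 1_{M_\lambda}$, and
\begin{equation}
\int_G d\mu(g)\,D^{(\lambda)}_{ij}(g)=\begin{cases}\delta_{ij},&\lambda=\mathbf 0,\\0,&\lambda\neq\mathbf 0.\end{cases}
\end{equation}
Hence $\Pi_{\rm phys}$ annihilates every nontrivial block and restricts to the identity on the trivial one, giving $\mathcal H_{\rm phys}=\Pi_{\rm phys}\mathcal H_{\rm iso}=V_{\mathbf 0}\otimes M_{\mathbf 0}$, which is exactly the first sentence of the corollary. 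I would note that the self-adjointness, idempotence, and gauge-commutativity of $\Pi_{\rm phys}$ (Property~\ref{PROP:GaugeProjector}) guarantee that the image is a genuine orthogonal projection onto the invariant subspace, so no residual phases or cross-block couplings survive.

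Finally I would read off the two specializations. For $G=U(1)$ the irreps are labelled by the integer charge $Q$ and the trivial irrep is $Q=0$, so $\mathcal H_{\rm phys}$ is the net-neutral sector and the conclusion reduces to the usual charge-neutrality condition. For $G=SU(3)_{\rm color}$ the trivial irrep is the color singlet $\mathbf 1$, so only singlet combinations (mesons from $\mathbf 3\otimes\overline{\mathbf 3}\supset\mathbf 1$, baryons from $\mathbf 3\otimes\mathbf 3\otimes\mathbf 3\supset\mathbf 1$) survive the projection. The main obstacle here is interpretive rather than computational: I must phrase the non-Abelian statement carefully so as not to overclaim. What the projector delivers is the \emph{kinematic} statement that only gauge singlets are admissible physical states, which is the representation-theoretic skeleton of confinement, not a dynamical proof that colored states are energetically excluded. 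I would therefore state the conclusion as ``only singlets survive the gauge constraint'' and explicitly flag that genuine dynamical confinement additionally requires the Hamiltonian and is beyond the purely group-theoretic scope of this stage.
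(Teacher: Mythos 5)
Your proposal is correct and follows essentially the same route as the paper: it reduces $\Pi_{\rm phys}$ to the trivial-character Peter--Weyl projector, applies the isotypic decomposition block by block, and invokes the Haar/Schur orthogonality relation $\int_G d\mu(g)\,D^{(\lambda)}_{ij}(g)=\delta_{ij}\,\delta_{\lambda\mathbf 0}$ exactly as in the paper's Stage~6 proposition and stability theorem, then reads off the $U(1)$ and $SU(3)$ specializations as in the paper's examples. Your closing caveat that the projector yields only the \emph{kinematic} (representation-theoretic) statement of confinement, not a dynamical one, is consistent with how the paper itself uses the term and is a sensible clarification rather than a deviation.
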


	These results together establish packaging as a fundamental organizing principle for gauge-invariant quantum systems.

	\subsection*{(2) Six Successive Stages (Persistence of packaging)}

	A six-stage pipeline through which every excitation passes:	
	
	\begin{table}[h]
		\centering
		\caption{Six stages of the excitation pipeline}
		\begin{tabular}[hbt!]{p{1cm}|p{8cm}|p{2.5cm}}
			\toprule
			Stage &Operation                                        & Map \\
			\midrule
			1     &Single-particle creation                         & $\emptyset \to \mathcal{H}_{\rm raw}$ \\
			2     &Coupling to gauge-blind (spectator) DOFs         & $\mathcal{H}_{\rm raw} \to \mathcal{H}_{\rm raw}$ \\
			3     &Many-body tensor-product assembly                & $\mathcal{H}_{\rm raw} \to \mathcal{H}_{\rm raw}$ \\
			4     &Isotypic projection into $\displaystyle \bigoplus_\lambda V_\lambda \otimes \mathbb{C}^{m \lambda}$ & $\mathcal{H}_{\rm raw} \to \mathcal{H}_{\rm iso}$ \\
			5     &Packaged superposition/entanglement              & $\mathcal{H}_{\rm iso} \to \mathcal{H}_{\rm iso}$ \\
			6     &Gauge projection onto singlets                   & $\mathcal{H}_{\rm iso} \to \mathcal{H}_{\rm phys}$ \\
			\bottomrule
		\end{tabular}
	\end{table}

	These six stages together form the complete quantum field excitation process and pipeline for constructing packaged states.

	\subsection*{(3) Three Layers (Layering of packaging)}

	We organize excitations into three packaging layers as follows (see Table \ref{tab:layer-structure}):

	\begin{table}[h]
		\centering
		\renewcommand{\arraystretch}{1.25}  % row height
		\caption{Three packaging layers}
		\label{tab:layer-structure}
		\begin{tabular}{
				L{0.13\linewidth}|  % Layer
				L{0.21\linewidth}|  % Ambient space
				L{0.35\linewidth}|  % Typical subspaces
				L{0.28\linewidth}   % Defining property
			}
			\toprule
			\textbf{Layer} &
			\textbf{Total Hilbert space} &
			\textbf{Subspaces} &
			\textbf{Defining principle} \\
			\midrule
			\textbf{Raw-Fock} &
			$\displaystyle
			\mathcal H_{\text{Fock}} = \bigoplus_{n \ge 0}
			\mathcal H^{(n)}_{\text{Fock}}
			$
			&\begin{itemize}\setlength\itemsep{2pt}
				\item single-particle package $V_{q} \otimes \mathcal H_{\text{ext}}$

				\item $n$-particle product $(V_{q_1} \otimes \cdots \otimes V_{q_n}) \otimes \mathcal H_{\text{ext}}^{(n)}$
			\end{itemize}
			&Single-particle packaging:	creation, annihilation,	multiplication \\
			\hline
			% \addlinespace
			\textbf{Isotypic sector} &
			$\displaystyle
			\mathcal H_{\text{iso}}
			= \bigoplus_{\lambda \in \widehat G} 
			\mathcal H_\lambda
			$ &
			\begin{itemize}\setlength\itemsep{2pt}
				\item $n$-particle packaged $\bigl(V_\lambda \otimes M_\lambda^{(n)}\bigr) \otimes \mathcal H_{\text{ext}}^{(n)}$
			\end{itemize} &
			Multi-particle packaging: isotypic decomposition (sectorization), superselection, packaged entanglement \\
			\hline
			% \addlinespace
			\textbf{Physical} &
			$\displaystyle
			\mathcal H_{\text{phys}}\subseteq\mathcal H_{\text{iso}}
			$ (often $\lambda=\mathbf 1$) &
			\begin{itemize}\setlength\itemsep{2pt}
				\item $n$-particle physical package $V_{\text{triv}} \otimes M_{0}^{(n)}$
			\end{itemize} &
			Gauge-invariant packaging: gauge projection
			$G(x) |\psi\rangle = 0$, enforcing gauge invariance $U_g |\psi\rangle = |\psi\rangle$ \\
			\bottomrule
		\end{tabular}
	\end{table}

	Single-particle packaging enforces that each excitation is a sharp $G$-irrep.
	Multi-particle packaging says that any multi-particle state lives in a single total-charge (isotypic)	block.
	Gauge-invariant packaging further restricts to those isotypic subspaces that are invariant under the local gauge group.
	These are the common kinematic root of the structure of physical Hilbert space in a gauge theory.
	It is independent of dynamics or specific interactions, and therefore universal across high-energy, condensed-matter, and quantum-information settings.

	\subsection*{(4) Applications \& Outlook}

	The symmetry packaging framework can be applied to many other areas:
	
	\begin{itemize}
		\item Model independence.
		The packaging relies only on the compactness of $G$ and locality of field operators.
		It applies equally to QED, QCD, lattice gauge models,
		topological phases, and quantum simulators.
		
		\item Information-theoretic utility.
		The packaged entanglement yields a symmetry-adapted Schmidt rank and entanglement entropy \cite{Casini2014}, and identifies each packaged irrep as a	noise-protected logical subsystem, ideal for encoding quantum information immune to gauge-variant errors.
		Thus, the packaged irreps behave as noise-protected logical qudits:
		gauge-invariant operations act only on spectator spaces, leaving the IQN block untouched.
	\end{itemize}

	These stages, layers, and physical consequences collectively establish packaging as a fundamental organizing principle.
	In Part II, we present a systematic classification of packaged states for concrete symmetry groups (including finite groups, compact groups, higher-form symmetry \cite{Gaiotto2015}
	and spacetime symmetries) and explore their quantum-information applications.

	\section*{Acknowledgments}

	\paragraph{Declaration of generative AI and AI-assisted technologies in the writing process.}
	During the preparation of this work the author used ChatGPT o3 in order to polish English language and refine technical syntax.
	After using this tool/service, the author reviewed and edited the content as needed and takes full responsibility for the content of the publication.

	\appendix

	\section{Conventions and Toolbox}

	% Conventions and Notation box
	\begin{center}
		\begin{minipage}{\linewidth}
			\hrule\smallskip
			\textbf{Conventions and Notation}
			
			\begin{itemize}
				\item \emph{Metric and indices}:
				We use flat-space metric $\eta_{\mu\nu}= \mathrm{diag}(+,-,-,-)$.
				Greek letters $\mu,\nu=0,1,2,3$ run over Lorentz indices and are summed when repeated.
				Four-vectors are plain ($p^\mu$).
				Spatial	three-vectors are bold ($\mathbf p$) with $|\mathbf p|=p$.
				
				\item \emph{Units}:
				We use natural units $\hbar=c=1$ throughout.
				
				\item \emph{Creation/annihilation operators}:
				All operators carry a hat ($\hat a,\hat a^\dagger,\hat b,\hat b^\dagger$).
				For a relativistic single-fermion field, we have
				\[
				\bigl\{\hat a_{s}(\mathbf p),
				\hat a^{\dagger}_{s'}(\mathbf p')\bigr\}
				=(2\pi)^{3}\,2E_{\mathbf p}\;
				\delta^{3}(\mathbf p-\mathbf p')\,
				\delta_{ss'} ,
				\qquad
				E_{\mathbf p}=\sqrt{\mathbf p^{2}+m^{2}} .
				\]
				(Replace $\{\,,\}$ with $[\, ,]$ for bosons.)
				
				\item \emph{Single-particle normalization}:
				$
				\langle \mathbf p',s'\lvert\mathbf p,s\rangle
				=(2\pi)^{3}\,2E_{\mathbf p}\;
				\delta^{3}(\mathbf p-\mathbf p')\,
				\delta_{ss'}$.
				
				\item \emph{Gauge indices}:
				Internal (color/flavor) indices are written $\alpha,\beta,\ldots$ and are suppressed when no confusion arises (e.g.\ 
				$\hat a^\dagger_{\alpha}\, D^{\alpha}{}_{\beta}(g)$).
				
				\item \emph{Commutator shorthand}:
				$[\mathcal O_{1},\mathcal O_{2}]_{\pm}
				=\mathcal O_{1}\mathcal O_{2}
				\mp\mathcal O_{2}\mathcal O_{1}$.
			\end{itemize}			
			% \smallskip\hrule
		\end{minipage}
	\end{center}

	% Micro-causality statement
	\subsubsection*{Micro-causality}
	For any local (Heisenberg-picture) field operator $\psi(x)$, we assume
	\[
	[\psi(x),\psi^{\dagger}(y)]_{\pm} = 0
	\quad \text{whenever }(x-y)^{2} < 0
	\]
	to ensure relativistic locality in the sense of Doplicher-Haag-Roberts superselection theory.
	\hrule

\begin{thebibliography}{99}
		
		\bibitem{Weyl1952}
		Hermann Weyl,
		Symmetry.
		(Princeton University Press, Princeton, 1952)
		
		
		\bibitem{Wigner1959}
		Eugene P. Wigner,
		Group Theory And its Application to the Quantum Mechanics of Atomic Spectra,
		(Academic Press Inc., New York and London, 1959)
		
		
		\bibitem{Agullo2023}
		Ivan Agullo, Béatrice Bonga, Patricia Ribes-Metidieri, Dimitrios Kranas, and Sergi Nadal-Gisbert,
		How ubiquitous is entanglement in quantum field theory?
		Phys. Rev. D 108, 085005 (2023).
		\url{https://doi.org/10.1103/PhysRevD.108.085005}
		
		
		\bibitem{Serre1977}
		Jean-Pierre Serre,
		Linear Representations of Finite Groups,
		(Springer New York, NY, 1977).
		\url{https://doi.org/10.1007/978-1-4684-9458-7}
		
		
		\bibitem{Nakahara2003}
		Mikio Nakahara,
		Geometry, Topology and Physics,
		2nd ed. (CRC Press, Boca Raton, 2003).
		\url{https://doi.org/10.1201/9781315275826}
		
		
		\bibitem{Georgi2018}
		Howard Georgi,
		Lie Algebras in Particle Physics,	
		2nd ed. (CRC Press, Boca Raton, 2018).
		\url{https://doi.org/10.1201/9780429499210}
		
		
		\bibitem{Gross1973}
		David J. Gross and Frank Wilczek,
		Ultraviolet Behavior of Non-Abelian Gauge Theories,
		Phys. Rev. Lett. 30, 1343 (1973).
		\url{https://doi.org/10.1103/PhysRevLett.30.1343}
		
		
		\bibitem{Politzer1973}
		H. David Politzer,
		Reliable Perturbative Results for Strong Interactions?,
		Phys. Rev. Lett. 30, 1346 (1973).
		\url{https://doi.org/10.1103/PhysRevLett.30.1346}
		
				
		\bibitem{Glashow1959}
		Sheldon L. Glashow,
		The renormalizability of vector meson interactions,
		Nuclear	Physics 10, 107-117 (1959). \url{https://doi.org/10.1016/0029-5582(59)90196-8}
		
		
		\bibitem{SalamWard1959}
		Abdus Salam \& J. C. Ward,
		Weak and electromagnetic interactions,
		Nuovo Cim 11, 568-577 (1959).
		\url{https://doi.org/10.1007/BF02726525}
		
		
		\bibitem{Weinberg1967}
		Steven Weinberg, 
		A Model of Leptons, 
		Phys. Rev. Lett. 19, 1264 (1967).
		\url{https://doi.org/10.1103/PhysRevLett.19.1264}
		
		
		\bibitem{Feynman1949}
		R. P. Feynman, 
		Space-Time Approach to Quantum Electrodynamics, 
		Phys. Rev. 76, 769 (1949).
		\url{https://doi.org/10.1103/PhysRev.76.769}
		
		
		\bibitem{Yang1954}
		C. N. Yang and R. L. Mills, 
		Conservation of Isotopic Spin and Isotopic Gauge Invariance, 
		Phys. Rev. 96, 191 (1954).
		\url{https://doi.org/10.1103/PhysRev.96.191}
		
		
		\bibitem{Utiyama1956}
		Ryoyu Utiyama, 
		Invariant Theoretical Interpretation of Interaction, 
		Phys. Rev. 101, 1597 (1956).
		\url{https://doi.org/10.1103/PhysRev.101.1597}
		
		
		\bibitem{Wigner1939}
		E. Wigner, 
		On Unitary Representations of the Inhomogeneous Lorentz Group, 
		Annals of Mathematics, 40(1), 149-204 (1939).
		\url{https://doi.org/10.2307/1968551}
		
						
		\bibitem{Weyl1925}
		H. Weyl, 
		On Unitary Representations of the Inhomogeneous Lorentz Group, 
		Mathematische Zeitschrift, 23, 271-309 (1925).
		\url{https://doi.org/10.1007/BF01506234}
		
		
		\bibitem{Ma2017}
		Rongchao Ma, 
		Theory of packaged entangled states, 
		Reports in Advances of Physical Sciences 1 (03), 1750005 (2017).
		\url{https://doi.org/10.1142/S2424942417500050}
		
		
		\bibitem{NielsenChuang}
		Michael A. Nielsen and Isaac L. Chuang, 
		Quantum Computation and Quantum Information,
		(Cambridge University Press, Cambridge, 2010).
		\url{https://doi.org/10.1017/CBO9780511976667}
		
				
		\bibitem{PeskinSchroeder}
		Michael E. Peskin, Daniel V. Schroeder,
		An Introduction to Quantum Field Theory
		(Westview Press, Boulder, 1995).
		
		
		\bibitem{WeinbergQFTI}
		Steven Weinberg,
		The Quantum Theory of Fields, Vol. I,
		(Cambridge University Press, Cambridge, 2013).
		\url{https://doi.org/10.1017/CBO9781139644167}
		
		
		\bibitem{Dixmier1981}
		Jacques Dixmier, 
		Von Neumann Algebras.
		(North Holland Publishing Company, Amsterdam, 1981)
		
		
		\bibitem{Takesaki1979}
		Masamichi Takesaki,
		Theory of Operator Algebras I,
		(Springer New York, NY, 1979)
		\url{https://doi.org/10.1007/978-1-4612-6188-9}
				
		
		\bibitem{Cornwall1974}
		J. M. Cornwall,
		Spontaneous symmetry breaking without scalar mesons. II,
		Phys. Rev. D 10, 500 (1974).
		\url{https://doi.org/10.1103/PhysRevD.10.500}
		
		
		\bibitem{LavelleMcMullan1997}
		Martin Lavelle and David McMullan,
		Constituent quarks from QCD,
		Phys. Rept.	279, 1 (1997). \url{https://doi.org/10.1016/S0370-1573(96)00019-1}
		
		
		\bibitem{StreaterWightman}
		Raymond F. Streater and Arthur S. Wightman,
		PCT, Spin and Statistics, and All That,
		(Princeton University Press, Princeton, 2001).
		
		
		\bibitem{Dirac1950}
		P. A. M. Dirac,
		Generalized Hamiltonian Dynamics,
		Canadian Journal of Mathematics 2, 129-148 (1950).
		\url{https://doi.org/10.4153/CJM-1950-012-1}
		
		
		% Superselection, Charge Entanglement --------------------------------------------
		\bibitem{WWW1952}
		G.~C.\ Wick, A.~S.\ Wightman, and E.~P.\ Wigner,
		The Intrinsic Parity of Elementary Particles,
		Phys.\ Rev.\ 88, 101 (1952).
		\url{https://doi.org/10.1103/PhysRev.88.101}
		
		\bibitem{DHR1971}
		Sergio Doplicher, Rudolf Haag \& John E. Roberts,
		Local observables and particle statistics I,
		Commun.\ Math.\ Phys.\ 23, 199-230 (1971).
		\url{https://doi.org/10.1007/BF01877742}
		
		\bibitem{DHR1974}
		Sergio Doplicher, Rudolf Haag \& John E. Roberts,
		Local observables and particle statistics II,
		Commun.\ Math.\ Phys.\ 35, 49-85 (1974).
		\url{https://doi.org/10.1007/BF01646454}
		
		
		\bibitem{Kitaev2003}
		A. Yu. Kitaev,
		Fault-tolerant quantum computation by anyons,
		Annals of Physics 303 (1), 2-30 (2003).
		\url{https://doi.org/10.1016/S0003-4916(02)00018-0}	
		
		
		\bibitem{HastingsWen2005}
		M. B. Hastings, and Xiao-Gang Wen,
		Quasiadiabatic continuation of quantum states: The stability of topological ground-state degeneracy and emergent gauge invariance
		Phys. Rev. B 72, 045141 (2005).		
		\url{https://doi.org/10.1103/PhysRevB.72.045141}
			
		
		% Flavor Entanglement --------------------------------------------
		
		\bibitem{Abe2001}
		K. Abe, K. Abe, R. Abe, I. Adachi, Byoung Sup Ahn, H. Aihara, M. Akatsu, G. Alimonti, K. Asai et al. (Belle Collaboration),
		Observation of Large CP Violation in the Neutral B Meson System,
		Phys. Rev. Lett. 87, 091802 (2001).
		\url{https://doi.org/10.1103/PhysRevLett.87.091802}
		
		
		\bibitem{Aubert2002}
		B. Aubert, D. Boutigny, J.-M. Gaillard et al.,
		Study of time-dependent CP-violating asymmetries and flavor oscillations in neutral B decays at the $\gamma$(4s),
		Phys. Rev. D 66, 032003 (2002).
		\url{https://doi.org/10.1103/PhysRevD.66.032003}
		
		
		\bibitem{Go2007}
		A. Go, A. Bay, K. Abe, H. Aihara, D. Anipko, V. Aulchenko, T. Aushev, A. M. Bakich, E. Barberio et al. (Belle Collaboration),
		Measurement of Einstein-Podolsky-Rosen-Type Flavor Entanglement in $\gamma (4s) \rightarrow B^0 \bar{B}^0$ Decays,
		Phys. Rev. Lett. 99, 131802 (2007).	
		\url{https://doi.org/10.1103/PhysRevLett.99.131802}
		
		
		\bibitem{Blasone2009}
		M. Blasone, F. Dell'Anno, S. De Siena and F. Illuminati,
		Entanglement in neutrino oscillations,
		EPL 85 50002 (2009).
		\url{https://doi.org/10.1209/0295-5075/85/50002}
		
				
		% color entanglement -----------------------------------------------
				
		\bibitem{GellMann2019}
		M. Gell-Mann,
		A Schematic Model of Baryons and Mesons,
		Reson 24, 923-925 (2019).
		\url{https://doi.org/10.1007/s12045-019-0853-x}
		
		
		\bibitem{Brandelik1979}
		R. Brandelik, W. Braunschweig, K. Gather et al.,
		Evidence for planar events in $e^+e^{-}$ annihilation at high energies,
		Physics Letters B 86 (2), 243-249 (1979).
		\url{https://doi.org/10.1016/0370-2693(79)90830-X}
		
		
		\bibitem{Dumitru2022}
		Adrian Dumitru and Eric Kolbusz,
		Quark and gluon entanglement in the proton on the light cone at intermediate $x$,
		Phys. Rev. D 105, 074030 (2022).
		\url{https://doi.org/10.1103/PhysRevD.105.074030}
		
		
		\bibitem{Barr2022}
		Alan J. Barr,
		Testing Bell inequalities in Higgs boson decays,
		Physics Letters B 825, 10, 136866 (2022).	
		\url{https://doi.org/10.1016/j.physletb.2021.136866}
		
		
		% high energy measurements
		
		\bibitem{TheCMSCollaboration2024}	
		The CMS Collaboration,
		Observation of quantum entanglement in top quark pair production in proton-proton collisions at $\sqrt{s} = 13$ TeV,
		Rep. Prog. Phys. 87 117801 (2024).
		\url{https://doi.org/10.1088/1361-6633/ad7e4d}
		
		
		\bibitem{TheATLASCollaboration2024}	
		The ATLAS Collaboration,
		Observation of quantum entanglement with top quarks at the ATLAS detector,
		Nature 633, 542-547 (2024).
		\url{https://doi.org/10.1038/s41586-024-07824-z}
		
		
		\bibitem{Li2001}
		Y. S. Li, B. Zeng, X. S. Liu, and G. L. Long,
		Entanglement in a two-identical-particle system,
		Phys. Rev. A 64, 054302 (2001).
		\url{https://doi.org/10.1103/PhysRevA.64.054302}
		
		
		\bibitem{Zanardi2002}
		Paolo Zanardi,
		Quantum entanglement in fermionic lattices,
		Phys. Rev. A 65, 042101 (2002).
		\url{https://doi.org/10.1103/PhysRevA.65.042101}
				
				
		\bibitem{Eckert2002}
		K. Eckert, J. Schliemann, D. Bruß, M. Lewenstein,
		Quantum Correlations in Systems of Indistinguishable Particles,
		Annals of Physics 299 (1), 88-127 (2002).
		\url{https://doi.org/10.1006/aphy.2002.6268}
		
		
		\bibitem{Li2014}
		Li Hui et al,
		Characterizing Quantum Correlations in Arbitrary-Dimensional Bipartite Systems Using Hurwitz's Theory,
		Commun. Theor. Phys. 61, 273 (2014).
		\url{https://doi.org/10.1088/0253-6102/61/3/01}
		
		
		\bibitem{Bartlett2007}
		Stephen D. Bartlett, Terry Rudolph, and Robert W. Spekkens,
		Reference frames, superselection rules, and quantum information,
		Rev. Mod. Phys. 79, 555 (2007).
		\url{https://doi.org/10.1103/RevModPhys.79.555}	
		
		
		\bibitem{BulgarelliPanero2024}
		Duality and entanglement in lattice gauge theories,
		Andrea Bulgarelli and Marco Panero,
		arXiv:2411.17231.	
		\url{https://doi.org/10.48550/arXiv.2411.17231}
				
		
		\bibitem{Jones1998}
		H. F Jones,
		Groups, Representations and Physics, 1st Edition.
		(CRC Press, Boca Raton, 1998).
		\url{https://doi.org/10.1201/9780367805791}
		
		
		% Anomalies -------------------------------------
		
		\bibitem{Fujikawa1979}
		Kazuo Fujikawa,
		Path-Integral Measure for Gauge-Invariant Fermion Theories
		Phys. Rev. Lett. 42, 1195 (1979).
		\url{https://doi.org/10.1103/PhysRevLett.42.1195}
		
		
		\bibitem{Bertlmann1996}
		R. A. Bertlmann,
		Anomalies in Quantum Field Theory: Dispersion Relations and Differential Geometry
		Nucl. Phys. Proc. Suppl. 39BC, 482-487 (1995).
		\url{https://doi.org/10.1016/0920-5632%2895%2900122-P}
		
		
		\bibitem{Higgs1964}
		Peter W. Higgs,
		Broken Symmetries and the Masses of Gauge Bosons,
		Phys. Rev. Lett. 13, 508 (1964).
		\url{https://doi.org/10.1103/PhysRevLett.13.508}
		
		
		\bibitem{EnglertBrout1964}
		F. Englert and R. Brout,
		Broken Symmetry and the Mass of Gauge Vector Mesons,
		Phys. Rev. Lett. 13, 321 (1964)
		\url{https://doi.org/10.1103/PhysRevLett.13.321}		
		
		
		\bibitem{WeinbergQFTII}
		Steven Weinberg,
		The Quantum Theory of Fields, Vol. II,
		(Cambridge University Press, Cambridge, 2013).
		\url{https://doi.org/10.1017/CBO9781139644174}
		
		
		\bibitem{HenneauxTeitelboim},
		Marc Henneaux and Claudio Teitelboim,
		Quantization of Gauge Systems,
		(Princeton University Press, Princeton, 1992).
		\url{https://doi.org/10.2307/j.ctv10crg0r}
				
		
		\bibitem{Bouchiat1972}
		C. Bouchiat, J. Iliopoulos, Ph. Meyer,
		An anomaly-free version of Weinberg's model,
		Physics Letters B 38, 519-523 (1972).		
		\url{https://doi.org/10.1016/0370-2693(72)90532-1}	
		
		
		\bibitem{Witten1982}
		Edward Witten,
		An SU(2) anomaly,
		Physics Letters B 117, 324-328 (1982).
		\url{https://doi.org/10.1016/0370-2693(82)90728-6}
		
		
		\bibitem{Adler1969}
		Stephen L. Adler,
		Axial-Vector Vertex in Spinor Electrodynamics,
		Phys. Rev. 177, 2426 (1969).		
		\url{https://doi.org/10.1103/PhysRev.177.2426}
		
		
		\bibitem{BellJackiw1969}
		J. S. Bell and R. Jackiw,
		A PCAC Puzzle: $\pi^0\rightarrow\gamma\gamma$ in the $\sigma$-Model,
		Nuovo Cim. A 60, 47 (1969).
		\url{https://doi.org/10.1007/BF02823296}	
		% ------------------------------------------------------
		
		
		\bibitem{Halvorson2000}
		Hans Halvorson and Rob Clifton,
		Generic Bell correlation between arbitrary local algebras in quantum field theory Available to Purchase,
		J. Math. Phys. 41, 1711-1717 (2000).
		\url{https://doi.org/10.1063/1.533253}
		
		
		\bibitem{Pauli1940}
		W. Pauli,
		The Connection Between Spin and Statistics,
		Phys. Rev. 58, 716 (1940).
		\url{https://doi.org/10.1103/PhysRev.58.716}
		
		
		\bibitem{Sakurai2020}
		J. J. Sakurai and Jim Napolitano,
		Modern Quantum Mechanics, 3rd edition.
		(Cambridge University Press, Cambridge, 2020)
		\url{https://doi.org/10.1017/9781108587280}
				
				
		\bibitem{Maschke1898}
		Heinrich Maschke,
		Ueber den arithmetischen Charakter der Coefficienten der Substitutionen endlicher linearer Substitutionsgruppen,
		Math. Ann. 50, 492-498 (1898).
		\url{https://doi.org/10.1007/BF01444297}
		
		
		\bibitem{Maschke1899}
		Heinrich Maschke,
		Beweis des Satzes, dass diejenigen endlichen linearen Substitutionsgruppen, in welchen einige durchgehends verschwindende Coefficienten auftreten, intransitiv sind,
		Math. Ann. 52, 363-368 (1899).
		\url{https://doi.org/10.1007/BF01476165}
		
		
		\bibitem{PeterWeyl1927}
		F. Peter \& H. Weyl,
		Die Vollständigkeit der primitiven Darstellungen einer geschlossenen kontinuierlichen Gruppe, 
		Math. Ann. 97, 737-755 (1927). 
		\url{https://doi.org/10.1007/BF01447892}
		
		
		\bibitem{Knapp1986}
		Anthony Knapp,
		Representation theory of semisimple groups,
		(Princeton University Press, Princeton, 1986)
		\url{https://www.jstor.org/stable/j.ctt1bpm9sn}
		
		
		\bibitem{FultonHarris2004}
		William Fulton and Joe Harris,
		Representation Theory: A First Course,
		(Springer New York, NY, 2004).
		\url{https://doi.org/10.1007/978-1-4612-0979-9}		
		
				
		\bibitem{Racah1942}
		Giulio Racah,
		Theory of Complex Spectra. I,	
		Phys. Rev. 61, 186 (1942).
		\url{https://doi.org/10.1103/PhysRev.61.186}
		
		
		\bibitem{Wilson1974}
		Kenneth G. Wilson,
		Confinement of quarks,
		Phys. Rev. D 10, 2445 (1974).
		\href{https://doi.org/10.1103/PhysRevD.10.2445}{doi:10.1103/PhysRevD.10.2445}
		
		
		\bibitem{BRS1975}
		C. Becchi, A. Rouet \& R. Stora,
		Renormalization of the abelian Higgs-Kibble model,
		Commun. Math. Phys. 42, 127-162 (1975).
		\url{https://doi.org/10.1007/BF01614158}
		
		
		\bibitem{KugoOjima1979}
		Taichiro Kugo, Izumi Ojima,
		Local Covariant Operator Formalism of Non-Abelian Gauge Theories and Quark Confinement Problem, 
		Progress of Theoretical Physics Supplement 66, 1-130 (1979), 
		\url{https://doi.org/10.1143/PTPS.66.1}
		
		
		\bibitem{Tyutin2008}
		I.V. Tyutin,
		Gauge Invariance in Field Theory and Statistical Physics in Operator Formalism,
		arXiv:0812.0580
		\url{https://doi.org/10.48550/arXiv.0812.0580}
		
		
		\bibitem{Folland2016}
		Gerald B. Folland,
		A Course in Abstract Harmonic Analysis, 2nd Edition, 
		(Taylor \& Francis Group/CRC Press, New York, 2016).
		\url{https://doi.org/10.1201/b19172}
		
		
		\bibitem{Hall2015}
		Brian C. Hall,
		Lie Groups, Lie Algebras, and Representations, 2nd ed.,
		(Springer Cham, Switzerland, 2015).
		\url{https://doi.org/10.1007/978-3-319-13467-3}
		
				
		\bibitem{Aharonov1967}
		Yakir Aharonov and Leonard Susskind,
		Charge Superselection Rule,
		Phys. Rev. 155, 1428 (1967).
		\url{https://doi.org/10.1103/PhysRev.155.1428}
		
		
		\bibitem{Noether1918}
		E. Noether,	
		Invariante Variationsprobleme,
		Nachrichten von der Gesellschaft der Wissenschaften zu Göttingen, Mathematisch-Physikalische Klasse 1918, 235-257 (1918).
		\url{https://eudml.org/doc/59024}
		
		
		\bibitem{GellMannPais1955}
		M. Gell-Mann and A. Pais,
		Behavior of Neutral Particles under Charge Conjugation,
		Phys. Rev. 97, 1387 (1955).
		\url{https://doi.org/10.1103/PhysRev.97.1387}
		
		
		\bibitem{Good1961}
		R. H. Good, R. P. Matsen, F. Muller, O. Piccioni, W. M. Powell, H. S. White, W. B. Fowler, and R. W. Birge,
		Regeneration of Neutral $K$ Mesons and Their Mass Difference,
		Phys. Rev. 124, 1223 (1961).
		\url{https://doi.org/10.1103/PhysRev.124.1223}
		
		
		\bibitem{Landau1957}
		L. Landau,
		On the conservation laws for weak interactions,
		Nucl. Phys. 3 (1), 127-131 (1957).
		\url{https://doi.org/10.1016/0029-5582(57)90061-5}
		
		
		\bibitem{Zurek1981}
		W. H. Zurek,
		Pointer basis of quantum apparatus: Into what mixture does the wave packet collapse?
		Phys. Rev. D 24, 1516 (1981).
		\url{https://doi.org/10.1103/PhysRevD.24.1516}
		
		
		\bibitem{Haag1996}
		Rudolf Haag,
		Local Quantum Physics: Fields, Particles, Algebras,
		(Springer Berlin, Heidelberg, 1996)
		\url{https://doi.org/10.1007/978-3-642-61458-3}
		
		
		\bibitem{Zohar2016}
		Erez Zohar, J Ignacio Cirac and Benni Reznik,
		Quantum simulations of lattice gauge theories using ultracold atoms in optical lattices,
		Rep. Prog. Phys. 79, 014401 (2016).
		\url{https://doi.org/10.1088/0034-4885/79/1/014401}
		
		
		\bibitem{Sala2018}
		P. Sala, T. Shi, S. K\"{u}hn, M. C. Banuls, E. Demler, and J. I. Cirac,
		Variational study of U(1) and SU(2) lattice gauge theories with Gaussian states in 1+1 dimensions,
		Phys. Rev. D 98, 034505 (2018).
		\url{https://doi.org/10.1103/PhysRevD.98.034505}
		
		
		\bibitem{Zweig1964}
		G. Zweig,
		An SU(3) Model for Strong Interaction Symmetry and its Breaking,
		in Developments in the Quark Theory of Hadrons, pp.22-101 (1964).  
		\url{https://dx.doi.org/10.17181/CERN-TH-412}
				
		
		\bibitem{Hooft1974}
		G.'t Hooft,
		A planar diagram theory for strong interactions,
		Nuclear Physics B 72 (3), 461-473 (1974).
		\url{https://doi.org/10.1016/0550-3213(74)90154-0}
		
		
		\bibitem{Casini2014}
		Horacio Casini, Marina Huerta, and José Alejandro Rosabal,
		Remarks on entanglement entropy for gauge fields,
		Phys. Rev. D 89, 085012 (2014).
		\url{https://doi.org/10.1103/PhysRevD.89.085012}
		
				
		\bibitem{Gaiotto2015}
		Davide Gaiotto, Anton Kapustin, Nathan Seiberg \& Brian Willett, Generalized
		global symmetries,
		J. High Energ. Phys. 2015, 172 (2015).
		\url{https://doi.org/10.1007/JHEP02(2015)172}		
	\end{thebibliography}
	\end{document}